\let\mod\relax
\DeclareMathOperator{\mod}{mod}
\DeclareMathOperator{\RREF}{RREF}
\DeclareMathOperator{\res}{Res}
\DeclareMathOperator{\LCM}{LCM}
\DeclareMathOperator{\GCD}{GCD}
\DeclareMathOperator{\Span}{Span}
\DeclareMathOperator{\wt}{wt}
\DeclareMathOperator{\How}{How}
\DeclareMathOperator{\Ker}{Ker}
\DeclareMathOperator{\dist}{dist}
\DeclareMathOperator{\ZSupp}{ZSupp}
\newtheoremstyle{break}%
    {}{}%
    {}{}%
    {\bfseries}{}
    {\newline}{}
\theoremstyle{break}
\newtheorem{proposition}{Proposition}[section]
\newtheorem{lemma}[proposition]{Lemma}
\newtheorem{corollary}[proposition]{Corollary}
\newtheorem{example}{Example}[section]
\theoremstyle{remark}
\begin{document}
\author{Mark A. Webster}
\email{mark.webster@sydney.edu.au}
\affiliation{Centre for Engineered Quantum Systems, School of Physics,
University of Sydney, Sydney, NSW 2006, Australia}
\affiliation{Sydney Quantum Academy, Sydney, NSW, Australia}

\author{Benjamin J. Brown}
\affiliation{Centre for Engineered Quantum Systems, School of Physics,
University of Sydney, Sydney, NSW 2006, Australia}
\affiliation{Niels Bohr International Academy, Niels Bohr Institute, Blegdamsvej 17, University of Copenhagen, 2100 Copenhagen, Denmark}

\author{Stephen D. Bartlett}
\affiliation{Centre for Engineered Quantum Systems, School of Physics,
University of Sydney, Sydney, NSW 2006, Australia}
\title{The XP Stabiliser Formalism: a Generalisation of the Pauli Stabiliser Formalism with Arbitrary Phases}
\maketitle
\begin{abstract}
We propose an extension to the Pauli stabiliser formalism that includes fractional $2\pi/N$ rotations around the $Z$ axis for some integer $N$. The resulting generalised stabiliser formalism -- denoted the \emph{XP stabiliser formalism} -- allows for a wider range of states and codespaces to be represented.  We describe the states which arise in the formalism, and demonstrate an equivalence between XP stabiliser states and ‘weighted hypergraph states’ - a generalisation of both hypergraph and weighted graph states. Given an arbitrary set of XP operators, we present algorithms for determining the codespace and logical operators for an XP code. Finally, we consider whether measurements of XP operators on XP codes can be classically simulated.
\end{abstract}

\section{Introduction}

Representing general quantum states of $n$ qubits requires an amount of information that is exponential in $n$.  For tractable theoretical study of quantum systems, we require more compact representations of quantum states of interest. Some examples of such representations include tensor network states e.g. Matrix Product States and Projected Entangled Pair States~\cite{tensor} and states created by low-depth quantum circuits~\cite{deutsch}.

The Pauli stabiliser formalism allows for the efficient description and manipulation of an important subset of quantum states, known as Pauli stabiliser states~\cite{gottesman}. Pauli stabiliser states and codes can be efficiently described using a set of stabiliser generators, which are elements of the Pauli group $\langle iI, X, Z\rangle^{\otimes n}$, with only $n$ such generators needed to describe a Pauli stabiliser state.  Many key quantum features such as entanglement and superposition can be captured using stabiliser states. Given the stabiliser generators, efficient algorithms exist for determining the codespace, logical operators and simulating measurements of Pauli operators. On the other hand, the states which can be represented within the Pauli stabiliser formalism are quite limited.  The fact that we can simulate the behaviour of Pauli stabiliser states  suggests that we are missing some important aspects of quantum advantage.

Given the ubiquity and the utility of the Pauli stabiliser formalism in quantum information theory, it is no surprise that there have been several studies into generalising this formalism in order to broaden the classes of stabiliser states. One such approach is the \emph{qudit} stabiliser formalism, where we fix a dimension $D$ and stabiliser generators are from the generalised Pauli group on $D$-level systems. The algorithms from the Pauli stabiliser formalism can be extended to qudit codes~\cite{qudit_codes}, and measurement of any generalised Pauli group operator is known to be classically simulable~\cite{qudit-gottesman-knill}. 

Another generalisation is the \emph{XS stabiliser formalism}~\cite{xs}, wherein stabiliser codes are defined using elements of $\langle \alpha I, X, S\rangle^{\otimes n}$ which act on qubits and where $\alpha := e^{i\pi/4}$ and $S := \text{diag}(1, i)$ is the phase gate. In Ref.~\cite{xs}, the authors demonstrate that states outside the Pauli stabiliser formalism can be represented as XS stabiliser states. In particular, they demonstrate that Quantum Twisted Double models which harbour non-Abelian anyons can be represented in the XS formalism. The authors present algorithms for determining the codespace, expectation values and logical operators for such codes, but these are limited to certain `regular' codes.

In this paper, we introduce the XP stabiliser formalism which generalises the concept of the XS formalism. In the XP formalism, we fix an integer $N$ and define XP codes using elements of $\langle \omega I, X, P\rangle^{\otimes n}$ where $\omega := e^{i\pi/N}, P := \text{diag}(1, \omega^2)$. We prove that hypergraph and weighted graph states can be represented as XP stabiliser states. These are useful classes of states which cannot be represented as Pauli stabiliser states. We present XP versions of many of the algorithms available in the Pauli stabiliser formalism, which apply to any XP code. This includes algorithms to determine the codespace, logical operators and simulate the measurement of diagonal Pauli operators on an XP codespace. Computational complexity of these tasks increases with the precision $N$ of the code. We  demonstrate that measurement of XP operators on a codespace is not always efficiently simulable. Hence, the XP formalism lies on the boundary between what is classically simulable and what is not, suggesting that we may be able to capture some aspects of quantum advantage.

The XP stabiliser formalism has a number of potential applications. For instance, the logical operator structure of XP codes is much richer than is the case for Pauli stabiliser codes. We could potentially use the XP formalism to describe codes with transversal logical non-Clifford operations (for instance T or CCZ gates) that could be used for fault-tolerant preparation of magic states. As the stabiliser generators of XP codes do not commute, certain no-go theorems may not apply, for instance, in the area of quantum memories~\cite{finitetemp}. Compared to the Pauli stabiliser formalism, a wider range of topological phases can be represented, which makes studying them more straightforward using the techniques presented in this paper.

This paper is structured to present the framework and tools necessary to start exploration of the XP formalism, with examples used throughout the paper. Where possible, examples are linked to \href{https://mybinder.org/v2/gh/m-webster/XPFpackage/HEAD?urlpath=tree/Examples}{interactive Jupyter notebooks} that allow for further exploration and modification of these examples by the reader.

In Chapter~\ref{chap:xpformalism}, we review the Pauli stabiliser formalism and the XS Formalism introduced in Ref.~\cite{xs}. We then introduce the XP formalism and summarise the main results of this paper.  In Chapter~\ref{chap:xpalgebra}, we set out definitions for XP operators and give a full description of their algebra. In Chapter~\ref{chap:codewords}, we show how to identify the codespace of an arbitrary set of XP operators and calculate a set of codewords which form a basis of the codespace.  In Chapter~\ref{chap:hypergraph}, we classify the states which arise in the XP formalism and show that two important classes of states can be represented as XP stabiliser states:  hypergraph~\cite{hypergraph} and weighted graph states~\cite{weightedgraph1,weightedgraphuniversal}. 

In Chapter~\ref{chap:LO}, we address the logical operator structure of XP codes. We show how to find XP operators which generate the logical operator group for a given codespace. We can allocate quantum numbers to the codewords of Chapter~\ref{chap:codewords}, and this leads to a classification of XP codes into XP-regular and non-XP-regular codes. We show that any XP-regular code can be mapped to a CSS code with similar logical operator structure. We show how to determine all possible logical actions which can be applied by diagonal operators and demonstrate that non-XP-regular codes give rise to logical operators with complex actions.

In Chapter~\ref{chap:Measurements}, we look at measurements in the XP formalism. We demonstrate an efficient algorithm for simulation of measurement of diagonal Pauli operators on any XP code. Beyond this special case, XP operators cannot in general be measured within the XP formalism. Estimating measurement outcome probabilities for XP operators is in general an NP-complete problem.

We conclude in Chapter~\ref{chap:openquestions} with a discussion and list of outstanding questions for the XP formalism.

\section{The XP Stabiliser Formalism}\label{chap:xpformalism}
The Pauli stabiliser formalism (PSF) has been hugely valuable for describing the quantum error-correcting codes we need to overcome the decohering effects of the environment on quantum systems. Moreover, its structure is such that we can prove powerful results on the simulability of Clifford circuits. 

Nevertheless, this formalism is limited in the types of quantum states it can describe. Extending the stabiliser formalism gives us new tools to describe more general quantum systems, and to explore their potential applications.

In this section, we first review the PSF and discuss the desirable features that we would like to carry over into an extension to the formalism. We next discuss the XS Formalism introduced in Ref.~\cite{xs}, which is an extension of the PSF. We then introduce our new formalism, the XP formalism, which generalises the concept of the XS Formalism. Finally, we outline the main properties of the XP formalism and summarise the results presented in this paper.

\subsection{Review of the Pauli Stabiliser Formalism}
Pauli stabiliser codes are amongst the most commonly studied quantum error correction codes. A stabiliser code is specified by a list of \textbf{stabiliser generators}. The stabiliser generators are tensor products of operators in the single-qubit Pauli group  $\mathcal{P} = \langle iI, X, Z\rangle$.  The \textbf{stabiliser group} is the group generated by the stabiliser generators.  The \textbf{codespace} is the simultaneous $+1$ eigenspace of the stabiliser group. The \textbf{codewords} are a basis of the codespace (i.e. spanning and independent).

The power of the Pauli stabiliser formalism is that stabiliser codes can be analysed and simulated  by operations on the \textbf{stabiliser generators} rather the \textbf{codewords}. In Chapter~10 of Ref.~\cite{nielsen_chuang_2010}, efficient algorithms involving operations on the stabiliser generators are set out for the following tasks:
\begin{enumerate}
\item \textbf{Check if a set of Pauli operators identify a non-trivial codespace}:  equivalent to checking whether $-I$ is in the stabiliser group.
\item \textbf{Find codewords} and \textbf{logical operators}: represent the stabiliser generators as binary vectors which form the rows of a \textbf{check matrix}, then apply linear algebra techniques.
\item \textbf{Simulate action of Clifford Unitary Operator $U$}: the updated code generators are obtained by conjugating each of the  generators by $U$.
\item \textbf{Simulate measurement of Pauli Operator $A$}: update the stabiliser generators by determining if they commute or anticommute with $A$.
\end{enumerate}
The above properties allow us to detect and correct errors in Pauli stabiliser codes by measuring the stabiliser generators.
\subsection{Extending the Pauli Stabiliser Formalism - Existing Work}

The Pauli stabiliser formalism (PSF) is capable of describing only a limited subset of all possible quantum states.  One way to extend the PSF is to consider qudit stabiliser codes using the Generalised Pauli Group~\cite{qudit_codes}, and there is a fairly substantial literature on this generalisation.

Another way is to work with qubits, but to admit operators with finer rotations around the Z axis in stabiliser groups. In the XS stabiliser formalism, introduced in Ref.~\cite{xs}, stabiliser generators are tensor products of operators in $\langle \alpha I, X, S\rangle$ where  $S := \text{diag}(1,i)$ so that $S^4 = I$ and  $\alpha := \exp(i\pi/4)$ so that $\alpha^8 = 1$.  The XS stabiliser formalism can describe a wider range of states compared to the PSF, meaning there are XS code states that cannot be expressed within the PSF.  Examples of such XS code states are the twisted quantum double models~\cite{tqd}.

A set of XS stabiliser generators do not need to commute to form a valid code.  As a result, we cannot in general perform simultaneous measurements of the generators, as is commonly done in PSF codes for error correction. Even so, commuting parent Hamiltonians for XS codes exist, and so error correction is possible with XS codes.

In the XS stabiliser formalism, \textbf{regular codes} are a special case, defined as those where the diagonal generators are tensor products of $\langle -I, Z\rangle$. For these regular codes, Ref.~\cite{xs} presents algorithms to: calculate the codewords from a list of stabiliser generators; calculate the stabiliser generators for a given codeword; calculate logical $Z$ and $X$ operators for a codespace; construct a circuit to find expectation values for measurements of Pauli operators on the code.

In summary, Ref.~\cite{xs} provides a useful generalisation of the PSF, as well as generalising a number of  algorithms for analysing and simulating codes. One of the main limitations of the XS stabiliser formalism is that many algorithms only work for \textbf{regular codes}, which are a subset of possible XS codes.

\subsection{The XP Stabiliser Formalism}

In our work, we introduce a formalism that generalises the the XS stabiliser formalism concept, allowing us to represent an even wider set of states. We demonstrate algorithms for most of the operations we have under the PSF, with a corresponding increase in computational complexity. The algorithms work on any XP code and are not restricted to regular codes as in Ref.~\cite{xs}. The XP formalism is at a similar level of generality as that presented for qudits in Ref.~\cite{qudit_codes}, and we prove many analogous results. The properties of XP and qudit stabiliser codes are compared in Table~\ref{tab:qudit}.

\begin{table}[t]
\centering
\begin{tabularx}{0.9\textwidth} { 
  | >{\raggedright\arraybackslash}X 
  || >{\raggedright\arraybackslash}X 
  | >{\raggedright\arraybackslash}X | }
\hline
 & \textbf{Qudit Codes Ref.~\cite{qudit_codes}} & \textbf{XP Codes}\\
\hline
\hline
\textbf{Parameters} & Qudit Dimension $D$, Number of qudits $n$ & Precision $N$, Number of qubits $n$\\
\hline
\textbf{Phases} & $\omega = \exp(2\pi i/D)$ & $\omega = \exp(\pi i/N)$\\
\hline
\textbf{Generalised X} & $X = \sum_{0 \le j \le D-1}|j\rangle\langle j+1|$ & Pauli $X$\\
\hline
\textbf{Generalised Z} & $Z = \sum_{0 \le j \le D-1}\omega^{j}|j\rangle\langle j|$ & $P = \text{diag}(1,\omega^2)$\\
\hline
\textbf{Vector form of operators} & $\omega^p X^{\mathbf{x}}Z^\mathbf{z}$ where $p \in \mathbb{Z}_D, \mathbf{x}, \mathbf{z} \in \mathbb{Z}_D^n$ & $\omega^p X^{\mathbf{x}}P^\mathbf{z}$ where $p \in \mathbb{Z}_{2N}, \mathbf{x}\in \mathbb{Z}_2^n, \mathbf{z} \in \mathbb{Z}_N^n$\\
\hline
\textbf{Commutators} & Operators commute, up to phase & Operators commute, up to a diagonal operator\\
\hline
\textbf{Clifford (Normaliser) Group} & Known - Table III of Ref.~\cite{qudit_codes}& Unknown - most likely restricted to tensors of single-qubit XP operators and Controlled Z operators\\
\hline
\textbf{Generalised Hadamard} & Quantum Fourier Transform & None\\
\hline
\textbf{Code Stabilisers $\langle G \rangle$ Commute} & Yes & No\\
\hline
\textbf{Code Stabilisers Uniquely defined by Codespace} & Yes & No - but Logical Identity Group is Unique (Section~\ref{sec:code_space_test})\\
\hline
\textbf{Standard form of Stabiliser Groups} & Yes - Using Smith Normal Form & Yes - Using Howell Matrix Form\\
\hline
\textbf{Codespace Dimension} & $\dim(\mathcal{C}) = D^n/|\langle G\rangle|$ & Arbitrary\\
\hline
\textbf{Classical Simulation of Measurements} & Yes - any generalised Pauli Ref.~\cite{qudit-gottesman-knill} & Not possible for arbitrary XP operators\\
\hline
\end{tabularx}
\caption{Qudit Stabiliser Codes Compared to XP Codes}\label{tab:qudit}
\end{table}

An XP stabiliser code is defined by fixing an integer $N \ge 2$ which we refer to as the \textbf{precision} of the code. We then specify a set of stabiliser generators which are from $\langle \omega I, X, P\rangle^{\otimes n}$ where
\begin{align}
    \omega &:= \exp\Big(\frac{\pi i}{N}\Big) = \exp\Big(\frac{1}{2N} 2 \pi i\Big) \text{ so } \omega^{2N} = 1\label{eq:omega_def}\\
    P &:= \text{diag}(1,\omega^2) \text{ so } P^N = I \label{eq:P_def}\,.
\end{align}

Each choice of precision $N$ leads to a different stabiliser formalism.  For example, $N=2$ corresponds to the standard Pauli stabiliser formalism, with $\omega = i$ and $Z^2 = I$.  The XS stabiliser formalism of Ref.~\cite{xs} corresponds to $N=4$, with $\omega = \sqrt{i}$ and $S^4 = I$. Note that $N$ does not need to be a power of 2 - e.g., $N=6$ or $N=7$ are valid choices.

Unlike the Pauli stabiliser formalism, the XPF is not closed under conjugation by Hadamard operators. In particular, if $N = 2M$ then:
\begin{align}
HPH^{-1} = \sqrt[M]{X}
\end{align}
One could consider expanding the formalism to allow fractional $X$ operators of this type, but this does not lead to a finite set of operators that is closed under group operations.

There are a number of open questions from Ref.~\cite{xs} which we also address for both XS codes and the general XP case:
\begin{itemize}
\item How do we find all transversal \textbf{logical operators} for a given code?
\item Which sets of operators stabilise the \textbf{same codespace}?
\item Can we \textbf{simulate measurements} efficiently?
\item Which classes of states can be described within a generalisation of the stabiliser formalism? 

\end{itemize}

\begin{table}[t]
\centering
\begin{tabularx}{0.9\textwidth} { 
  | >{\raggedright\arraybackslash}X 
  || >{\raggedright\arraybackslash}X 
  | >{\raggedright\arraybackslash}X | }
\hline
 & \textbf{XS Formalism}~\cite{xs} & \textbf{XP Formalism} \\
\hline
\hline
\textbf{Form of Stabiliser Generators} &
\(\langle \alpha I,X, S\rangle^{\otimes n}: S^4 = I,\alpha^8 = 1\) &
\(\langle \omega I, X, P\rangle^{\otimes n}: P^N = I,\omega^{2N} = 1\) \\
\hline
\textbf{Determine Codewords from Stabiliser Generators} & Regular codes only & All XP codes \\
\hline
\textbf{Determine Stabiliser Generators from Codewords} & States only (1D codespaces)  & Any codespace \\
\hline
\textbf{Which stabiliser groups have the same codespace?} & Open question & Same
logical identity group \(\Leftrightarrow\) same codespace \\
\hline
\textbf{Determine Logical Operators for a Code} & Regular codes only: $\overline{Z}$ and $\overline{X}$ & All XP codes: Generators for XP logical operator group, which may include non-Clifford logical operators \\
\hline
\textbf{Simulate Measurements on a Code} & Regular codes: circuit method to calculate expectation value when measuring Paulis & All XP codes: Stabiliser algorithm to measure diagonal Paulis;  Codeword algorithm to calculate outcome probabilities when measuring any XP operator.\\
\hline
\end{tabularx}
\caption{Summary of Results and Comparison with XS Formalism}\label{tab:summary}
\end{table}

\subsection{Summary of Results}

We present algorithms within the XPF for many of the operations that are possible within the Pauli stabiliser formalism (PSF). Our formalism is broader than the XS formalism and answers a number of open questions from Ref.~\cite{xs}. Table~\ref{tab:summary} compares the XS stabiliser formalism with the XPF and summaries the results we demonstrate in this paper.

We start by setting out the algebra of XP operators in Chapter~\ref{chap:xpalgebra}. We show how to represent XP operators as vectors of integers. By generalising the symplectic product of the PSF, we write elegant closed form expressions for the main algebraic operations on XP operators - including multiplication, inverses, powers, conjugation and commutation. We also show efficient ways to calculate the eigenvalues and the action of projectors of XP operators.

We specify XP codes by giving a list of XP operators which have a non-trivial simultaneous $+1$ eigenspace. In Chapter~\ref{chap:codewords}, we show how to identify the codespace for an arbitrary set of XP operators. In Ref.~\cite{xs}, a method is presented for doing this, but only for `regular' codes where the diagonal stabiliser generators are from $\langle -I, Z\rangle^{\otimes n}$. We demonstrate an algorithm which works for any XP code, whether regular or not. 

Having introduced some of the basic techniques for working with XP codes, in Chapter~\ref{chap:hypergraph} we determine which states arise under the XP formalism. We identify the form of the phase function of XP stabiliser states and show an equivalence between `weighted hypergraph states' and XP stabiliser states. In particular, two important classes of states - hypergraph and weighted graph states - can be represented as XP stabiliser states. We give examples which have potential uses in measurement-based quantum computation and self-correcting quantum memories.

Understanding which logical operators arise for a given XP code has important implications for which error-protected operations are possible. In Chapter~\ref{chap:LO}  we show how to calculate generators for the entire group of logical operators of XP form. The algorithm works for all XP codes, and yields any non-Clifford logical operators (e.g., logical T, CCZ etc) of XP form which act on the codespace. 

The stabiliser group of a codespace is not unique in the XPF, but the group of XP operators which act as logical identities on the codespace is unique. We show how to efficiently calculate generators for the logical identity group for a given XP group, resulting in a test for whether two XP groups stabilise the same codespace.

In Section~\ref{sec:classification} we introduce a classification of XP codes into XP-regular and non-XP-regular codes. XP-regular codes include all `regular' codes as defined in Ref.~\cite{xs}, but is a broader class. We show that, as in the PSF, the codespace dimension of XP-regular codes is a power of 2. For non-XP-regular codes, the codespace dimension is arbitrary. Non-XP-regular codes are non-additive and have a structure similar to that of codeword stabilised (CWS) quantum codes~\cite{cws}. Each XP-regular code can be mapped to a CSS code which has a similar logical operator structure. We demonstrate that for non-XP-regular codes, more complex diagonal operators can arise compared to the PSF. A summary of the differences between XP-regular and non XP-regular codes is in Table~\ref{tab:regular}.

\begin{table}[hbt!]
\centering
\begin{tabularx}{0.9\textwidth} { 
  | >{\raggedright\arraybackslash}X 
  || >{\raggedright\arraybackslash}X 
  | >{\raggedright\arraybackslash}X | }
\hline
\textbf{Property} & \textbf{XP-Regular} & \textbf{Non-XP-Regular} \\
\hline
\hline
\textbf{Codespace Dimension} & \(2^{k}\) for some \(k\) & Arbitrary \\
\hline
\textbf{Additive} & Yes & No \\
\hline
\textbf{Related CSS Code} & Maps to a CSS code with similar logical operator structure &
No related CSS code \\
\hline
\textbf{Diagonal Logicals} & Same as related CSS code & Exotic \\
\hline
\textbf{Non-diagonal Logicals} & Similar to related CSS code &
Transversal logical X may not exist \\
\hline
\end{tabularx}
\caption{XP-Regular vs Non XP-Regular Codes}\label{tab:regular}
\end{table}

Determining the extent to which computations on a quantum computer can be classically simulated is one of the central questions in the field of quantum information. In the Pauli  stabiliser formalism, we can classically simulate the measurement of any Pauli operator on any stabiliser code. A similar result holds for the qudit stabiliser formalism for generalised Pauli group operators. Chapter~\ref{chap:Measurements} covers measurement in the XP formalism. Measurement of diagonal Pauli operators can be efficiently simulated on any XP code, and we present an efficient stabiliser method for calculating the outcome probabilities and updated XP code. We then consider extending the algorithm to precision 4 operators. We show that finding the outcome probabilities when measuring collections of diagonal precision 4 diagonal operators is NP-complete. We also give examples where the measurement of precision 4 operators cannot be done within the XP formalism.

Finally in Chapter~\ref{chap:openquestions}, we summarise what is known about the XP formalism, discuss implications, and lay out possible future research directions. 

\subsection{XPF Software Package}
We have produced a Python software package implementing all algorithms discussed in this paper. The \href{https://github.com/m-webster/XPFpackage}{Github repository} is made available subject to \href{https://www.gnu.org/licenses/gpl-3.0.en.html}{GPL licensing}. Interactive  \href{https://mybinder.org/v2/gh/m-webster/XPFpackage/HEAD?urlpath=tree/Examples}{Jupyter notebooks}  for all examples in this paper are included and can be modified to run different scenarios.

\section{Algebra of XP Operators}\label{chap:xpalgebra}

In this section, we lay out fundamental results for the algebra of XP operators, including closed form expressions for algebraic operations which support efficient simulation on a classical computer. We show how to write a unique vector representation for XP operators. We generalise the symplectic product of the Pauli stabiliser formalism and show how to express algebraic operations in terms of the vector representation and the generalised symplectic product. We also introduce the concepts of the degree and fundamental phase of XP operators. These concepts allow us to determine the eigenvalues and actions of the projectors of XP operators.

\subsection{Vector Representation of XP Operators}\label{sec:vector_representation}

In this section, we show that there is a natural identification of XP operators on $n$ qubits with vectors of integers. Let $\mathbf{u} = (p|\mathbf{x}|\mathbf{z}) \in \mathbb{Z} \times \mathbb{Z}^{n} \times \mathbb{Z}^{n}$ be an integer vector of length $2n + 1$. Define the XP operator of precision $N$ corresponding to $\mathbf{u}$ as:
\begin{align}
    XP_N(\mathbf{u}) &:= \omega^p \bigotimes_{0\le i < n} X^{\mathbf{x}[i]} P^{\mathbf{z}[i]}
\end{align}
where $\omega$ and $P$ are as defined in Eqs.~\eqref{eq:omega_def} and~\eqref{eq:P_def}. Each component is periodic in that we can write:
\begin{align}
XP_N(p|\mathbf{x}|\mathbf{z}) &= XP_N(p \mod 2N|\mathbf{x}\mod 2|\mathbf{z}\mod N)
\end{align}
Accordingly, we can write a \textbf{unique vector representation} $(p|\mathbf{x}|\mathbf{z}) \in \mathbb{Z}_{2N} \times \mathbb{Z}_{2}^n \times \mathbb{Z}_{N}^n $ for each XP operator. We call $p$ the \textbf{phase component}, $\mathbf{x}$ the \textbf{X component} and $\mathbf{z}$ the \textbf{Z component}.

Here we list some properties of this notation:
\begin{enumerate}
\item The identity XP operator is $XP_N(0|\mathbf{0}|\mathbf{0})$ where $\mathbf{0}$ is the length $n$ vector with all entries $0$.
\item Because $\omega^N = -1$, we have $XP_N(N|\mathbf{0}|\mathbf{0}) = -I$.
\item The single qubit X operator is $XP_N(0|1|0)$.
\item Diagonal operators are of form $XP_N(p|\mathbf{0}|\mathbf{z})$, i.e. the $X$ component is the zero vector.
\item If $N$ is even, the single qubit $Z$ operator is $XP_N(0|0|\frac{N}{2})$.  For $N$ odd, $Z$ operators cannot be represented as XP operators.  Note that one may rescale the code to be of precision $2N$ by doubling the phase and Z components of stabiliser generators, in which case the rescaled code has the same codespace and accommodates $Z$ operators.
\end{enumerate}


\begin{example}[Using XP operator notation]
Consider the following example of an XP operator:
\begin{align}
    A = XP_8(12|1110000|0040000)\,.
\end{align}
The \textbf{precision} is specified by the subscript $8$, in this case $N=8$. This means that $\omega = \exp(\frac{1}{16}2\pi i)$ and $P^8 = I$ so $P = T$ where $T$ is the operator $\text{diag}(1, \sqrt{i})$.  In other words, this is an XT operator. Most of the examples we consider in this paper are precision $N=8$ codes. 

The components of $A$ are as follows. The phase component is a value $p \in \mathbb{Z}_{16}$. In this case $p = 12$ means the overall phase of the operator is $\omega^{12} = \exp(\frac{12}{16} 2\pi i ) = \exp(\frac{3}{4} 2\pi i ) = -i$. The X component is a binary vector of length $n$ so that $\mathbf{x} \in \mathbb{Z}_2^n$. In this case, $\mathbf{x} = 1110000$, representing $X_1 X_2 X_3$, where $X_i$ represents the operator which applies $X$ to the $i$th qubit and $I$ elsewhere. The Z component is a value $\mathbf{z} \in \mathbb{Z}_8^n$. In this case $\mathbf{z} = 0040000$ representing $T_3^4$.

In terms of  $X$ and $T$ operators, we can write: 
  \begin{align}
  XP_8(12|1110000|0040000) &= \exp(\frac{12}{16} 2\pi i ) X_1  X_2 X_3 T_3^4 \\
  &= -i X_1  X_2 X_3 Z_3
\end{align}
As the phase and Z components are divisible by 4, we can \textbf{rescale} $A$ and write it as a precision 2 operator by dividing the phase and Z components by 4:
\begin{align}
XP_8(12|1110000|0040000) &= XP_2(3|1110000|0010000)
\end{align}
\end{example}

\subsection{Multiplication Rule and Generalised Symplectic Product}\label{sec:MUL}
In the Pauli stabiliser formalism, we represent operators as binary vectors and understand commutation relations in terms of the symplectic product. In this section, we generalise the symplectic product to the XP formalism, and this allows us to write a simple rule for multiplying XP operators.

Let $\mathbf{z} \in \mathbb{Z}^n$ be an integer vector of length $n$ with $i$th  component denoted $\mathbf{z}[i]$. The \textbf{antisymmetric operator} of precision $N$ corresponding to $\mathbf{z}$ is: 
\begin{align}
D_N(\mathbf{z}) &:= XP_N(\sum_i \mathbf{z}[i] |\mathbf{0}|{-}\mathbf{z})\,.
\end{align}
In this paper, arithmetic operations on vectors are \textbf{component-wise} in $\mathbb{Z}$ i.e.:
\begin{align}
    (\mathbf{a} + \mathbf{b})[i] &:= \mathbf{a}[i] + \mathbf{b}[i]\\
    (\mathbf{a}\mathbf{b})[i] &:= \mathbf{a}[i]\mathbf{b}[i]\,.
\end{align}
We can then express the multiplication of two XP operators as the sum of their vector representations, adjusted by an antisymmetric operator:
\begin{proposition}[Multiplication of XP Operators]
The product of two XP operators given in vector format is:
\begin{align}XP_N(\mathbf{u_1}) XP_N(\mathbf{u_2}) = XP_N(\mathbf{u_1} + \mathbf{u_2})D_N(2\mathbf{x_2} \mathbf{z_1})\end{align}
\end{proposition}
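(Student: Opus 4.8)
The plan is to reduce the multiplication of two $n$-qubit XP operators to a product of single-qubit operators, establish the identity qubit-by-qubit, and then reassemble. First I would write $XP_N(\mathbf{u}_1) = \omega^{p_1}\bigotimes_i X^{\mathbf{x}_1[i]}P^{\mathbf{z}_1[i]}$ and similarly for $\mathbf{u}_2$, pull all scalar phases $\omega^{p_1+p_2}$ to the front, and observe that the tensor structure means the product factorises as $\omega^{p_1+p_2}\bigotimes_i \bigl(X^{\mathbf{x}_1[i]}P^{\mathbf{z}_1[i]}X^{\mathbf{x}_2[i]}P^{\mathbf{z}_2[i]}\bigr)$. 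So the whole claim follows if, on a single qubit, I can show
\begin{align}
X^{a_1}P^{b_1}X^{a_2}P^{b_2} = \omega^{?}\,X^{a_1+a_2}P^{b_1+b_2+(\text{correction})}
\end{align}
and then check that the accumulated phase and Z-corrections across all qubits match exactly what $D_N(2\mathbf{x}_2\mathbf{z}_1)$ contributes, namely an overall phase $\omega^{2\sum_i \mathbf{x}_2[i]\mathbf{z}_1[i]}$ together with a shift of the Z component by $-2\mathbf{x}_2\mathbf{z}_1$.

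The core single-qubit computation is the commutation relation $P^{b}X = XP^{-b}$, equivalently $P^b X = X P^{-b}$, which holds because $P = \mathrm{diag}(1,\omega^2)$ and conjugating by $X$ swaps the diagonal entries, sending $P \mapsto \mathrm{diag}(\omega^2,1) = \omega^2 P^{-1}$; more precisely $XPX = \omega^2 P^{-1}$, so $XP^{b}X = \omega^{2b}P^{-b}$ and hence $P^{b}X = \omega^{2b} X P^{-b}$. Applying this to move $X^{\mathbf{z}_2[i]}$... rather, to move $P^{\mathbf{z}_1[i]}$ past $X^{\mathbf{x}_2[i]}$, I get $P^{\mathbf{z}_1[i]}X^{\mathbf{x}_2[i]} = \omega^{2\mathbf{x}_2[i]\mathbf{z}_1[i]}X^{\mathbf{x}_2[i]}P^{-\mathbf{z}_1[i]}$ (the phase factor being trivial when $\mathbf{x}_2[i]=0$ and equal to $\omega^{2\mathbf{z}_1[i]}$ when $\mathbf{x}_2[i]=1$, which is exactly $\omega^{2\mathbf{x}_2[i]\mathbf{z}_1[i]}$ in either case since $\mathbf{x}_2[i]\in\{0,1\}$). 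Substituting, the $i$th single-qubit factor becomes $\omega^{2\mathbf{x}_2[i]\mathbf{z}_1[i]}\,X^{\mathbf{x}_1[i]+\mathbf{x}_2[i]}P^{\mathbf{z}_1[i]+\mathbf{z}_2[i]-2\mathbf{x}_2[i]\mathbf{z}_1[i]}$.

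Collecting over all $i$, the product equals $\omega^{p_1+p_2+2\sum_i \mathbf{x}_2[i]\mathbf{z}_1[i]}\bigotimes_i X^{\mathbf{x}_1[i]+\mathbf{x}_2[i]}P^{(\mathbf{z}_1+\mathbf{z}_2-2\mathbf{x}_2\mathbf{z}_1)[i]}$, which by definition is $XP_N\bigl(p_1+p_2+2\sum_i\mathbf{x}_2[i]\mathbf{z}_1[i]\,\big|\,\mathbf{x}_1+\mathbf{x}_2\,\big|\,\mathbf{z}_1+\mathbf{z}_2-2\mathbf{x}_2\mathbf{z}_1\bigr)$. It remains to recognise this as $XP_N(\mathbf{u}_1+\mathbf{u}_2)D_N(2\mathbf{x}_2\mathbf{z}_1)$: using the definition $D_N(\mathbf{z}) = XP_N(\sum_i\mathbf{z}[i]\mid\mathbf{0}\mid{-}\mathbf{z})$ with $\mathbf{z} = 2\mathbf{x}_2\mathbf{z}_1$, this antisymmetric operator is diagonal (zero X component), so multiplying it onto $XP_N(\mathbf{u}_1+\mathbf{u}_2)$ simply adds the phase components and adds the Z components — no further commutation corrections arise because one of the two factors has trivial X component — giving precisely phase $p_1+p_2+2\sum_i\mathbf{x}_2[i]\mathbf{z}_1[i]$ and Z component $\mathbf{z}_1+\mathbf{z}_2-2\mathbf{x}_2\mathbf{z}_1$, matching the above. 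I would note this last step uses a special case of the multiplication rule itself (product with a diagonal operator), but that case is immediate from the definitions and does not rely on the general statement, so there is no circularity.

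The main obstacle, such as it is, is purely bookkeeping: being careful that $\mathbf{x}_2[i]\in\{0,1\}$ lets me replace the case distinction $\omega^{2\mathbf{z}_1[i]}$-or-$1$ by the uniform expression $\omega^{2\mathbf{x}_2[i]\mathbf{z}_1[i]}$, and tracking that the Z-component shift is $-2\mathbf{x}_2\mathbf{z}_1$ with the correct sign inherited from $P^b X = \omega^{2b}XP^{-b}$. Everything else is routine, and all identities used ($P^N=I$, $\omega^{2N}=1$, the tensor-product factorisation, the definition of $D_N$) are already recorded in the excerpt.
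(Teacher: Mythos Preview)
Your proof is correct and follows essentially the same approach as the paper: establish the single-qubit commutation $PX = \omega^{2}XP^{-1}$, extend to arbitrary powers, and then tensor up, with your version filling in the bookkeeping the paper leaves implicit. One small slip: the displayed intermediate relation $P^{\mathbf{z}_1[i]}X^{\mathbf{x}_2[i]} = \omega^{2\mathbf{x}_2[i]\mathbf{z}_1[i]}X^{\mathbf{x}_2[i]}P^{-\mathbf{z}_1[i]}$ is wrong when $\mathbf{x}_2[i]=0$ (the $P$ exponent should stay $+\mathbf{z}_1[i]$), but your very next line uses the correct exponent $\mathbf{z}_1[i]+\mathbf{z}_2[i]-2\mathbf{x}_2[i]\mathbf{z}_1[i]$, so the conclusion is unaffected.
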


\begin{proof}
For $n=1$, looking at matrix representations of $X$ and $P$ we see that:
\begin{align}PX = \omega^2 X P^{-1} = (X P) (\omega^2 P^{-2}) = (XP) D_N(2)\end{align}
From this, we can show that the multiplication rule applies for all single qubit XP operators in $\langle\omega I, X, P \rangle$ and in turn to the tensor products of such operators.
\end{proof}

 \begin{example}[Multiplication of XP Operators]
Let $N = 4, n=3$ so that unique vector representations of XP operators are $XP_N(p|\mathbf{x}|\mathbf{z})$ where  $(p|\mathbf{x}|\mathbf{z}) \in \mathbb{Z}_{8} \times \mathbb{Z}_2^3\times \mathbb{Z}_4^3$.  Consider two example XP operators $A_1$ and $A_2$ defined as
\begin{align*}
A_1 &= XP_4( 2 | 1 1 1 |3 3 0) \,, \quad \text{with}\ \mathbf{u}_1 = ( 2 | 1 1 1 |3 3 0)\,,\\
A_2 &= XP_4(6 | 0 1 0 |0 2 0) \,, \quad \text{with}\ \mathbf{u}_2 = ( 6 | 0 1 0 |0 2 0)\,.
\end{align*}
Then
\begin{equation}
    A_1 A_2 = XP_4(\mathbf{u}_1 + \mathbf{u}_2)D_N(2\mathbf{x}_2\mathbf{z}_1) =XP_4(6 | 1 0 1 |3 3 0)\,.
\end{equation}
This example is worked out in detail in the \href{https://github.com/m-webster/XPFpackage/blob/main/Examples/3.2_multiplication.ipynb}{linked Jupyter notebook}. You can also explore how multiplication works for random XP operators of arbitrary precision and length.
\end{example}

\subsection{Other Algebraic Identities}\label{sec:algebraic_identities}
We can write simple closed form identities for various algebraic operations in terms of antisymmetric operators, and these are summarised in Table~\ref{tab:algebraic_identities}. These identities allow us to efficiently implement algebraic operations in the \href{https://github.com/m-webster/XPFpackage}{XPF software package}.

\begin{table}[hbt!]
\centering
\begin{tabular}{|l||l|}
\hline
\textbf{Name} &\textbf{Rule} \\
\hline
\hline
\textbf{MUL}  & 
Multiplication of two XP operators \\
&
$XP_N(\mathbf{u_1}) XP_N(\mathbf{u_2}) = XP_N(\mathbf{u_1} + \mathbf{u_2})D_N(2\mathbf{x_2} \mathbf{z_1})$ \\
\hline
\textbf{SQ}  & 
Square of an XP operator\\
&  
$A^2 =  XP_N(2p|0|2\mathbf{z})D_N(2\mathbf{x}\mathbf{z})$\\
\hline
\textbf{POW}  & 
XP operator raised to a power \\
&  
$A^m = XP_N(mp|a\mathbf{x}|m\mathbf{z})D_N((m-a)\mathbf{x}\mathbf{z}), \text{ where }a = m \mod 2$\\
\hline
\textbf{INV}  & 
Inverse of an XP operator  \\
&
$A^{-1}= XP_N(-p|\mathbf{x}|-\mathbf{z})D_N(-2\mathbf{x}\mathbf{z})$\\
\hline
\textbf{CONJ}  & 
Conjugation of XP operators  \\
&
$A_1 A_2 A_1^{-1} = A_2 D_N(2\mathbf{x_1}\mathbf{z_2} + 2\mathbf{x_2}\mathbf{z_1}-4\mathbf{x_1}\mathbf{x_2}\mathbf{z_1})$\\
\hline
\textbf{COMM}  & 
Commutator of XP operators  \\
&
$A_1 A_2 A_1^{-1}A_2^{-1} = D_N(2\mathbf{x_1}\mathbf{z_2} - 2\mathbf{x_2}\mathbf{z_1}+4\mathbf{x_1}\mathbf{x_2}\mathbf{z_1}-4\mathbf{x_1}\mathbf{x_2}\mathbf{z_2})$\\
\hline
\textbf{OP}  & 
Action of an XP operator on a computational basis vector \\
&
$XP_N(p|\mathbf{x}|\mathbf{z}) |\mathbf{e}\rangle = \omega^{p+2\mathbf{e} \cdot \mathbf{z}}|\mathbf{e} \oplus \mathbf{x}\rangle$ \\
\hline
\end{tabular}

\caption{Algebraic Identities for XP Operators}\label{tab:algebraic_identities}
\end{table}

\begin{example}[Algebraic Identities]\label{eg:identities}
The following are some consequences of algebraic identities in Table~\ref{tab:algebraic_identities}:
\begin{enumerate}
\item The MUL and INV rules imply that products and inverses of diagonal operators are diagonal.
\item The SQ and COMM rules imply that squares and commutators of XP operators $A,B$ are always diagonal.
\item The CONJ rule implies that conjugating an operator $A$ by an operator $B$ results in $A$ times a diagonal operator.
\end{enumerate}
\end{example}

\subsection{Group Structure of XP Operators}\label{sec:xp_group_structure}

Because we have rules for products and inverses of XP operators, the XP operators of precision $N$ on $n$ qubits form a group, denoted $\mathcal{XP}_{N,n}$.

For any set of XP operators $\mathbf{G}$, we can determine the group generated by the operators which we denote $\mathcal{G} = \langle \mathbf{G}\rangle$. The subset of diagonal XP operators $\mathcal{G}_Z$ forms an \textbf{Abelian subgroup}. This is because  diagonal operators commute and $\mathcal{G}_Z$ is closed under multiplication (and so includes all inverses and the identity operator). 

There is a natural \textbf{group homomorphism} $\text{Zp}$ between $\mathcal{G}_Z$ over multiplication and $\mathbb{Z}_{2N}^{n+1}$ over addition modulo $2N$. For diagonal operators, the action of $\text{Zp}$  is:
\begin{align}
    \text{Zp}(XP_N(p|\mathbf{0}|\mathbf{z})) = (2\mathbf{z}|p) \mod 2N
    \end{align}
The $\text{Zp}$ map is well defined and is a group homomorphism because:
\begin{align}
    A_1 A_2 &= XP_N((p_1 + p_2)\mod 2N |\mathbf{0}|(\mathbf{z}_1+\mathbf{z}_2)\mod N)
    \end{align}
and so:
\begin{align}
    \text{Zp}(A_1 A_2) &=
    (2(\mathbf{z}_1+\mathbf{z}_2) \mod 2N |(p_1 + p_2) \mod 2N)\\
    &= (\text{Zp}(A_1) + \text{Zp}(A_2)) \mod 2N
\end{align}
The inverse map $\text{Zp}^{-1}(\mathbf{z}|p) = XP_N(p|\mathbf{0}|\mathbf{z}/2)$ is well defined providing each component of $\mathbf{z}$ is divisible by $2$ in $\mathbb{Z}_{2N}$ (even Z components). Addition over $\mathbb{Z}_{2N}^{n+1}$ takes vectors with even Z components to vectors with even Z components. Hence, we can find the generators of $\mathcal{G}_Z$ by finding a set of vectors $B \subset \mathbb{Z}_{2N}^{n+1}$ which span $\text{Zp}(\mathcal{G}_Z)$.  Using the Howell matrix form of Appendix~\ref{app:linalg}, we set $B = \text{How}_{\mathbb{Z}_N}(\text{Zp}(\mathcal{G}))$. The set $\text{Zp}^{-1}(B)$ generates $\mathcal{G}_Z$. This method is used to determine a unique set of canonical generators for an XP group (see Section~\ref{sec:canonical_generators}).

\subsection{Eigenvalues and Projectors of XP Operators}\label{sec:degree+fundamental_phase}
Identifying the eigenvalues and eigenvectors of XP operators will be important when considering measurements.  We first show how to determine the action of an XP operator on computational basis elements. The degree and fundamental phase of an XP operator, defined in this section, allow us to determine the eigenvalues of XP operators efficiently. These results are used  in the chapters on identifying the codespace and measurements in the XP formalism (Chapters~\ref{chap:codewords},~\ref{chap:Measurements}).

The action of an XP operator on a computational basis element $|\mathbf{e}\rangle$ of $\mathcal{H}_2^n$ where $\mathbf{e} \in \mathbb{Z}_2^n$ is:
\begin{align}
XP_N(p|\mathbf{x}|\mathbf{z}) |\mathbf{e}\rangle = XP_N(p|\mathbf{x}|\mathbf{z}) XP_N(0|\mathbf{e}|\mathbf{0}) |\mathbf{0}\rangle = \omega^{p + 2\mathbf{e} \cdot \mathbf{z}}|\mathbf{e} \oplus \mathbf{x}\rangle\label{eq:action_on_basis_elts}\end{align}
When calculating the action on computational basis elements, we apply the diagonal part of the operator first, then the X component. The notation $\mathbf{e} \cdot \mathbf{z} = \sum_i \mathbf{e}[i] \mathbf{z}[i]$ is the usual dot product for vectors in $\mathbb{Z}$. The notation $\mathbf{e} \oplus \mathbf{x}=(\mathbf{e} + \mathbf{x})\mod 2$ denotes component-wise addition modulo 2 which is equivalent to XOR for binary vectors $\mathbf{e}$ and $\mathbf{x}$.

For a given XP operator $A, A^{2N} = I$ so there must be a minimal $\deg(A) \in \mathbb{Z}_{2N}$ such that for some $q \in \mathbb{Z}_{2N}$:
\begin{align}
A^{\deg(A)} = \omega^qI
\end{align}
We call $\deg(A)$ the \textbf{degree} of $A$ and $q$ the \textbf{fundamental phase} of $A$. The degree can be calculated efficiently via the method below:

\begin{proposition}[Calculating Degree of Operator]
We calculate the degree of XP operator $A = XP_N(p|\mathbf{x}|\mathbf{z})$ as follows:
\begin{enumerate}
    \item If $A$ is diagonal: $\deg(A) = \LCM\{N/\GCD(N,\mathbf{z}[i]): 0 \le i < n\}$
    \item If $A$ is non-diagonal: $\deg(A) = 2 \deg(A^2)$, noting that $A^2$ is diagonal.
\end{enumerate}
\end{proposition}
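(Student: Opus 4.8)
The plan is to reduce everything to a statement about when a power of $A$ has trivial $X$ and $Z$ components. By definition $\deg(A)$ is the least positive $m \in \mathbb{Z}_{2N}$ with $A^m = \omega^q I$ for some $q$, and since a scalar operator is precisely one whose $X$ and $Z$ components both vanish, $A^m$ is scalar iff the $X$ and $Z$ parts of $A^m$ are $\mathbf{0}$; the surviving phase is then the fundamental phase $q$. Existence of such an $m$, hence that $\deg(A)$ is well defined and divides $2N$, is already noted in the text via $A^{2N}=I$. I would then treat the diagonal and non-diagonal cases separately, in each case computing $A^m$ explicitly with the \textbf{POW} identity of Table~\ref{tab:algebraic_identities}.

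For diagonal $A = XP_N(p|\mathbf{0}|\mathbf{z})$, the \textbf{POW} rule collapses (the $D_N$ correction and the $a\mathbf{x}$ term both vanish) to $A^m = XP_N(mp \bmod 2N \mid \mathbf{0} \mid m\mathbf{z} \bmod N)$. Thus $A^m$ is scalar iff $N \mid m\,\mathbf{z}[i]$ for every $i$, i.e.\ iff $m$ is a common multiple of the numbers $N/\GCD(N,\mathbf{z}[i])$. The least positive such $m$ is $\LCM\{N/\GCD(N,\mathbf{z}[i]) : 0 \le i < n\}$, which is the claimed value; since each term divides $N$, the $\LCM$ divides $N$ and so lies in $\mathbb{Z}_{2N}$, as required.

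For non-diagonal $A$ (so $\mathbf{x} \neq \mathbf{0}$ in $\mathbb{Z}_2^n$), observe from \textbf{POW} that the $X$ component of $A^m$ is $(m \bmod 2)\,\mathbf{x}$, which is $\mathbf{0} \bmod 2$ only when $m$ is even; hence $\deg(A)$ is even, say $\deg(A) = 2k$. Since $A^2$ is diagonal (\textbf{SQ} identity / Example~\ref{eg:identities}), $A^{2k} = (A^2)^k$, and a short Euclidean-division argument on exponents shows $(A^2)^k$ is scalar iff $\deg(A^2) \mid k$. Putting these together, the exponents $m$ with $A^m$ scalar are exactly the positive multiples of $2\deg(A^2)$, so $\deg(A) = 2\deg(A^2)$; note also that $\deg(A^2) \le N$ by the diagonal case, so $2\deg(A^2) \le 2N$.

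There is no deep obstacle here: the content is entirely in faithfully translating ``$A^m$ is a scalar'' into componentwise divisibility conditions via \textbf{POW} and then reading off the minimal exponent. The step needing the most care is the non-diagonal case, where one must justify both that $\deg(A)$ is forced to be even and that the admissible half-exponents $k$ are exactly the multiples of $\deg(A^2)$ — so that the minimum is genuinely $2\deg(A^2)$ and not something smaller. Both points rest on the closure-under-divisibility property of the degree of a fixed operator, which in turn follows from the division argument sketched above.
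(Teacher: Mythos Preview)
Your proof is correct and follows exactly the same approach as the paper's: compute $A^m$ via the \textbf{POW} rule, solve $m\mathbf{z} \equiv \mathbf{0} \pmod N$ componentwise for the diagonal case, and for the non-diagonal case observe that odd powers retain a nonzero $X$ component so the degree must be even, then reduce to the diagonal case via $A^2$. The paper's proof is terser (it omits the explicit LCM extraction and the divisibility argument for the half-exponent), so your version simply fills in details the paper leaves implicit.
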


\begin{proof}

To show 1, note that where $A$ is diagonal, $A^m = XP_N(mp|\mathbf{0}|m\mathbf{z})$. We need to solve for $m\mathbf{z} = \mathbf{0} \mod N$. To show 2, note that odd powers of $A$ are non-diagonal, so the degree must be even. Apply 1 to $A^2$ which is diagonal. 
\end{proof}

Once we have the degree of an operator, the fundamental phase is the phase component of $A^{\deg(A)}$. Determining the fundamental phase and degree of an XP operator allows us to identify its eigenvalues, as follows:

\begin{proposition}[Eigenvalues of Operator]
If $A$ has degree $d$ and fundamental phase $q$, the only possible eigenvalues of $A$ are $\omega^m : m = (q + 2Nj)/d$ for $j \in [0\ldots d-1]$.
\end{proposition}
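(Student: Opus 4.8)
The plan is to show that every eigenvalue of $A$ is a $2N$-th root of unity whose $d$-th power equals $\omega^{q}$, and then to count how many such roots there are modulo $2N$. First I would note that $A = XP_N(p|\mathbf{x}|\mathbf{z})$ is a product of the unitaries $\omega I$, $X$ and $P$, hence unitary, and that $A^{2N} = I$. Consequently, if $\lambda$ is an eigenvalue of $A$ then $\lambda^{2N} = 1$, so $\lambda = \omega^{m}$ for some integer $m$ that is well defined modulo $2N$.

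Next I would invoke the definitions of degree and fundamental phase: $A^{d} = \omega^{q} I$, where $d = \deg(A)$. Applying both sides to an eigenvector $|v\rangle$ with $A|v\rangle = \omega^{m}|v\rangle$ gives $\omega^{md}|v\rangle = A^{d}|v\rangle = \omega^{q}|v\rangle$, so $\omega^{md} = \omega^{q}$, i.e.\ $md \equiv q \pmod{2N}$. Hence $md = q + 2Nj$ for some $j \in \mathbb{Z}$ and $m = (q + 2Nj)/d$. Since $\lambda = \omega^{m}$ depends only on $m \bmod 2N$, and replacing $j$ by $j + d$ changes $m$ by exactly $2N$, every eigenvalue has the stated form for some $j \in \{0,1,\dots,d-1\}$, which is the claim.

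The one step that needs a genuine argument rather than a one-line computation is checking that the parametrisation $m = (q+2Nj)/d$ runs over integers. For this I would first establish $d \mid 2N$: the set $\{m \in \mathbb{Z} : A^{m} \text{ is a scalar multiple of } I\}$ is a subgroup of $\mathbb{Z}$ — it is closed under addition since a product of scalar matrices is scalar, and under negation since $A$ is invertible — it contains $2N$, and its least positive element is $\deg(A) = d$ by definition; therefore $d \mid 2N$. Writing $2N = dk$ and using $\omega^{qk} I = (A^{d})^{k} = A^{2N} = I$ yields $2N \mid qk$, hence $d \mid q$, so $m = q/d + (2N/d)j$ is indeed an integer for every $j$. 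I do not expect any real obstacle beyond this bookkeeping: the substance is simply that an eigenvalue of $A$ inherits the relation $\lambda^{d} = \omega^{q}$ from $A^{d} = \omega^{q} I$, and the remainder is tracking exponents modulo $2N$.
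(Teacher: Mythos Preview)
Your argument is correct and follows essentially the same route as the paper: take an eigenvector, apply $A^{d}=\omega^{q}I$ to it, and read off the congruence $md\equiv q\pmod{2N}$. The paper's proof simply assumes the eigenvalue has the form $\omega^{p}$ and stops at the congruence, whereas you additionally justify that eigenvalues are $2N$-th roots of unity and that $d\mid 2N$ and $d\mid q$ so the formula yields integers; these are details the paper leaves implicit.
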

\begin{proof}
Let $|\psi\rangle$ be an eigenvector with $A|\psi\rangle = \omega^p|\psi\rangle$. By the definition of degree and fundamental phase, $A^d = \omega^qI$ so $A^d|\psi\rangle = \omega^{dp}|\psi\rangle = \omega^q|\psi\rangle$. Hence $dp = q \mod 2N$ and the result follows.
\end{proof}

The following proposition allows us to calculate the action of projectors of XP operators on a computational basis element:

\begin{proposition}[Action of XP Projectors on computational basis elements]\label{prop:xp_projectors}
Consider the projectors $A_\lambda$ of $A$ onto the $\lambda$-eigenspace of $A$. If $A$ is diagonal, the action of $A_\lambda$ on the basis element $|\mathbf{e}\rangle$ is:
\begin{align}
    A_\lambda |\mathbf{e}\rangle &= \begin{cases} |\mathbf{e}\rangle:\text{ if }A |\mathbf{e}\rangle = \lambda |\mathbf{e}\rangle\\ 0: \text{ if }A |\mathbf{e}\rangle \neq \lambda |\mathbf{e}\rangle\end{cases}\label{eq:proj_diag}
    \end{align}
If $A$ is non-diagonal, the action of $A_\lambda$ is: 
\begin{align}
    A_\lambda |\mathbf{e}\rangle &= \begin{cases}  \frac{1}{2}(I + \lambda^{-1}A) |\mathbf{e}\rangle:\text{ if }A^2 |\mathbf{e}\rangle = \lambda^2|\mathbf{e}\rangle\\0: \text{ if }A^2 |\mathbf{e}\rangle \neq \lambda^2 |\mathbf{e}\rangle\end{cases}\label{eq:proj_nondiag}
\end{align}
\end{proposition}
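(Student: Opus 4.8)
The plan is to treat the diagonal and non-diagonal cases separately, exactly as the statement splits them, and in each case to express $A_\lambda$ explicitly and check its action on a basis element $|\mathbf{e}\rangle$ using the action rule \eqref{eq:action_on_basis_elts}.

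For the diagonal case, the key observation is that a diagonal XP operator $A = XP_N(p|\mathbf{0}|\mathbf{z})$ is already diagonal in the computational basis, so every $|\mathbf{e}\rangle$ is an eigenvector of $A$ with eigenvalue $\omega^{p + 2\mathbf{e}\cdot\mathbf{z}}$. Hence the $\lambda$-eigenspace is spanned by exactly those $|\mathbf{e}\rangle$ with $A|\mathbf{e}\rangle = \lambda|\mathbf{e}\rangle$, and the spectral projector $A_\lambda$ is the coordinate projection onto that set of basis vectors. This gives \eqref{eq:proj_diag} immediately; essentially nothing needs to be computed.

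For the non-diagonal case I would proceed in three steps. First, since $A$ is non-diagonal, $\deg(A)$ is even and $A^2$ is diagonal; by the Eigenvalues proposition the eigenvalues $\lambda$ of $A$ satisfy that $\lambda^2$ is an eigenvalue of $A^2$, and moreover $A$ has at most two eigenvalues $\pm\lambda$ squaring to a given eigenvalue $\lambda^2$ of $A^2$ (because on the $\lambda^2$-eigenspace of $A^2$, $A$ satisfies $A^2 = \lambda^2 I$, i.e. $(\lambda^{-1}A)^2 = I$). Second, on that eigenspace $\lambda^{-1}A$ is an involution, so its $+1$-eigenprojector is $\tfrac12(I + \lambda^{-1}A)$, and since this eigenspace is $A$-invariant and complementary to the other eigenspaces of $A^2$, the full projector $A_\lambda$ acts as $\tfrac12(I+\lambda^{-1}A)$ on the $\lambda^2$-eigenspace of $A^2$ and as $0$ elsewhere. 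Third, I would combine this with the diagonal case applied to $A^2$: a basis element $|\mathbf{e}\rangle$ lies in the $\lambda^2$-eigenspace of $A^2$ precisely when $A^2|\mathbf{e}\rangle = \lambda^2|\mathbf{e}\rangle$, which can be checked directly via \eqref{eq:action_on_basis_elts} and the SQ rule; on such $|\mathbf{e}\rangle$ we get $A_\lambda|\mathbf{e}\rangle = \tfrac12(I+\lambda^{-1}A)|\mathbf{e}\rangle$, and otherwise $A_\lambda|\mathbf{e}\rangle = 0$. This yields \eqref{eq:proj_nondiag}.

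The main obstacle is the second step of the non-diagonal case: justifying that the spectral projector $A_\lambda$ restricted to the $\lambda^2$-eigenspace of $A^2$ is exactly $\tfrac12(I+\lambda^{-1}A)$, and in particular that $|\mathbf{e}\rangle$ belonging to the $\lambda^2$-eigenspace of $A^2$ forces $A|\mathbf{e}\rangle$ to stay in the span of $\{|\mathbf{e}\rangle, |\mathbf{e}\oplus\mathbf{x}\rangle\}$ with $\tfrac12(I+\lambda^{-1}A)$ landing in the $\lambda$-eigenspace (and not picking up a spurious contribution from a $-\lambda$ component). This is really a statement that $\lambda^{-1}A$ acts as a genuine involution on that subspace, which follows from $A^2 = \lambda^2 I$ there, but it needs the preliminary check that $A$ preserves the $\lambda^2$-eigenspace of $A^2$ — immediate since $A$ commutes with $A^2$. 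Everything else is routine bookkeeping with the OP and SQ rules of Table~\ref{tab:algebraic_identities}.
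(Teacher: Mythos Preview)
Your proposal is correct and follows essentially the same approach as the paper: the core of both arguments is that on the $\lambda^2$-eigenspace of the diagonal operator $A^2$ one has $A^2 = \lambda^2 I$, so $\tfrac12(I+\lambda^{-1}A)|\mathbf{e}\rangle$ is a $\lambda$-eigenvector of $A$. The paper's proof is in fact much terser than yours---it only writes out the one-line computation $A\bigl(\tfrac12(I+\lambda^{-1}A)|\mathbf{e}\rangle\bigr) = \lambda\bigl(\tfrac12(I+\lambda^{-1}A)|\mathbf{e}\rangle\bigr)$ and leaves the diagonal case and the ``$0$'' branch implicit---so your more careful spectral-decomposition framing is, if anything, a cleaner justification of the same idea.
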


\begin{proof}
To verify Eq.~\eqref{eq:proj_nondiag}, note that where $A^2|\mathbf{e}\rangle = \lambda^2 |\mathbf{e}\rangle$:
\begin{equation}
    A\Big(\frac{1}{2}(I + \lambda^{-1}A)|\mathbf{e}\rangle\Big)
    =\frac{1}{2}(A + \lambda^{-1}A^2)|\mathbf{e}\rangle =\frac{1}{2}(A + \lambda^{-1}\lambda^2 I)|\mathbf{e}\rangle =\lambda\Big(\frac{1}{2}(I + \lambda^{-1}A)|\mathbf{e}\rangle\Big)\,.
\end{equation}
\end{proof}

\section{Calculating Codewords from Stabiliser Generators}\label{chap:codewords}

In this section, we show how to identify the codespace stabilised by a given set of XP  operators. In the Pauli stabiliser formalism, there is a very simple relationship between the number of stabiliser generators and the dimension of the codespace the stabiliser group defines.  Given a stabiliser group on $n$ qubits with $k$ independent commuting generators, the codespace has dimension $2^{(n-k)}$.

In the XP formalism, this relationship is much more complex.  For example, the eigenspace dimensions of XP operators vary widely and are not in general powers of 2. As an illustration of this complexity, the $+1$ eigenspace dimensions which arise for various 7-qubit diagonal XP operators of precision 8 are listed in Table~\ref{tab:codespace_dim}.  Readers can explore eigenspaces of diagonal XP operators in the \href{https://github.com/m-webster/XPFpackage/blob/main/Examples/4.1_eigenspaces.ipynb}{linked Jupyter notebook}. 

\begin{table}[hbt!]
    \centering
$$
\begin{array}{|l r||l r||l r|}
\hline
\textbf{Operator}&\textbf{Dim}&\textbf{Operator}&\textbf{Dim}&\textbf{Operator}&\textbf{Dim}\\
\hline
\hline
XP_8( 0|\mathbf{0}|3333333)&  1			&XP_8( 0|\mathbf{0}|1333355)& 15			 &XP_8( 0|\mathbf{0}|2222266)& 28 \\
XP_8( 0|\mathbf{0}|2555555)&  2			 &XP_8( 0|\mathbf{0}|6133555)& 16			 &XP_8( 0|\mathbf{0}|6111177)& 30 \\
XP_8( 0|\mathbf{0}|0133333)&  4			 &XP_8( 0|\mathbf{0}|1173335)& 17			 &XP_8( 0|\mathbf{0}|4222666)& 32 \\
XP_8( 0|\mathbf{0}|2355555)&  6			 &XP_8( 0|\mathbf{0}|6111735)& 18			 &XP_8( 0|\mathbf{0}|3333555)& 35 \\
XP_8( 0|\mathbf{0}|3333335)&  7			 &XP_8( 0|\mathbf{0}|1173355)& 19			 &XP_8( 0|\mathbf{0}|2222666)& 36 \\
XP_8( 0|\mathbf{0}|2223555)&  8			 &XP_8( 0|\mathbf{0}|6135555)& 20			 &XP_8( 0|\mathbf{0}|0333555)& 40 \\
XP_8( 0|\mathbf{0}|6133335)& 10			 &XP_8( 0|\mathbf{0}|3333355)& 21			 &XP_8( 0|\mathbf{0}|0003355)& 48 \\
XP_8( 0|\mathbf{0}|6133355)& 12			 &XP_8( 0|\mathbf{0}|6155555)& 22			 &XP_8( 0|\mathbf{0}|4444444)& 64 \\
XP_8( 0|\mathbf{0}|1733333)& 13			 &XP_8( 0|\mathbf{0}|2661117)& 24			 &XP_8( 0|\mathbf{0}|0000000)&128 \\
XP_8( 0|\mathbf{0}|6113555)& 14			 &XP_8( 0|\mathbf{0}|6111117)& 26		& &	  \\
\hline
\end{array}
$$

    \caption{Example:  Eigenspace dimensions for selected diagonal XP operators with $n=7$, $N = 8$.}
    \label{tab:codespace_dim}
\end{table}

Here we present an algorithm to identify the codespace of a set of $XP$ operators.  The input for our algorithm is an arbitrary list of XP operators $\mathbf{G} \subset \mathcal{XP}_{N,n}$. The output is a list of codewords $\{|\kappa_i\rangle : 0 \le i < \dim(\mathcal{C})\}$ that form a basis for the codespace $\mathcal{C}$ stabilised by the group $\mathcal{G} = \langle \mathbf{G}\rangle$, or an empty set if there is no codespace.  The algorithm operates in two steps:
\begin{enumerate}
\item Convert the set of XP operators into a \textbf{canonical form}. This is a set of generators $\mathbf{S}$ in a particular form  which  generate the stabiliser group $\langle\mathbf{S}\rangle = \langle\mathbf{G}\rangle$. We split the canonical generators into diagonal $\mathbf{S}_Z$ and non-diagonal generators $\mathbf{S}_X$, where the diagonal canonical generators $\mathbf{S}_Z$ generate the diagonal subgroup of the stabiliser group.
\item Calculate an independent set of codewords $|\kappa_i\rangle$ that span the codespace.  We do this by applying the orbit operator (defined below in terms of the $\mathbf{S}_X$) to particular computational basis elements $|\mathbf{m}_i\rangle$ in the simultaneous $+1$ eigenspace of the $\mathbf{S}_Z$.
\end{enumerate}

We will describe these steps in detail in the following sections.

\subsection{Canonical Generators of XP Groups}\label{sec:canonical_generators}

For any set of XP operators $\mathbf{G}$, we can calculate a set of operators in canonical form that generate the same XP group as $\mathbf{G}$. Specifically, we calculate an independent set of diagonal ($\mathbf{S}_Z$) and non-diagonal ($\mathbf{S}_X$) operators that generate $\langle\mathbf{G}\rangle$. The diagonal subgroup of $\langle \mathbf{G}\rangle$ is generated by $\mathbf{S}_Z$.

Proposition~\ref{prop:canonical_generators} sets out the form and properties of the canonical generators. The proposition uses the concept of a generator product which is defined as follows. The \textbf{generator product} of an ordered set of XP operators $\mathbf{S} = \{S_0,\dots,S_{m-1}\}$ specified by a vector of  integers $\mathbf{a} \in \mathbb{Z}^m$ is:
\begin{align}
    \mathbf{S}^\mathbf{a} = \prod_{0 \le i < m }S_i^{\mathbf{a}[i]}\label{eq:generator_product}
\end{align}

\begin{proposition}[Canonical Generators of an XP Group]\label{prop:canonical_generators}

For any set of XP Operators $\mathbf{G} = \{G_i: 0 \le i < m \}$, there exists a unique set of diagonal operators $\mathbf{S}_Z := \{B_j : 0 \le j < s\}$ and non-diagonal operators $\mathbf{S}_X := \{A_i : 0 \le i < r\}$ with the following form:
\begin{enumerate}
\item Let $S_X$ be the $r \times n$ binary matrix formed from the X-components of the $\mathbf{S}_X$. The matrix $S_X$ is in Reduced Row Echelon Form (RREF).
\item Let $S_{Zp}$ be the $s \times (n+1)$ matrix with rows taken from the image of  $\mathbf{S}_Z$ under the Zp map of Section~\ref{sec:xp_group_structure} (i.e. $\text{Zp}(XP_N(p|\mathbf{0}|\mathbf{z})) = (2\mathbf{z}|p)$). The matrix $S_{Zp}$ is in Howell Form (see Appendix~\ref{app:linalg}).
\item For $XP_N(p|\mathbf{x}|\mathbf{z}) \in \mathbf{S}_X$, the matrix $\begin{pmatrix}1&(2\mathbf{z}|p)\\\mathbf{0}&S_{Zp}\end{pmatrix}$ is in Howell  Form.
\end{enumerate}

The following properties hold for the canonical generators:

\paragraph{Property 1:} All group elements $G \in \langle\mathbf{G}\rangle$ can be expressed as $G = \mathbf{S}_X^\mathbf{a} \mathbf{S}_Z^\mathbf{b}$ where $\mathbf{a} \in \mathbb{Z}_2^{|\mathbf{S}_X|}$, $\mathbf{b}  \in \mathbb{Z}_N^{|\mathbf{S}_Z|}$
\paragraph{Property 2:} Two sets of XP operators of precision $N$ generate the same group if and only if they have the same canonical generators.

\end{proposition}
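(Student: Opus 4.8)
The plan is to prove existence and the stated form first, then uniqueness, then the two Properties.

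\textbf{Existence and form.} I would build $\mathbf{S}_Z$ and $\mathbf{S}_X$ constructively. First, extract the diagonal part: by Example~\ref{eg:identities}, squares and commutators of the $G_i$ are diagonal, but more directly the diagonal subgroup $\mathcal{G}_Z$ of $\langle\mathbf{G}\rangle$ is what we need. Following Section~\ref{sec:xp_group_structure}, apply the $\text{Zp}$ homomorphism: compute a generating set for $\text{Zp}(\mathcal{G}_Z)$ inside $\mathbb{Z}_{2N}^{n+1}$, put it in Howell form over $\mathbb{Z}_N$ (the Howell form being the canonical row-span representative for modules over $\mathbb{Z}_N$, per Appendix~\ref{app:linalg}), and pull back via $\text{Zp}^{-1}$ to get $\mathbf{S}_Z$ with $S_{Zp}$ in Howell form — this is item 2. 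But there is a subtlety: $\mathcal{G}_Z$ is not just generated by the diagonal $G_i$; it also contains diagonal elements produced as products of non-diagonal generators (e.g. $G_iG_j$ when their $X$-components cancel, or $G_i^2$). So the honest construction is simultaneous: first row-reduce the binary $X$-components of all of $\mathbf{G}$ to find a maximal independent set whose RREF gives the pivots of $S_X$ (item 1); the remaining generators, after multiplying by suitable products of the pivot generators to kill their $X$-components, become diagonal and get absorbed into the pool that generates $\mathcal{G}_Z$. Then compute the Howell form of $S_{Zp}$, and finally reduce each non-diagonal generator's $(2\mathbf{z}|p)$ row against $S_{Zp}$ so that the augmented matrix $\begin{pmatrix}1&(2\mathbf{z}|p)\\\mathbf{0}&S_{Zp}\end{pmatrix}$ is in Howell form — this is item 3, which just says the phase/Z-data of each $A_i$ is the canonical residue modulo the diagonal stabiliser subgroup.

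\textbf{Property 1.} Once the generators are in this form, every $G\in\langle\mathbf{G}\rangle$ factors as $\mathbf{S}_X^{\mathbf{a}}\mathbf{S}_Z^{\mathbf{b}}$: given $G = XP_N(p|\mathbf{x}|\mathbf{z})$, the $X$-component $\mathbf{x}$ lies in the row span of $S_X$ over $\mathbb{Z}_2$ (this needs the observation that the map sending a group element to its $X$-component is a homomorphism onto a binary space, since $X$-components add mod $2$ under MUL), so there is a unique $\mathbf{a}\in\mathbb{Z}_2^{|\mathbf{S}_X|}$ with $\mathbf{S}_X^{\mathbf{a}}$ having the same $X$-component as $G$; then $\mathbf{S}_X^{-\mathbf{a}}G$ is diagonal and in $\mathcal{G}_Z$, hence equals $\mathbf{S}_Z^{\mathbf{b}}$ for some $\mathbf{b}\in\mathbb{Z}_N^{|\mathbf{S}_Z|}$ because $\text{Zp}$ is a homomorphism onto the Howell row span. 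The ranges $\mathbb{Z}_2$ and $\mathbb{Z}_N$ are the right ones since $A_i^2\in\mathcal{G}_Z$ (absorbed) and $B_j^N=I$.

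\textbf{Uniqueness (Property 2).} This is the main obstacle. One direction is trivial: same canonical generators $\Rightarrow$ same group. For the converse, suppose $\langle\mathbf{G}\rangle=\langle\mathbf{G}'\rangle=:\mathcal{G}$; I must show the construction above produces the same $(\mathbf{S}_Z,\mathbf{S}_X)$. The diagonal subgroup $\mathcal{G}_Z$ is intrinsic to $\mathcal{G}$, so $\text{Zp}(\mathcal{G}_Z)$ is an intrinsic submodule of $\mathbb{Z}_{2N}^{n+1}$, and its Howell form is unique by the defining property of Howell form (Appendix~\ref{app:linalg}) — this pins down $\mathbf{S}_Z$. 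For $\mathbf{S}_X$: the set of $X$-components occurring in $\mathcal{G}$ is an intrinsic $\mathbb{Z}_2$-subspace, whose RREF basis is unique — this pins down the $X$-components of $\mathbf{S}_X$ and their pivot structure. Finally, for each such $X$-component $\mathbf{x}$, the coset $\{(2\mathbf{z}|p) : XP_N(p|\mathbf{x}|\mathbf{z})\in\mathcal{G}\}$ is a coset of $\text{Zp}(\mathcal{G}_Z)$ in $\mathbb{Z}_{2N}^{n+1}$, and item 3 selects the unique coset representative reduced against the Howell form of $S_{Zp}$; uniqueness of that reduced representative (again a Howell-form property) pins down the phase and Z-component of each $A_i$. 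The care needed here is to confirm that "reduce the augmented row against $S_{Zp}$ in Howell form" genuinely yields a unique representative of the coset — this follows because Howell form gives canonical forms for the quotient, but I would state it as a lemma about Howell reduction in Appendix~\ref{app:linalg} and cite it, rather than re-deriving it inline.
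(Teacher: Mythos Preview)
Your uniqueness argument (Property 2) is sound and in fact more explicit than the paper's brief invocation of RREF/Howell uniqueness. But there is a genuine gap in your construction of $\mathbf{S}_Z$. You say the pool generating $\mathcal{G}_Z$ consists of the originally diagonal $G_i$ plus the non-pivot generators after killing their $X$-components. This misses the squares $A_i^2$ of the pivot generators, the commutators $[A_i,A_j]$ between pivot generators, and the iterated commutators $[B,A_i]$ between diagonal elements already in the pool and pivot generators --- all of which are diagonal (Example~\ref{eg:identities}) and lie in $\mathcal{G}_Z$ but do not arise from your construction. Concretely: if $\mathbf{G}=\{A_1,A_2\}$ with independent $X$-components, your pool is empty, yet $\mathcal{G}_Z$ contains $A_1^2$, $A_2^2$, $[A_1,A_2]$, and further commutators, and is typically nontrivial. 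You noticed squares and commutators at the outset but then set them aside in favour of ``more directly'' working with $\mathcal{G}_Z$; the point is that you have no way to get your hands on generators of $\mathcal{G}_Z$ without them. The paper's algorithm (Appendix~\ref{sec:canonical_generator_proof}, Steps 3--4) explicitly closes the pool under squares and commutators, and then iterates commutators between $\mathbf{S}_Z$ and $\mathbf{S}_X$; for $N=2^t$ exactly $t-1$ rounds suffice because each commutator at least halves the degree.

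This gap propagates to your Property 1 argument: you infer $\mathbf{S}_X^{-\mathbf{a}}G\in\langle\mathbf{S}_Z\rangle$ from $\mathbf{S}_X^{-\mathbf{a}}G\in\mathcal{G}_Z$, but that step needs $\langle\mathbf{S}_Z\rangle=\mathcal{G}_Z$, which your construction does not deliver. The paper proves Property 1 differently, by a word-reordering argument: take an arbitrary word in the generators and push diagonal factors rightward via $BA=AB\cdot(B^{-1}A^{-1}BA)$ and $A_jA_i=A_iA_j\cdot(A_j^{-1}A_i^{-1}A_jA_i)$, and collapse higher powers of non-diagonal generators via $A^2\in\langle\mathbf{S}_Z\rangle$. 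Each emitted commutator or square lies in $\langle\mathbf{S}_Z\rangle$ precisely because of the closure step. Your factor-out-the-$X$-component argument is slicker once $\langle\mathbf{S}_Z\rangle=\mathcal{G}_Z$ is known, but you need the closure to get there.
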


Conditions 1-3 ensure the canonical generators are unique for a given set of operators, and so can form the basis of the test in Property 2. The entries of $S_{Zp}$ are from $\mathbb{Z}_{2N}$ which is in general a ring. The Howell matrix form is a generalisation of the RREF which gives us a canonical basis for the row span of a matrix over a ring. See Appendix~\ref{app:linalg} for a full description of the Howell matrix form and linear algebra over rings. 

We briefly consider the implications of Properties 1 and 2.  Let $\langle \mathbf{G}\rangle$ be an XP stabiliser group. If there is an operator of form $\omega^qI, q \ne 0$ in $\langle \mathbf{G}\rangle$, the codespace is empty. Due to the Howell Property (see Section~\ref{sec:howell_matrix_form}), we can determine if this is the case by checking if $\omega^qI \in \mathbf{S}_Z$ for some $q \ne 0$. This is a generalisation of the requirement that $-I \notin \langle\mathbf{G}\rangle$ in the Pauli stabiliser formalism and the concept of admissible generating sets in the XS stabiliser formalism (see Ref.~\cite{xs} on page 7). Going forward, we assume that XP codes are specified in terms of their canonical generators $\mathbf{S}_Z, \mathbf{S}_X$ and that there is no element $\omega^qI, q \ne 0$ in $\mathbf{S}_Z$.

Because the matrices $S_X, S_{Zp}$ are in echelon form, this imposes a natural ordering on $\mathbf{S}_X, \mathbf{S}_Z$. Property 1 states that we can write any $G \in \langle \mathbf{G} \rangle$ as a product of the canonical generators where operators are applied in this order. It implies that $\mathbf{S}_Z$ generates the diagonal subgroup of $\langle \mathbf{G}\rangle$  because the diagonal subgroup is the set of operators where $\mathbf{a} = \mathbf{0}$. Recalling Example~\ref{eg:identities}, all commutators and squares of elements in $\langle \mathbf{G}\rangle$ are diagonal and so are in $\langle\mathbf{S}_Z\rangle$. Applying the results in Section~3 of Ref.~\cite{howell}, we can also determine the size of the group $\langle \mathbf{G}\rangle$ once we have the canonical generator form.

In Appendix~\ref{sec:canonical_generator_proof}, we demonstrate an algorithm for calculating the canonical generators and prove Proposition~\ref{prop:canonical_generators}.

\subsection{Finding a Basis of the Codespace}\label{sec:find_code_space}
In this section, we will show how to find a basis of the codespace stabilised by the  canonical generators $\mathbf{S}_Z , \mathbf{S}_X$ of Section~\ref{sec:canonical_generators}. The result is a set of independent codewords that span the codespace. 

The codespace is the intersection of the simultaneous $+1$ eigenspace of the diagonal generators and the $+1$ eigenspace of the non-diagonal generators: \begin{align}
\mathcal{C} = \mathcal{C}_Z \cap \mathcal{C}_X 
\end{align}

The diagonal generators determine the \textbf{Z-support} of the codewords. We define the \textbf{Z-support} of a state $|\psi\rangle$  in $\mathcal{H}_2^n$ as the set of length $n$ binary vectors $\mathbf{e}$ such that the coefficient of the corresponding computational basis vector $|\mathbf{e}\rangle$ in $|\psi\rangle$ is non-zero. That is:
\begin{align}
    \ZSupp(|\psi\rangle) &= \{\mathbf{e} \in \mathbb{Z}_2^n: \langle \mathbf{e}|\psi\rangle \ne 0\}
\end{align}
Because all elements in $\mathbf{S}_Z$ are diagonal, we can write a basis of $\mathcal{C}_Z$ as a set of computational basis vectors:
\begin{align}
\mathcal{C}_Z &= \Span_{\mathbb{C}}\{|\mathbf{e}\rangle : \mathbf{e} \in \mathbb{Z}_2^n, B|\mathbf{e}\rangle = |\mathbf{e}\rangle, \forall B \in \mathbf{S}_Z\}\label{eq:C_Z}
\end{align}
Let $E := \ZSupp(\mathcal{C}_Z)$ be the binary vectors corresponding to the computational basis vectors in $\mathcal{C}_Z$. Any codeword expressed in terms of the computational basis must be a linear combination over $\mathbb{C}$ of $|\mathbf{e}\rangle, \mathbf{e}\in E$.

The non-diagonal generators determine the relative phases of the computational basis vectors in the codewords. The relative phase information is captured by the \textbf{orbit operator}. Let $\mathbf{S}_X$ be the non-diagonal canonical generators $\{A_i : 0 \le i < r\}$ ordered as in Section~\ref{sec:canonical_generators}. Using the generator product notation of Eq.~\eqref{eq:generator_product}, the orbit operator is defined as:
\begin{align}
O_{\mathbf{S}_X} := \sum_{\mathbf{v} \in \mathbb{Z}_2^r} \mathbf{S}_X^\mathbf{v} \label{eq:orbit_operator}
\end{align}
Where $\mathbf{e} \in E$, the image of $|\mathbf{e}\rangle$ under the orbit operator,  $O_{\mathbf{S}_X}|\mathbf{e}\rangle$,  is fixed by all elements of the stabiliser group $\langle \mathbf{S}_Z, \mathbf{S}_X\rangle$ (see Proposition~\ref{prop:cworbit}). In the next section, we demonstrate how to find an independent set of codewords of this form which span the codespace.

\subsubsection{Coset Structure of $E$ and Orbit Representatives}\label{sec:coset_structure_of_E}
In this section, we assume we are given $E$, the Z-support of $\mathcal{C}_Z$ as in Eq.~\eqref{eq:C_Z}, and show how to identify a subset $E_m$ of $E$ such that the image of $E_m$ under the orbit operator is a basis of the codespace (i.e. an independent spanning set).  The resulting basis is a set of un-normalised codewords $|\kappa_i\rangle$ such that:
\begin{align}
    |\kappa_i \rangle &:= O_{\mathbf{S}_X}|\mathbf{m}_i \rangle: \mathbf{m}_i \in E_m\label{eq:kappa}
\end{align}
The normalisation constant is $\frac{1}{\sqrt{2^r}}$ where $r$ is the number of non-diagonal canonical generators, and is omitted for clarity. Once we have $E_m$, we know the dimension of the codespace:
\begin{align}
    \dim(\mathcal{C}) &= |E_m|
\end{align}
We identify the subset $E_m$ by looking at the \textbf{coset structure} of $E$. First, we show that the Z-support of a codeword in the form of Eq.~\eqref{eq:kappa} can be viewed as a coset in the group $\mathbb{Z}_2^n$ under component-wise addition modulo 2. Let $S_X$ be the binary matrix formed from the X-components of the $\mathbf{S}_X$ which is in RREF by construction (See Proposition~\ref{prop:canonical_generators}). Then $\text{Span}_{\mathbf{Z}_2}(S_X) = \langle S_X\rangle$ is a subgroup of $\mathbb{Z}_2^n$ of size $2^r$ where $r=|S_X|$. The Z-support of $O_{\mathbf{S}_X}|\mathbf{e}\rangle$ can be identified with a \textbf{coset} in the group of binary vectors $\mathbb{Z}_2^n$:
\begin{align}
     \ZSupp(O_{\mathbf{S}_X}|\mathbf{e}\rangle) = \mathbf{e} + \langle S_X\rangle := \{(\mathbf{e} + \mathbf{u}S_X)\mod 2: \mathbf{u} \in \mathbb{Z}_2^r\}  
\end{align} 

Next, we introduce the \textbf{residue function} which tells us whether two vectors are in the same coset, and hence occur in the Z-support of the same codeword. Let $\mathbf{m} = \text{Res}_{\mathbb{Z}_2}(S_X,\mathbf{e})$ be defined as:
\begin{align}
\begin{pmatrix}1&\mathbf{m}\\\mathbf{0}&S_X\end{pmatrix} :=\text{RREF}_{\mathbb{Z}_2}\begin{pmatrix}1&\mathbf{e}\\\mathbf{0}&S_X\end{pmatrix} 
\end{align}
Two vectors $\mathbf{e}_1, \mathbf{e}_2 \in E$ are in the same coset if and only if $\text{Res}_{\mathbb{Z}_2}(S_X,\mathbf{e}_1) =  \text{Res}_{\mathbb{Z}_2}(S_X,\mathbf{e}_2)$. The residue of $\mathbf{e}$ is zero if and only if $\mathbf{e} \in \langle S_X\rangle$.

We use the residue function to identify a subset of $E$ of minimal size whose image under the orbit operator yields a basis of the codespace. The set of \textbf{orbit representatives} $E_m$ is defined as the image of $E$ under the residue function:
\begin{align}
E_m := \{ \text{Res}_{\mathbb{Z}_2}(S_X,\mathbf{e}): \mathbf{e}\in E\}
\end{align}
The cosets of $E_m$ partition $E$ (see Proposition~\ref{prop:partition}). Accordingly, the image of $E_m$ under the orbit operator is a basis of the codespace (see Proposition~\ref{prop:kappa_are_basis}). 

\subsubsection{Codewords Notation}\label{sec:code_words_notation}
The following notation for codewords is used throughout this paper. Let $S_X$ be the matrix formed from the X-components of the non-diagonal canonical generators $\mathbf{S}_X$. As $S_X$ is in echelon form, rows have a natural ordering and we interchangeably consider $S_X$ to be a set of binary vectors. Let $S_X$ have $r$ non-zero rows or $|S_X| = r$. The set of codewords generated from the orbit representatives is uniquely determined. When written in terms of the computational basis, we refer to the following as the \textbf{orbit form} of the codewords:
\begin{align}
|\kappa_i \rangle &= O_{\mathbf{S}_X}|\mathbf{m}_i \rangle := \sum_{0 \le j < 2^r} \omega^{p_{ij}} |\mathbf{e}_{ij}\rangle: \mathbf{m}_i \in E_m, \exists p_{ij} \in \mathbb{Z}_{2N}, \mathbf{e}_{ij} \in \mathbb{Z}_2^n \label{eq:orbitform}
\end{align}
The \textbf{Z-support of the codewords} is the same as the Z-support of $\mathcal{C}_Z$, the simultaneous $+1$ eigenspace of the $\mathbf{S}_Z$, and is denoted $E$:
\begin{align}
E &= \ZSupp(\mathcal{C}_Z) = \bigcup_i \ZSupp(|\kappa_i\rangle)
\end{align}
We can write a \textbf{coset decomposition} of $E$ in terms of $E_m$ as follows:
\begin{align}
E &= E_m + \langle S_X\rangle = \{(\mathbf{m}_i + \mathbf{u}S_X)\mod 2: \mathbf{m}_i \in E_m, \mathbf{u} \in \mathbb{Z}_2^r\}
\end{align}
There is a direct relationship with the Z-support of each codeword as follows:
\begin{align}
 E_i &:= \ZSupp(|\kappa_i\rangle) = \mathbf{m}_i + \langle S_X\rangle
\end{align}
In addition, there is a unique coset decomposition of $E_m$ so that $E_m = E_q + \langle L_X\rangle$ for sets of binary vectors $E_q$ and $L_X$ so that $E = E_q +  \langle L_X\rangle +  \langle S_X\rangle$. We demonstrate how to find this decomposition in Section~\ref{sec:L_X}. The full coset decomposition is useful for the following reasons:
\begin{itemize}
    \item The X-components of logical operators must be in $\langle L_X\rangle + \langle S_X\rangle$, and we can calculate a generating set of non-diagonal logical operators with X components in $L_X$ (Section~\ref{sec:L_X})
    \item The size of $E_q$ gives rise to a natural classification of XP codes (Section~\ref{sec:classification}).
    \item We assign quantum numbers to each codeword (Section~\ref{sec:quantum_numbers}) based on the coset decomposition of $E_m$ which then allows us to analyse the logical action of operators (Section~\ref{sec:LO_Action}).

\end{itemize}

\subsection{Calculating Orbit Representatives from the Canonical Generators}\label{sec:finding_Em}

We have demonstrated that the image of the orbit representatives under the orbit operator is a basis of the codespace, and that the dimension of the codespace is given by the number of orbit representatives. In previous sections we have assumed that we have been given $E$, the Z-support of the simultaneous $+1$ eigenspace of the $\mathbf{S}_Z$, as a starting point. In practice, we generally start with the stabiliser generators of a code, and calculating $E$ from them is an NP-complete problem for XP codes (see Ref.~\cite{xs} Section VII). 

In this section, we show how to calculate the orbit representatives from the canonical generators. Orbit representatives have a specific form which reduces the search space significantly compared to searching for the whole of $E$. We use a graph search algorithm to make finding the orbit representatives tractable for `reasonable' codes.

\subsubsection{Exhaustive Algorithm to find $E$}\label{sec:exhaustive_algorithm_E}
We first show how to find $E$ using an exhaustive inefficient algorithm.  Let  \linebreak $\mathbf{S}_Z = \{B_i = XP_N(p_i|\mathbf{0}|\mathbf{z}_i): 0 \le i < s\}$.  Calculating the $+1$ eigenspace of $\mathbf{S}_Z$ is equivalent to solving the following equation in binary vectors $\mathbf{e}$:
\begin{align}
B_i|\mathbf{e}\rangle = \omega^{p_i+2\mathbf{e}\cdot \mathbf{z}_i}|\mathbf{e}\rangle &= \omega^0|\mathbf{e}\rangle, \forall B_i\in \mathbf{S}_Z \label{eq:B_i|e>}
\end{align}
For this equation to have solutions in $\mathbf{e}$, $p_i$ must be divisible by $2$ in $\mathbb{Z}_{2N}$. Let $S_{Zp}$ be the matrix with rows $(\mathbf{z}_i|p_i/2)$. Let $(\mathbf{e}|1)$ represent the column vector $\mathbf{e}$ with an entry of 1 appended. Eq.~\eqref{eq:B_i|e>} can be written in matrix form as:
\begin{align}
  S_{Zp}^T (\mathbf{e}|1) \mod N &= \mathbf{0} \,.
\end{align}
Solutions are of form:
\begin{align}
  (\mathbf{e}|1) = \mathbf{a}K \mod N\,,\label{eq:exhaustive_search}
\end{align}
where $K$ is the Howell basis of $\Ker_{\mathbb{Z}_N}(S_{Zp})$ (see Section~\ref{sec:howell_matrix_form}). To find solutions for binary vectors $\mathbf{e}$, we seek $\mathbf{a} \in \mathbb{Z}_N^n$ such that $(\mathbf{e}|1) = \mathbf{a}K \mod N$ is a vector of zeros and ones.  Linear algebra techniques cannot be used to find the vectors $\mathbf{a}$. We would in principle need to search through all possible values of $\mathbf{a}$ to find valid solutions. Given that there are $N^{|K|}$ possible values of $\mathbf{a}$, this is of exponential complexity.

\subsubsection{Graph Search Algorithm for Orbit Representatives}\label{sec:graph_search}
Rather than searching through all possible values of $\mathbf{a}$ in Eq.~\eqref{eq:exhaustive_search} to find $E$, we employ a more efficient graph search algorithm which uses the special form of the orbit representatives to speed up the the search.


The special form of the orbit representatives is as follows. Let $\mathbf{x}_j$ be the $j$th row of $\mathbf{S}_X$ and let $l_j$ be the \textbf{leading index} of  $\mathbf{x}_j$ - i.e. $\mathbf{x}_j[l_j] = 1$ and $\mathbf{x}_j[k] = 0, \forall k < l_j$\label{sec:leading_index}. In Proposition~\ref{prop:orbit_rep_form}, we show that for any orbit representative $\mathbf{m}_i \in E_m, \mathbf{m}_i[l_j] = 0$ for all leading indices $l_j$, $0 \le j < |S_X|$. In each coset of $S_X$, $\mathbf{m}_i$ is the unique vector with this property. As $S_X$ is in RREF, there are exactly $r=|S_X|$ leading indices $l_j$ where $\mathbf{m}_i[l_j]$ is guaranteed to be zero. 

We modify the exhaustive search algorithm presented in Section~\ref{sec:exhaustive_algorithm_E} to take into account the special form of the orbit representatives:
\begin{enumerate}
\item Find the Howell basis $K$ of $\Ker_{\mathbb{Z}_N}(S_{Zp})$
\item Search for solutions $(\mathbf{e}|1) =  \mathbf{a}K \mod N$ where $\mathbf{e}$ is a binary vector
\item As we only need a orbit representative $\mathbf{m}_i$ for each codeword, we can restrict $\mathbf{e}[l] = 0$ where the $l$ are the indices of the leading entries in $S_X$. 
\end{enumerate}
The search is made more efficient by storing and re-using partial solutions (dynamic programming) and results in a graph object from which the solutions can be generated. The main advantage over the exhaustive algorithm is due to the reduction of the search space by a factor of $2^r$ by using the special form of the orbit representatives. Where all stabiliser generators are diagonal and $r = |S_X| = 0$, the advantage over the exhaustive algorithm is not as significant. However, for `reasonable' codes which have both diagonal and non-diagonal generators, and which encode a relatively low number of logical qubits, the graph search algorithm is efficient in practice.

It is possible that the search algorithm returns an empty set.  In this case, the simultaneous $+1$ eigenspace of the $\mathbf{S}_Z$ is empty and there is no codespace. In this case, the XP stabiliser group does not define a code.

\subsection{Summary of Codewords Algorithm}\label{sec:code_word_algorithm_summary}
In summary, the algorithm for identifying the codespace stabilised by an arbitrary set of XP operators $\mathbf{G}$ is:
\begin{enumerate}
    \item Calculate the canonical generators $\mathbf{S}_Z$ and $\mathbf{S}_X$ such that $\langle \mathbf{G}\rangle = \langle \mathbf{S}_Z, \mathbf{S}_X\rangle$ using the algorithm in Section~\ref{sec:canonical_generators}. If $\omega^qI \in \mathbf{S}_Z$ for $q \ne 0$, the codespace is empty.
    \item Find the orbit representatives $E_m = \{\mathbf{m}_i\}$ using the graph search algorithm in Section~\ref{sec:graph_search}. The dimension of the codespace is $\dim(\mathcal{C}) = |E_m|$. If $E_m = \emptyset$, the codespace is empty as the simultaneous $+1$ eigenspace of the $\mathbf{S}_Z$ is of dimension zero.
    \item A basis of the codespace is given by $\{|\kappa_i\rangle = O_{\mathbf{S}_X}|\mathbf{m}_i\rangle : \mathbf{m}_i \in E_m \}$ using the orbit operator of Eq.~\eqref{eq:orbit_operator}. 
\end{enumerate}

\subsection{Example: Calculating Codewords - Code 1}\label{eg:code1}
We illustrate our algorithm to find the codewords with an example. We will use this same example throughout this paper to illustrate various concepts. The detailed calculations for this example are set out in \href{https://github.com/m-webster/XPFpackage/blob/main/Examples/4.5_code_words.ipynb}{the linked Jupyter notebook}. We start with the following stabiliser generators of precision $N=8$ on $n=7$ qubits: 
\begin{align}
\mathbf{G} &= \begin{matrix}XP_8( 8|0000000|6554444)\\
XP_8( 7|1111111|1241234)\\
XP_8( 1|1110000|3134444)\end{matrix}
\end{align}
\paragraph{Step 1: Canonical Generators}
Using the algorithm in Appendix~\ref{sec:canonical_generator_algorithm_outline}, the canonical generators for this code are:
\begin{align}
\mathbf{S}_Z &= \begin{matrix}XP_8( 8|0000000|2334444)\\
XP_8( 0|0000000|0440000)\end{matrix} \label{eq:SZexample}\\
\mathbf{S}_X &= \begin{matrix}XP_8( 9|1110000|1240000)\\
XP_8(14|0001111|0001234)\end{matrix} \label{eq:SXexample}
\end{align}

Note that a single diagonal generator yields multiple diagonal canonical generators.  This behaviour is typical of XP groups.

\paragraph{Step 2: Orbit Representatives}
The graph search algorithm in Section~\ref{sec:graph_search} yields the following orbit representatives:
\begin{align}
E_m &= \begin{pmatrix}
0000001\\
0000010\\
0000100\\
0000111\\
\end{pmatrix}\label{eq:Em Code 1}
\end{align}

The dimension of the codespace is $\dim(\mathcal{C}) = |E_m| = 4$.

\paragraph{Step 3: Image of orbit representatives under orbit operator is a basis}
Finally, we form an independent set of codewords $\{\kappa_i\}$ by applying the orbit operator $O_{\mathbf{S}_X}$ of Eq.~\eqref{eq:orbit_operator} to the computational basis elements corresponding to the $E_m$:
\begin{align}
\begin{matrix}
| \kappa_0\rangle &= O_{\mathbf{S}_X} |0000001\rangle =|0000001\rangle&+\omega^{6}|0001110\rangle&+\omega^{9}|1110001\rangle&+\omega^{15}|1111110\rangle\\
| \kappa_1\rangle &= O_{\mathbf{S}_X} |0000010\rangle =|0000010\rangle&+\omega^{4}|0001101\rangle&+\omega^{9}|1110010\rangle&+\omega^{13}|1111101\rangle\\
| \kappa_2\rangle &= O_{\mathbf{S}_X}|0000100\rangle =|0000100\rangle&+\omega^{2}|0001011\rangle&+\omega^{9}|1110100\rangle&+\omega^{11}|1111011\rangle\\
| \kappa_3\rangle &= O_{\mathbf{S}_X} |0000111\rangle =|0000111\rangle&+|0001000\rangle&+\omega^{9}|1110111\rangle&+\omega^{9}|1111000\rangle\\
\end{matrix}
\end{align}

\subsection{Calculating Codewords - Discussion and Summary of Results}

Given a set of XP operators $\mathbf{G}$, we can determine a basis for the codespace stabilised by $\langle\mathbf{G}\rangle$. We first determine a set of canonical generators using linear algebra techniques over rings (Section~\ref{sec:canonical_generators}). The method uses the unique vector representation of XP operators of Section~\ref{sec:vector_representation} and can be done efficiently. This mirrors the result for generalised Pauli groups in Ref.~\cite{qudit_codes}.

Once we have the generators in canonical form, we find the orbit representatives $E_m$ using a graph search algorithm (Section~\ref{sec:graph_search}). The codespace dimension corresponds to the number of orbit representatives, and applying the orbit operator defined in Eq.~\eqref{eq:orbit_operator} to the orbit representatives results in a basis of the codespace. The graph search algorithm works for any XP code, but its efficiency depends on the precision of the code and the number of non-diagonal stabiliser generators. In the worst case, where we have only diagonal stabilisers, finding the orbit representatives reduces to an NP-complete problem.

\section{Classification of XP Stabiliser States}\label{chap:hypergraph}

Now that we have some familiarity with the XP stabiliser formalism, it is natural to ask which quantum states can be represented within the formalism. In this chapter, we demonstrate an equivalence between XP stabiliser states and `weighted hypergraph states' - a generalisation of both hypergraph ~\cite{hypergraph} and weighted graph states ~\cite{weightedgraph1}. 

In the Pauli stabiliser formalism, any stabiliser state can be mapped via local Clifford operators to a graph state~\cite{graphstates}. In the XS Formalism~\cite{xs}, the authors show that the phases of an XS stabiliser state are described by a phase function which is a polynomial of maximum degree 3. In this chapter, we generalise these results to the XP formalism. 

In Section~\ref{sec:whg_definitions}, we introduce definitions for weighted hypergraph states. In Section~\ref{sec:xp_phase_function}, we describe the  phases which are possible for XP stabiliser states. In Section~\ref{sec:xp2whg}, we show how to represent any XP stabiliser state as a weighted hypergraph state. Finally, in Section~\ref{sec:whg2xp} we show how to represent any weighted hypergraph state as an XP stabiliser state. In general, this requires us to embed the weighted hypergraph state into a larger Hilbert space.

\subsection{Weighted Hypergraph State Definitions}\label{sec:whg_definitions}

In this section, we introduce the concept of weighted hypergraph states - a class of states which includes graph, hypergraph and weighted graph states. A \textbf{generalised controlled phase operator} $CP(p/q, \mathbf{v})$ is specified by a rational number $p/q$ and a binary vector $\mathbf{v}$ of length $r$. The action of the operator on a computational basis state $|\mathbf{e}\rangle, \mathbf{e}\in \mathbb{Z}_2^r$ is:
\begin{align}
CP(p/q,\mathbf{v})|\mathbf{e}\rangle := \begin{cases}\exp(i2\pi p/q)|\mathbf{e}\rangle: \mathbf{e}\mathbf{v} = \mathbf{v}\\
|\mathbf{e}\rangle: \text{ Otherwise}\end{cases}\,.
\end{align}
Multiplication of vectors in the above equation is component wise. We construct a \textbf{weighted hypergraph state} by applying a series of generalised controlled phase operators to the $|+\rangle^{\otimes r}$ state. 

In the Pauli stabiliser formalism, all stabiliser states can be mapped to graph states by applying a set of local Clifford unitaries. A \textbf{graph state} on $r$ vertices is specified by a set of edges $E = \{(i,j): i < j \in [0\dots r-1]\}$. The graph state is formed by applying controlled $Z$ operators corresponding to the edges to $|+\rangle^{\otimes r}$ i.e. $|\phi\rangle = (\prod_{(i,j) \in E}CZ_{ij})|+\rangle^{\otimes r}$.  

We now show how graph states generalise to weighted hypergraph states. For a binary vector $\mathbf{v}$, the support of $\mathbf{v}$ defines an edge (i.e. $\text{supp}(\mathbf{v}) := \{i \in [0\dots r-1]: \mathbf{v}[i] = 1\}$). Graph states have edges composed of 2 vertices only so $\text{wt}(\mathbf{v}) = |\text{supp}(\mathbf{v})| = 2$. Generalised controlled phase operators can have edges involving between $1$ and $r$ vertices. The condition $\mathbf{e}\mathbf{v} = \mathbf{v}$ means that we apply the phase when $\text{supp}(\mathbf{v}) \subset \text{supp}(\mathbf{e})$.

For graph states, only phases of $\pm 1$ are possible as we apply controlled Z operators. Generalised controlled phase operators, on the other hand, can apply any phase of form $\exp(i2\pi p/q)$. Where $p/q=1/2$, the operator acts as a \textbf{generalised controlled $Z$ operator} because it applies a phase of $\exp(i\pi) = -1$ if $\mathbf{e}\mathbf{v} = \mathbf{v}$.

\subsection{Phase Functions of XP States}\label{sec:xp_phase_function}

In this section, we describe which relative phases are possible for XP stabiliser states. The \textbf{phase function} of an XP stabiliser state $|\phi\rangle$ of precision $N$ is an integer valued function $f$ on vectors $\mathbf{e}\in \ZSupp(|\phi\rangle)$ such that:
\begin{align}
    |\phi\rangle := \sum_{\mathbf{e}\in \ZSupp(|\phi\rangle)} \omega^{f(\mathbf{e})}|\mathbf{e}\rangle = \sum_{\mathbf{e}\in \ZSupp(|\phi\rangle)} \omega^{f(e_0\dots e_{n-1})}|\mathbf{e}\rangle\,.
\end{align}
We generally consider $f$ to be a function of the binary variables $e_i := \mathbf{e}[i], 0 \le i < n$. In this chapter, phase functions are defined by a vector $\mathbf{q} \in \mathbb{Z}^{2^n}$ and are polynomials of form:
\begin{align}
    f(e_0\dots e_{n-1}) = \sum_{s \subset [0\dots n-1]}\mathbf{q}[s]\prod_{j \in s}e_j\label{eq:phase_function_polynomial}
\end{align}
For phase functions of this form, we can identify each term of the polynomial with a generalised controlled phase operator. The term $\mathbf{q}[s]\prod_{j \in s}e_j$ corresponds to the controlled phase operator $CP(\mathbf{q}[s]/2N, \mathbf{v})$ where $\mathbf{v}[j] = 1$ if $j\in s$ or $0$ otherwise. It is known that for Pauli stabiliser states ($N = 2 = 2^1)$, the phase function is a polynomial of the form in Eq.~\eqref{eq:phase_function_polynomial} of degree at most $2$ in the variables $e_i$, whilst for XS codes ($N=4=2^2$), the maximum degree is $3$. Our aim is to generalise these results to XP codes.

We first show how to express the Z-support of any XP stabiliser state in terms of a set of binary variables $\{u_i\}$, which are a subset of the $\{e_i\}$ variables defined above. We will then express the form of the phase function of an XP stabiliser state in terms of the $\{u_i\}$. Due to the results of Section~\ref{sec:code_word_algorithm_summary}, any XP stabiliser state can be written in the following canonical form:
\begin{align}
    |\phi\rangle = O_{\mathbf{S}_X}|\mathbf{m}\rangle = \sum_{\mathbf{u} \in \mathbb{Z}_2^r}\mathbf{S}_X^\mathbf{u}|\mathbf{m}\rangle\label{eq:xp_canonical}
\end{align}
for non-diagonal canonical generators $\mathbf{S}_X$, $r = |\mathbf{S}_X|$ and where $\mathbf{m}$ is the single orbit representative. The orbit operator $O_{\mathbf{S}_X}$ is defined in Eq.~\eqref{eq:orbit_operator} and the generator product $\mathbf{S}_X^\mathbf{u}$ is defined in Eq.~\eqref{eq:generator_product}. Let $\mathbf{S}_X = \{XP_N(p_i|\mathbf{x}_i|\mathbf{z}_i): 0 \le i < r\}$. We can  write $|\phi\rangle$ in terms of the binary variables $u_i := \mathbf{u}[i], 0 \le i < r$ as follows:
\begin{align}
    |\phi\rangle = \sum_{u_i \in \mathbb{Z}_2}\prod_{0 \le i < r}XP_N(u_i p_i|u_i \mathbf{x}_i|u_i \mathbf{z}_i)|\mathbf{m}\rangle\,.\label{eq:phi_ui}
\end{align}
The sum in the above equation ranges over all possible values of $u_i\in \mathbb{Z}_2$, for $0 \le i < r$. The Z-support of $|\phi\rangle$ can be expressed in terms of the $\{u_i\}$ as follows:
\begin{align}
    \ZSupp(|\phi\rangle) = \{\mathbf{m} \oplus \bigoplus_{0 \le i < r}u_i \mathbf{x}_i: u_i \in \mathbb{Z}_2\} \,.\label{eq:zsupp_ui}
\end{align}
We now show that the binary variable $u_i$ can be identified with the value of a particular component $\mathbf{e}[l_i]$ of the vectors $\mathbf{e} \in \ZSupp(|\phi\rangle)$. Let $l_i$ be the leading index of $\mathbf{x}_i$ (see~\ref{sec:leading_index}). Because $\mathbf{S}_X$ is in canonical form and $\mathbf{m}$ is an orbit representative, we have $\mathbf{x}_j[l_i] = \delta_{ij}, \mathbf{m}[l_i] = 0$. Hence for $\mathbf{e} := \mathbf{m} \oplus \bigoplus_{0 \le i < r}u_i \mathbf{x}_i \in \ZSupp(|\phi\rangle), \mathbf{e}[l_i] = u_i$. 

In the following Proposition, we express the phase function for an XP stabiliser state in terms of the $\{u_i\}$ and describe its form:

\begin{proposition}[Phase Functions of XP States]\label{prop:xp_phase_function}
Let $|\phi\rangle = O_{\mathbf{S}_X}|\mathbf{m}\rangle = \sum_{\mathbf{u}\in \mathbb{Z}_2^r}\mathbf{S}_X^\mathbf{u}|\mathbf{m}\rangle$ be an XP stabiliser state in the canonical form of Eq.~\eqref{eq:xp_canonical} with $r:= |\mathbf{S}_X|$. Let $u_i, 0 \le i < r$ be binary variables such that $u_i := \mathbf{u}[i]$. Then:
\begin{enumerate}[label=(\alph*)]
\item The phase function is of the following form for some vector $\mathbf{q} \in \mathbb{Z}^{2^r}$ indexed by the subsets $s$ of $[0\dots r-1]$: 
\begin{align}
    f(u_0,u_1,\dots,u_{r-1}) = \sum_{s \subset [0\dots r-1]}\mathbf{q}[s] 2^{|s|-1}\prod_{j \in s}u_j. 
\end{align} 
\item For $N = 2^t$, the maximum degree of the phase function is $t+1$.
\end{enumerate}
\end{proposition}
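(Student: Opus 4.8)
The plan is to compute the phase function directly from the canonical form $|\phi\rangle = \sum_{\mathbf{u}\in\mathbb{Z}_2^r}\mathbf{S}_X^\mathbf{u}|\mathbf{m}\rangle$ by expanding the generator product $\mathbf{S}_X^\mathbf{u} = \prod_{0\le i<r}XP_N(u_ip_i|u_i\mathbf{x}_i|u_i\mathbf{z}_i)$ using the multiplication rule \textbf{MUL} from Proposition~\ref{prop:multiplication} (or rather the form recorded in Table~\ref{tab:algebraic_identities}), accumulating the phase contributed each time a new factor is multiplied in. First I would track what happens when we multiply a partial product $XP_N(\mathbf{u'})$ (acting on $|\mathbf{m}\rangle$, having already shifted it to some basis vector $|\mathbf{e'}\rangle$ with $\mathbf{e'} = \mathbf{m}\oplus\bigoplus_{j<i}u_j\mathbf{x}_j$) by the next factor $XP_N(u_ip_i|u_i\mathbf{x}_i|u_i\mathbf{z}_i)$. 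Applying this factor to $|\mathbf{e'}\rangle$ produces a phase $\omega^{u_i p_i + 2u_i(\mathbf{e'}\cdot\mathbf{z}_i)}|\mathbf{e'}\oplus u_i\mathbf{x}_i\rangle$ by the \textbf{OP} rule. Expanding $\mathbf{e'}\cdot\mathbf{z}_i = \mathbf{m}\cdot\mathbf{z}_i + \sum_{j<i}u_j(\mathbf{x}_j\cdot\mathbf{z}_i)$, we see the total accumulated exponent is
\begin{align}
f(u_0,\dots,u_{r-1}) = f(\mathbf{m}) + \sum_{0\le i<r}u_i\Bigl(p_i + 2\,\mathbf{m}\cdot\mathbf{z}_i\Bigr) + \sum_{0\le j<i<r}2\,u_iu_j\,(\mathbf{x}_j\cdot\mathbf{z}_i)\,,
\end{align}
which one can fold into the stated form: the empty set $s=\emptyset$ carries $\mathbf{q}[\emptyset]2^{-1} = f(\mathbf{m})$, each singleton $s=\{i\}$ carries $\mathbf{q}[\{i\}]2^0 = p_i + 2\mathbf{m}\cdot\mathbf{z}_i$, and each pair $s=\{i,j\}$ carries $\mathbf{q}[\{i,j\}]2^1 = 2(\mathbf{x}_j\cdot\mathbf{z}_i)$. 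This already establishes part (a) with a degree-$2$ polynomial — except that the claimed form allows arbitrary degree, so the subtlety is that while the \emph{literal} expansion of the product gives only degree $2$, rewriting the exponent modulo $2N$ as a polynomial over $\mathbb{Z}$ in the $\{0,1\}$-valued $u_i$ is not unique, and the "natural" representative coming from lifting coefficients can have higher degree; this is where the factor $2^{|s|-1}$ comes in and where part (b) lives.

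The real content is part (b): showing that after reducing everything consistently, the resulting canonical polynomial over $\mathbb{Z}$ (with the $2^{|s|-1}$ normalization built in) has degree at most $t+1$ when $N=2^t$. The key mechanism is that phases live in $\mathbb{Z}_{2N} = \mathbb{Z}_{2^{t+1}}$, and a term $\mathbf{q}[s]2^{|s|-1}\prod_{j\in s}u_j$ with $|s| \ge t+2$ would have $2^{|s|-1}$ divisible by $2^{t+1} = 2N$, hence contribute $0$ modulo $2N$; so such terms are vacuous and can be dropped, forcing effective degree $\le t+1$. Conversely I would need to argue that the normalization $2^{|s|-1}$ is actually the right one — i.e. that the higher-weight terms that arise naturally really do appear with coefficients divisible by the appropriate power of $2$. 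This follows by induction on $|s|$ from the multiplication rule: each time the $D_N(\cdot)$ correction in \textbf{MUL} contributes, it brings a factor of $2$ (the "$2\mathbf{x}_2\mathbf{z}_1$"), and iterating the expansion of products of more than two generators, the cross terms of order $k$ pick up $k-1$ such factors of $2$ from the $k-1$ successive multiplications needed to assemble them — precisely the $2^{|s|-1}$ in the statement.

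The main obstacle will be making the bookkeeping of these factors-of-$2$ rigorous across an arbitrary-length product, rather than just for two or three generators. Concretely, I would set up an induction on $r$: write $\mathbf{S}_X^\mathbf{u} = \mathbf{S}_X^{\mathbf{u}}{}' \cdot XP_N(u_{r-1}p_{r-1}|u_{r-1}\mathbf{x}_{r-1}|u_{r-1}\mathbf{z}_{r-1})$ where the primed product is over the first $r-1$ generators, apply the induction hypothesis to get the phase function of the primed product as a polynomial with the claimed structure, then multiply in the last factor using \textbf{OP} and re-expand. The cross terms between $u_{r-1}$ and the earlier variables come from $\mathbf{e'}\cdot\mathbf{z}_{r-1}$, which is linear in the earlier $u_j$, contributing a factor of $2$ from the "$2\mathbf{e}\cdot\mathbf{z}$" in \textbf{OP}; combined with whatever power of $2$ already multiplied the earlier monomial in $\mathbf{e'}$'s description, this yields exactly one extra factor of $2$ per added variable. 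I would also need to handle the subtlety that $\mathbf{m}$ itself depends on the $\{e_i\}$ only through the fixed non-leading coordinates, so substituting $e_{l_i}=u_i$ and treating the other $e_k$ as constants is legitimate — this was already justified in the paragraph preceding the proposition. Verifying the base case $N=2$ (degree $\le 2$, the known Pauli result) and $N=4$ (degree $\le 3$, the XS result of Ref.~\cite{xs}) serves as a useful sanity check.
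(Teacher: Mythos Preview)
Your argument for part (b) --- that a term $\mathbf q[s]\,2^{|s|-1}\prod u_j$ with $|s|\ge t+2$ vanishes modulo $2N=2^{t+1}$ --- is correct and matches the paper. The gap is in part (a), specifically in your first displayed expansion. You write
\[
\mathbf e'\cdot\mathbf z_i \;=\; \mathbf m\cdot\mathbf z_i \;+\; \sum_{j<i}u_j\,(\mathbf x_j\cdot\mathbf z_i),
\]
but $\mathbf e' = \mathbf m \oplus \bigoplus_j u_j\mathbf x_j$ is a \emph{mod-2} sum, not an integer sum, so this equality fails whenever two of the vectors $\mathbf m, u_j\mathbf x_j$ overlap in support. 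Consequently the phase is \emph{not} a degree-$2$ polynomial in the $u_i$; the higher-degree terms are genuinely present in the direct computation, not artifacts of ``lifting'' or of a non-unique choice of representative as you suggest. Your induction sketch inherits the same error when you call $\mathbf e'\cdot\mathbf z_{r-1}$ ``linear in the earlier $u_j$''.

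The missing ingredient is the inclusion--exclusion identity for XOR over the integers (Eq.~\eqref{eq:venn_diagram} in the paper):
\[
\bigoplus_{0\le i<r} u_i\mathbf x_i \;=\; \sum_{\emptyset\ne s\subset[0..r-1]}(-2)^{|s|-1}\prod_{j\in s}u_j\mathbf x_j .
\]
The paper applies this directly: writing $\mathbf s_i=\bigoplus_{j>i}u_j\mathbf x_j$ and expanding $2u_i\mathbf z_i\cdot(\mathbf m\oplus\mathbf s_i) = 2u_i\mathbf z_i\cdot(\mathbf m+\mathbf s_i-2\mathbf m\mathbf s_i)$, then substituting the identity for $\mathbf s_i$, produces for each subset $s$ of size $|s|$ a monomial $\prod_{j\in s}u_j$ carrying exactly a factor $2^{|s|-1}$ (up to sign). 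So the $2^{|s|-1}$ normalisation comes from the $(-2)^{|c|-1}$ in the XOR expansion together with the single factor of $2$ from the \textbf{OP} rule --- not from iterating the $D_N$ correction in \textbf{MUL} as you propose. Once you replace your linear expansion of $\mathbf e'$ by this identity, the rest of your outline (including the part (b) cutoff) goes through.
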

Proof of Proposition~\ref{prop:xp_phase_function} is in Appendix~\ref{app:hypergraph}.

\subsection{Representation of XP States as Weighted Hypergraph States}\label{sec:xp2whg}

We now demonstrate a method for determining the phase function for a given XP state. This allows us to represent XP stabiliser states as weighted hypergraph states. 

\subsubsection{Algorithm: Weighted Hypergraph Representation of a Given XP State}
\paragraph{Input:} An XP state $|\phi\rangle =O_{\mathbf{S}_X}|\mathbf{m}\rangle$  of precision $N$. Let $\mathbf{S}_X = \{XP_N(p_i|\mathbf{x}_i|\mathbf{z}_i): 0 \le i < r\}$.

\paragraph{Output:} A set of generalised controlled phase operators $\{CP(p_i/2N,\mathbf{v}_i)\}$  such that $|\phi\rangle = \Big(\prod_i CP(p_i/2N,\mathbf{v}_i)\Big)\sum_{\mathbf{e}\in \ZSupp(|\phi\rangle)}|\mathbf{e}\rangle$.

\paragraph{Method}
\begin{enumerate}
\item Let $l_i$ be the leading index of $\mathbf{x}_i$ and define the $r \times n$ binary matrix $L$ by setting $L_{ij} = 1$ if $j = l_i$ and $0$ otherwise.
\item Let $\mathbf{p}$ be a vector whose entries are indexed by rows $\mathbf{u}\in \mathbb{Z}_2^r$ such that $\mathbf{p}[\mathbf{u}]$ is the phase component of $\mathbf{S}_X^{\mathbf{u}}|\mathbf{m}\rangle$
\item For $\mathbf{u}$ in $\mathbb{Z}_2^r$, ordered by weight then lexicographic order:
    \begin{enumerate}
    \item If $\mathbf{p}[\mathbf{u}] \ne 0$, add the operator $CP(\mathbf{p}[\mathbf{u}]/2N,\mathbf{u}L)$ to the list of operators
    \item For all $\mathbf{v} \in \mathbb{Z}_2^r$ such that $\mathbf{v}\mathbf{u} = \mathbf{u}$, set $\mathbf{p}[\mathbf{v}] = (\mathbf{p}[\mathbf{v}]  - \mathbf{p}[\mathbf{u}] ) \mod 2N$
    \end{enumerate}
\end{enumerate}

If the precision $N=2^t$ is a power of $2$, we only need to consider rows of weight at most $t+1$ due to  Proposition~\ref{prop:xp_phase_function}. By multiplying the $1 \times r$ vector $\mathbf{u}$ by the $r \times n$ matrix L in step 3(a), we create a $1 \times n$ vector $\mathbf{v}$ such that $\mathbf{v}[l_i] = u_i$.

\begin{example}[Weighted Hypergraph Representation of XP State - Union Jack State]
The following example illustrates the operation of the algorithm to determine the phase function of an XP stabiliser state and hence the weighted hypergraph representation. For precision $N=4$, let $|\phi\rangle =O_{\mathbf{S}_X}|\mathbf{m}\rangle$ with:
\begin{align}
\mathbf{m} &= \mathbf{0}\\
\mathbf{S}_X &= \begin{matrix}
XP_4(0|1000111000|0112000033)\\
XP_4(0|0100100110|1001003000)\\
XP_4(0|0010010101|1001003000)\\
XP_4(0|0001001011|2110330000)
\end{matrix}
\end{align}
Calculating the phase component of $\mathbf{S}_X^\mathbf{u}|\mathbf{m}\rangle$ for all values of $\mathbf{u} \in \mathbb{Z}_2^r$, we find that the phase is zero for all values of $\mathbf{u}$, apart from $\mathbf{u} = 1011$ and $1101$ where the phase is $-1$. For precision $N=4$, $\omega = \exp(i\pi/4)$ so $ \omega^4 = -1$. Hence, the phase function of $|\phi\rangle$ is $f(u_0,u_1,u_2,u_3) = 4u_0u_2u_3 + 4u_0u_1u_3$. The degree of $f$ is 3, which is the maximum degree for states of precision $N=4=2^2$. We can write $|\phi\rangle$ as a weighted hypergraph state by applying $CP(1/2,1011000000)$ and 
$CP(1/2,1101000000)$ to $\sum_{\mathbf{e}\in \ZSupp(|\phi\rangle)}|\mathbf{e}\rangle$.

The state $|\phi\rangle$ is the unit cell of the `Union Jack' state introduced in Ref.~\cite{unionjack}. This is a hypergraph state with 2-dimensional Symmetry Protected Topological Order (SPTO) and is a universal resource for quantum computation using only single qubit measurements in the X, Y, and Z basis - see Figure~\ref{fig:whg}. Detailed working for this example is available in the \href{https://github.com/m-webster/XPFpackage/blob/main/Examples/5.1_xp_to_whg.ipynb}{linked Jupyter notebook}.
\end{example}

\captionsetup[subfigure]{margin=5pt}
\begin{figure}[hbt!]
\centering
\begin{subfigure}[t]{.5\textwidth}
  \centering
\begin{tikzpicture}[scale=2]
  \draw[help lines] (-0.5,-0.5) grid (1.5,1.5);
    \filldraw[color=white, fill=red!30, very thick] (0,1)
  -- (0,0)
  -- (1,0)
  -- cycle;
      \filldraw[color=white, fill=blue!30, very thick] (0,1)
  -- (1,1)
  -- (1,0)
  -- cycle;
  \draw[above left] (0,1) node(s){$0$};
  \draw[above right] (1,1) node(t){$1$};
  \draw[below left] (0,0) node(s){$2$};
  \draw[below right] (1,0) node(t){$3$};
  \fill (0,0) circle (0.06cm) (0,1) circle (0.06cm) (1,0) circle (0.06cm) (1,1) circle (0.06cm);
\end{tikzpicture}
    \subcaption{Unit Cell of the Union Jack State of Ref.~\cite{unionjack} which is a hypergraph state. Qubits on the corners of each of the shaded triangles represent the edge size 3 operators $CP(1/2,1101)$ and $CP(1/2,1011)$ which act on $|+\rangle^{\otimes 4}$. 
    }
    \end{subfigure}%
\begin{subfigure}[t]{.5\textwidth}
  \centering
\begin{tikzpicture}[scale=2]
  \draw[help lines] (-0.5,-0.5) grid (1.5,1.5);
  \draw[above left] (0,1) node(s){$0$};
  \draw[above right] (1,1) node(t){$1$};
  \draw[below left] (0,0) node(s){$2$};
  \draw[below right] (1,0) node(t){$3$};
  \fill (0,0) circle (0.06cm) (0,1) circle (0.06cm) (1,0) circle (0.06cm) (1,1) circle (0.06cm);
  \draw[very thick] (0,1)--(1,1);
  \draw[very thick] (0,0)--(1,0);
  \draw[dashed] (0,1)--(1,0);
  \draw[dashed] (0,0)--(1,1);
\end{tikzpicture}
    \subcaption{Unit Cell of the weighted graph state in Ref.~\cite{weightedgraphuniversal}. Qubits connected by bold lines are acted on by controlled Z operators and those by dashed lines by controlled S operators. The weighted graph state is formed by the operator $CP(1/2,1100) CP(1/2,0011) CP(1/4,1001) CP(1/4,0110)$ acting on $|+\rangle^{\otimes 4}$ and can be represented as a precision 4 XP stabiliser state. }
    \end{subfigure}%
     \caption{Examples of Weighted Hypergraph States which can be be represented as XP stabiliser states.}
     \label{fig:whg}
\end{figure}
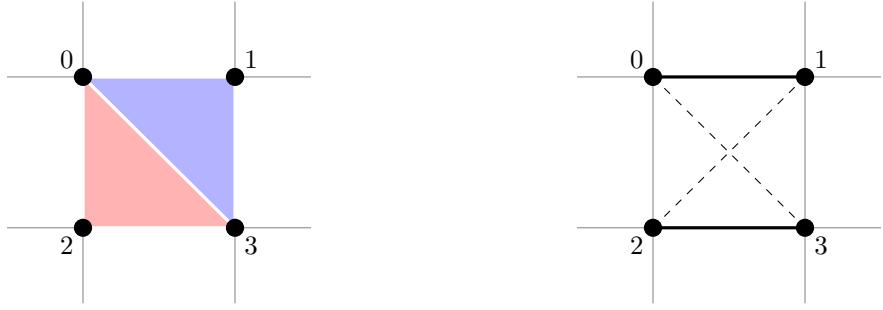

\subsection{Representation of Weighted Hypergraph States as XP Stabiliser States}\label{sec:whg2xp}

In this section, we show how to represent any weighted hypergraph state as an XP stabiliser state. In general, this involves embedding the state into a larger Hilbert space using an embedding operator. We first demonstrate this for a single generalised controlled phase operator i.e. $|\phi\rangle = CP(p/q,\mathbf{v})|+\rangle^{\otimes r}$. There are two possible cases depending on the weight of the vector $\mathbf{v}$.

\paragraph{Case 1 $\text{wt}(\mathbf{v}) = 1$: }
Let $\text{wt}(\mathbf{v}) = 1$ and let $i$ be the single non-zero component of $\mathbf{v}$. For a given precision $N$, we can identify $CP(1/2N, \mathbf{v})$ with the operator $\sqrt{P} = \text{diag}(1,\omega)$ acting on qubit $i$ because for a computational basis vector $|\mathbf{e}\rangle, \sqrt{P}_i|\mathbf{e}\rangle = \omega|\mathbf{e}\rangle$ if $\mathbf{e}[i] =1$ and $|\mathbf{e}\rangle$ otherwise.  Hence, to create a phase of $\exp(\frac{p}{q} 2\pi i)$, we can let $2N = q$. We also need $N$ to be an integer $\ge 2$ so we set $N$ as follows:
\begin{align}
    N = \begin{cases}q/2: \text{ if } q > 2 \text{ and } q\mod 2 = 0\\q: \text{ Otherwise}.\end{cases}
\end{align}

The state $|+\rangle^{\otimes r}$ is stabilised by $\{X_j: j \in [0\dots r-1]\}$ where $X_j$ is the Pauli X operator acting on qubit $j$. We can also write $X_j = XP_N(0|\mathbf{x}_j|\mathbf{0})$ where  $\mathbf{x}_j$ is the $j$th row of $I_r$.  The result of conjugating $X_j$ by $\sqrt{P}_i$ is $\omega X_jP_i$ if $i=j$ and $X_j$ otherwise. This can also be written $\sqrt{P}_iX_j\sqrt{P}_i^{-1} = X_jD_N(\mathbf{x}_j\mathbf{v})$ as is a generalisation of the identity $SXS^{-1} = X(iZ) = Y$ for Pauli operators.  Hence the state $|\phi\rangle$ is stabilised by $\mathbf{S}_X = \{XP_N(0|\mathbf{x}_j|\mathbf{0})D_N(\frac{p}{q}2N\mathbf{x}_j\mathbf{v})\}$.

\paragraph{Case 2 $\text{wt}(\mathbf{v}) \ge 2$: }
For $m := \text{wt}(\mathbf{v}) \ge 2$, we in general need to embed the weighted graph state $|\phi\rangle$ into a larger Hilbert space to represent it as an XP stabiliser state. The \textbf{embedding operator} is defined in terms of $M^r_m$, the binary matrix whose columns are the bit strings of length $r$ of weight between $1$ and $m$ inclusive. We order the columns of $M^r_m$ first by weight then by lexicographic order. The embedding operator $\mathcal{E}^r_m$ acts on computational basis vectors as follows: \begin{align}
    \mathcal{E}^r_m|\mathbf{e}\rangle = |\mathbf{e}M^r_m \mod 2\rangle\,.\label{eq:embedding_operator}
\end{align}
Our aim is to find a precision $N$ and a set of stabiliser generators $\mathbf{S}_X, \mathbf{S}_Z$ which stabilise the embedded state $|\psi\rangle := \mathcal{E}^r_m|\phi\rangle$.

We set the precision $N = q2^{m-2}$ - this is because phase function terms of degree $m$ include a factor of $2^{m-1}$ modulo $2N$ (see Proposition~\ref{prop:xp_phase_function}) and we need to allow for phases of form $\exp(2\pi i/q)$. If $N$ is odd, we multiply it by $2$ so that we can form the diagonal stabiliser generators (see Eq.~\eqref{eq:whg_sz} below).

The non-diagonal stabiliser generators $\mathbf{S}_X$ determine the phase function and are defined as follows. The X-components of the $\mathbf{S}_X$ are the rows $\mathbf{x}_j$ of $M^r_m$. The Z-components are obtained by multiplying the $\mathbf{x}_j$ by an `alternating vector' $\mathbf{a}$ and an `inclusion vector' $\mathbf{w}$. The vector $\mathbf{a}$ is indexed by the columns $\mathbf{u}$ of $M^r_m$ and is $1$ if the weight of $\mathbf{u}$ is even, and $-1$ otherwise:
\begin{align}
    \mathbf{a}[\mathbf{u}] := (-1)^{\text{wt}(\mathbf{u})}\,.\label{eq:alternating_vector}
\end{align}
The inclusion vector $\mathbf{w}$ with respect to $\mathbf{v}$ is:
\begin{align}
    \mathbf{w}[\mathbf{u}] := \begin{cases}1: \mathbf{u} \mathbf{v} = \mathbf{u}\\0: \text{ Otherwise}\end{cases}\label{eq:inclusion_vector}\,.
\end{align}
Multiplying the rows $\mathbf{x}_j$ of $M^r_m$ by the inclusion vector $\mathbf{w}$ ensures that we only consider columns of $M^r_m$ whose support is a subset of the support of $\mathbf{v}$. As a result of Proposition~\ref{prop:whg2xp}, the following operators generate the required phase function:
\begin{align}
    \mathbf{S}_X = \{XP_N(0|\mathbf{x}_j|\frac{p}{q}\frac{2N}{2^{m-1}}\mathbf{a}\mathbf{x}_j\mathbf{w}): 0 \le j < r\} \,.
\end{align}

We now show how to construct the diagonal stabiliser generators $\mathbf{S}_Z$. We calculate a basis of $\Ker_{\mathbb{Z}_2}(M^r_m)$ as follows. Because the columns of weight 1 occur first, $M^r_m$ is of form $M^r_m = \begin{pmatrix}I | A\end{pmatrix}$ for some binary matrix $A$. Hence the kernel of $M^r_m$ over $\mathbb{Z}_2$ is spanned by:
\begin{align}
    K^r_m := \begin{pmatrix}A^T | I\end{pmatrix}\,.\label{eq:ker_M^r_m}
\end{align}
It is easy to see that $K^r_m (M^r_m)^T \mod 2 = 0$. Let $\mathbf{z}_j$ be the $j$th row of $K^r_m$ so that $\mathbf{x}_i\cdot\mathbf{z}_j \mod 2 = 0$. The following operators commute with the elements of $\mathbf{S}_X$:
\begin{align}
    \mathbf{S}_Z := \{XP_2(0|\mathbf{0}|\mathbf{z}_j)\} =  \{XP_N(0|\mathbf{0}|N\mathbf{z}_j/2)\}\,.\label{eq:whg_sz}
\end{align}
This can be seen by using the COMM rule of Table~\ref{tab:algebraic_identities} and noting that $N\mathbf{x}_j\cdot  \mathbf{z}_j \mod 2N = 0$:
\begin{align}
[XP_N(0|\mathbf{x}_j|\mathbf{w}_j),XP_N(0|\mathbf{0}|N\mathbf{z}_j/2)] &= D_N(N\mathbf{x}_j\mathbf{z}_j) 
= XP_N(N\mathbf{x}_j\cdot  \mathbf{z}_j|\mathbf{0}|\mathbf{0}) = I\,.
\end{align}
We are now in a position to state the algorithm for weighted hypergraph states with multiple generalised controlled phase operators.

\subsubsection{Algorithm: Representation of Weighted Hypergraph States as XP Stabiliser States}
\paragraph{Input:} A weighted hypergraph state $|\phi\rangle = \big(\prod_i CP(p_i/q_i,\mathbf{v}_i)\big)|+\rangle^{\otimes r}$ with $p_i, q_i$ mutually prime and $\mathbf{v}_i$ of weight $m_i \ge 0$.

\paragraph{Output:} A precision $N$, an embedding operator $\mathcal{E}^r_m$ and stabiliser generators $\mathbf{S}_X, \mathbf{S}_Z$ of an XP code whose codespace is spanned by $|\psi\rangle := \mathcal{E}^r_m|\phi\rangle$, a state with the same phase function as $|\phi\rangle$.

\paragraph{Method:}
\begin{enumerate}
\item Let $m = \max(\{m_i\})$ - we use the embedding operator $\mathcal{E}^r_m$. Note that when $m = 1$, the embedding operator is trivial as $M^r_1 = I_r$.
\item We set the precision of the code as $N := \text{LCM}(2, \{N_i\})$ where we define the $N_i$ as follows:
    \begin{itemize}
        \item If $m_i \ge 2$: set $N_i = q_i 2^{m_1-2}$.
        \item If $m_i = 1$: if $q_i > 2$ and $q_i\mod 2 = 0$ set $N_i = q_i/2$; otherwise set $N_i = q_i$.   
    \end{itemize}
\item If $m=1$, $\Ker(I_r) = \emptyset$ so we do not require any diagonal stabiliser generators. For $m> 1$, the diagonal stabiliser generators are $\mathbf{S}_Z := \{XP_N(0|\mathbf{0}|N/2 \mathbf{z}_j)\}$ where $\mathbf{z}_j$ is the $j$th row of $K^r_m$ as in Eq.~\eqref{eq:ker_M^r_m}.
\item The non-diagonal stabiliser generators $\mathbf{S}_X := \{A_j\}$ are determined as follows:
    \begin{enumerate}
    \item Set $A_j = XP_N(0|\mathbf{x}_j|\mathbf{0})$ for $j \in [0\dots r-1]$ and $\mathbf{x}_j$ the $j$th row of $M^r_m$. 
    \item Update the $A_j$ for each of the operators $CP(p_i/q_i,\mathbf{v}_i)$:
        \begin{itemize}
            \item For $m_i = 1$, $A_j := A_j D_N(\frac{p_i}{q_i}2N\mathbf{x}_j\mathbf{w}_i)$. 
            \item For $m_i \ge 2$,  $A_j := A_j XP_N(0|\mathbf{0}|\frac{p_i}{q_i}\frac{2N}{2^{m_i-1}}\mathbf{a}\mathbf{x}_j\mathbf{w}_i)$ where $\mathbf{a}$ is the alternating vector of Eq.~\eqref{eq:alternating_vector} and $\mathbf{w}_i$ is the inclusion vector of Eq.~\eqref{eq:inclusion_vector} with respect to $\mathbf{v}_i$.
        \end{itemize}
    \end{enumerate}
\end{enumerate}

The algorithm can be optimised by only including qubits which for some operator $A_j$ has a non-trivial Z-component. In Proposition~\ref{prop:whg2xp_optimised}, we show that we can further optimise  for generalised controlled Z operators with $p_i/q_i = 1/2$ by replacing the factor $\mathbf{a}\mathbf{x}_j$ in step (b) by $\mathbf{v}_i[j]\mathbf{a}(\mathbf{x}_j - 1)$. This has the effect of clearing the Z-component of $A_j$ indexed by column $\mathbf{v}_i$ of $M^r_m$. This implies, for instance, that we can represent graph states, which are created using CZ operators, with a trivial embedding.

\begin{example}[Representing Weighted Hypergraph States as XP Stabiliser States]
In Ref.~\cite{weightedgraphuniversal}, an example of a weighted graph state is given which is a universal resource for measurement-based quantum computation; see Figure~\ref{fig:whg}. The unit cell of this state is \begin{align}
    |\phi\rangle = CP(1/2,1100) CP(1/2,0011) CP(1/4,1001) CP(1/4,0110)|+\rangle^{\otimes 4}\,.
\end{align}
The weighted graph state $|\phi\rangle$ can be represented as an embedded state $|\psi\rangle = \mathcal{E}_2^4|\phi\rangle$ stabilised by the following XP code of precision $N=4$ on $10$ qubits:
\begin{align}
\mathbf{S}_X &= \begin{matrix}
XP_4(0|1000111000|1000201000)\\
XP_4(0|0100100110|0100200100)\\
XP_4(0|0010010101|0010000102)\\
XP_4(0|0001001011|0001001002)
\end{matrix}\;\;\;
\mathbf{S}_Z = \begin{matrix}
XP_4(0|\mathbf{0}|2200200000)\\
XP_4(0|\mathbf{0}|2020020000)\\
XP_4(0|\mathbf{0}|2002002000)\\
XP_4(0|\mathbf{0}|0220000200)\\
XP_4(0|\mathbf{0}|0202000020)\\
XP_4(0|\mathbf{0}|0022000002)
\end{matrix}
\end{align}
Using the optimised method of Proposition~\ref{prop:whg2xp_optimised} and deleting redundant qubits, we find a more compact representation on $6$ qubits as follows:
\begin{align}
\mathbf{S}_X &= \begin{matrix}
XP_4(0|100010|320010)\\
XP_4(0|010001|230001)\\
XP_4(0|001001|003201)\\
XP_4(0|000110|002310)
\end{matrix}\;\;\;
\mathbf{S}_Z = \begin{matrix}
XP_4(0|\mathbf{0}|200220)\\
XP_4(0|\mathbf{0}|022002)
\end{matrix}
\end{align}

Detailed working for this example is available in the \href{https://github.com/m-webster/XPFpackage/blob/main/Examples/5.2_whg_to_xp.ipynb}{linked Jupyter notebook}.
\end{example}

\subsection{Discussion and Summary of Results}
In this chapter we have shown an equivalence between XP stabiliser states and weighted hypergraph states. For any XP stabiliser state, we can write a weighted hypergraph representation and vice-versa. A very wide range of states can be represented within the XP stabiliser formalism, including all weighted graph states and hypergraph states. 

These results may prove useful in implementing fault-tolerant versions of quantum algorithms. The Grover and Deutsch-Jozsa algorithms both employ real equally weighted (REW) pure states. In Ref.~\cite{hypergraph}, the authors showed that each REW state has an associated hypergraph state. As we can represent any hypergraph state within the XP formalism, this could be an interesting application.

\section{Logical Operators and the Classification of XP Codes}\label{chap:LO}
The objective of this section is to understand the logical operator structure of a given XP code. Our aim is to determine all XP operators that act as logical operators on the codespace (``logical XP operators''), and classify the logical actions that arise. 

We start by setting out definitions for logical XP operators and introduce the notion of a phase vector which allows us to describe the logical action of diagonal operators (Section~\ref{sec:LOdef}). 

In the Pauli stabiliser formalism, the stabiliser group $\langle \mathbf{G}\rangle$ is unique for a given codespace and a Pauli operator acts as a trivial logical operator if and only if it is an element of $\langle \mathbf{G} \rangle$.  In the XP formalism, there may be many different stabiliser groups for a given codespace. In Section~\ref{sec:LI}, we show how to find a set of XP operators $\mathbf{M}$  in the canonical form of Section~\ref{sec:canonical_generators} that generates the set of trivial logical XP operators and which uniquely defines the codespace.

In Section~\ref{sec:LO}, we show how to find a set of non-trivial logical XP operators $\mathbf{L}$ which together with $\mathbf{M}$ generates all logical XP operators. Using the example of Reed-Muller codes, we show that in some cases Pauli stabiliser codes can be viewed more naturally as XP codes and that we can systematically determine all possible logical XP operators for such codes using our techniques. Figure~\ref{fig:groupdiagram} explains how the various groups described above relate to each other.

In Section~\ref{sec:quantum_numbers}, we show how to assign quantum numbers to the codewords of Section~\ref{sec:code_word_algorithm_summary} based on the logical operator structure of the code. This in turn leads to a natural classification of XP codes into XP-regular and non-XP-regular codes, which we discuss in Section~\ref{sec:classification}. We show that each XP-regular code can be mapped to a CSS code which has the same diagonal logical operators and similar non-diagonal logical operators.

The algorithms for determining the generators for the logical XP group require the codewords of Section~\ref{sec:code_word_algorithm_summary} as input. In Section~\ref{sec:LO_Modified}, we demonstrate modified algorithms which take the canonical generators and orbit representatives of Section~\ref{sec:find_code_space} rather than the codewords as input. These methods are  more efficient than using the codewords as a starting point.

In Section~\ref{sec:LO_Action}, we describe a framework for analysing the action of diagonal logical XP operators based on the codeword quantum numbers. We show how to determine all possible diagonal logical actions for a given code and how to calculate an operator with a desired logical action. In Section~\ref{sec:LO_Classification}, we use this framework to classify diagonal logical XP operators into core and regular operators and demonstrate that complex logical operators arise in non-XP-regular codes.

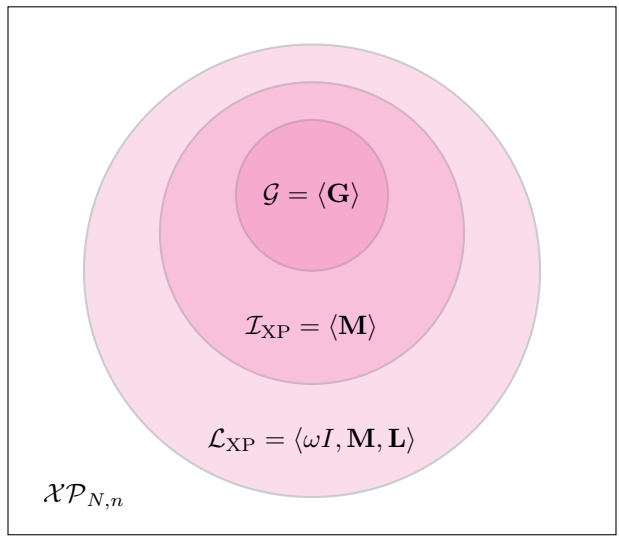
\begin{figure}[hbt!]
\centering
\begin{tikzpicture} 
[set/.style = {thick,
    fill=Rhodamine,
    opacity = 0.2,
    text opacity = 1}]
    
\filldraw[set] (0,1) circle (1);
\filldraw[set] (0,0.5) circle (2) ; 
\filldraw[set] (0,0) circle (3);
\draw (-4,-3.5) rectangle (4,3.5) ;

\draw (-3,-3) node {$\mathcal{XP}_{N,n}$};
\draw (0,-2.25) node  {$\mathcal{L}_\text{XP} = \langle \omega I, \mathbf{M}, \mathbf{L}\rangle$};
\draw (0,-0.75) node  {$\mathcal{I}_\text{XP} = \langle \mathbf{M}\rangle$};
\draw (0,1) node  {$\mathcal{G} = \langle \mathbf{G} \rangle$};
\end{tikzpicture}
\caption{\textbf{Relationship between XP Operator Groups}:  Here, $\mathcal{XP}_{N,n}$ is the group of all XP operators of precision $N$ on $n$ qubits.  The stabiliser group $\mathcal{G} = \langle \mathbf{G} \rangle$ is the same as the group generated by the canonical generators $\mathbf{S}_X,\mathbf{S}_Z$. The logical identity group $\mathcal{I}_\text{XP} = \langle \mathbf{M}\rangle$ fixes all elements of the codespace $\mathcal{C}$. It contains but is in general not equal to $\mathcal{G}$. The logical operator group $\mathcal{L}_\text{XP} = \langle \omega I, \mathbf{M}, \mathbf{L}\rangle$ is the set of XP operators that preserve the codespace.   }\label{fig:groupdiagram}
\end{figure}

\subsection{Definitions: Logical XP Operators}\label{sec:LOdef}
The \textbf{logical XP identity group}, denoted $\mathcal{I}_\text{XP}$, is the group of XP operators that fixes all elements in the codespace $\mathcal{C}$. The codewords $\{|\kappa_i\rangle\}$ of Section~\ref{sec:code_word_algorithm_summary} are a basis of the codespace, so we use the following definition:
\begin{align}
\mathcal{I}_\text{XP} := \{A \in \mathcal{XP}_{N,n} : A| \kappa_i\rangle = | \kappa_i\rangle, \forall i\} \,.
\end{align}
In the Pauli stabiliser formalism, the logical identity group for a stabiliser code is the same as the stabiliser group for the code. In the XP formalism, this is not necessarily the case and we show an example of this in \ref{eg:LI_code1}. Determining the logical identity group is non-trivial, so we present an algorithm to construct it which takes the codewords as input.

The \textbf{logical XP group}, denoted $\mathcal{L}_\text{XP}$, is the group of XP operators that preserves the codespace - that is, the set of XP operators $A \in \mathcal{XP}_{N,n}$ such that $A(\mathcal{C}) = \mathcal{C}$. In the Pauli stabiliser formalism, an operator $A$ is a logical operator if and only if it commutes with all the stabiliser generators. In the XP formalism, $A$ is a logical operator if and only if the group commutator of $A$ with any logical identity operator is in the logical identity group (see Proposition~\ref{prop:LOtest}). 

The action of a logical XP operator $A$ on the codewords $\{|\kappa_i\rangle\}$ can be described in terms of a vector $\mathbf{f}$ and a permutation $\pi$ of the codewords as follows:
\begin{align}
A|\kappa_i\rangle = \omega^{\mathbf{f}[i]}|\kappa_{\pi(i)}\rangle\label{eq:LO} \,.
\end{align}
The vector $\mathbf{f} \in \mathbb{Z}_{2N}^{\dim \mathcal{C}}$ is referred to as the \textbf{phase vector} and tells us which phase is applied by $A$ to each codeword. The permutation $\pi$ describes the non-diagonal action and $\pi^2 = 1$ (see Proposition~\ref{prop:LXP_action}). We call a logical operator \textbf{non-trivial} if it is not a logical identity (i.e. $A \in \mathcal{L}_\text{XP} \setminus \mathcal{I}_\text{XP}$).  If the phase vector for a diagonal operator $A$ is constant, say $c$, then the logical action of $A$ is $\omega^cI$. A diagonal operator is trivial iff its phase vector is zero.

\subsection{Determining the Logical Identity Group}\label{sec:LI}
We  present an algorithm to construct the generators of the logical identity group for an XP code.  This algorithm takes the codewords of the XP code, as in Section~\ref{sec:code_word_algorithm_summary}, as input. The result is a list of XP operators $\mathbf{M}$ in the canonical form of Proposition~\ref{prop:canonical_generators} that generate the logical identity group, or FALSE if the complex span of codewords is not the codespace of any XP code of precision $N$. Detailed proofs of the results in this section are found in Appendix~\ref{app:LO}.

We first demonstrate how to find generators for the diagonal logical identity group, then turn to the non-diagonal generators. 

\subsubsection{Diagonal Logical Identity Group Generators $\mathbf{M}_Z$}\label{sec:M_Z}

Our algorithm for finding the diagonal logical identity group is as follows:
\begin{enumerate}
\item Let $E$ be the Z-support of the codewords (as defined in Section~\ref{sec:code_words_notation}) and let $E_M$ be the matrix formed by taking $\{(\mathbf{e}|1):\mathbf{e} \in E\} \subset \mathbb{Z}_2^n \times \mathbb{Z}_2$ as rows.
\item Determine the Howell basis $K_M$ of $\Ker_{\mathbb{Z}_N}(E_M)$ (see Section~\ref{sec:howell_matrix_form}).
\item Let $(\mathbf{z}_k| q_k)$ denote the rows of $K_M$ and let $\mathbf{M}_Z = \{XP_N(2 q_k|\mathbf{0}|\mathbf{z}_k)\}$.
\item Calculate the orbit representatives corresponding to $\mathbf{M}_Z$ (see Section~\ref{sec:graph_search}). If the number of orbit representatives does not match the number of codewords, return FALSE.
\end{enumerate}
We show that this algorithm produces a generating set of diagonal logical identity operators in Proposition~\ref{prop:M_Z}.

\subsubsection{Non-diagonal Logical Identity Group Generators $\mathbf{M}_X$ }\label{sec:M_X}

We now set out the algorithm to find the non-diagonal generators, which consists of two steps:  first, we find the X components of the generators; second, we find the phase and Z components that are consistent with the relative phases between the computational basis elements in the codewords.

\paragraph{Step 1: Identifying the X-Components.}
Let $E_i$ be the Z-support of the codeword $|\kappa_i\rangle$ as defined in Section~\ref{sec:code_words_notation}. Take any element $\mathbf{e}_i \in E_i$ and let $T$ be the binary matrix with rows formed from  $\{ \mathbf{e}_i \oplus \mathbf{e}: \mathbf{e} \in E_i\}$. Let $S_X = \RREF_{\mathbb{Z}_2}(T)$ and let $\mathbf{m}_i = \res_{\mathbb{Z}_2}(T,\mathbf{e}_i)$ using the residue function defined in Eq.~\eqref{eq:residue_function}. Verify that $E_i = \mathbf{m}_i + \langle S_X \rangle$ for each codeword. If not, return FALSE.

\paragraph{Step 2: Determining the Phase and Z-Components.}
We use linear algebra modulo $N$  (see Section~\ref{sec:linalgmodN}) to find valid phase and Z-components for the generators identified in Step 1. Assume that the codewords $|\kappa_i\rangle$ are written in orbit form as in Eq.~\eqref{eq:orbitform}. For each $\mathbf{x}$ in $S_X$ we complete the following steps:
\begin{enumerate}
    \item Let $\mathbf{e}'_{ij} = \mathbf{e}_{ij} \oplus \mathbf{x}$, and let $p'_{ij}$ be the phase of $\mathbf{e}'_{ij}$ in the codewords
    \item Let $p''_{ij} = (p'_{ij} - p_{ij}) \mod 2N$. For there to be a valid solution, the $p''_{ij}$ are all either even (i.e. divisible by 2 in $\mathbb{Z}_{2N}$) or odd. Let $a = p_{00}'' \mod 2$ be an adjustment factor. Let $p'''_{ij} = (p''_{ij} - a)/2$ so that $p'''_{ij} \in \mathbb{Z}_N$. Let $\mathbf{p}'''$ be the vector formed from the $p'''_{ij}$
    \item Find a solution $(\mathbf{z}|q) \in \mathbb{Z}_N^{n}\times \mathbb{Z}_N$ such that $E_M^T (\mathbf{z}|q) = \mathbf{p}'''$ using linear algebra modulo $N$  (see Section~\ref{sec:linalgmodN}). 
    \item If there is no such solution, return FALSE
    \item Otherwise, let $K_M$ be the Howell basis  of $\Ker_{\mathbb{Z}_N}(E_M)$. Let $(\mathbf{z}'|p') = \res_{\mathbb{Z}_N}(K_M,(\mathbf{z}|q))$ and add the operator $XP_N(a+2p'|\mathbf{x}|\mathbf{z}')$ to $\mathbf{M}_X$.
\end{enumerate}
We show that this algorithm produces a non-diagonal logical identity operator in Proposition~\ref{prop:M_X}. 

\subsubsection{Properties of the Canonical Logical Identity Generators}\label{sec:code_space_test}
The algorithm to find the logical identity generators take the codewords as input and results in a set of generators $\mathbf{M}$ in canonical form. Hence, the set $\mathbf{M}$ uniquely identifies the codespace. Any XP group stabilising the codewords is composed of operators that act as logical identities on the codespace and so is a subgroup of $\langle\mathbf{M}\rangle$. In Proposition~\ref{prop:modified_LI}, we demonstrate an algorithm for determining $\mathbf{M}$ that does not require the codewords as input, and so results in a test for whether two sets of XP operators stabilise the same codespace.

The logical identity algorithm can also be used to determine if a given quantum state is a stabiliser state of an XP code of precision $N$. We simply apply the algorithm - if it fails, the state is not a valid XP stabiliser state. If it succeeds, the set of operators $\mathbf{M}$ is a set of stabiliser generators for the state. The method can also be used for codespaces with dimensions greater than $1$. We illustrate the algorithm for the logical identity generators using our main example:

\begin{example}[Logical Identity Group - Code 1]\label{eg:LI_code1}

Taking the codewords calculated for the code in Example~\ref{eg:code1} as input, we find that the canonical generators of the logical identity group are:
\begin{align}
    \mathbf{M}_X =
    \begin{matrix}
    XP_8( 9|1110000|0070000)\\
    XP_8(14|0001111|0001234)
    \end{matrix}\;, \quad    \mathbf{M}_Z =
    \begin{matrix}
    XP_8( 0|0000000|1070000)\\
    XP_8( 0|0000000|0170000)\\
    XP_8( 8|0000000|0004444)
    \end{matrix}
\end{align}
Compare these to the canonical generators:
\begin{align}
    \mathbf{S}_X =
    \begin{matrix}
    XP_8( 9|1110000|1240000)\\
    XP_8(14|0001111|0001234)
    \end{matrix}\,, \quad   \mathbf{S}_Z =
    \begin{matrix}
    XP_8( 8|0000000|2334444)\\
    XP_8( 0|0000000|0440000)
    \end{matrix}
\end{align}
By definition, any stabiliser of the codespace is a logical identity so $\langle \mathbf{S}_X,\mathbf{S}_Z\rangle \subset \langle \mathbf{M}_X,\mathbf{M}_Z\rangle$. In this example, the diagonal generators $\mathbf{S}_Z$ are not the same as $\mathbf{M}_Z$, but they have the same simultaneous $+1$ eigenspace. In all cases, $\langle\mathbf{S}_Z\rangle \subset \langle\mathbf{M}_Z\rangle$. For this example, we observe that none of the operators in $\mathbf{M}_Z$ are in $\langle\mathbf{S}_Z\rangle$, but all of the operators in $\mathbf{S}_Z$ are in $\langle\mathbf{M}_Z\rangle$. For example:
\begin{align}
XP_8( 8|0|2334444) &= XP_8( 0|0|1070000)^2 XP_8( 0|0|0170000)^3 XP_8( 8|0|0004444)
\end{align}

The non-diagonal generators $\mathbf{S}_X$ are the same as $\mathbf{M}_X$, up to a product of elements of $\langle\mathbf{M}_Z\rangle$.  Full working for this example is in the \href{https://github.com/m-webster/XPFpackage/blob/main/Examples/6.1_logical_identity_generators.ipynb}{linked Jupyter notebook}.
\end{example}

\subsection{Determining the Logical Operator Group}\label{sec:LO}
We now present an algorithm that will identify the logical XP operator group of an XP code.  This algorithm again takes as input the codewords of Section~\ref{sec:code_word_algorithm_summary}, and is similar to the algorithm for the logical identity group. The result is a list of XP operators $\mathbf{L}$ that generate the logical XP operator group together with $\mathbf{M}$ and $\omega I$. Detailed proofs of these results are in Appendix~\ref{app:LO}.

We first demonstrate how to find generators for the diagonal logical operator group, then turn to the non-diagonal generators.

\subsubsection{Diagonal Logical Operator Algorithm}\label{sec:L_Z}

The following algorithm gives a list of operators $\mathbf{L}_Z$ which together with $\omega I$ and $\mathbf{M}_Z$ generate all diagonal logical XP operators. We assume we have the codewords $\{ |\kappa_i\rangle \}$ in orbit form as in Eq.~\eqref{eq:orbitform} expressed as a linear combination of  computational basis elements $|\mathbf{e}_{ij}\rangle$. The key to finding the diagonal logical operators is to form a matrix from the binary vectors corresponding to the basis vectors for each codeword, and an index which indicates which codeword the vector is from:
\begin{enumerate}
\item For each $\mathbf{e}_{ij}$, let $\mathbf{i}$ be a binary vector of length $\dim(\mathcal{C})$ which is all zeros apart from the $i$th component which is $1$. 
\item Let $E_L$ be the matrix formed by taking  $(\mathbf{e}_{ij}|\mathbf{i}) \in \mathbb{Z}_2^n \times \mathbb{Z}_2^{\dim(\mathcal{C})}$ as rows.
\item Determine the Howell basis $K_L$ of  $\Ker_{\mathbb{Z}_N}(E_L)$ modulo $N$  (see Section~\ref{sec:howell_matrix_form}). 
\item Let $(\mathbf{z}_k| \mathbf{q}_k) \in \mathbb{Z}_N^n \times \mathbb{Z}_N^{\dim(\mathcal{C})}$ be a row of $K_L$ and let $M_Z$ be the matrix formed from the Z-components of the diagonal logical identity operators $\mathbf{M}_Z$ of Section~\ref{sec:M_Z}. Let $K$ be the Howell basis of the matrix formed from the residue of each row of the $\mathbf{z}_k$ over $\mathbb{Z}_N$ with respect to $M_Z$.  
\item Let $\mathbf{z}_k$ be a row of $K$ and let $B_k = XP_N(0|\mathbf{0}|\mathbf{z}_k)$. Then the set $\mathbf{L}_Z = \{B_k\}$ is a set of non-trivial diagonal logical operators which together with $\mathbf{M}_Z$ and $\omega I$ generate all diagonal logical XP operators. 
\end{enumerate}
Note that the operator $XP(1|\mathbf{0}|\mathbf{0}) = \omega I$ is always a logical operator for any XP code - it has the effect of applying a phase $\omega$ to all codewords. By convention, we do not include it in $\mathbf{L}_Z$. We show that the algorithm produces a generating set of diagonal logical XP operators in Proposition~\ref{prop:L_Z}.

\subsubsection{Non-diagonal Logical Operators}\label{sec:L_X}
In this section, we demonstrate how to find a generating set of non-diagonal logical XP operators. We first show how to find the X-components of a generating set of non-diagonal logical operators. We then demonstrate how to find valid phase and Z-components for a logical operator with a given X component.

\paragraph{Step 1: Identifying the X-Components.}\label{sec:LXx}
Here, we identify the valid X-components for all non-diagonal logical operators by calculating a coset decomposition of the orbit representatives $E_m$. In Section~\ref{sec:coset_structure_of_E}, we showed that we can write the Z-support of the codewords, $E$, in coset form $E = E_m +\langle S_X\rangle$. We can decompose the orbit representatives themselves into cosets $E_m = E_q + \langle L_X\rangle$. To find the matrix $L_X$ and set of vectors $E_q$, we use the following result:
\begin{proposition}[Coset Decomposition of $E_m$]\label{prop:Em_Decomposition}
Given a set of binary vectors $E_m \subset \mathbb{Z}_2^n$  there exists a unique binary matrix $L_X$ in RREF such that:
\begin{align}
    E_m &= E_q + \langle L_X \rangle\nonumber\\
    E_q &:= \{\text{Res}_{\mathbf{Z}_2}(L_X,\mathbf{e}): \mathbf{e} \in E_m\}\\ 
    \mathbf{x} \in \langle L_X \rangle &\iff \mathbf{x} \oplus E_m = E_m\nonumber
\end{align}

\end{proposition}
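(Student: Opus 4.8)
The plan is to obtain $L_X$ as the unique reduced-row-echelon basis of the \emph{translation stabiliser} of $E_m$, and then to read off the three asserted properties, with uniqueness following from uniqueness of the RREF basis of a $\mathbb{Z}_2$-subspace.

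First I would define $L := \{\mathbf{x} \in \mathbb{Z}_2^n : \mathbf{x} \oplus E_m = E_m\}$, the set of binary vectors fixing $E_m$ setwise under component-wise addition modulo $2$. Since translation by $\mathbf{x}$ is a bijection of $\mathbb{Z}_2^n$ acting on its subsets, $L$ is the stabiliser of $E_m$ under the action of the group $(\mathbb{Z}_2^n,\oplus)$, hence a subgroup: $\mathbf{0}\in L$, and for $\mathbf{x},\mathbf{y}\in L$ we have $(\mathbf{x}\oplus\mathbf{y})\oplus E_m = \mathbf{x}\oplus(\mathbf{y}\oplus E_m)=E_m$. A subgroup of $\mathbb{Z}_2^n$ is a $\mathbb{Z}_2$-linear subspace, which has a unique generating matrix in RREF (see Appendix~\ref{app:linalg}); I take $L_X$ to be that matrix, so $\langle L_X\rangle = L$ by construction. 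This is precisely the third displayed equivalence, $\mathbf{x}\in\langle L_X\rangle \iff \mathbf{x}\oplus E_m = E_m$.

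Next I would establish the coset decomposition using the properties of the residue function of Section~\ref{sec:coset_structure_of_E}, applied with the RREF matrix $L_X$ in place of $S_X$: for any $\mathbf{e}$, $\text{Res}_{\mathbb{Z}_2}(L_X,\mathbf{e})$ is the representative of $\mathbf{e}+\langle L_X\rangle$ vanishing on the pivot columns of $L_X$, so $\mathbf{e}\oplus\text{Res}_{\mathbb{Z}_2}(L_X,\mathbf{e})\in\langle L_X\rangle = L$. Hence if $\mathbf{e}\in E_m$ then $\text{Res}_{\mathbb{Z}_2}(L_X,\mathbf{e}) = \mathbf{e}\oplus\mathbf{y}$ with $\mathbf{y}\in L$, and $\mathbf{y}\oplus E_m = E_m$ forces $\text{Res}_{\mathbb{Z}_2}(L_X,\mathbf{e})\in E_m$; therefore $E_q\subseteq E_m$. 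Consequently $E_q+\langle L_X\rangle \subseteq E_m + L = E_m$ because $L$ stabilises $E_m$, while every $\mathbf{e}\in E_m$ satisfies $\mathbf{e} = \text{Res}_{\mathbb{Z}_2}(L_X,\mathbf{e})\oplus\mathbf{y}$ with the residue in $E_q$ and $\mathbf{y}\in\langle L_X\rangle$, giving $E_m\subseteq E_q+\langle L_X\rangle$. Thus $E_m = E_q+\langle L_X\rangle$.

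Finally, for uniqueness, suppose $L_X'$ is any binary matrix in RREF satisfying the three displayed conditions (with $E_q'$ defined from $L_X'$ by the stated formula). The third condition forces $\langle L_X'\rangle = \{\mathbf{x}:\mathbf{x}\oplus E_m=E_m\} = L = \langle L_X\rangle$, and since a $\mathbb{Z}_2$-subspace has exactly one generating matrix in RREF we conclude $L_X' = L_X$, whence $E_q'=E_q$ as well. The only step that is not routine bookkeeping with the residue function is the inclusion $E_q\subseteq E_m$, which is exactly where the defining property $\mathbf{y}\oplus E_m=E_m$ for $\mathbf{y}\in L$ is invoked; I expect that to be the crux, with everything else following from the already-established behaviour of RREF and the residue function over $\mathbb{Z}_2$.
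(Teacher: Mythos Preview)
Your proposal is correct and follows essentially the same approach as the paper: both define the translation stabiliser of $E_m$ (the paper calls it $T$, you call it $L$), argue it is a subgroup of $\mathbb{Z}_2^n$, take its unique RREF basis as $L_X$, and then invoke the residue function to obtain the coset decomposition. Your write-up is in fact more explicit than the paper's in verifying the two inclusions $E_m \subseteq E_q + \langle L_X\rangle$ and $E_q + \langle L_X\rangle \subseteq E_m$ and in spelling out the uniqueness argument; the paper additionally notes the computational observation that $T \subseteq \{\mathbf{e}\oplus\mathbf{e}_0 : \mathbf{e}\in E_m\}$ for any fixed $\mathbf{e}_0$, which bounds the search for $T$ but is not needed for the existence proof.
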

\begin{proof}
Let $T = \{\mathbf{x} \in \mathbb{Z}_2^n: \mathbf{x} \oplus \mathbf{e} \in E_m, \forall \mathbf{e} \in E_m\}$. Choose an arbitrary $\mathbf{e}_0 \in E_m$. Then $T$ is a subset of $E_m' = \{\mathbf{e} \oplus \mathbf{e}_0 : \mathbf{e} \in E_m\}$. We can check whether $\mathbf{x} \in E_m'$ is in $T$ by checking if $\mathbf{x} \oplus \mathbf{e} \in E_m,\forall \mathbf{e} \in E_m$. If $\mathbf{x} \in T$ then it permutes the elements of $E_m$ so $\mathbf{x} \oplus E_m = E_m$.

$T$ is a group under component-wise addition modulo $2$ because $\mathbf{x}_1, \mathbf{x}_2 \in T \implies \mathbf{x}_1 \oplus \mathbf{x}_2 \in T$, so $L_X = \RREF_{\mathbb{Z}_2}(T)$ generates $T$ under component wise addition modulo $2$.

Finally, the residue function is an equivalence relation partitioning $E_m$ into cosets of $\langle L_X\rangle$. 
\end{proof}

We can think of $T = \langle L_X\rangle$ as the group of all vectors $\mathbf{x}$ such that $\mathbf{x} \oplus E_m = E_m$. The X-component of any logical XP operator must be in $\langle L_X\rangle + \langle S_X\rangle$ because logical operators preserve the codespace. In Proposition~\ref{prop:L_X_X-components}, we show that logical operators with X-components in $L_X$ together with $\mathbf{L}_Z$, $\mathbf{S}_X$, $\mathbf{S}_Z$ and $\omega I$ generate the full set of logical XP operators.

\paragraph{Step 2: Valid Phase and Z-components.}
Assume that the codewords $|\kappa_i\rangle$ are written in orbit form as in Eq.~\eqref{eq:orbitform}. The algorithm for finding the phase and Z-components of the operators for a given $\mathbf{x} \in L_X$ is as follows:
 \begin{enumerate}
    \item Let $\mathbf{e}'_{ij} = \mathbf{e}_{ij} \oplus \mathbf{x}$, and let $p'_{ij}$ be the phase of $\mathbf{e}'_{ij}$in the codewords
    \item Let $p''_{ij} = (p'_{ij} - p_{ij}) \mod 2N$. For there to be a valid solution, for fixed $i$ the $p''_{ij}$ are all either even or odd. Let $a_i = p''_{i0} \mod 2$ be an adjustment factor
    \item Let $p'''_{ij} = (p''_{ij} - a_i)/2$ so that $p'''_{ij} \in \mathbb{Z}_N$, and let $\mathbf{p}'''$ be the binary vector with the $p'''_{ij}$ as components
    \item Find a solution $(\mathbf{z}|\mathbf{q}) \in \mathbb{Z}_N^{n}\times \mathbb{Z}_N^{\dim(\mathcal{C})}$ such that $E_L^T (\mathbf{z}|\mathbf{q}) \mod N = \mathbf{p}'''$ using linear algebra modulo $N$  (see Section~\ref{sec:linalgmodN}). Then $A = XP_N(0|\mathbf{x}|\mathbf{z})$ is a logical operator.
    \end{enumerate}
    
In Proposition~\ref{prop:L_X}, we show that the above algorithm generates a valid logical operator with X-component $\mathbf{x}$, or returns FALSE if this is not possible. The resulting operator is non-diagonal, but is not necessarily a logical X operator. For $A$ to be a logical X operator, then $A^2$ should be a logical identity operator. Applying the SQ rule of Section~\ref{sec:algebraic_identities}, $A^2$ is diagonal. Hence we require $A^2 \in \langle \mathbf{M}_Z \rangle$. We show how to adjust the phase and Z component to ensure this in Section~\ref{prop:L_X_X-components}.
	
\subsubsection{Examples: Logical Operators}
We now illustrate this algorithm for the logical XP operator group for two example XP codes:

\begin{example}[Logical Operators - Code 1]

The orbit representatives for Code 1 of Example~\ref{eg:code1} are given by Eq.~\eqref{eq:Em Code 1}. We calculate the coset decomposition of $E_m$ which gives us the X-components of the non-diagonal logical generators. Using the technique in Proposition~\ref{prop:Em_Decomposition}, we find that $L_X$ has $2$ rows and $E_q$ is of size $1$:
\begin{align}
L_X &=  \begin{pmatrix}
0000101\\
0000011\\
\end{pmatrix}\\
E_q &= \begin{pmatrix}0000001\end{pmatrix}
\end{align}
By adding elements of $\langle L_X \rangle$ to any orbit representative  $\mathbf{m}_i$, we can reach any other orbit representative $\mathbf{m}_j$. The logical operator group generators are:
\begin{align}
    \mathbf{L}_X =
    \begin{matrix}
XP_8( 2|0000101|0000204)\\
XP_8( 1|0000011|0000034)
    \end{matrix}\,,  \quad    \mathbf{L}_Z =
    \begin{matrix}
XP_8( 0|0000000|0002226)\\
XP_8( 0|0000000|0000404)\\
XP_8( 0|0000000|0000044)
    \end{matrix}
\end{align}
Full working for this example is in the \href{https://github.com/m-webster/XPFpackage/blob/main/Examples/6.2_logical_operator_generators.ipynb}{linked Jupyter notebook}. We look at the logical action of these logical operators in Example~\ref{eg:code1_action}.
\end{example}

\begin{example}[Logical Operators - Code 2]\label{eg:code2}
We now introduce our second main example, which is the code given by the following canonical stabiliser generators:
\begin{align}
\mathbf{S} = &\begin{matrix}XP_8(0|0000000|1322224)\\
XP_8(12|1111111|1234567)\end{matrix}
\end{align}
Using the graph search algorithm of Section~\ref{sec:graph_search}, the orbit representatives for Code 2 are:
\begin{align}
E_m &= \begin{pmatrix}0000000\\0000111\\0001011\\0001101\\0011110\\0011001\\0010101\\0010011\end{pmatrix}
\end{align}
The algorithm in Section~\ref{sec:code_word_algorithm_summary} yields the following codewords:
\begin{align}
&\begin{array}{rrr}
|\kappa_0\rangle =& |0000000\rangle&+\omega^{12}|1111111\rangle\\
|\kappa_1\rangle =& |0000111\rangle&\;\;\;+|1111000\rangle\\
|\kappa_2\rangle =& |0001011\rangle&+\omega^{14}|1110100\rangle\\
|\kappa_3\rangle =& |0001101\rangle&+\omega^{12}|1110010\rangle\\
|\kappa_4\rangle =& |0011110\rangle&\;\;\;+|1100001\rangle\\
|\kappa_5\rangle =& |0011001\rangle&+\omega^{8}|1100110\rangle\\
|\kappa_6\rangle =& |0010101\rangle&+\omega^{10}|1101010\rangle\\
|\kappa_7\rangle =& |0010011\rangle&+\omega^{12}|1101100\rangle
\end{array}
\end{align}
Calculating the coset decomposition of $E_m = E_q + \langle L_X \rangle$ using Proposition~\ref{prop:Em_Decomposition} we find that $L_X$ contains only one row but $E_q$ has $4$ elements:
\begin{align}
L_X = & \begin{pmatrix}
0011110\end{pmatrix}\\
E_q = &\begin{pmatrix}0000000\\0000111\\0001011\\0001101\end{pmatrix}
\end{align}
From the starting point $\mathbf{m}_i$, we can only reach orbit representatives $\mathbf{m}_j = \mathbf{m}_i \oplus 0011110$ by adding elements of $\langle L_X\rangle$ modulo $2$. We could, however, apply a unitary $U$ which permutes codewords as follows.
\begin{align}
| \kappa_0\rangle &\rightarrow | \kappa_1\rangle \rightarrow | \kappa_2\rangle \rightarrow | \kappa_3\rangle \rightarrow | \kappa_0\rangle\\
| \kappa_4\rangle &\rightarrow | \kappa_5\rangle \rightarrow | \kappa_6\rangle \rightarrow | \kappa_7\rangle \rightarrow | \kappa_4\rangle
\end{align}
Let the codewords in orbit form be $| \kappa_i\rangle = \sum_{0 \le j < 2^r}\omega^{p_{ij}}|\mathbf{e}_{ij}\rangle$. Let $k = (i + 1) \mod 4$ then $U$ is the operator given by:
\begin{align}
U &= \sum_{\mathbf{e}_{ij} \in E}|\mathbf{e}_{kj}\rangle\langle \mathbf{e}_{ij}| + \sum_{\mathbf{e} \in \mathbb{Z}_2^n \setminus E}|\mathbf{e}\rangle\langle \mathbf{e}|
\end{align}
$U$ cannot, however,  be written as an XP operator.

Applying the algorithms in sections~\ref{sec:L_Z} and~\ref{sec:L_X} we find the following generators for the logical XP group:
\begin{align}
\mathbf{L}_X =& XP_8(2|0011110|0012304)\,, \quad
\mathbf{L}_Z =\begin{matrix}
XP_8(0|0000000|0211112)\\
XP_8(0|0000000|0022220)\\
XP_8(0|0000000|0004004)\\
XP_8(0|0000000|0000404)\\
XP_8(0|0000000|0000044)\\
\end{matrix}
\end{align}
Full working for this example is in the \href{https://github.com/m-webster/XPFpackage/blob/main/Examples/6.2_logical_operator_generators.ipynb}{linked Jupyter notebook}.
\end{example}

\begin{example}[Reed Muller Codes]\label{eg:reedmuller}
In this example, we look at Reed Muller codes. These can be viewed as XP codes, and the algorithms of this chapter make it straightforward to determine their full logical operator structure. By varying the parameters of these codes, we show that they give rise to transversal logical operators at any level of the Clifford hierarchy.

We can write the 15-qubit Reed Muller code as a precision 2 code (i.e. using Pauli group operators) in terms of diagonal ($\mathbf{S}_Z$) and non-diagonal ($\mathbf{S}_X$) stabiliser generators:
\begin{align}
\mathbf{S}_Z &= \begin{array}{l}
XP_2(0|\mathbf{0}|100011100011101)\\
XP_2(0|\mathbf{0}|010010011011011)\\
XP_2(0|\mathbf{0}|001001010110111)\\
XP_2(0|\mathbf{0}|000100101101111)\\
XP_2(0|\mathbf{0}|000010000011001)\\
XP_2(0|\mathbf{0}|000001000010101)\\
XP_2(0|\mathbf{0}|000000100001101)\\
XP_2(0|\mathbf{0}|000000010010011)\\
XP_2(0|\mathbf{0}|000000001001011)\\
XP_2(0|\mathbf{0}|000000000100111) 
\end{array},\;\;\;
\mathbf{S}_X = \begin{array}{l}
XP_2(0|100011100011101|\mathbf{0})\\
XP_2(0|010010011011011|\mathbf{0})\\
XP_2(0|001001010110111|\mathbf{0})\\
XP_2(0|000100101101111|\mathbf{0})
\end{array}\label{eq:reedmuller}
\end{align}
The logical operators for precision $N=2$ are:

\begin{align}
\bar X =XP_2(0|000011111100001|\mathbf{0}),\;\;\;\bar Z =
XP_2(0|\mathbf{0}|000000000011111)
\end{align}

We can rescale this code to be of precision $N=8$ by multiplying the Z-components of the generators by 4. Applying the algorithm in Section~\ref{sec:LO}, we find additional diagonal operators as follows:

\begin{align}
\bar S^\dag = XP_8( 0|\mathbf{0}|000022222200002),\;\;\;\bar T^\dag = XP_8( 0|\mathbf{0}|111111111111111)
\end{align}
Note that $\bar S^\dag$ has an $S$ operator on same qubits as $\bar X$, whilst $\bar T^\dag$ has a $T$ operator on all 15 qubits. If we again rescale to precision $N=16$, we do not obtain any additional logical operators. This suggests that the code has a `natural precision' of $8$.
\end{example}

In the XPF, the stabiliser group for a given codespace is not unique. We can write a  more compact generating set of precision 4 operators that stabilise the same codespace. The operators are symmetrical in X and S and generate a different stabiliser group to those in Eq.~\eqref{eq:reedmuller}:
\begin{align}
\mathbf{S}_Z &= \begin{array}{l}
XP_4(0|\mathbf{0}|100011100011101)\\
XP_4(0|\mathbf{0}|010010011011011)\\
XP_4(0|\mathbf{0}|001001010110111)\\
XP_4(0|\mathbf{0}|000100101101111)
\end{array},\;\;\;
\mathbf{S}_X = \begin{array}{l}
XP_4(0|100011100011101|\mathbf{0})\\
XP_4(0|010010011011011|\mathbf{0})\\
XP_4(0|001001010110111|\mathbf{0})\\
XP_4(0|000100101101111|\mathbf{0})
\end{array}
\end{align}
Note here that the X-components of the non-diagonal generators and the Z-components of the diagonal generators are the rows of the binary matrix $M^4_4$ as defined in Section~\ref{sec:whg2xp}. 

In Proposition \ref{prop:reed-muller}, we generalise this example and show the Reed-Muller code on $2^r-1$ qubits can be written as the codespace of a precision $N = 2^{r-2}$ code whose stabiliser generators are symmetric in X and P with X and Z-components the rows of the matrix $M^r_r$ respectively. Pleasingly, this gives a self-dual set of stabiliser generators for all codes in this family. This generalises the known result for the Steane code, where $r = 3$ and $N=2$. Furthermore, to stabilise the code space of $n = 2^r - 1$ qubits, we require only $2r$ stabilisers, a logarithmic scaling. The code has natural precision of $2^{r-1}$ and a transversal logical $\text{diag}(1,\exp(2\pi i/2^{r-1}))$ operator.


You can explore which logical operators arise in codes with different parameters for Reed Muller codes in the linked  \href{https://github.com/m-webster/XPFpackage/blob/main/Examples/6.4_reed_muller.ipynb}{Jupyter notebook}.

\subsection{Assigning Quantum Numbers to the Codewords}\label{sec:quantum_numbers}
In this section, we demonstrate a natural way of assigning quantum numbers to the codewords of Section~\ref{sec:code_word_algorithm_summary}. This view of the codewords  gives rise to a classification of XP codes (Section~\ref{sec:classification}). It also allows us to develop more efficient algorithms to determine the logical operator group (Section~\ref{sec:LO_Modified}) and to analyse the logical action of operators (Sections~\ref{sec:LO_Action},~\ref{sec:LO_Classification}). 

The assignment of quantum numbers is based on the coset structure of the  Z-support of the codewords $E$ (see Section~\ref{sec:code_words_notation}). Recall from Section~\ref{sec:L_X} that $L_X$ is a set of binary vectors such that $E_m = E_q + \langle L_X \rangle$. Hence, we can write $E$ in coset form as:
\begin{align}
    E &= E_q + \langle S_X\rangle + \langle L_X \rangle\label{eq:Emcoset}\\
    &= \{(\mathbf{q}_l + \mathbf{u}S_X + \mathbf{v}L_X)\mod 2: \mathbf{q}_l \in E_q, \mathbf{u}\in \mathbb{Z}_2^r, \mathbf{v}\in \mathbb{Z}_2^k\}\label{eq:Emconstr}
\end{align}
We refer to $E_q$ as the \textbf{core} of the code. We can index elements of $E$ by writing:
\begin{align}
    \mathbf{e}_{l, \mathbf{u}, \mathbf{v}} &:= (\mathbf{q}_l + \mathbf{u}S_X + \mathbf{v}L_X)\mod 2
\end{align}
We can assign quantum numbers to the orbit representatives $\mathbf{m} \in E_m$ of Section~\ref{sec:code_words_notation} as follows:
\begin{align}
\mathbf{m}_{l, \mathbf{v}} &:= (\mathbf{q}_l + \mathbf{v} L_X)\mod 2
\end{align}
and these also apply to the codewords of Section~\ref{sec:code_word_algorithm_summary}:
\begin{align}
|\kappa_{l, \mathbf{v}}\rangle &:= O_{\mathbf{S}_X}|\mathbf{m}_{l, \mathbf{v}}\rangle\,.
\end{align} 
We refer to $l$ as the \textbf{Core Index}, $\mathbf{u}$ as the \textbf{Stabiliser Index} and $\mathbf{v}$ as the \textbf{Logical Index}.  The \textbf{orbit distance} is used to develop more efficient versions the logical identity and logical operator algorithms (see Section~\ref{sec:LO_Modified}), and is defined as:
\begin{align}
  \dist(\mathbf{e}) := \wt(\mathbf{u}) + \wt(\mathbf{v})\,.
\end{align}

\begin{example}[Quantum Numbers - Code 1]
For Code 1 of Example~\ref{eg:code1}, the orbit representatives can be written $E_m = E_q + \langle L_X \rangle$ where $E_q = \{0000001\}, L_X = \{0000100,0000010\}$. The full decomposition of $E_q + \langle S_X\rangle + \langle L_X \rangle$ and associated quantum numbers as per Section~\ref{sec:quantum_numbers} is:
\begin{align*}
\begin{array}{c | c || r: r r : r}
  &    &\multicolumn{4}{l}{\text{Stabiliser Index}}\\
\text{Logical Index}&\text{Core Index}&           00&            10&            01&            11\\
\hline
\hline
00&   0& 0000001& 1110001& 0001110&1111110\\
\hdashline
10&   0&0000100& 1110100&0001011& 1111011\\
01&   0& 0000010&1110010& 0001101& 1111101\\
\hdashline
11&   0&0000111& 1110111&0001000& 1111000
\end{array}
\end{align*}
The vectors in the first column of the table with Stabiliser Index $\mathbf{u} = \mathbf{0}$ correspond to the orbit representatives $E_m$. The vectors in row $i$ of the table are the Z-support of the $i$th codeword $|\kappa_i\rangle$ as in Eq.~\eqref{eq:kappa}. Each codeword can be identified by the quantum numbers $(l,\mathbf{v})$ where $\mathbf{v}$ is the Logical Index and $l$ is the Core Index. In this case, the size of the core is 1, so the core index is the same for all codewords. The dashed lines group together vectors with the same orbit distance. 
\end{example}

\begin{example}[Quantum Numbers - Code 2]
For code 2 of Example~\ref{eg:code2}, the full decomposition of $E_q + \langle S_X\rangle + \langle L_X \rangle$ and associated indexing as per Section~\ref{sec:quantum_numbers} is:
\begin{align*}
\begin{array}{c | c||r : r}
  &    &\multicolumn{2}{l}{\text{Stabiliser Index}}\\
 \hline
\text{Logical Index}&\text{Core Index}&   0&             1\\
\hline
\hline
 0&   0& 0000000&1111111\\
 0&   1& 0000111& 1111000\\
 0&   2& 0001011&1110100\\
 0&   3& 0001101&1110010\\
 \hdashline
 1&   0&0011110&1100001\\
 1&   1&0011001& 1100110\\
 1&   2&0010101& 1101010\\
 1&   3& 0010011& 1101100
 \end{array}
\end{align*}
The vectors with orbit distance 0, i.e. $\mathbf{u}=\mathbf{0}$ and $\mathbf{v}=\mathbf{0}$ correspond to the core $E_q$. In this case, the size of the core is 4, so we need both the Logical Index and the core index to specify the codewords. These examples are illustrated in the linked  \href{https://github.com/m-webster/XPFpackage/blob/main/Examples/6.5_quantum_numbers.ipynb}{Jupyter notebook}.
 \end{example}

\subsection{Classification of XP Codes}\label{sec:classification}
In this section, we present a way of classifying XP codes into XP-regular and non-XP-regular codes.  The main result is that each XP-regular code can be mapped via a diagonal unitary operator to a CSS code that has a very similar logical operator structure. We will see in Section~\ref{sec:LO_Classification} that non-XP-regular codes have a much richer logical operator structure that is distinct from PSF codes and so offer the possibility for interesting new classes of codes.

In Section~\ref{sec:xp-regular-def}, we introduce the concept of XP-regular codes and give some examples and elementary properties. In Section~\ref{sec:CoreClass}, we demonstrate that each XP-regular code can be mapped to a CSS code with identical diagonal logical operators, and similar non-diagonal logical operators. 

\subsubsection{Definition of XP-Regular and Non-XP-Regular Codes}\label{sec:xp-regular-def}

Consider an XP code, and let $E_q$ be the core of a code as defined in Section~\ref{sec:quantum_numbers}. If $|E_q|=1$, the code is \textbf{XP-regular}. Otherwise, the code is \textbf{non-XP-regular}.

One major difference between XP-regular and non-XP-regular codes is the codespace dimension. The codespace dimension for an XP code is $\dim(\mathcal{C}) = |E_q|2^k$ where $k = |L_X|$ (see Section~\ref{sec:quantum_numbers}). For an XP-regular code, $|E_q|=1$ so the codespace dimension is a power of $2$ and it encodes $k$ \textbf{logical qubits}. The codespace dimension of non-XP-regular codes may or may not be a power of $2$. Non-XP-regular codes are not additive, and their structure resembles that of the codeword stabilised (CWS) quantum codes of Ref.~\cite{cws}. The CWS class is very broad and includes all Pauli stabiliser and qudit stabiliser codes. There are examples of CWS codes which have better error correction properties than any known additive code with the same number of physical qubits.~\cite{562}.

In Ref.~\cite{xs}, a `regular code' is defined as one where the diagonal stabiliser generators of the code are elements of $\langle -I, Z\rangle^{\otimes n}$. All regular codes are XP-regular and in Section~\ref{sec:CoreClass} we will show a link between the two definitions. The examples below illustrate our definition of XP-regular codes:

\begin{example}[XP-Regular and Non-XP-Regular Codes]
Examples of  XP-Regular and Non-XP-Regular codes include:
\begin{enumerate}
\item All XP stabiliser states (i.e. XP codes with one-dimensional codespaces) are XP-regular; 
\item All Pauli codes (i.e. XP codes with precision $N=2$) are XP-regular;
\item Code 1 of Example~\ref{eg:code1} is XP-regular as $|E_q|=1$, though it is not regular according to the definition in Ref.~\cite{xs};
\item Code 2 of Example~\ref{eg:code2} is non-XP-regular, as $|E_q|=4$;
\end{enumerate}
\end{example}

\subsubsection{Mapping XP-Regular Codes to CSS Codes}\label{sec:CoreClass}

In this section, we show that each XP-regular code can be mapped via a diagonal unitary operator to a CSS code with a very similar logical operator structure. This is significant because it shows that the logical operator structure of an XP-regular code is no more complex than the corresponding CSS code. 

The algorithm for mapping a regular code whose canonical generators are $\mathbf{S}_X, \mathbf{S}_Z$ is:
\begin{enumerate}
    \item Determine a set of diagonal Pauli operators $\mathbf{R}_Z$ with the same simultaneous $+1$ eigenspace as the $\mathbf{S}_Z$
    \item If $\mathbf{S}_X = \{(p_i|\mathbf{x}_i|\mathbf{z}_i)\}$ then let $\mathbf{R}_X = \{(0|\mathbf{x}_i|\mathbf{0})\}$
    \item The mapped CSS code has stabiliser generators $\mathbf{R}_X,\mathbf{R}_Z$.
\end{enumerate}

We first show how to calculate a set of diagonal Pauli operators $\mathbf{R}_Z$ which have the same simultaneous $+1$ eigenspace as $\mathbf{S}_Z$. This links to the definition of `regular code' in Ref.~\cite{xs}, where regular codes were defined as those in which all diagonal generators are diagonal Paulis and implies that all `regular codes' are XP-regular.

\begin{lemma}
[Regular Diagonal Generators]\label{prop:regdiaggens}
If a code is XP-regular with diagonal canonical stabilisers $\mathbf{S}_Z$, then there exist diagonal Pauli operators whose simultaneous $+1$ eigenspace is the same as $\mathbf{S}_Z$.
\end{lemma}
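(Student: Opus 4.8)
The plan is to reduce the claim to a structural statement about the Z-support $E := \ZSupp(\mathcal{C}_Z)$, namely that for an XP-regular code $E$ is an \emph{affine} subspace (a coset of a linear subspace) of $\mathbb{Z}_2^n$. Once this is established the required diagonal Paulis are simply the signed tensor products of $Z$'s that cut out the affine linear system whose solution set is $E$: since $\mathbf{S}_Z$ consists of diagonal operators, $\mathcal{C}_Z = \Span_{\mathbb{C}}\{|\mathbf{e}\rangle : \mathbf{e} \in E\}$ by Eq.~\eqref{eq:C_Z} and the definition of $E$, so making the solution set equal $E$ makes the simultaneous $+1$ eigenspaces agree.

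First I would invoke the coset decomposition of Section~\ref{sec:quantum_numbers}, $E = E_q + \langle S_X\rangle + \langle L_X\rangle$ (Eq.~\eqref{eq:Emcoset}). By the definition of XP-regularity in Section~\ref{sec:xp-regular-def}, $E_q$ is a singleton; write $E_q = \{\mathbf{q}_0\}$ (it is nonempty, since we assume the canonical generators contain no $\omega^q I$ with $q \ne 0$, so the codespace is nontrivial). Because the sum of two subgroups of the abelian group $(\mathbb{Z}_2^n, \oplus)$ is again a subgroup, $V := \langle S_X\rangle + \langle L_X\rangle$ is a linear subspace and $E = \mathbf{q}_0 + V$ is a coset of $V$.

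Next I would dualise. Using ordinary linear algebra over $\mathbb{F}_2$ (row reduction), obtain a basis $\mathbf{a}_1,\dots,\mathbf{a}_t$ of $V^\perp := \{\mathbf{a}\in\mathbb{Z}_2^n : \mathbf{a}\cdot\mathbf{v} = 0\ \forall \mathbf{v}\in V\}$; since the standard bilinear form on $\mathbb{Z}_2^n$ is nondegenerate, $(V^\perp)^\perp = V$, so $\mathbf{e}\in V \iff \mathbf{a}_i\cdot\mathbf{e}=0$ for all $i$, hence $\mathbf{e}\in E \iff \mathbf{a}_i\cdot\mathbf{e} = b_i$ for all $i$, where $b_i := \mathbf{a}_i\cdot\mathbf{q}_0 \in \mathbb{Z}_2$. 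Let $R_i$ be the diagonal Pauli operator in $\langle -I, Z\rangle^{\otimes n}$ acting by $R_i|\mathbf{e}\rangle = (-1)^{b_i + \mathbf{a}_i\cdot\mathbf{e}}|\mathbf{e}\rangle$. A computational basis vector $|\mathbf{e}\rangle$ is fixed by every $R_i$ exactly when $\mathbf{a}_i\cdot\mathbf{e} = b_i$ for all $i$, i.e.\ exactly when $\mathbf{e}\in E$; because the $R_i$ are diagonal, their simultaneous $+1$ eigenspace is $\Span_{\mathbb{C}}\{|\mathbf{e}\rangle : \mathbf{e}\in E\} = \mathcal{C}_Z$, the simultaneous $+1$ eigenspace of $\mathbf{S}_Z$, as required.

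I do not expect a genuine obstacle: the whole argument hinges on the single observation that $|E_q| = 1$ forces $E$ to be affine, after which the rest is the textbook dictionary between affine subspaces of $\mathbb{F}_2^n$ and signed diagonal Pauli stabilisers. The points needing a little care are (i) citing the coset decomposition $E = E_q + \langle S_X\rangle + \langle L_X\rangle$ and the definition of XP-regularity accurately rather than re-deriving them, and (ii) noting that the conclusion is insensitive to the parity of $N$ — the $R_i$ are honest Pauli operators whether or not $Z$ itself can be written as an XP operator of precision $N$, and when $N$ is even one may additionally record $R_i = XP_N(N b_i|\mathbf{0}|\frac{N}{2}\mathbf{a}_i)$.
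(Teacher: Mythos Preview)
Your proposal is correct and follows essentially the same route as the paper: both arguments use the coset decomposition $E = \mathbf{q}_0 + \langle S_X\rangle + \langle L_X\rangle$ with $|E_q|=1$ to see that $E$ is an affine $\mathbb{F}_2$-subspace, then take a basis of the orthogonal complement (the paper calls this $\Ker_{\mathbb{Z}_2}(G_X)$) and form the signed diagonal Paulis that cut out $E$. The only cosmetic difference is that the paper closes with a dimension count $|E'| = 2^{r+k} = |E|$ whereas you invoke $(V^\perp)^\perp = V$ directly; these are equivalent.
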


\begin{proof}

If the code is XP-regular, it has core size 1. Let $\mathbf{q}$ be the sole element in the core. The Z-support of the simultaneous $+1$ eigenspace of the $\mathbf{S}_Z$ can be written as:
\begin{align}
E = \mathbf{q} + \langle S_X \rangle + \langle L_X \rangle
\end{align}
Let $G_X$ be the matrix formed from the rows of $L_X$ and $S_X$. The rows of $G_X$ are independent. Find the Howell basis $K$ of $\Ker_{\mathbb{Z}_2}(G_X)$. Because there are $|S_X| + |L_X| = r+k$ independent rows in $G_X$, there are $n-r-k$ independent rows in $K$.

Let $\mathbf{R}_Z = \{XP_2(-2\mathbf{q} \cdot \mathbf{z},\mathbf{0},\mathbf{z}): \mathbf{z} \in K\}$ and let the Z-support of the simultaneous $+1$ eigenspace of the $\mathbf{R}_Z$ be $E'$. Operators in $\mathbf{R}_Z$ stabilise all elements $\mathbf{e} \in E$ so $E \subset E'$.

Because $\mathbf{R}_Z$ has $n-k-r$ independent diagonal Pauli operators, $|E'| = 2^{r+k} = |E|$. Hence the simultaneous $+1$ eigenspaces of $\mathbf{S}_Z, \mathbf{R}_Z$ are the same.
\end{proof}

We are now in a position to prove the main result of this section:

\begin{proposition}[Mapping XP-Regular Codes to CSS Codes]\label{prop:cssmapping}

Given a XP-regular code $\mathbf{C}$ with canonical generators $\mathbf{S}_X, \mathbf{S}_Z$, there is a mapping to a CSS code $\mathbf{C}'$ with generators $\mathbf{R}_X, \mathbf{R}_Z$ such that:

\begin{enumerate}
\item If $| \kappa_i\rangle = \sum_{0 \le j < 2^r} \omega^{p_{ij}} |\mathbf{e}_{ij}\rangle$ is  codeword of $\mathbf{C}$ then $| \kappa_i\rangle' = \sum_{0 \le j < 2^r} |\mathbf{e}_{ij}\rangle$ is a codeword of $\mathbf{C}'$;
\item $\mathbf{C}'$ has the same diagonal logical operators as $\mathbf{C}$;
\item If $XP_N(p|\mathbf{x}|\mathbf{z})$ is a non-diagonal logical operator of $\mathbf{C}$, then $XP_N(0|\mathbf{x}|\mathbf{0})$ is a logical operator of $\mathbf{C}'$.
\end{enumerate}
\end{proposition}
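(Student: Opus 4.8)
The plan is to realise the map $\mathbf{C}\to\mathbf{C}'$ by an explicit diagonal unitary $U$ that ``flattens'' the relative phases of the codewords, and then to read off all three claims from its properties. Because the orbit representatives $E_m$ are coset representatives of $\langle S_X\rangle$ in $\mathbb{Z}_2^n$, the Z-supports $E_i:=\ZSupp(|\kappa_i\rangle)=\mathbf{m}_i+\langle S_X\rangle$ are pairwise disjoint, so each computational basis vector $|\mathbf{e}_{ij}\rangle$ appears in exactly one codeword and $g(\mathbf{e}_{ij}):=p_{ij}$ is a well-defined function on $E=\bigcup_i E_i$. Extending $g$ by $0$ off $E$ and setting $U:=\sum_{\mathbf{e}}\omega^{-g(\mathbf{e})}|\mathbf{e}\rangle\langle\mathbf{e}|$, we get $U|\kappa_i\rangle=\sum_j|\mathbf{e}_{ij}\rangle=|\kappa_i\rangle'$, so the state in claim~(1) is exactly $U|\kappa_i\rangle$.

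For claim~(1) I would verify that $|\kappa_i\rangle'$ is stabilised by $\mathbf{R}_Z$ and by $\mathbf{R}_X$. The first is immediate from Lemma~\ref{prop:regdiaggens}: the simultaneous $+1$ eigenspace of $\mathbf{R}_Z$ has Z-support exactly $E$, and $|\kappa_i\rangle'$ is supported on $E_i\subseteq E$. For the second, each generator of $\mathbf{R}_X$ is $X^{\mathbf{x}_j}=XP_N(0|\mathbf{x}_j|\mathbf{0})$ with $\mathbf{x}_j$ a row of $S_X$; since $\ZSupp(|\kappa_i\rangle')=E_i=\mathbf{m}_i+\langle S_X\rangle$ with all coefficients equal to $1$, adding $\mathbf{x}_j\in\langle S_X\rangle$ merely permutes this coset and fixes $|\kappa_i\rangle'$. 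The $|\kappa_i\rangle'$ have disjoint Z-supports, so they are independent, and there are $\dim(\mathcal{C})=2^k$ of them. A rank count — $\mathbf{R}_X$ contributes $r$ independent $X$-type generators, $\mathbf{R}_Z$ contributes $n-r-k$ independent $Z$-type generators (from the proof of Lemma~\ref{prop:regdiaggens}), and they commute because each Z-component of $\mathbf{R}_Z$ annihilates every row of $S_X$ modulo $2$ by construction — gives $\dim(\mathcal{C}')=2^k$, and $\mathcal{C}'$ is nonempty since it contains the $|\kappa_i\rangle'$. Hence $\{|\kappa_i\rangle'\}$ is a basis of $\mathcal{C}'$.

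Claim~(2) follows purely from $U$ being diagonal: any diagonal XP operator $A$ commutes with $U$, so $A\mathcal{C}'=AU\mathcal{C}=UA\mathcal{C}$, which equals $\mathcal{C}'=U\mathcal{C}$ if and only if $A\mathcal{C}=\mathcal{C}$; moreover $A|\kappa_i\rangle'=UA|\kappa_i\rangle$, so $A$ has the same phase vector on $\{|\kappa_i\rangle'\}$ as on $\{|\kappa_i\rangle\}$. For claim~(3), let $A=XP_N(p|\mathbf{x}|\mathbf{z})$ be a non-diagonal logical operator of $\mathbf{C}$, so $A|\kappa_i\rangle=\omega^{\mathbf{f}[i]}|\kappa_{\pi(i)}\rangle$ for the permutation $\pi$ of Proposition~\ref{prop:LXP_action}. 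Expanding the left side with the OP rule gives $A|\kappa_i\rangle=\sum_j\omega^{p_{ij}+p+2\mathbf{e}_{ij}\cdot\mathbf{z}}|\mathbf{e}_{ij}\oplus\mathbf{x}\rangle$, whose Z-support is $E_i\oplus\mathbf{x}$; comparing with the right side forces $E_i\oplus\mathbf{x}=E_{\pi(i)}$, so $\mathbf{x}$ carries the coset $E_i$ bijectively onto $E_{\pi(i)}$. Then $XP_N(0|\mathbf{x}|\mathbf{0})|\kappa_i\rangle'=X^{\mathbf{x}}\sum_j|\mathbf{e}_{ij}\rangle=\sum_{\mathbf{e}\in E_{\pi(i)}}|\mathbf{e}\rangle=|\kappa_{\pi(i)}\rangle'$, so $XP_N(0|\mathbf{x}|\mathbf{0})$ permutes the codeword basis of $\mathbf{C}'$ and hence preserves $\mathcal{C}'$.

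The individual computations are routine uses of the OP and MUL rules, and claims~(2)--(3) are short once $U$ is in hand. The main obstacle is claim~(1): confirming that $\mathbf{R}_X,\mathbf{R}_Z$ genuinely define a CSS code whose codespace has the matching dimension $2^k$, so that the $|\kappa_i\rangle'$ are a full basis rather than merely an independent set inside $\mathcal{C}'$. This is exactly where Lemma~\ref{prop:regdiaggens} and its rank count do the work, and where XP-regularity ($|E_q|=1$, giving a single core vector $\mathbf{q}$ and a power-of-two codespace) is essential: for a non-XP-regular code the ``flattening'' unitary would not land on a CSS code of the right dimension.
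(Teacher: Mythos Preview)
Your proposal is correct and takes a genuinely different route from the paper's own proof.

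The paper argues claims~(2) and~(3) by appealing back to the \emph{algorithms} of Sections~\ref{sec:L_Z} and~\ref{sec:L_X}: it observes that the diagonal logical operator algorithm takes only the Z-support $E$ as input, and that the set of admissible X-components of non-diagonal logical operators is likewise determined solely by $E$. Since the construction of $\mathbf{R}_X,\mathbf{R}_Z$ ensures $E'=E$ (same simultaneous $+1$ eigenspace of the diagonal generators, same $S_X=R_X$, hence same orbit representatives), both conclusions follow. For claim~(1) the paper invokes the orbit-operator machinery directly to identify $|\kappa_i\rangle'=O_{\mathbf{R}_X}|\mathbf{m}_i\rangle$ as a basis. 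The diagonal unitary $U$ that you build is mentioned in the paper only as a remark \emph{after} the proof, not used in it.

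Your argument is more self-contained: you realise the map as conjugation by an explicit diagonal unitary $U$, so claim~(2) is a one-line consequence of diagonal operators commuting, and claim~(3) follows by reading off the Z-support bijection $E_i\oplus\mathbf{x}=E_{\pi(i)}$ and acting with the bare $X^{\mathbf{x}}$ on $|\kappa_i\rangle'$. This avoids relying on the correctness and completeness of the logical-operator algorithms proved elsewhere, at the cost of doing the rank count for $\dim(\mathcal{C}')=2^k$ by hand (which you extract from the proof of Lemma~\ref{prop:regdiaggens}). The paper's approach, by contrast, gets claim~(1) for free from the orbit-operator framework of Chapter~\ref{chap:codewords} and leans on the structural results about logical operators already established; it is shorter inside the paper's context but less transparent in isolation. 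Both approaches make the role of XP-regularity ($|E_q|=1$) explicit at the same point: it is what makes Lemma~\ref{prop:regdiaggens} applicable and forces $\dim(\mathcal{C})=2^k$.
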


\begin{proof}

Let $\mathbf{S}_X$ and $\mathbf{S}_Z$ be the canonical generators for $\mathbf{C}$. Because $\mathbf{C}$ is XP-regular, by Proposition~\ref{prop:regdiaggens} we can find a decomposition $E = \mathbf{q} + \langle S_X \rangle + \langle L_X \rangle$ and  $\mathbf{R}_Z$ which from $\langle -I, Z\rangle^{\otimes n}$ and which have the same simultaneous $+1$ eigenspace as the $\mathbf{S}_Z$. Let $\mathbf{R}_X = \{XP_N(0|\mathbf{x}|\mathbf{0}) : \mathbf{x} \in L_X  \}$ and let $\mathbf{C}'$ be the code defined by the stabiliser generators $\mathbf{R}_X, \mathbf{R}_Z$.

By construction, the simultaneous $+1$ eigenspaces of the diagonal generators are the same (and so are the respective Z-supports i.e. $E = E'$). The matrices formed from the X-components of the non-diagonal canonical generators are the same (i.e. $S_X = R_X$). Hence, the orbit representatives are the same (i.e. $E_m' = E_m = \{\mathbf{m}_i\}$), and so $| \kappa_i\rangle' = O_{\mathbf{R}_X} |\mathbf{m}_i\rangle = \sum_j |\mathbf{e}_{ij}\rangle$ form a basis for the codespace of $C'$.

The algorithm for computing the diagonal logical operators in Section~\ref{sec:L_Z} depends only on the Z-support of the codewords. As $E' = E$, the diagonal logical operators are the same.

In Section~\ref{sec:L_X}, we showed that the X components of the logical operators depend only on $E=E'$, so for any logical operator $XP_N(p|\mathbf{x}|\mathbf{z})$ of $\mathbf{C}$, $XP_N(0|\mathbf{x}|\mathbf{0})$ is a logical operator of $\mathbf{C}'$.
\end{proof}

We can transform the codespaces  $\mathcal{C}'$ to $\mathcal{C}$ by applying the diagonal unitary $U$ specified by:
\begin{align}
U = \sum_{\mathbf{e}_{ij} \in E}\omega^{p_{ij}}|\mathbf{e}_{ij}\rangle\langle \mathbf{e}_{ij}| + \sum_{\mathbf{e} \in \mathbb{Z}_2^n \setminus E}|\mathbf{e}\rangle\langle \mathbf{e}|\,.
\end{align}

 \begin{example}[Map XP-Regular Code to CSS - Code 1]
Code 1 of Example~\ref{eg:code1} is an XP-regular code. Applying Proposition~\ref{prop:cssmapping}  regular generators are given by:
\begin{align}
\mathbf{R}_Z &=
\begin{array}{l}
XP_2( 0|0000000|1010000)\\
XP_2( 0|0000000|0110000)\\
XP_2( 2|0000000|0001111)
\end{array}
\,\ \quad
\mathbf{R}_X =
\begin{array}{l}
XP_2( 0|1110000|0000000)\\
XP_2( 0|0001111|0000000)
\end{array}\,.
\end{align}
Full working for this example is in the \href{https://github.com/m-webster/XPFpackage/blob/main/Examples/6.8_mapping_XP-regular_code.ipynb}{linked Jupyter notebook}.
\end{example}

\subsubsection{Error Correction for XP-Regular Codes}

We now briefly consider error correction for XP codes. One of the complexities associated with this problem is that the stabiliser generators of XP codes are not guaranteed to commute, so simultaneous measurements may not be possible. A possible error-correction routine for \textbf{XP-regular codes} is as follows. In this case, we calculate the regular diagonal generators $\mathbf{R}_Z$ which are diagonal Pauli operators. By measuring the $\mathbf{R}_Z$ first and correcting for errors, we  guarantee that we are in a subspace where all stabiliser generators commute and can complete the error-correction process by measuring the non-diagonal generators $\mathbf{S}_X$. A similar process is outlined in section V of Ref.~\cite{xs}. 

Another approach would be to adopt the error correction methods for codeword stabilised (CWS) quantum codes presented in Ref.~\cite{cws}. In this approach, the codewords of a Pauli stabiliser code are viewed as translations of a graph state $|S\rangle$ by tensors of Pauli Z operators $XP_2(0|\mathbf{0}|\mathbf{w}_i)$. The vectors $\mathbf{w}_i$ are called \textbf{word operators} and form a classical code over $\mathbb{Z}_2^n$. Any single-qubit error can be propagated along the edges of the graph state by applying stabilisers of $|S\rangle$ and converted into a tensor of Pauli Z operators $XP_2(0|\mathbf{0}|\mathbf{z})$. This allows decoding using classical techniques using the bit string $\mathbf{z}$. To use this approach, we would need to be able to convert an arbitrary XP code into a graph state translated by tensors of Pauli Z operators. We have seen in Chapter~\ref{chap:hypergraph} that XP states can be represented as weighted hypergraph states. The orbit representatives $E_m$ of XP codes play a similar role to the word operators of CWS codes so this approach may have potential. 

\subsection{Modified Logical Operator Algorithms}\label{sec:LO_Modified}
The  logical identity and logical operator algorithms presented in sections~\ref{sec:LI} and ~\ref{sec:LO} require as input the codewords in the orbit form of Eq.~\eqref{eq:orbitform}. In the Pauli stabiliser formalism, there are algorithms for determining the logical Z and X operators that do not require us to first calculate the codewords. Can we find such algorithms in the XP formalism?

In this section, we demonstrate modified algorithms for determining the logical operator group which do not require the codewords as input. Instead, we take the the canonical generators and orbit representatives of Chapter~\ref{chap:codewords} as a starting point. We show that the modified logical identity algorithm is significantly more efficient than the original version. As a result, the modified algorithm can be used to determine if two different XP groups have the same codespace. 

\subsubsection{Modified Logical Identity Algorithm}\label{sec:LI_Modified}
The modified algorithm for determining the logical XP identity group generators can be used where the precision of the code is a power of $2$. The logical identity algorithm uses the Z-support $E$ of the codewords (see Section~\ref{sec:code_words_notation}). If $N = 2^t$, we only need to consider elements of $E$ at most orbit distance $t$ from the core (see Section~\ref{sec:quantum_numbers},  Proposition~\ref{prop:modified_LI}).

The main steps of the modified algorithm are as follows. 
\begin{enumerate}
\item Given a set of generators $\mathbf{G}$ for the stabiliser group, we calculate the canonical generators $\mathbf{S}$ and orbit representatives $E_m$ (see Chapter~\ref{chap:codewords}). 
\item From $\mathbf{S}, E_m$, we can efficiently calculate $E_q, S_X$ and $L_X$ without calculating $E$ in full.  
\item The elements of $E$ at most orbit distance $t$ from the core are $E_t = \{(\mathbf{q} + \mathbf{u}S_X + \mathbf{v}L_X) \mod 2 : \mathbf{q} \in E_q, \mathbf{u} \in \mathbb{Z}_2^r, \mathbf{v} \in \mathbb{Z}_2^k; \wt(\mathbf{u}) + \wt(\mathbf{v}) \le t\}$. We can then use $E_t$ instead of $E$ to determine $\mathbf{M}$ in the algorithm set out in Section~\ref{sec:LI}.
\end{enumerate}

Next, we look at the computational complexity of the modified logical identity algorithm versus that of the original version. Let the coset decomposition of $E = E_q + \langle S_X\rangle + \langle L_X\rangle$ as in  Eq.~\eqref{eq:Emcoset}. The number of elements in $E$ is $|E| = q 2^{k+r}$ where $q = |E_q|, r = |S_X|, k = |L_X|$. Hence, the complexity of the original logical identity algorithm, which requires us to perform row operations on a matrix of size $|E|$, is $\mathcal{O}(q 2^{k+r})$.

We now consider the computational complexity of the modified logical identity algorithm for codes of various precisions. For PSF codes, $N = 2 = 2^1$ and we only need to consider elements of $E_1$ which are at most orbit distance 1 from the core to find the logical identity group. The total number of elements to consider is $q + q(k+r)$. Because $q=1$ for all Pauli stabiliser codes, the run time of the modified algorithm is $\mathcal{O}(k+r)$ in this case.

For XS stabiliser codes, $N=4 = 2^2$. We need to consider elements of $E_2$ which are up to orbit distance $2$ from the core. Hence, the logical identity algorithm is $\mathcal{O}(q + q(k+r) + q\binom{k+r}{2}) = \mathcal{O}(q (k+r)^2)$

In general, we see that for precision $N=2^t$, the modified logical identity algorithm is $\mathcal{O}(q (k+r)^t)$.  For small $N$ and large $r$ or $k$, this can result in significantly faster run time compared to the original version which is $\mathcal{O}(q 2^{k+r})$. Topological codes tend to have a small number of logical qubits but a large number of stabiliser generators, so this is an important improvement. We can also express the complexity in terms of the number of qubits $n$ as $\mathcal{O}(n^t)$ because $2^n \ge |E| = q 2^{k+r}$.

\subsubsection{Modified Logical Operator Algorithm}
Similarly, we can use a modified version of the logical operator algorithm where the precision of the XP code is a power of $2$ (say $N=2^t$). Instead of calculating the codewords, we only need to consider the elements of $E$ up orbit distance $t$ from the orbit representatives $E_m$. Proof of this claim is in Section~\ref{app:modified_LI+LO}.

\subsection{Diagonal Logical Actions Arising in an XP Code}\label{sec:LO_Action}
In the Pauli stabiliser formalism, methods exist to find the logical $Z$ operators for a code. In the XP formalism, a code may have logical operators with a wider range of actions - for instance, logical $S, T$ or $\sqrt{T}$ operators, as well as logical controlled phase operators - for example logical $CZ, CCZ$ and $CT$ operators.  In this section, we show how to describe all possible logical actions a diagonal logical operator can apply for a given code. We show how to determine whether a particular logical action is possible, and if so how to calculate a logical operator with this action.

We first show how to describe all possible actions which can be applied by the diagonal logical operators of a code. We use the phase vectors of Section~\ref{sec:LOdef} to describe the logical action of a diagonal operator. Let $\mathbf{L}_Z'$ be the set of diagonal logical operators plus $\omega I$ - the logical operator that applies a phase of $\omega$ to each codeword. Let $F_Z$ be the matrix whose rows are the phase vectors of $\mathbf{L}_Z'$. We can calculate the Howell basis $F_D = \How_{\mathbb{Z}_{2N}}(F_Z)$ by using the techniques in Appendix~\ref{app:linalg} so that $F_D = U F_Z$ for some matrix $U$. The phase vectors which can be applied by a diagonal logical operator of the code are given by $\Span_{\mathbb{Z}_{2N}}(F_D)$. 

We next show how to find a logical operator whose action is given by a phase vector in $\Span_{\mathbb{Z}_{2N}}(F_D)$. Let $\mathbf{f}_i,\mathbf{u}_i$ be rows of $F_D, U$ respectively. Then the operator $L_i = \mathbf{L}_Z'^{\mathbf{u}_i}$ has the logical action given by $\mathbf{f}_i$, using the generator product notation of Eq.~\eqref{eq:generator_product}. The operators $\mathbf{L}_D = \{L_i : 0 \le i < |F_D|\}$ generate diagonal logical operators with all possible phase vectors. Let the required phase vector be $\mathbf{f} \in \Span_{\mathbb{Z}_{2N}}(F_D)$ so that $\mathbf{f} = \mathbf{u} F_D \mod 2N $ for some $\mathbf{u} \in \mathbb{Z}_{2N}^{|F_Z|}$, then the XP operator $\mathbf{L}_D^{'\mathbf{u}}$ has phase vector $\mathbf{f}$.

 \begin{example}[Logical Action - Code 1 and Code 2]\label{eg:code1_action}
For \textbf{Code 1}  of Example~\ref{eg:code1}, the phase vectors for each logical operator in $\mathbf{L}$ are:
\begin{align}
\mathbf{L}_Z &=
\begin{array}{l r r r r r}
XP_8(0|0000000|0002226)&\mathbf{f}:&12&4&4&4\\
XP_8(0|0000000|0000404)&\mathbf{f}:&8&8&0&0\\
XP_8(0|0000000|0000044)&\mathbf{f}:&8&0&8&0\\
\end{array}\\
\mathbf{L}_X &=
\begin{array}{l r r r r r}
XP_8(10|0000101|0000600)&\mathbf{f}:&0&0&0&0\\
XP_8(9|0000011|0004434)&\mathbf{f}:&0&0&0&0\\
\end{array}
\end{align}
Hence operator $XP_8( 0|\mathbf{0}|0002226)$ applies a phase of $\omega^{12}$ on the first codeword and $\omega^4$ on the other codewords. Calculating $F_Z$ and $F_D$ we find:
\begin{align}
F_Z = &\begin{pmatrix}1&1&1&1\\12&4&4&4\\8&8&0&0\\8&0&8&0\end{pmatrix},\;\;\;
F_D = \How_{\mathbb{Z}_8}(F_Z) = \begin{pmatrix}1&1&1&1\\0&8&0&0\\0&0&8&0\\0&0&0&8\end{pmatrix}
\end{align}
The following diagonal operators generate diagonal logical operators with all possible phase vectors:
\begin{align}
\mathbf{L}_D &=
\begin{array}{r r r r r r}
XP_8(1|0000000|0000000)&\mathbf{f}:&1&1&1&1\\
XP_8(4|0000000|0006266)&\mathbf{f}:&0&8&0&0\\
XP_8(12|0000000|0002262)&\mathbf{f}:&0&0&8&0\\
XP_8(4|0000000|0002666)&\mathbf{f}:&0&0&0&8
\end{array}
\end{align}
The action of operator $XP_8( 4|\mathbf{0}|0006266)$ is to apply $\omega^8 = -1$ to the second codeword only. As we have two logical qubits, this is a logical CZ operation.

In fact, the logical effects we can obtain are generated by $\omega I$, and CZ on logical indices $01, 10$ and $11$. We can make combinations of these operators to generate the Z operators on the first and second logical qubit:

\begin{align}
\begin{array}{l r r r r r r}
\Bar{Z}_{10} &= XP_8(8|0000000|0000044)&\mathbf{f}:&0&8&0&8\\
\Bar{Z}_{01} &= XP_8(0|0000000|0004040)&\mathbf{f}:&0&0&8&8\\
\end{array}
\end{align}

For \textbf{Code 2} of Example~\ref{eg:code2}, we have the following logical operators and corresponding phase vectors:
\begin{align}
\mathbf{L}_Z &=\begin{array}{l r r r r r | r r r r}
XP_8(0|0000000|0211112)&\mathbf{f}:&0&8&8&8&8&8&8&8\\
XP_8(0|0000000|0022220)&\mathbf{f}:&0&8&8&8&0&8&8&8\\
XP_8(0|0000000|0004004)&\mathbf{f}:&0&8&0&0&8&0&8&8\\
XP_8(0|0000000|0000404)&\mathbf{f}:&0&0&8&0&8&8&0&8\\
XP_8(0|0000000|0000044)&\mathbf{f}:&0&0&0&8&8&8&8&0\\
\end{array}\\
\mathbf{L}_X &= \begin{array}{l r r r r r | r r r r}
XP_8(14|0011110|0074160)&\mathbf{f}:&0&0&0&0&0&0&0&0
\end{array}
\end{align}
Note that the bar in the phase vector groups together codewords with the same logical index, but with different core indices. Calculating the Howell basis of $F_Z$, we obtain the generators:
\begin{align}
\mathbf{L}_D &= \begin{array}{l r r r r r | r r r r}
XP_8( 1|0000000|0000000)&\mathbf{f}:&1&1&1&1&1&1&1&1\\ 
XP_8( 0|0000000|0237336)&\mathbf{f}:&0&8&0&0&0&0&8&8\\
XP_8( 0|0000000|0026260)&\mathbf{f}:&0&0&8&0&0&0&8&0\\
XP_8( 0|0000000|0026620)&\mathbf{f}:&0&0&0&8&0&0&0&8\\
XP_8( 0|0000000|0277772)&\mathbf{f}:&0&0&0&0&8&0&0&0\\
XP_8( 0|0000000|0673332)&\mathbf{f}:&0&0&0&0&0&8&8&8\\
\end{array}
\end{align}
The logical Z operator on the single logical qubit is:
\begin{align}
\Bar{Z}_1 &= \begin{array}{l r r r r r | r r r r}
XP_8( 0|0000000|0062224)&\mathbf{f}:&0&0&0&0&8&8&8&8\\
\end{array}
\end{align}
Full working for these examples is in the \href{https://github.com/m-webster/XPFpackage/blob/main/Examples/6.9_logical_action.ipynb}{linked Jupyter notebook}.
\end{example}

\begin{example}[Logical Action - Hypercube Codes]
In Example~\ref{eg:reedmuller}, we saw that the Reed Muller code on $2^r - 1$ qubits has a transversal logical $\text{diag}(1,\exp(i 2\pi/2^{r-1})$ operator. In this example, we view the Hypercube code of dimension $D$ as an XP code of precision $N = 2^D$. We show that it has transversal generalised controlled Z logical operators at the $(D-1)$st level of the Clifford hierarchy. This fact was first pointed out in Ref.~\cite{hypercube}, but is easily verified using the techniques of this section and we calculate the corresponding XP operators in the \href{https://github.com/m-webster/XPFpackage/blob/main/Examples/6.10_hypercube_codes.ipynb}{linked Jupyter Notebook}.
\end{example}
\subsection{Classification of Logical Operators}\label{sec:LO_Classification}
We now introduce a classification scheme for diagonal logical operators based on the quantum numbers assigned to the codewords in Section~\ref{sec:quantum_numbers}. A \textbf{regular XP logical operator} is an XP operator that applies the same phase to codewords with the same \textbf{logical index}. A \textbf{core logical XP operator} is an XP operator that applies the same phase to codewords with the same \textbf{core index}.

To determine if an operator is regular or core, we reshape the phase vector for the operator so that rows correspond to codewords with the same logical index and columns to codewords with the same core index.

 \begin{example}[Regular and Core Operators - Code 2]
For Code 2 of Example~\ref{eg:code2}, which is a non-XP-regular code, consider the operator \linebreak $A = XP_8( 0|\mathbf{0}|0062224)$. The phase vector for $A$ is:
\begin{align*}
\begin{array}{l|l}
	&\text{Core Index}\\
	\text{Logical Index}&
		\begin{array}{l l l l}
		0&1&2&3
		\end{array}
	\\
	\hline
	\begin{array}{l}0\\1\end{array}&
	\begin{array}{l l l l}0&0&0&0\\8&8&8&8\end{array}
\end{array}
\end{align*}
The operator $A$ applies a phase of $\omega^8 = -1$ for logical index 1, so it is a regular operator.
Now consider operator $B = XP_8( 0|\mathbf{0}|0026620)$. The phase vector for $B$ is:
\begin{align*}
\begin{array}{l|l}
	&\text{Core Index}\\
	\text{Logical Index}&
		\begin{array}{l l l l}
		0&1&2&3
		\end{array}
	\\
	\hline
	\begin{array}{l}0\\1\end{array}&
	\begin{array}{l l l l}0&0&0&8\\0&0&0&8\end{array}
\end{array}
\end{align*}
The operator $B$ applies a phase of $-1$ to codewords with core index 3, so it is a core operator. Now consider $C = XP_8( 0|\mathbf{0}|0277772)$ which has phase vector:
\begin{align*}
\begin{array}{l|l}
	&\text{Core Index}\\
	\text{Logical Index}&
		\begin{array}{l l l l}
		0&1&2&3
		\end{array}
	\\
	\hline
	\begin{array}{l}0\\1\end{array}&
	\begin{array}{l l l l}0&0&0&0\\8&0&0&0\end{array}
\end{array}
\end{align*}
One might think that all logical operators are either core operators, regular operators, or products of core and regular operators. However, the operator $C$ is a counterexample to this hypothesis and demonstrates that more complex logical operators arise in non-XP-regular codes. The operator $C$ applies a phase of $-1$ when the core index is $0$ \textbf{and} the Logical Index is $1$. Because it applies a phase of $-1$ to one of the 8 codewords, the operator can be thought of as a CCZ gate. Full working for these examples is in the \href{https://github.com/m-webster/XPFpackage/blob/main/Examples/6.11_logical_operator_classification.ipynb}{linked Jupyter notebook}. 
\end{example}

\subsection{Logical Operators - Summary and Discussion}

In this chapter, we have shown how to determine the logical operator structure for any XP code. We have presented algorithms to calculate generators for the logical operator and logical identity groups using linear algebra techniques (Sections~\ref{sec:LI} and~\ref{sec:LO}). In contrast to the Pauli and qudit stabiliser formalism, XP codespaces are not uniquely identified by the stabiliser group. Two XP codes have the same codespace if and only if they have the same logical identity generators. The efficiency of these algorithms depends on the precision $N$ of the XP code. Where $N=2^t$, the algorithms are of $\mathcal{O}(n^t)$ complexity where $n$ is the number of qubits (see Section~\ref{sec:LO_Modified}). In the worst case, we need to determine the codewords in full to determine the logical operator group.

By allocating quantum numbers to the codewords, we can analyse the logical action of diagonal XP operators and fully classify which logical actions arise. We can determine all possible logical actions applied by operators of XP form, which can include logical operators at various levels of the Clifford hierarchy. These techniques give a more complete picture of the logical operator structure than previous methods, even when looking at Pauli stabiliser codes.

The coset decomposition of the orbit representatives $E_m$ yields the core $E_q$ of the code (Section~\ref{sec:coset_structure_of_E}), and allows us to determine the non-diagonal logical operators. The size of the core $E_q$ leads to a classification of XP codes into XP-regular and non-XP-regular codes. Any XP regular code can be mapped to a CSS code which has a similar logical operator structure via a unitary transformation. Non-XP regular codes have a more complex logical operator structure. Though not fully developed in this paper, there appear to be several possible approaches for error-correction of XP codes, despite the fact that stabiliser generators do not commute in general.

The main limitation of the above algorithms is that we consider only logical operators of XP form. An area for further investigation would be to develop algorithms to find the non-XP unitary operators which act as logical operators. 

\section{Measurements in the XP Formalism}\label{chap:Measurements}
Determining the extent to which computations on a quantum computer can be classically simulated is one of the central questions in the field of quantum information.  In the Pauli stabiliser formalism, the Gottesman-Knill theorem states that stabiliser circuits can be classically simulated efficiently. In particular, given a Pauli stabiliser code, we can efficiently simulate the measurement of any Pauli operator on the codespace, including both exact calculation of the Born rule probabilities for such measurements as well as the update rule to determine the post-measurement state.  In this chapter, we look at whether a similar result holds in the XPF - i.e. can the measurement of XP operators can be simulated efficiently in the XPF? 

In Section~\ref{sec:MeasDef}, we set out our assumptions and criteria for an XP operator to be `XP-measurable' on an XP code. In Section~\ref{sec:meas-outcome-probabilities}, we show how to determine the outcome probabilities for measurement of arbitrary XP operators on an XP codespace. In Section~\ref{sec:measure_diag_pauli}, we present an efficient stabiliser algorithm for measuring diagonal Pauli operators on XP codes. We consider whether we can do the same for precision $4$ diagonal XP operators in Section~\ref{sec:meas_precision_4} and show that estimating outcome probabilities is not tractable for these. We also give examples showing that the measurement of some XP operators takes us outside the XP formalism.

\subsection{Measurement Definitions}\label{sec:MeasDef} 

We first define what we mean by an operator being \emph{measurable} within the XP formalism.

Let $\mathbf{C}$ be an XP code with canonical stabiliser generators $\mathbf{S}$. Assume the system is described by the density operator $\rho$ which is proportional to the projector onto the codespace defined by $\mathbf{S}$. We can write $\rho$ in terms of the codewords and the Z-support $E$ of the codewords of Section~\ref{sec:code_words_notation} as follows:
\begin{align}
\rho := \frac{1}{|E|} \sum_i |\kappa_i\rangle\langle \kappa_i|\label{eq:rho}
\end{align}
Let $A$ be an XP operator, and $A_\lambda$ the projector onto the $\lambda$ eigenspace of $A$. The operator $A$ is \textbf{XP-measurable} on $\mathbf{C}$ if for each eigenvalue $\lambda$ of $A$:
\begin{enumerate}
\item We can calculate the probability of obtaining outcome $\lambda$ which is given by 
\linebreak $\Pr(\lambda) = \text{Tr}(A_\lambda \rho A_\lambda)$; and
\item We can find a set of XP operators $\mathbf{S}^\lambda$ such that the projector onto the codespace defined by $\mathbf{S}^\lambda$ is proportional to $A_\lambda \rho A_\lambda$.
\end{enumerate}
Note that in the above definition, we are only concerned with whether the above tasks can be done in principle, not whether they can be done efficiently. 

\subsection{Outcome Probabilities for Measurements of XP Operators}\label{sec:meas-outcome-probabilities}
In this section, we demonstrate how to calculate the outcome probabilities for measurement of arbitrary XP operators on an XP code. We assume that we are given the codewords in orbit format as input (as in Eq.~\eqref{eq:orbitform}). 

For diagonal operators, we can calculate the outcome probabilities by looking at the Z-support of the codewords $E$ (see Section~\ref{sec:code_words_notation}). Let $A$ be the diagonal XP operator we wish to measure and assume it has $+1$ as an eigenvalue. Let $E^+$ be the set of binary vectors $E^+ = \{\mathbf{e}\in E: A|\mathbf{e}\rangle =  |\mathbf{e}\rangle\}$. We show in Proposition~\ref{prop:prob_diag} that the probability of obtaining outcome $+1$ when measuring $A$ is $\Pr(+1) = \frac{|E^+|}{|E|}$.

We can also calculate outcome probabilities for non-diagonal operators if given the codewords. The method is set out in Proposition~\ref{prop:prob_nondiag} and is somewhat more complex than for diagonal XP operators. 

Hence, given the codewords, we can in principle determine the outcome probabilities when measuring any XP operator. This does not necessarily mean that the probabilities can be estimated efficiently or that we can express the resulting state of the system as an XP code as we see in Section~\ref{sec:meas_precision_4}

\subsection{Stabiliser Method for Measurement of Diagonal Pauli Operators}\label{sec:measure_diag_pauli}
We now consider the measurement of diagonal Pauli operators - i.e. elements of $\langle -I, Z\rangle^{\otimes n}$. In this special case, we can estimate the outcome probabilities efficiently and we can always express the resulting state as an XP code. 

In this section, we demonstrate an efficient, stabiliser-based update algorithm to simulate measurements of diagonal Pauli operators. The algorithm provides update rules for the \emph{core form} of an XP code after measurement. We first describe the core form of an XP code. We then state the algorithm for measuring diagonal Pauli operators and give some examples which illustrate the algorithm.

\subsubsection{Core Form of an XP Code}\label{sec:core_formalism}
In Ref.~\cite{simulation}, measurements are simulated by determining update rules for the stabiliser generators, logical Z operators, logical X operators and anti-commutators. 

Our algorithm gives update rules for XP codes in \textbf{core form}. The core form of an XP code consists of the following data: the core $E_q$, which was introduced in Section~\ref{sec:quantum_numbers} and is a set of binary vectors; the non-diagonal canonical generators $\mathbf{S}_X$ of Section~\ref{sec:canonical_generators}; and the logical X operators $\mathbf{L}_X$ - of Section~\ref{sec:L_X}. 

The core form encapsulates the key properties of the code in a compact way. From $E_q$, $\mathbf{S}_X$ and $\mathbf{L}_X$ we can generate the orbit representatives and the codewords (see Section~\ref{sec:quantum_numbers}). If necessary, we can efficiently calculate the diagonal logical operators and logical identities using the algorithms in sections~\ref{sec:LI} and~\ref{sec:LO}.

\subsubsection{Algorithm for Measuring Diagonal Paulis}\label{sec:meas_diag_paulis}
Assume we have an XP code in core form (i.e. the non-diagonal stabiliser generators $\mathbf{S}_X$ generating non-diagonal logical operators $\mathbf{L}_X$ and core $E_q$ as in Section~\ref{sec:core_formalism}). As per our discussion in Section~\ref{sec:vector_representation}, if the precision of the code $N$ is not a multiple of 2, then $Z$ operators do not exist. We can if necessary re-scale the code to be of precision $2N$ by doubling the phase and Z components of all stabiliser generators. Assume we wish to measure the diagonal Pauli operator $A = XP_2(0|\mathbf{0}|\mathbf{z})$. Note that $A$ has zero phase component but the algorithm can be generalised to operators with non-trivial phase components very easily. Our aim is to determine an XP code in core format representing the post-measurement system, as well as the probability of measuring each eigenvalue of $A$ (in this case, $\pm 1$).

The algorithm uses the \textbf{parity function} of a binary vector $\mathbf{x}$ with respect to the binary vector $\mathbf{z}$ which is defined as: 
\begin{align}
\text{Par}_{\mathbf{z}}(\mathbf{x}) := \mathbf{x}\cdot\mathbf{z} \mod 2\label{eq:parity_function}
\end{align}
\paragraph{Step 1:} Determine if there exists $B \in \mathbf{S}_X$ or failing that, $B \in \mathbf{L}_X$ with X-component $\mathbf{x}$ such that $\text{Par}_{\mathbf{z}}(\mathbf{x}) = 1$. If $B$ does not exist, go to Step 2. If $B$ exists, we update $\mathbf{S}_X \cup \mathbf{L}_X$ and $E_q$ via the following steps:
\begin{itemize}
\item Remove $B$ from $\mathbf{S}_X \cup \mathbf{L}_X$
\item For any $C \in \mathbf{S}_X \cup \mathbf{L}_X$ with X-component $\mathbf{y}$ such that $\text{Par}_{\mathbf{z}}(\mathbf{y}) = 1$, replace $C$ with $BC$.
\item Update $E_q$ by setting $E_q = E_q \cup \{\mathbf{q}_l \oplus \mathbf{x} : \mathbf{q}_l \in E_q\}$
\end{itemize}
\paragraph{Step 2:} Split $E_q$ into two sets:
\begin{align}
E_q^+ &:= \{\mathbf{q} \in E_q : \text{Par}_{\mathbf{z}}(\mathbf{q}) = 0\}\\
E_q^- &:= \{\mathbf{q} \in E_q : \text{Par}_{\mathbf{z}}(\mathbf{q}) = 1\}
\end{align}
The probability of obtaining the outcome $+1$ is $\Pr(+1) = \frac{|E_q^+|}{|E_q|}$ and the post-measurement core  is $E_q^+$. The probability of obtaining $-1$ is $\Pr(-1) = \frac{|E_q^-|}{|E_q|}$ and the updated core is $E_q^-$. 

In Appendix~\ref{app:measurement}, we explain in detail why the algorithm works. Essentially this is because the parity function of Eq.~\eqref{eq:parity_function} commutes with the addition of vectors modulo $2$. Once we have the code in core format, the above algorithm simulates measurement of diagonal Paulis in $\mathcal{O}(|E_q|+|\mathbf{S}_X|+|\mathbf{L}_X|)$ time complexity. 

The algorithm generalises the method of simulating measurements in the Pauli stabiliser formalism (e.g. in Ref.~\cite{simulation}). When simulating measurements in the Pauli stabiliser formalism, we look for stabiliser generators which do not commute with the operator being measured. The parity function serves a similar purpose in our algorithm. Any operator $B$ with X-component $\mathbf{x}$ commutes with $A$ if  $\text{Par}_{\mathbf{z}}(\mathbf{x}) = 0$ and anticommutes otherwise. Any computational basis vector $|\mathbf{e}\rangle$ with $\text{Par}_{\mathbf{e}}(\mathbf{x}) = q$ is in the $(-1)^q$ eigenspace of $A$. In Step 1, we need to remove at most 1 non-diagonal operator from $\mathbf{S}_X \cup \mathbf{L}_X$, which is also the case for the Pauli stabiliser formalism. 

One significant difference between the XPF and the Pauli stabiliser formalism is the possible outcome probabilities which arise. In the Pauli stabiliser formalism, outcome probabilities are always a multiple of $\frac{1}{2}$ when measuring a single operator. This is not the case in the XPF because for non XP-regular codes, the sizes of $E_q, E_q^+$ and $E_q^-$ may not be powers of $2$. Outcome probabilities in the XPF may be irrational numbers - in particular when measuring non-diagonal operators (see Proposition~\ref{prop:prob_nondiag}).

\subsubsection{Examples - Measurement of Diagonal Paulis}
Below are examples of the measurement of diagonal Paulis for Code 2  of Example~\ref{eg:code2} to illustrate the operation of the algorithm. Full working for these examples is in the \href{https://github.com/m-webster/XPFpackage/blob/main/Examples/7.1_measure_diagonal_Pauli.ipynb}{linked Jupyter notebook}.

 \begin{example}[No update to $\mathbf{S}_X \cup \mathbf{L}_X$ - Code 2, Non-XP-Regular Code]

Code 2 expressed in core form is:
\begin{align}
E_q =& \{0000000,0000111,0001011,0001101\}\\
\mathbf{S}_X =& XP_8(12|1111111|0334567)\\
\mathbf{L}_X =& XP_8(14|0011110|0012340)
\end{align}
In this example, we measure $A = XP_2( 0|0000000|0111111)$. Looking at elements of $\mathbf{S}_X \cup \mathbf{L}_X$, the parity of all operators is 0, so we do not need to update them in Step 1 of the algorithm.

Moving to Step 2, we calculate $E_q^+, E_q^-$ as follows:
\begin{align}
E_q^+ =& \{0000000\}\\
E_q^- =& \{0000111,0001011,0001101\}
\end{align}
The probability of measuring $+1$ is $\Pr(+1) = |E_q^+|/|E_q| = 1/4$, whilst the probability of measuring $-1$ is $\Pr(-1) = |E_q^-|/|E_q| = 3/4$. These probabilities do not arise when measuring a single operator in the Pauli stabiliser formalism.
\end{example}

 \begin{example}[Update $\mathbf{S}_X \cup \mathbf{L}_X$ - Code 2, Non-XP-Regular Code]

In this example, we measure $A = XP_2( 0|0000000|0000100)$ on Code 2.  

For Step 1, we find for $B = XP_8(12|1111111|0334567)$ the X-component $\mathbf{x} = 1111111$ has parity $+1$. We remove $B$ from $\mathbf{S}_X$. In $\mathbf{L}_X$, we also find $C = XP_8(14|0011110|0012340)$ with X-component $\mathbf{y} = 0011110$ has parity 1. Replace $C$ with $BC = XP_8(14|1100001|0700003)$. Update $E_q$ by adding $\mathbf{x} \oplus \mathbf{q}$ for $\mathbf{q} \in E_q$ to $E_q$ so the updated code in core format is:
\begin{align}
E_q = \{&0000000,0000111,0001011,0001101,\nonumber\\
&1111111,1111000,1110100,1110010\}\\
\mathbf{S}_X =& \emptyset\\
\mathbf{L}_X =& XP_8(14|1100001|0700003)
\end{align}

For Step 2, we calculate $E_q^+, E_q^-$:
\begin{align}
    E_q^+ =& \{0000000,0001011,0010011,0011001\}\\
    E_q^- =& \{0000111,0001101,1111111,1110100\}
\end{align}

We obtain measurement outcomes $+1$ or $-1$ with equal probability of $\frac{1}{2}$. This is always the case when we update $\mathbf{S}_X \cup \mathbf{L}_X$ in Step 1 because for the binary vector $\mathbf{q}$, exactly one of $\mathbf{q},\mathbf{q}\oplus \mathbf{x}$ has parity $+1$.
\end{example}

\subsection{Measuring Precision $4$ XP Operators}\label{sec:meas_precision_4}
In the previous section, we showed that diagonal Paulis are XP-measurable on any XP code. In this section, we look at the measurement of precision $4$ XP operators, which can be considered the next most complex case. We show that determining the outcome probabilities for diagonal precision $4$ operators is in general computationally complex. We look at two examples which illustrate that precision 4 XP operators are not in general XP-measurable because the post-measurement states cannot be expressed as an XP codespace.

\subsubsection{Estimating Outcome Probabilities of Diagonal Precision 4 Operators is Intractable}
Consider measuring a diagonal precision 4 XP operator $A = XP_4(0|\mathbf{0}|\mathbf{z})$. The probability of obtaining outcome $+1$ when measuring $A$ is $\frac{|E^+|}{|E|}$ where $E$ is the Z-support of the pre-measurement codewords and $E^+ = \{\mathbf{e}\in E : A|\mathbf{e}\rangle = |\mathbf{e}\rangle\}$ (see Proposition~\ref{prop:prob_diag}).  Hence, determining probability outcomes reduces to the problem of finding the simultaneous $+1$ eigenspace of the XP operators $\mathbf{S}_Z \cup \{A\}$.

Now consider simulating measurements on an XP code stabilising $|+\rangle^{\otimes m}$ for some large value of $m$. Assume we have a series of diagonal XP operators $A_i$ of precision 4 which all share $+1$ as an eigenvalue. Proposition~\ref{prop:prob_diag} states that the probability of obtaining the result $+1$ after measuring the series of operators depends on the dimension of the simultaneous $+1$ eigenspace of the $A_i$. Determining $E^+$ is known to be an NP-complete problem~\cite{xs}. No matter which algorithm we use, we must calculate $|E^+|$ so this complexity seems unavoidable in the general case. 

\subsubsection{XP Formalism is not Closed under Measurement of XP Operators}
The following examples illustrate that when measuring precision 4 operators, it is not always possible to represent the post-measurement system as the codespace of an XP code. Full working for these examples is in the \href{https://github.com/m-webster/XPFpackage/blob/main/Examples/7.3_measure_precision_4.ipynb}{linked Jupyter notebook}:

\begin{example}[Measurement of Diagonal Precision 4 Operator]

Let us measure the diagonal operator $A = S_1S_2^3S_3^3$ on the code defined by the stabiliser generators $\mathbf{S}_X = \{X_1, X_2, X_3\}, \mathbf{S}_Z = \emptyset$ where $X_i$ denotes the Pauli $X$ operator applied to the $i$th qubit. The codespace is one dimensional and spanned by $|\kappa\rangle = |+\rangle^{\otimes 3}$. 

The operator $A$ has 4 eigenvalues, with Z-supports of the corresponding eigenspaces as follows:
\begin{align}E^{+1} &= \{000,101,110\}&E^{+i} &= \{100\}\\E^{-1} &= \{011\}&E^{-i} &= \{001,010,111\}
\end{align}
The probability of obtaining each measurement result is:
\begin{align}
\Pr(+1) &= 3/8&\Pr(+i) &= 1/8\\\Pr(-1) &= 1/8&\Pr(-i) &= 3/8 \,.
\end{align}
In the case of outcome $+1$, the post-measurement state is
\begin{align}
|\kappa^+\rangle &= |000\rangle + |010\rangle + |111\rangle \,.
\end{align}
In Section~\ref{sec:coset_structure_of_E}, we demonstrated that the codewords of XP codes have Z-support of size $2^r$ for some integer $r$ so this state cannot be written as the codespace of any XP code.

\end{example}

Similarly, for non-diagonal precision $4$ operators, we can easily find examples of operators where the post-measurement system cannot be represented as the codespace of an XP code:

\begin{example}[Measurement of Non-diagonal Precision 4 Operator]
Now let us measure the non-diagonal operator $B = XP_4(2|111|123)$ on an XP code stabilising $|\kappa\rangle = |+\rangle^{\otimes 3}$. The square of $B$ is $B^2 = I$, so the eigenvalues are $\pm 1$. 

Let $\omega = \exp(i\pi/4)$ and note that $\frac{1 + \omega^2}{2} = \frac{1 + i}{2} = \frac{\omega}{\sqrt{2}}$ and $\frac{1 + \omega^6}{2} = \frac{1 - i}{2} = \frac{\omega^7}{\sqrt{2}}$. 

Applying the projector $B^+$ to $|\kappa\rangle$ using Proposition~\ref{prop:xp_projectors}, we obtain:
\begin{align}
    |\kappa^+\rangle &=\frac{1}{2}(|\kappa\rangle + B|\kappa\rangle)\\
    &=\frac{1}{2}\Big(|000\rangle +|001\rangle + |010\rangle+ |011\rangle+ |100\rangle + |101\rangle +|110\rangle +|111\rangle\nonumber\\
    &+\omega^6|000\rangle +|001\rangle + \omega^2|010\rangle -|011\rangle -|100\rangle +\omega^6|101\rangle +|110\rangle +\omega^2|111\rangle\Big)\\
    &=\frac{1}{\sqrt{2}}\Big(\omega^7|000\rangle +\sqrt{2}|001\rangle + \omega|010\rangle +\omega^7|101\rangle +\sqrt{2}|110\rangle +\omega|111\rangle\Big)
    \end{align}
Calculating the probability of obtaining outcome $+1$:
\begin{align}
    \Pr(+1) &= \frac{\langle\kappa^+|\kappa^+\rangle}{\langle\kappa|\kappa\rangle}\\
    &= \frac{1 + 2 + 1 +1 + 2+ 1}{2\cdot 8} =  \frac{1}{2}
    \end{align}
Similarly
\begin{align}
    |\kappa^-\rangle &=\frac{1}{2}(|\kappa\rangle - B|\kappa\rangle)\\
    &=\frac{1}{\sqrt{2}}\Big(\omega|000\rangle  + \omega^7|010\rangle+\sqrt{2}|011\rangle+\sqrt{2}|100\rangle +\omega|101\rangle  +\omega^7|111\rangle\Big)\\
    \Pr(-1) &= \frac{1 + 1 +2+2+1 +  1}{2\cdot 8} =  \frac{1}{2}
\end{align}
The state $|\kappa^+\rangle$ cannot be represented as the codespace of an XP code because:
\begin{itemize}
    \item The size of the Z-support $\ZSupp(|\kappa^+\rangle)$ is not a power of 2,
    \item The coefficients of the computational basis elements have different moduli,
    \item It is not possible to find a set of diagonal XP operators $\mathbf{S}_Z^+$ where the Z-support of the simultaneous $+1$ eigenspace of the $\mathbf{S}_Z^+$ is equal to $\ZSupp(|\kappa^+\rangle)$.
\end{itemize}

\end{example}

\subsection{Measurement in the XP Formalism - Summary of Results}
We have demonstrated that it is not in general possible to efficiently simulate measurement of XP operators in the XP formalism, apart from the special case of measuring diagonal Pauli operators. Firstly, measurement of an XP operator on an XP code may result in a state which cannot be described as an XP codespace. We have seen two examples which illustrate this. Secondly, calculating outcome probabilities when measuring a series of diagonal operators requires us to determine the simultaneous +1 eigenspace of these operators. This is known to be an NP-complete problem when the operators are of precision $4$.

Hence, there appears to be no obvious generalisation of the Gottesman-Knill theorem to XP codes. This suggests that XP codes can describe states which display computationally complex, non-classically simulable behaviour.
 
\section{Discussion and Open Questions}\label{chap:openquestions}

In this paper, we have set out the foundations for the XP formalism. We have formulated XP versions of many of the algorithms available in the Pauli stabiliser formalism - for instance determining a basis for the codespace, generators for the logical operator group and simulating the measurement of diagonal Pauli operators. The computational complexity of these algorithms depends on the precision $N$ of the XP code, and certain edge cases have exponential complexity. We have given examples of XP operators which cannot be measured within the formalism or where estimating outcome probabilities is NP-complete. Hence, there appears to be no obvious generalisation of the Gottesman-Knill theorem to XP codes. XP codes are on the boundary of what is classically simulable, and so there are good reasons to believe that XP codes allow us to engineer states which exhibit useful, non-classically simulable behaviour.

The rich logical operator structure of XP codes may make them useful for applications such as magic state distillation, which requires codes with non-Clifford logical operators. In Ref.~\cite{css_optimal}, the authors showed  that triorthogonal CSS codes have optimal error correction parameters for Pauli stabiliser codes with a transversal logical T operator. We note that this result only applies to XP codes of precision $2$, and in particular does not apply to non-XP-regular codes. XP codes with transversal logical non-Clifford operations (for instance T or CCZ gates) could be used for fault-tolerant preparation of magic states. We have focused so far on XP codes where the precision is a power of 2. Where the precision is not a power of 2, XP codes may have logical operators which are outside the Clifford hierarchy. 

We have described the states and phase functions which arise within the XP formalism. As part of this, we have shown that two important classes of states, hypergraph and weighted graph states, can be represented as XP stabiliser states. Hence, we can use the algorithms presented in this paper to analyse these. One of the main benefits of looking at them as XP stabiliser codes is that we can very easily determine the symmetries of the states as these are just the elements of the logical identity group. In Figure~\ref{fig:union_jack_generators} and the the \href{https://github.com/m-webster/XPFpackage/blob/main/Examples/8.1_union_jack_symmetries.ipynb}{linked Jupyter notebook}, we illustrate how the algorithms in this paper make it easy to determine the $\mathbb{Z}_2$ symmetries of the Union Jack state of Ref.~\cite{unionjack}.

\captionsetup[subfigure]{margin=5pt}
\begin{figure}[hbt!]
\centering
\begin{subfigure}[t]{.5\textwidth}
  \centering
  \includegraphics[width=\linewidth]{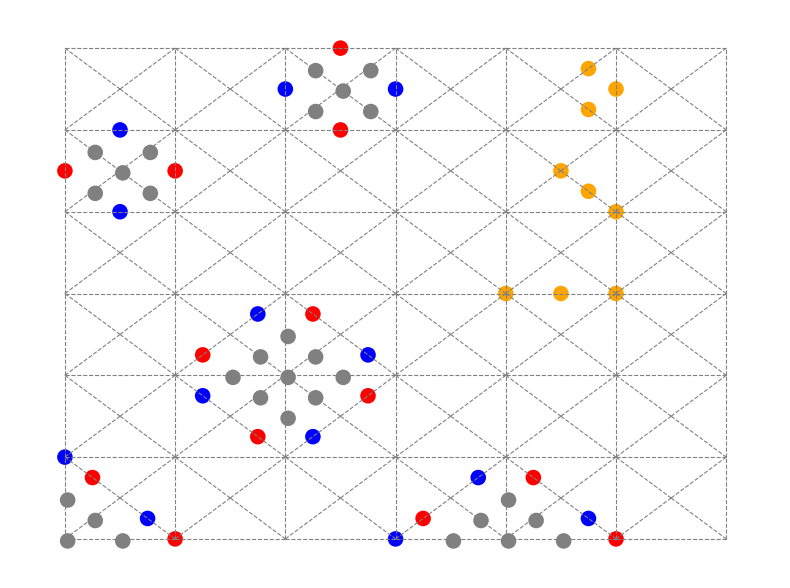}
    \subcaption{\textbf{Stabiliser Generators for Union Jack State}:  Using the techniques in this paper, we can represent the Union Jack State as an XP code. This makes it easier to study the $\mathbb{Z}_2$ symmetries of the state. In this figure, we illustrate a sample set of stabiliser generators for the XP code. There are qubits on each vertex and each edge of the cellulation. The different coloured dots represent the application of operators for the stabiliser generator as follows. Grey: $X$; yellow $Z$; red $S$; blue: $S^3$.  }
    \end{subfigure}%
\begin{subfigure}[t]{.5\textwidth}
  \centering
      \includegraphics[width=\linewidth]{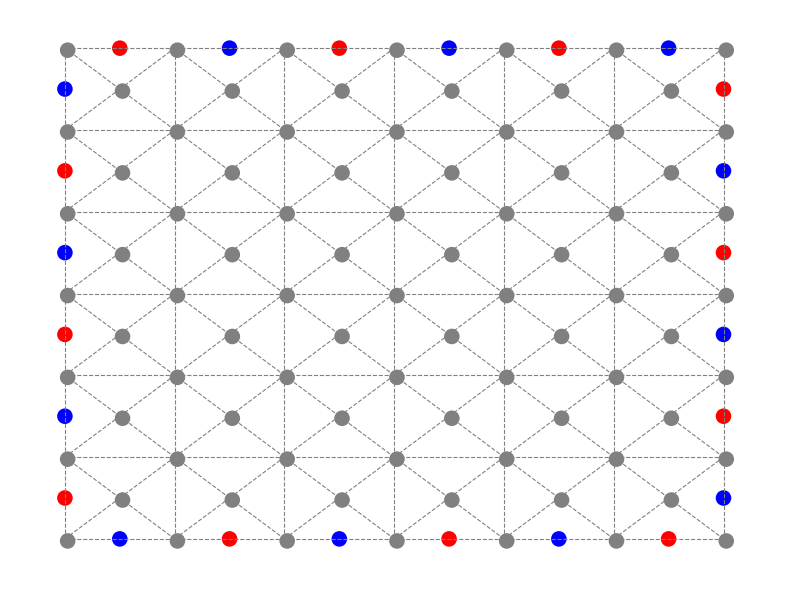}
    \subcaption{\textbf{Onsite Symmetry for Union Jack State}:  By multiplying together all non-diagonal stabiliser generators, we find an onsite symmetry which is preserved apart from a phase component on certain qubits on the boundary. This is indicative of symmetry protected topological order. In the \href{https://github.com/m-webster/XPFpackage/blob/main/Examples/8.1_union_jack_symmetries.ipynb}{linked Jupyter notebook}, we show that there are 3 onsite symmetries for the Union Jack state on a square plane with open boundary, two of which are are preserved on the lattice and one which is not.  }
     \end{subfigure}
     \caption{Finding $\mathbb{Z}_2$ Symmetries of the Union Jack state by representing it as  an XP code.}
     \label{fig:union_jack_generators}
\end{figure}

In the Pauli and qudit stabiliser formalisms, operators commute up to a phase. In the XP stabiliser formalism, operators commute up to a diagonal operator. This suggests that non-Abelian anyon models arise naturally in the XP formalism, and the existence of such codes in the XS stabiliser formalism had already been established in Ref.~\cite{xs}. Such models may be useful in achieving fault-tolerant quantum computation~\cite{anyons-kitaev} as well as understanding physical phenomena such as the fractional Quantum Hall Effect~\cite{anyons-hall}. In Ref.~\cite{Abelian_anyons}, the authors showed that all twisted quantum doubles (TQDs) with Abelian anyons can be represented as qudit stabiliser codes. It would be interesting to see if there is a similar result for  non-Abelian TQDs represented as XP codes. You can explore the logical operator structure of various topological codes, including Twisted Quantum Doubles, in the \href{https://github.com/m-webster/XPFpackage/blob/main/Examples/8.2_topological_codes.ipynb}{linked Jupyter notebook}.

One of the most exciting implications of this work is that no-go theorems which apply to Pauli stabiliser codes and commuting stabiliser codes do not necessarily apply to XP codes. Such results exist, for instance, in the area of self-correcting quantum memories (Ref.~\cite{finitetemp} on page 23). Rather than rely on active error correction, a self-correcting quantum memory is a system where large scale errors would be suppressed by a macroscopic energy barrier. As XP codes involve stabilisers which are not Paulis and which may not commute, they are worth investigating for potential use as self-correcting memories.

\begin{acknowledgements} 
This research was supported by the Australian Research Council via the Centre of Excellence in Engineered Quantum Systems (EQUS) project number CE170100009. BJB also received support from the European Union’s Horizon 2020 research and innovation programme under the Marie Skłodowska-Curie grant agreement No. 897158. Mark Webster was supported by the Sydney Quantum Academy, Sydney, NSW, Australia. We would also like to thank Paul Webster who reviewed early drafts of this paper and provided valuable feedback.
\end{acknowledgements}

\appendix

\section{Linear Algebra over Rings}\label{app:linalg}
Most readers will be familiar with linear algebra techniques for vector spaces over a field  - for instance, solving linear equations or finding a basis of a subspace. In this paper, we work with vectors over $\mathbb{Z}_N$, which is a ring in general rather than a field.  Linear algebra over rings is not covered in standard linear algebra textbooks, so we give an introduction here (for more background, see Refs.~\cite{storjohann1} and ~\cite{buchmann}).

We start with basic concepts from ring theory in Section~\ref{sec:ring_concepts}. We then consider the row span of a matrix over a ring. For vector spaces, we can calculate the Reduced Row Echelon Form (RREF) of a matrix and this gives us a basis for the subspace spanned by the rows of the matrix. In Section~\ref{sec:howell_matrix_form}, we introduce the Howell matrix form which is a generalisation of the RREF for rings. Calculation of the Howell basis is central to many of the algorithms in this paper. 

We show how to solve linear equations modulo $N$ in Section~\ref{sec:linalgmodN} and how to find the intersection of spans in Section~\ref{sec:affinespan}. These techniques are used when calculating the logical operators of XP codes.

\subsection{Ring Concepts}\label{sec:ring_concepts}
A ring $R$ is a set of elements with addition and multiplication binary operations. Elements of a ring are not guaranteed to have multiplicative inverses.  Elements which do have inverses are called \textbf{units.} In contrast, a \textbf{zero divisor} is an element $a \ne 0$ where $ab = 0$ for some $b\ne 0$.
 
We  define an equivalence relation such that $a \sim b \iff a = ub$  for some unit $u$ (or $a$ is an \textbf{associate} of $b$ ). For each element a of $\mathbb{Z}_N$, we can  calculate a \textbf{minimal associate} $m_a := \GCD(N,a)$. We can show that $a \sim b \iff m_a = m_b$. 

\begin{example}[Ring Definitions for $\mathbb{Z}_8$]
Using the ring $\mathbb{Z}_8$ as an example: 
\begin{enumerate}
\item The units  are: $\{1,3,5,7\}$. 
\item The zero divisors are: $\{2,4,6\}$
\item The minimal associates are:
\end{enumerate}
\begin{align}
\begin{matrix}\mathbf{a}&0&1&2&3&4&5&6&7\\\mathbf{m_a}&0&1&2&1&4&1&2&1\\\end{matrix}
\end{align}
Notice that $m_a = 1$ if and only if $a$ is a unit.
\end{example}

\subsection{Spans of Matrices and the Howell Matrix Form}\label{sec:howell_matrix_form}
We can define the span of a matrix $B \in R^{m \times n}$ with rows $\mathbf{b}_i$ over a ring $R$ as:
\begin{align}
\Span_R(B) := \{\sum_i a_i \mathbf{b}_i: a_i \in R\} = \{\mathbf{a}B : \mathbf{a} \in R^{m}\} \subseteq R^{n} 
\end{align}
Spans over rings can be considered subgroups of $R^n$ under component-wise addition over $R$ (i.e. $\Span_R(B) = \langle B \rangle_{(R,+)} \leq R^n = \langle I \rangle_{(R,+)}$).

Where $R$ is a field (e.g.  $R = \mathbb{Z}_q$ where $q$ is prime), the RREF gives us a basis of the span and two matrices have the same span if and only if they have the same RREF. The Howell form plays an analogous role for spans over rings.

The Howell matrix form or Howell basis of $B \in R^{m \times n}$ over a ring $R$, denoted $\How_R(B)$, is an $n \times n$ matrix of form:

\begin{align}\How_R(B) = \begin{pmatrix} &a & * &\cdot& * &\cdot\\&\\  & & &b&*&\cdot\\ \\  &  &  & & &c\end{pmatrix}
\end{align}
\ \\
The entries of $\How_R(B)$ are subject to the following constraints:
\begin{itemize}
    \item Diagonal entries  (marked $a, b, c$ above) are minimal associates in $\mathbb{Z}_N$.
    \item If the diagonal entry is zero, the whole row is zero
    \item Entries below the diagonal entries are zero
    \item Entries above the diagonal entries are strictly less than the diagonal entry (marked $\cdot$ above), unless the diagonal entry  is $0$ in which case they are unrestricted (marked $*$ above).
\end{itemize}
The Howell matrix is the unique matrix of the above form which has the \textbf{Howell property}. Let $S_i$ be the subset of $\Span_R(B)$ where the first $i$ entries are zero. The $n\times n$ matrix $H$ has the Howell property if $S_i$ is the span of the last $n-i$ rows of $H$. Two matrices have the same span over $R$ if and only if they have the same Howell matrix. Methods of constructing the Howell matrix form of a given matrix are set out in Refs. ~\cite{howell}, \cite{storjohann1} and ~\cite{buchmann}, and we use a simplified more intuitive algorithm in the XPF package. You can view examples of calculating the Howell matrix form in the \href{https://github.com/m-webster/XPFpackage/blob/main/Examples/A.1_howell_matrix.ipynb}{ linked Jupyter notebook}.

Where $R$ is a field, $\How_R(B) = \RREF_R(B)$. In the special case of binary matrices (i.e. where $R=\mathbb{Z}_2$), the RREF has a particular form. In particular, for any leading index $l$ the entries in column $l$ are strictly less than $1$ and hence are zero. This fact is useful when determining the special form of orbit representatives (see Section~\ref{sec:graph_search}).

\subsection{Solving Linear Equations over Rings}\label{sec:linalgmodN}
In this section, we show how to solve linear equations over rings using the Howell matrix form. Given the matrix  $A \in R^{m \times n}$ and a constant vector $\mathbf{c} \in R^{m }$, we wish to solve for the vector $\mathbf{x} \in R^{n}$ in the linear equation:
\begin{align}
\mathbf{x}A^T + \mathbf{c} = 0 \label{eq:Ax+c}
\end{align}

We calculate the Howell form of the transpose $A^T$ so that:
\begin{align}
T &:= \How_{R}(A^T) = S A^T\label{eq:SAt}
\end{align}
where $T\in R^{m\times m}$ and $S \in R^{m\times n}$. As part of this calculation, we also obtain the Howell basis $K$ for $\Ker_{R}(A)$ such that $KA^T = 0$.
We can solve Eq.~\eqref{eq:Ax+c} if $\mathbf{c}\in \Span_{R}(T)$ so that for some vector $\mathbf{v} \in R^{1\times m}$:
\begin{align}
\mathbf{c} &= \mathbf{v}T
\end{align}
In this case, the solutions for $\mathbf{x}$ are given by:
\begin{align}
\mathbf{x} &=  - \mathbf{v}S + \mathbf{a}K
\end{align}
where $\mathbf{a}$ ranges over all values of $R^{n}$. That $\mathbf{x}$ is a solution to Eq.~\eqref{eq:Ax+c} can easily be verified by substitution. We can also write the solution set as an affine span $\mathbf{x} \in - \mathbf{v}S + \Span_R(K)$ (see Eq.~\eqref{eq:affine_span} below).

\subsection{Intersections of Spans and Affine Spans}\label{sec:affinespan}
When calculating the codewords (Chapter~\ref{chap:codewords}) and logical operators (Chapter~\ref{chap:LO}) of an XP code, we work with affine spans which can be identified with cosets of row spans. To find the logical X operators, we need to determine the intersection of affine spans (see Section~\ref{sec:diagonal_component_L_X}). In this section, we explain how to compute the intersection of two affine spans and introduce the residue function which identifies which coset a vector belongs to.

\subsubsection{Affine Span Definition}
Given an offset $\mathbf{a} \in R^n$ and a matrix $B \in R^{r\times n}$, the \textbf{affine span} is defined as:
\begin{align}
    \mathbf{a} + \Span_{R}(B) :=  \{\mathbf{a}+\mathbf{b}: \mathbf{b} \in \Span_{R}(B) \}\label{eq:affine_span}
\end{align}
Because $\Span_{R}(B)$ is a subgroup of $R^n$, we can also consider affine spans to be cosets (i.e. $ \mathbf{a} + \Span_{R}(B) = \mathbf{a} + \langle B\rangle_{(R,+)}$). The \textbf{residue function} identifies the coset of a vector and can be used to determine if two vectors are in the same affine span. The residue function is defined as:
\begin{align}
    \mathbf{m} &= \res_R(B,\mathbf{a})\label{eq:residue_function} \text{, where:}
    \begin{pmatrix}1&\mathbf{m}\\0&B\end{pmatrix} = \How_R\begin{pmatrix}1&\mathbf{a}\\0&B\end{pmatrix}
\end{align}
A vector is in a span if and only if its residue is zero - i.e. $\mathbf{a} \in \Span_{R}(B) \iff \res_R(B,\mathbf{a}) = 0$. Vectors are in the same coset if and only their residues are the same - i.e. $\mathbf{a} + \Span_{R}(B) = \mathbf{b} + \Span_{R}(B) \iff \res_R(B,\mathbf{a}) = \res_R(B,\mathbf{b})$.
\subsubsection{Algorithm for Intersection of Spans}\label{sec:intersections_of_spans}
Given two matrices $A \in R^{r\times n}$ and $B\in R^{s\times n}$, we can find the intersection of the respective spans ($\Span_{R}(A) \cap \Span_{R}(B)$) as follows:
\begin{enumerate}
    \item Form the $(r+s)\times n$ matrix $C = \begin{pmatrix}A\\B\end{pmatrix}$
    \item Calculate the Howell basis $\begin{pmatrix}K_A&K_B\end{pmatrix}$ of $\Ker(C^T)$ where $K_A$ is $(r+s)\times r$ and $K_B$ is $(r+s)\times s$ so that $K_A A + K_B B = 0$.
    \item Let $D = K_A A = -K_B B$.
    \item The intersection is $\Span_{R}(A) \cap \Span_{R}(B) = \Span_R(\How_{R}(D))$
\end{enumerate}

\subsubsection{Algorithm for Intersection of Affine Spans}\label{sec:affine_span_intersection}
The intersection of two affine spans is either empty or an affine span. Assume we are given two affine spans $\mathbf{a} + \Span_{R}(A)$ and $\mathbf{b} + \Span_{R}(B)$. There exists a vector $\mathbf{c}$ in the intersection if and only if we can find vectors $\mathbf{u} \in R^r$ and $\mathbf{v} \in R^s$  such that:
\begin{align}
    \mathbf{c} = \mathbf{a}+\mathbf{u}A = \mathbf{b}+\mathbf{v}B
\end{align}
This is possible only when $\mathbf{a}-\mathbf{b} \in \Span_{R}(A) \cup \Span_{R}(B)$ which is true if and only if: 
\begin{align}
\How_{R}\begin{pmatrix}1&\mathbf{a}-\mathbf{b}\\0&A\\0&B\end{pmatrix} = \begin{pmatrix}1&\mathbf{0}\\0&A\\0&B\end{pmatrix}
\end{align}
In this case, we set $\mathbf{c} = (\mathbf{b} +\res_{R}(A,\mathbf{a}-\mathbf{b})$ and the intersection is given by:
\begin{align}
    (\mathbf{a} + \Span_{R}(A)\cap ( \mathbf{b} + \Span_{R}(B)) = \mathbf{c} + \Span_{R}(A)\cap \Span_{R}(B)
\end{align}

\section{Canonical Generator Algorithm - Proof of Result}\label{sec:canonical_generator_proof}\label{sec:canonical_generator_algorithm_outline}

In this appendix, we provide a proof of Proposition~\ref{prop:canonical_generators}. This is an important result which states that we can calculate a set of canonical generators of unique form for any XP group. This allows us to  determine whether two sets of XP operators generate the same group and also identify a set of generators for the diagonal subgroup.  The proof of the proposition is constructive and relies on the following algorithm:
\\
\\
\textbf{Canonical Generator Algorithm}
 \newline
The algorithm for producing the canonical generators from an arbitrary set of XP operators $\mathbf{G}$ is:
\begin{enumerate}
    \item \textbf{Simplify X Components:} let $G_X$ be the binary matrix whose rows are the X components of the operators in $\mathbf{G}$. We can put $G_X$ into RREF by using row operations over $\mathbb{Z}_2$. These row operations correspond to group operations between elements of $\mathbf{G}$ and we update $\mathbf{G}$ accordingly.
    \item \textbf{Split $\mathbf{G}$ into diagonal and non-diagonal operators:} let $\mathbf{S}_X$ be the non-diagonal operators and $\mathbf{S}_Z$ be the diagonal operators. 
    \item \textbf{Add squares and commutators of $\mathbf{S}_X$:} squares and commutators of operators in $\mathbf{S}_X$ are diagonal - add these to $\mathbf{S}_Z$.
    \item \textbf{Add commutators between $\mathbf{S}_Z$ and $\mathbf{S}_X$:} add to $\mathbf{S}_Z$ all possible commutators between elements of $\mathbf{S}_Z$ and elements of $\mathbf{S}_X$. Where $N=2^t$ is a power of $2$, we do this step $t-1$ times. 
    \item \textbf{Simplify $\mathbf{S}_Z$:} Let $S_{Zp}$ be matrix whose rows are the image of  $\mathbf{S}_Z$ under the Zp map of  Section~\ref{sec:xp_group_structure} - i.e. $\text{Zp}(XP_N(p|\mathbf{0}|\mathbf{z})) = (2\mathbf{z}|p)$ . The final set of diagonal generators are the XP operators corresponding to the rows of $H_{Zp} := \How_{\mathbb{Z}_{2N}}(S_{Zp})$ i.e. $\mathbf{S}_Z = \text{Zp}^{-1}(H_{Zp})$.
    \item \textbf{Simplify Z Components of $\mathbf{S}_X$:} Let $A = XP_N(p|\mathbf{x}|\mathbf{z}) \in \mathbf{S}_X$ and let \linebreak $(2\mathbf{z}'|p') = \text{Res}_{\mathbb{Z}_{2N}}(H_{Zp}, (2\mathbf{z}|p))$ (see Eq.~\eqref{eq:residue_function}). Replace $A$ with $A' = XP(p'|\mathbf{x}|\mathbf{z}')$.
\end{enumerate}
\  \\
We restate  Proposition~\ref{prop:canonical_generators} here for clarity:
\ \\
\ \\
\textbf{Proposition~\ref{prop:canonical_generators} }(Canonical Generators of an XP Group)
\newline
For any set of XP Operators $\mathbf{G} = \{G_1, \dots, G_m\}$, there exists a unique set of diagonal operators $\mathbf{S}_Z := \{B_j : 0 \le j < s\}$ and non-diagonal operators $\mathbf{S}_X := \{A_i : 0 \le i < r\}$ with the following form:
\begin{enumerate}
\item Let $S_X$ be the binary matrix formed from the X-components of the $\mathbf{S}_X$.  $S_X$ is in Reduced Row Echelon Form (RREF).
\item Let $S_{Zp}$ be matrix whose rows are the image of  $\mathbf{S}_Z$ under the Zp map of Section~\ref{sec:xp_group_structure} (i.e. $\text{Zp}(XP_N(p|\mathbf{0}|\mathbf{z})) = (2\mathbf{z}|p)$). The matrix $S_{Zp}$ is in Howell Matrix Form (see Appendix~\ref{app:linalg}).
\item For $XP_N(p|\mathbf{x}|\mathbf{z}) \in \mathbf{S}_X$, $\begin{pmatrix}1&(2\mathbf{z}|p)\\0&S_{Zp}\end{pmatrix}$ is in Howell Matrix Form.
\end{enumerate}
The following properties hold for the canonical generators:

\paragraph{Property 1:} All group elements $G \in \langle\mathbf{G}\rangle$ can be expressed in the generator product form of Eq.~\eqref{eq:generator_product} $G = \mathbf{S}_X^\mathbf{a} \mathbf{S}_Z^\mathbf{b}$ where $\mathbf{a} \in \mathbb{Z}_2^{|\mathbf{S}_X|}$, $\mathbf{b}  \in \mathbb{Z}_N^{|\mathbf{S}_Z|}$, $\mathbf{S}_X^\mathbf{a} = \prod_{0 \le i < |\mathbf{S}_X|} A_i^{\mathbf{a}[i]}$ and $\mathbf{S}_Z^\mathbf{b} = \prod_{0 \le j < |\mathbf{S}_Z|} B_j^{\mathbf{b}[j]}$
\paragraph{Property 2:} Two sets of XP operators of precision $N$ generate the same group if and only if they have the same canonical generators.

\begin{proof}
Steps 1-2 of the algorithm create a list of non-diagonal operators $\mathbf{S}_X$ whose X-components are in RREF, plus diagonal operators $\mathbf{S}_Z$. We claim that after these operations, $\langle \mathbf{S}_X,\mathbf{S}_Z\rangle = \langle \mathbf{G}\rangle$. Over $\mathbb{Z}_2$, the row operations to convert the matrix $G_X$ into RREF involve either:
\begin{enumerate}
\item Swapping the order of rows; or
\item Adding rows i.e. $\mathbf{r}_i' = (\mathbf{r}_i + \mathbf{r}_j) \mod 2$
\end{enumerate}
The row operations can be translated into group operations on $\mathbf{G}$ as follows:
\begin{enumerate}
\item Swapping the order of generators; or
\item Replacing a generator by a product of generators i.e. $G_i' = G_i G_j$.
\end{enumerate}
For case 2, we need to check if $G_i$ is still in the group after the row operations. Because $G_j$ is unchanged, $G_j^{-1}$ remains in the group so we have $G_i = (G_i G_j) G_j^{-1} = G_i' G_j^{-1}$.

Steps 3-4 of the algorithm ensure that all possible squares and commutators of the generators are added to the list of diagonal operators $\mathbf{S}_Z$. Now we show that where $N = 2^t$, that $t-1$ rounds of adding commutators is sufficient. When we calculate the commutator of operators $A_1, A_2$, the resulting degree (see Section~\ref{sec:degree+fundamental_phase}) is at most half the degrees of $A_1, A_2$: recall the \textbf{COMM} rule of Section~\ref{sec:algebraic_identities}:
\begin{align}A_1 A_2 A_1^{-1}A_2^{-1} = D_N(2\mathbf{x_1}\mathbf{z_2} - 2\mathbf{x_2}\mathbf{z_1}+4\mathbf{x_1}\mathbf{x_2}\mathbf{z_1}-4\mathbf{x_1}\mathbf{x_2}\mathbf{z_2})\end{align}
As $X^2 = P^N = I$, the maximum degree of a precision $N$ operator is $N = 2^t$. Hence, after $t-1$ rounds of taking commutators, a further round of commutators yields operators of degree 1 (i.e. phase multiples of $I$). Note that all of the operators added to $\mathbf{S}_Z$ are in $\langle \mathbf{G} \rangle$ so there is no change to $\langle \mathbf{S}_X,\mathbf{S}_Z\rangle$ in this step.

Step 5 ensures that $S_{Zp}$, the matrix formed from the phase and Z-components of $\mathbf{S}_Z$ under the $\text{Zp}$ map is in Howell matrix form. In Section~\ref{sec:xp_group_structure}, we showed that the Zp map is a group homomorphism so group generators in $\mathbb{Z}_{2N}^{n+1}$, i.e. the rows of the Howell matrix, correspond to diagonal group generators in $\mathcal{XP}_{N,n}$, i.e. $\mathbf{S}_Z$ so there is no change to the group generated by $\mathbf{S}_Z$ in this step.

In Step 6, the residue function of Eq.~\eqref{eq:residue_function} ensures that the Z-components of the non-diagonal canonical generators are of the correct form. The adjustment corresponds to multiplication of elements in $\langle \mathbf{S}_Z \rangle$ so we are assured that the final set of generators meets the invariant $\langle \mathbf{G} \rangle = \langle \mathbf{S}_X, \mathbf{S}_Z\rangle$.

To prove Property 1, we need to show that any element $G \in \langle \mathbf{G} \rangle$ can be expressed n the generator product form of Eq.~\eqref{eq:generator_product} $G = \mathbf{S}_X^\mathbf{a} \mathbf{S}_Z^\mathbf{b}$. We have already shown that $\langle \mathbf{G} \rangle = \langle \mathbf{S}_X, \mathbf{S}_Z\rangle$. Hence, we can write $G$ as a string of operators from $\mathbf{S}_X, \mathbf{S}_Z$. Now assume we have a diagonal operator $B\in \mathbf{S}_Z$ which occurs immediately before a non-diagonal operator $A\in \mathbf{S}_X$. We can write:
\begin{align}
    BA = AB (B^{-1} A^{-1} B A)
\end{align}
The commutator $B^{-1}A^{-1} B A \in \langle \mathbf{S}_Z\rangle$, so we can always move diagonal operators to the right of non-diagonal operators.

Now assume we have two non-diagonal operators $A_j, A_i \in \mathbf{S}_X$ which occur immediately next to each other in the string, but out of order (i.e. $i < j$). We can move $A_j$ to the right of $A_i$ by using commutators as follows:
\begin{align}
    A_j A_i =  A_i A_j (A_j^{-1}A_i^{-1}A_j A_i)
\end{align}
The commutator $(A_j^{-1}A_i^{-1}A_j A_i) \in \langle \mathbf{S}_Z\rangle$, so we can ensure $A_i, A_j$ occur in the correct order with a diagonal operator to the right.

Reordering the non-diagonal operators may result in squares or higher powers of non-diagonal operators arising in the string. As $A^2 \in \langle \mathbf{S}_Z\rangle$  for any $A \in  \mathbf{S}_X$, any power of $A$ can be written as $A^q = A B$ for $q$ odd or $A^q = B$ for some $B \in \langle \mathbf{S}_Z\rangle$. Hence, $G$ can be written with powers of $A$ in $\mathbb{Z}_2$. Accordingly, any $G \in \langle \mathbf{G} \rangle$ can be written as a string of the form in Property 1.

To establish Property 2, note that the Howell matrix form and RREF are unique. Thus, for any operators which generate the same group, the canonical form will be the same.
\end{proof}

\section{Coset and Orbit Structure of Codewords}
In Chapter~\ref{chap:codewords}, we gave an algorithm for generating the codewords by applying the orbit operator to the orbit representatives. In this appendix, we provide proofs underlying the algorithm. The results in this appendix  assume we have the canonical generators $\mathbf{S}_X$ and $\mathbf{S}_Z$ (see Section~\ref{sec:canonical_generators}) and the set of binary vectors $E$, which is the Z-support of the simultaneous $+1$ eigenspace of $\mathbf{S}_Z$ (see Section~\ref{sec:code_words_notation}). Our aim is to calculate a basis of the codespace stabilised by $\mathbf{S}_X, \mathbf{S}_Z$. 

In Proposition~\ref{prop:cworbit}, we show that the image under the orbit operator of any $|\mathbf{e}\rangle$ where $\mathbf{e} \in E$ is stabilised by $\mathbf{S}_X, \mathbf{S}_Z$. Let $S_X$ be the binary matrix formed from the X-components of the $\mathbf{S}_X$ and $E_m = \{\res_{\mathbb{Z}_2}(S_X, \mathbf{e}): \mathbf{e} \in E\}$ be the orbit representatives. In Proposition~\ref{prop:eclosed}, we show that $E$ is closed under addition by elements of the span $\langle S_X \rangle$. In Proposition~\ref{prop:partition}, we show that the cosets $\mathbf{m}_i + \langle S_X \rangle, \mathbf{m}_i \in E_m$ partition $E$. In Proposition~\ref{prop:kappa_are_basis}, we show that the image of $E_m$ under the orbit operator forms a basis of the codespace. Finally, we show that the orbit representatives have a unique form, which is used in the graph search algorithm of Section~\ref{sec:graph_search}.

\begin{proposition}[Codewords as Orbits]\label{prop:cworbit}
Given canonical generators for a code $\mathbf{S}_X$ and $\mathbf{S}_Z$, let $E = \{\mathbf{e}: \mathbf{e} \in \mathbb{Z}_2^n, B|\mathbf{e}\rangle = |\mathbf{e}\rangle, \forall B \in \langle \mathbf{S}_Z \rangle \}$ be the Z-support of the simultaneous $+1$ eigenspace of $\mathbf{S}_Z$. 

Then $O_{\mathbf{S}_X} |\mathbf{e}\rangle$ is stabilised by all elements of  $\langle \mathbf{S}_X, \mathbf{S}_Z\rangle$, for any $|\mathbf{e}\rangle \in E$.
\end{proposition}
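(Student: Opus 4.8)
The plan is to reduce the statement to checking that each individual generator in $\mathbf S_X\cup\mathbf S_Z$ fixes the vector $O_{\mathbf S_X}|\mathbf e\rangle$, and then to exploit the fact that, although these generators need not commute with the factors appearing in the orbit operator, the failure to commute only ever produces a \emph{diagonal} correction, and every diagonal operator of $\langle\mathbf S_X,\mathbf S_Z\rangle$ lies in $\langle\mathbf S_Z\rangle$ and hence fixes $|\mathbf e\rangle$ whenever $\mathbf e\in E$. For the reduction, I first record that by the \textbf{MUL} rule the X-component of any product of XP operators is the mod-$2$ sum of the X-components of the factors (each antisymmetric operator being diagonal). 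Writing $\mathbf x_i$ for the X-component of $A_i\in\mathbf S_X$ and $\mathbf x_{\mathbf v}:=\bigoplus_i\mathbf v[i]\mathbf x_i$ for $\mathbf v\in\mathbb Z_2^r$, the operator $\mathbf S_X^{\mathbf v}$ has X-component $\mathbf x_{\mathbf v}$; since $S_X$ is in RREF its rows are $\mathbb Z_2$-independent, so $\mathbf v\mapsto\mathbf x_{\mathbf v}$ is injective, and (by the \textbf{OP} rule) the coefficient of $|\mathbf e\rangle$ in $O_{\mathbf S_X}|\mathbf e\rangle=\sum_{\mathbf v}\mathbf S_X^{\mathbf v}|\mathbf e\rangle$ is exactly $1$. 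Thus $O_{\mathbf S_X}|\mathbf e\rangle\ne 0$, the unitaries fixing it form a subgroup, and it suffices to treat the generators.

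The core step is a single commutation lemma. For a generator $T\in\mathbf S_X\cup\mathbf S_Z$ and $\mathbf v\in\mathbb Z_2^r$, let $\sigma_T$ be the permutation of $\mathbb Z_2^r$ equal to the identity if $T\in\mathbf S_Z$ and to the map toggling the $k$-th coordinate if $T=A_k$. By the choice of $\sigma_T$, the X-component of $T\,\mathbf S_X^{\mathbf v}$ coincides with that of $\mathbf S_X^{\sigma_T(\mathbf v)}$, so $D_{T,\mathbf v}:=(\mathbf S_X^{\sigma_T(\mathbf v)})^{-1}\,T\,\mathbf S_X^{\mathbf v}$ has trivial X-component, hence is diagonal, hence lies in $\langle\mathbf S_Z\rangle$ by Proposition~\ref{prop:canonical_generators}; and $\mathbf e\in E$ gives $D_{T,\mathbf v}|\mathbf e\rangle=|\mathbf e\rangle$. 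Therefore
\begin{align}
T\,O_{\mathbf S_X}|\mathbf e\rangle=\sum_{\mathbf v\in\mathbb Z_2^r}\mathbf S_X^{\sigma_T(\mathbf v)}D_{T,\mathbf v}|\mathbf e\rangle=\sum_{\mathbf v\in\mathbb Z_2^r}\mathbf S_X^{\sigma_T(\mathbf v)}|\mathbf e\rangle=O_{\mathbf S_X}|\mathbf e\rangle,
\end{align}
the last equality because $\sigma_T$ only permutes the summation index. This holds for every generator, hence for every element of $\langle\mathbf S_X,\mathbf S_Z\rangle$, proving the proposition.

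The main obstacle is precisely the non-commutativity of XP operators, which prevents one from simply pulling $T$ through $O_{\mathbf S_X}$. The resolution rests on two observations: the obstruction $D_{T,\mathbf v}$ is diagonal and belongs to $\langle\mathbf S_Z\rangle$, so it acts trivially on $|\mathbf e\rangle$ for $\mathbf e\in E$; and the induced relabelling $\sigma_T$ of the $2^r$ orbit terms is a bijection, so the sum closes back up to $O_{\mathbf S_X}|\mathbf e\rangle$. The one auxiliary point to get right is that $O_{\mathbf S_X}|\mathbf e\rangle\ne 0$ — needed to legitimise the reduction to generators — which follows from the $\mathbb Z_2$-linear independence of the rows of $S_X$.
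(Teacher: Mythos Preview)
Your proof is correct and follows essentially the same approach as the paper: reduce to generators, observe that pushing a generator through $\mathbf S_X^{\mathbf v}$ yields a diagonal element of $\langle\mathbf S_Z\rangle$ (which fixes $|\mathbf e\rangle$) together with a bijective reindexing of the orbit sum, and conclude. The paper handles the diagonal and non-diagonal generator cases separately (conjugation for $B_j\in\mathbf S_Z$, commutators and the shift $\mathbf v'=\mathbf v\oplus\mathbf i$ for $A_i\in\mathbf S_X$), whereas you package both into one lemma via the permutation $\sigma_T$ and additionally verify $O_{\mathbf S_X}|\mathbf e\rangle\ne 0$; the substance is identical.
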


\begin{proof}
It is sufficient to prove this for the generators $A_i \in \mathbf{S}_X$ and $B_j \in \mathbf{S}_Z$. Let $B_j \in \mathbf{S}_Z$ be a diagonal generator. Then we have:
\begin{align}
B_j O_{\mathbf{S}_X} |\mathbf{e}\rangle &= \sum_{\mathbf{v} \in \mathbb{Z}_2^r} B_j \mathbf{S}_X ^\mathbf{v} |\mathbf{e}\rangle\\
&= \sum_{\mathbf{v} \in \mathbb{Z}_2^r} \mathbf{S}_X ^\mathbf{v} D_\mathbf{v} |\mathbf{e}\rangle \text{ for }D_\mathbf{v} = (\mathbf{S}_X ^\mathbf{v})^{-1}B_j \mathbf{S}_X ^\mathbf{v}\\
&= \sum_{\mathbf{v} \in \mathbb{Z}_2^r} \mathbf{S}_X ^\mathbf{v}  |\mathbf{e}\rangle \text{ because } D_\mathbf{v} \text{ is diagonal}, D_\mathbf{v}\in \langle\mathbf{S}_Z\rangle \text{ and so } D_\mathbf{v}|\mathbf{e}\rangle = |\mathbf{e}\rangle \\
&= O_{\mathbf{S}_X} |\mathbf{e}\rangle
\end{align}
Let $A_i \in \mathbf{S}_X$ be a non-diagonal operator. 
\begin{align} 
A_i O_{\mathbf{S}_X} |\mathbf{e}\rangle &= \sum_{\mathbf{v} \in \mathbb{Z}_2^r} A_i \mathbf{S}_X ^\mathbf{v} |\mathbf{e}\rangle
\end{align}
We can move $A_i$ to the right by applying commutators. We can then move the commutators to the right. Let $\mathbf{i}$ be the length $r$ binary vector which is all zero, apart from component $i$ which is 1 and let $\mathbf{v}' = \mathbf{v}\oplus \mathbf{i}$. As all commutators are diagonal and so are in $\langle \mathbf{S}_Z\rangle$, we can write:
\begin{align}
A_i O_{\mathbf{S}_X} |\mathbf{e}\rangle &= \sum_{\mathbf{v} \in \mathbb{Z}_2^r} \mathbf{S}_X ^{\mathbf{v}'}  D_\mathbf{v} |\mathbf{e}\rangle, \;\; \exists D_\mathbf{v} \in \langle\mathbf{S}_Z\rangle\\
&= \sum_{\mathbf{v'} \in \mathbb{Z}_2^r} \mathbf{S}_X ^\mathbf{v'}  |\mathbf{e}\rangle \text{ since } D_\mathbf{v}|\mathbf{e}\rangle = |\mathbf{e}\rangle \\
&= O_{\mathbf{S}_X} |\mathbf{e}\rangle
\end{align}
\end{proof}

\begin{proposition}[$E$ closed under addition by $\langle S_X\rangle$]\label{prop:eclosed}
If $\mathbf{e} \in E$, then $\mathbf{e} \oplus \mathbf{x} \in E$ for all $\mathbf{x} \in \langle S_X\rangle$.
\end{proposition}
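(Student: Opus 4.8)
The plan is to reduce the claim to the single generators: since any $\mathbf{x} \in \langle S_X\rangle$ is a sum $\mathbf{x} = \bigoplus_{i} \mathbf{x}_i$ of rows of $S_X$ (modulo $2$), and these rows are precisely the $X$-components of the non-diagonal canonical generators $A_i \in \mathbf{S}_X$, it suffices to show that $\mathbf{e} \in E$ implies $\mathbf{e} \oplus \mathbf{x}_i \in E$ for each individual $i$; the general case then follows by induction on the number of rows involved. So first I would fix $\mathbf{e} \in E$ and a single generator $A_i = XP_N(p_i|\mathbf{x}_i|\mathbf{z}_i) \in \mathbf{S}_X$.

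The key observation is that $A_i |\mathbf{e}\rangle$ is, up to a phase, the computational basis vector $|\mathbf{e} \oplus \mathbf{x}_i\rangle$ --- this is immediate from the OP rule of Table~\ref{tab:algebraic_identities}, namely $XP_N(p_i|\mathbf{x}_i|\mathbf{z}_i)|\mathbf{e}\rangle = \omega^{p_i + 2\mathbf{e}\cdot\mathbf{z}_i}|\mathbf{e}\oplus\mathbf{x}_i\rangle$. Next I would show that $\mathbf{e}\oplus\mathbf{x}_i$ lies in $E$, i.e.\ that $|\mathbf{e}\oplus\mathbf{x}_i\rangle$ is fixed by every diagonal generator $B \in \mathbf{S}_Z$. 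For such a $B$, consider $B A_i |\mathbf{e}\rangle$. Using the CONJ rule, $A_i^{-1} B A_i = B D$ where $D := D_N(2\mathbf{x_i}\mathbf{z}_B + 2\mathbf{x}_B\mathbf{z_i} - 4\mathbf{x_i}\mathbf{x}_B\mathbf{z_i})$ with $\mathbf{x}_B = \mathbf{0}$ (as $B$ is diagonal), so $D = D_N(2\mathbf{x_i}\mathbf{z}_B)$ is diagonal; moreover $D$ is a product of a square and commutators of elements of $\langle\mathbf{G}\rangle$, hence $D \in \langle\mathbf{S}_Z\rangle$ by the closure properties of $\mathbf{S}_Z$ noted after Proposition~\ref{prop:canonical_generators}. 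Therefore
\begin{align}
B A_i |\mathbf{e}\rangle = A_i (A_i^{-1} B A_i) |\mathbf{e}\rangle = A_i (B D) |\mathbf{e}\rangle = A_i |\mathbf{e}\rangle\,,
\end{align}
where the last step uses $B|\mathbf{e}\rangle = |\mathbf{e}\rangle$ and $D|\mathbf{e}\rangle = |\mathbf{e}\rangle$ since both $B, D \in \langle\mathbf{S}_Z\rangle$ and $\mathbf{e}\in E$. Since $A_i|\mathbf{e}\rangle$ is a nonzero multiple of $|\mathbf{e}\oplus\mathbf{x}_i\rangle$, this shows $B|\mathbf{e}\oplus\mathbf{x}_i\rangle = |\mathbf{e}\oplus\mathbf{x}_i\rangle$, i.e.\ $\mathbf{e}\oplus\mathbf{x}_i \in E$.

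The only mild obstacle is confirming that the conjugation-correction operator $D$ genuinely lies in $\langle\mathbf{S}_Z\rangle$ rather than merely in the diagonal subgroup of $\mathcal{XP}_{N,n}$; this is handled by recalling that $\mathbf{S}_Z$ generates the \emph{entire} diagonal subgroup of $\langle\mathbf{G}\rangle$ (Property~1 of Proposition~\ref{prop:canonical_generators}) and that $D$, being built from products, inverses and commutators of operators in $\langle\mathbf{G}\rangle = \langle\mathbf{S}_X,\mathbf{S}_Z\rangle$, is itself an element of $\langle\mathbf{G}\rangle$. Once the single-generator case is established, a straightforward induction over a decomposition $\mathbf{x} = \mathbf{x}_{i_1}\oplus\cdots\oplus\mathbf{x}_{i_k}$ --- applying the single-generator result $k$ times, each time to the vector obtained so far --- completes the proof.
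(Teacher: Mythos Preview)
Your proposal is correct and follows essentially the same approach as the paper: both arguments show that for $C$ a non-diagonal group element with $X$-component $\mathbf{x}$, the diagonal correction $D$ arising from $B C = C B D$ (equivalently your $A_i^{-1} B A_i = B D$) lies in $\langle\mathbf{S}_Z\rangle$, whence $B$ fixes $|\mathbf{e}\oplus\mathbf{x}\rangle$. The only cosmetic difference is that the paper handles an arbitrary $\mathbf{x}\in\langle S_X\rangle$ in one step by taking $C=\mathbf{S}_X^{\mathbf{u}}$, whereas you reduce to a single generator $A_i$ and then induct---both routes are equally valid and equally short.
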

\begin{proof}
Let $\mathbf{x} := \mathbf{u}S_X \mod 2 \in \langle S_X\rangle$ and $C := \mathbf{S}_X^\mathbf{u}$. Then $C|\mathbf{e}\rangle = \omega^p|\mathbf{e} \oplus \mathbf{x}\rangle$ for some $p \in \mathbb{Z}_{2N}$. Let $B\in \mathbf{S}_Z$ and $D = B^{-1} C^{-1} B C \in \langle \mathbf{S}_Z\rangle$. Then because $B,D \in \langle \mathbf{S}_Z\rangle$,  $BD|\mathbf{e}\rangle =  |\mathbf{e}\rangle$ and so:
\begin{align}
B (C|\mathbf{e}\rangle) &= C B D |\mathbf{e}\rangle =  C|\mathbf{e}\rangle
\end{align}

Hence, $C|\mathbf{e}\rangle = \omega^p|\mathbf{e} \oplus \mathbf{x}\rangle$ is in the simultaneous $+1$ eigenspace of the $\mathbf{S}_Z$ and so $\mathbf{e} \oplus \mathbf{x} \in E$.
\end{proof}
\begin{proposition}[Cosets of $E_m$ partition $E$]\label{prop:partition}
The cosets $\mathbf{m}_i +\langle S_X\rangle$ partition $E$ i.e.:
\begin{align}
&E = \bigcup_i (\mathbf{m}_i +\langle S_X\rangle)&\\
&(\mathbf{m}_i +\langle S_X\rangle) \cap (\mathbf{m}_j +\langle S_X\rangle) = \emptyset, \forall j \ne i&
\end{align}
\end{proposition}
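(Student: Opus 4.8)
The plan is to establish the two displayed claims — that the cosets $\mathbf{m}_i+\langle S_X\rangle$ cover $E$, and that distinct ones are disjoint — by combining Proposition~\ref{prop:eclosed} with the elementary properties of the residue function $\res_{\mathbb{Z}_2}$ recorded in Section~\ref{sec:coset_structure_of_E} and Appendix~\ref{app:linalg}. The three facts about residues I would use are: (i) $\res_{\mathbb{Z}_2}(S_X,\mathbf{e})$ is obtained by row-reducing $\begin{pmatrix}1&\mathbf{e}\\\mathbf{0}&S_X\end{pmatrix}$, so it differs from $\mathbf{e}$ only by a $\mathbb{Z}_2$-linear combination of rows of $S_X$, giving $\res_{\mathbb{Z}_2}(S_X,\mathbf{e})\in\mathbf{e}+\langle S_X\rangle$; (ii) the residue map is idempotent, so $\res_{\mathbb{Z}_2}(S_X,\mathbf{m}_i)=\mathbf{m}_i$ for every $\mathbf{m}_i\in E_m$; and (iii) $\res_{\mathbb{Z}_2}(S_X,\mathbf{e}_1)=\res_{\mathbb{Z}_2}(S_X,\mathbf{e}_2)$ if and only if $\mathbf{e}_1,\mathbf{e}_2$ lie in the same coset of $\langle S_X\rangle$. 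All three are immediate from the uniqueness of the $\mathbb{Z}_2$-RREF.

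First I would prove the covering $E=\bigcup_i(\mathbf{m}_i+\langle S_X\rangle)$. For $\subseteq$: given $\mathbf{e}\in E$, put $\mathbf{m}_i:=\res_{\mathbb{Z}_2}(S_X,\mathbf{e})\in E_m$; by (i) we have $\mathbf{m}_i\in\mathbf{e}+\langle S_X\rangle$, and since $\langle S_X\rangle$ is a subgroup of $\mathbb{Z}_2^n$ this rearranges to $\mathbf{e}\in\mathbf{m}_i+\langle S_X\rangle$. For $\supseteq$: each $\mathbf{m}_i\in E_m$ equals $\res_{\mathbb{Z}_2}(S_X,\mathbf{e})$ for some $\mathbf{e}\in E$, so by (i), $\mathbf{m}_i\in\mathbf{e}+\langle S_X\rangle$; Proposition~\ref{prop:eclosed} then gives $\mathbf{m}_i\in E$, and a second application of Proposition~\ref{prop:eclosed} gives $\mathbf{m}_i+\langle S_X\rangle\subseteq E$.

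Next I would prove pairwise disjointness. Suppose $\mathbf{w}\in(\mathbf{m}_i+\langle S_X\rangle)\cap(\mathbf{m}_j+\langle S_X\rangle)$. Then $\mathbf{m}_i$ and $\mathbf{m}_j$ lie in the same coset of $\langle S_X\rangle$, so by (iii) $\res_{\mathbb{Z}_2}(S_X,\mathbf{m}_i)=\res_{\mathbb{Z}_2}(S_X,\mathbf{m}_j)$, and by (ii) this is just $\mathbf{m}_i=\mathbf{m}_j$. Hence cosets attached to distinct orbit representatives are disjoint, and together with the covering this exhibits $\{\mathbf{m}_i+\langle S_X\rangle\}$ as a partition of $E$.

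The only point needing care is the bundle of residue-function properties (i)--(iii); everything else is routine coset bookkeeping. These properties are not deep — they follow from the uniqueness of the reduced row echelon form over $\mathbb{Z}_2$ (the Howell property specialised to a field), which is exactly why the residue was introduced in Section~\ref{sec:coset_structure_of_E} — but I would state them explicitly at the start of the proof rather than leave them implicit, since both halves of the argument lean on them. Proposition~\ref{prop:eclosed} then supplies the one genuinely code-theoretic input, namely that the cosets actually live inside $E$.
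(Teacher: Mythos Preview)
Your proof is correct and follows essentially the same line as the paper's: both obtain the covering from the definition of $E_m$ together with Proposition~\ref{prop:eclosed}, and both obtain disjointness from the uniqueness of the $\mathbb{Z}_2$-RREF via the residue function. Your version is merely more explicit in isolating properties (i)--(iii) and in justifying the intermediate step $\mathbf{m}_i\in E$, which the paper leaves implicit.
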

\begin{proof}
By Proposition~\ref{prop:eclosed}, for all $\mathbf{m}_i \in E_m, \mathbf{m}_i +\langle S_X\rangle \subset E$ hence $\bigcup_i (\mathbf{m}_i +\langle S_X\rangle) \subset E$. But by the definition of orbit representatives for any $\mathbf{e} \in E$ we can calculate $\text{Res}_{\mathbf{Z}_2}(S_X, \mathbf{e}) \in E_m$ hence $\mathbf{e} \in \mathbf{m}_i +\langle S_X\rangle$ for some $\mathbf{m}_i \in E_m$. 

The fact that cosets of a subgroup partition the group is a well-known result from group theory. Hence if $(\mathbf{m}_i +\langle S_X\rangle) \cap (\mathbf{m}_j +\langle S_X\rangle) \ne \emptyset$ then $\mathbf{m}_i \in \mathbf{m}_j +\langle S_X\rangle$. But then because the RREF is unique $\mathbf{m}_i = \text{Res}(S_X,\mathbf{m}_i) = \mathbf{m}_j$ so $i = j$.
\end{proof}

\begin{proposition}[The $|\kappa_i\rangle$ are a basis of $\mathcal{C}$]\label{prop:kappa_are_basis}
Let $|\kappa_i\rangle = O_{\mathbf{S}_X}|\mathbf{m}_i\rangle, \mathbf{m}_i \in E_m$. The $|\kappa_i\rangle$ are a basis of the codespace $\mathcal{C}$ stabilised by the canonical generators $\mathbf{S}_X, \mathbf{S}_Z$. 
\end{proposition}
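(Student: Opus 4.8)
The plan is to establish three things: that each $|\kappa_i\rangle$ lies in the codespace $\mathcal{C}$, that the $|\kappa_i\rangle$ are linearly independent, and that they span $\mathcal{C}$. The first is immediate from Proposition~\ref{prop:cworbit}: every $\mathbf{m}_i \in E_m$ lies in $E$, so $|\kappa_i\rangle = O_{\mathbf{S}_X}|\mathbf{m}_i\rangle$ is fixed by all of $\langle\mathbf{S}_X,\mathbf{S}_Z\rangle$. For independence I would note that the rows of $S_X$ are independent ($S_X$ is in RREF with $r$ nonzero rows), so the map $\mathbf{v}\mapsto\mathbf{v}S_X$ on $\mathbb{Z}_2^r$ is injective; hence $|\kappa_i\rangle = \sum_{\mathbf{v}\in\mathbb{Z}_2^r}\mathbf{S}_X^\mathbf{v}|\mathbf{m}_i\rangle$ is a sum of $2^r$ distinct computational basis vectors with unit-modulus coefficients, in particular nonzero, with the coefficient of $|\mathbf{m}_i\rangle$ equal to $1$. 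By Proposition~\ref{prop:partition} the Z-supports $\mathbf{m}_i + \langle S_X\rangle$ of distinct codewords are disjoint, so linear independence follows.

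The substance is spanning. The key object is $\Pi := \tfrac{1}{2^r}O_{\mathbf{S}_X}$, which I claim acts as a projector on $\mathcal{C}_Z$ of Eq.~\eqref{eq:C_Z}. First observe that for $\mathbf{e}\in E$ one has $\mathbf{S}_X^\mathbf{v}\mathbf{S}_X^\mathbf{w}|\mathbf{e}\rangle = \mathbf{S}_X^{\mathbf{v}\oplus\mathbf{w}}|\mathbf{e}\rangle$: rewriting the product in the generator-product form of Property~1 of Proposition~\ref{prop:canonical_generators} yields $\mathbf{S}_X^{\mathbf{v}\oplus\mathbf{w}}$ times a diagonal element of $\langle\mathbf{S}_X,\mathbf{S}_Z\rangle$ (the corrections are squares and commutators of the $A_i$, which are diagonal, cf.\ Example~\ref{eg:identities}), and by Property~1 that diagonal element lies in $\langle\mathbf{S}_Z\rangle$ and so fixes $|\mathbf{e}\rangle$ for $\mathbf{e}\in E$. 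Summing over $\mathbf{v},\mathbf{w}$ gives $O_{\mathbf{S}_X}^2|\mathbf{e}\rangle = 2^r O_{\mathbf{S}_X}|\mathbf{e}\rangle$, so $\Pi^2 = \Pi$ on $\mathcal{C}_Z$. Since the X-component of each $\mathbf{S}_X^\mathbf{v}$ lies in $\langle S_X\rangle$, $\Pi$ preserves each coset subspace $V_i := \Span_{\mathbb{C}}\{|\mathbf{e}\rangle : \mathbf{e}\in\mathbf{m}_i+\langle S_X\rangle\}$; and for any $\mathbf{e}$ in that coset, writing $|\mathbf{e}\rangle$ as a phase times $\mathbf{S}_X^\mathbf{u}|\mathbf{m}_i\rangle$ and applying the identity above shows $O_{\mathbf{S}_X}|\mathbf{e}\rangle$ is a nonzero scalar multiple of $|\kappa_i\rangle$. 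Hence $\Pi|_{V_i}$ is a rank-one idempotent whose image, equivalently its fixed space, is $\mathbb{C}|\kappa_i\rangle$.

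To finish, take any $|\psi\rangle\in\mathcal{C}$. It is fixed by $\mathbf{S}_Z$, hence $|\psi\rangle\in\mathcal{C}_Z$, and fixed by every $A_i\in\mathbf{S}_X$, hence by every $\mathbf{S}_X^\mathbf{v}$ and so by $\Pi$. Using Proposition~\ref{prop:partition} we have $\mathcal{C}_Z = \bigoplus_i V_i$, so write $|\psi\rangle=\sum_i|\psi_i\rangle$ with $|\psi_i\rangle\in V_i$; applying $\Pi$ and matching components in this direct sum gives $\Pi|\psi_i\rangle=|\psi_i\rangle$, hence $|\psi_i\rangle\in\mathbb{C}|\kappa_i\rangle$ and $|\psi\rangle\in\Span_{\mathbb{C}}\{|\kappa_i\rangle\}$. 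Together with $|\kappa_i\rangle\in\mathcal{C}$ and their independence, this shows the $|\kappa_i\rangle$ are a basis of $\mathcal{C}$.

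The main obstacle is the non-commutativity of $\mathbf{S}_X$: $O_{\mathbf{S}_X}$ is not manifestly a projector, and $\mathbf{S}_X^\mathbf{v}\mathbf{S}_X^\mathbf{w}$ genuinely differs from $\mathbf{S}_X^{\mathbf{v}\oplus\mathbf{w}}$ in general. The crux is the bookkeeping that shows every such discrepancy is a diagonal operator in $\langle\mathbf{S}_Z\rangle$ and therefore acts as the identity on precisely the subspace $\mathcal{C}_Z$ where the projector property is needed — this is where Property~1 of Proposition~\ref{prop:canonical_generators} and the diagonality of squares and commutators (Example~\ref{eg:identities}) do the real work; the rest is routine linear algebra on the coset subspaces $V_i$.
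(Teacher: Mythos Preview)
Your proof is correct and essentially complete. The route differs from the paper's: the paper argues spanning directly by reading off coefficients --- for $|\psi\rangle\in\mathcal{C}$ it sets $\lambda_i=\langle\mathbf{m}_i|\psi\rangle$, then for each $\mathbf{e}=\mathbf{m}_i\oplus\mathbf{u}S_X$ uses $\mathbf{S}_X^\mathbf{u}|\psi\rangle=|\psi\rangle$ together with $\mathbf{S}_X^\mathbf{u}|\mathbf{m}_i\rangle=\omega^p|\mathbf{e}\rangle$ to conclude $\langle\mathbf{e}|\psi\rangle=\lambda_i\langle\mathbf{e}|\kappa_i\rangle$, whence $|\psi\rangle=\sum_i\lambda_i|\kappa_i\rangle$. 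Your argument instead packages the same invariance as a structural statement: $\Pi=\tfrac{1}{2^r}O_{\mathbf{S}_X}$ is an idempotent on $\mathcal{C}_Z$ that block-diagonalises over the $V_i$ with rank-one image $\mathbb{C}|\kappa_i\rangle$, so anything fixed by $\Pi$ lies in the span. Both approaches ultimately rest on the identity $\mathbf{S}_X^\mathbf{v}\mathbf{S}_X^\mathbf{w}|\mathbf{e}\rangle=\mathbf{S}_X^{\mathbf{v}\oplus\mathbf{w}}|\mathbf{e}\rangle$ for $\mathbf{e}\in E$ (the paper uses the special case $\mathbf{w}$ fixed, $\mathbf{v}$ varying), and you correctly locate this as the crux and justify it via Property~1 of Proposition~\ref{prop:canonical_generators}. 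The paper's version is shorter and more explicit about the coefficients $\lambda_i$; yours makes the projector structure of the orbit operator visible, which is a nice conceptual bonus and would transfer cleanly to other settings where one has a group-like sum of stabilisers acting on a fixed subspace.
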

\begin{proof}
First, we show that the $|\kappa_i\rangle$ are independent. The Z-support of the codeword $|\kappa_i\rangle$ is the coset $\ZSupp(|\kappa_i\rangle) = \mathbf{m}_i +\langle S_X \rangle$ and so by Proposition~\ref{prop:partition} the Z-support of the codewords partition $E$. Hence $|\kappa_i\rangle$ are independent. 

Next, we show that the $|\kappa_i\rangle$ span the codespace $\mathcal{C}$. Let $|\psi\rangle \in \mathcal{C}$. Then $|\psi\rangle$ is stabilised by all elements $B \in \mathbf{S}_Z$. Let $\mathbf{e} \in \ZSupp(|\psi\rangle)$ then because $B$ is diagonal, $B|\mathbf{e}\rangle = \omega^p|\mathbf{e}\rangle, \exists p \in \mathbb{Z}_{2N}$ and this implies $B|\mathbf{e} \rangle = |\mathbf{e}\rangle, \forall \mathbf{e} \in \ZSupp(|\psi\rangle)$. Hence $\ZSupp(|\psi\rangle) \subset E$.

Now let $\lambda_i$ be the coefficients of $|\mathbf{m}_i\rangle$ in $|\psi\rangle$ so that:
\begin{align}
    \lambda_i &= \langle \mathbf{m}_i|\psi\rangle \in \mathbb{C}, \mathbf{m}_i \in E_m. 
\end{align}

For $\mathbf{e} \in \ZSupp(|\psi\rangle)$, we now show that the coefficient $\langle \mathbf{e}|\psi\rangle$ is determined by the $\lambda_i$. Because $\mathbf{e} \in E$, there exists unique $i, \mathbf{u}\in \mathbb{Z}_2^r$ such that $\mathbf{e} = (\mathbf{m}_i + \mathbf{u}S_X).$ The operator $\mathbf{S}_X^\mathbf{u} \in \langle \mathbf{S}_X\rangle$ and so $\mathbf{S}_X^\mathbf{u}|\psi\rangle = |\psi\rangle$.  The action of  $\mathbf{S}_X^\mathbf{u}$ on $|\mathbf{m}_i\rangle$ is given by $\mathbf{S}_X^\mathbf{u}|\mathbf{m}_i\rangle = \omega^p|\mathbf{e}\rangle, \exists p \in \mathbb{Z}_{2N}$. Hence the coefficient of $|\mathbf{e}\rangle$ in $|\psi\rangle$ is given by $\langle \mathbf{e}|\psi\rangle = \lambda_i \omega^p = \lambda_i \langle \mathbf{e}|\kappa_i\rangle$. Hence:
\begin{align}
|\psi\rangle &= \sum_{i}\lambda_i|\kappa_i\rangle
\end{align}
Hence any $|\psi\rangle \in \mathcal{C}$ can be written as a linear combination of the $|\kappa_i\rangle$ and the result follows.
\end{proof}

The orbit representatives have a form which is unique for each coset, which proves useful in the graph search algorithm of Section~\ref{sec:graph_search}.

\begin{proposition}\label{prop:orbit_rep_form}
Let $S_X$ be an $r \times n$ binary matrix in RREF. Let $\mathbf{e}$ be a binary vector of length $n$ and let $\mathbf{m} = \res_{\mathbb{Z}_2}(S_X,\mathbf{e})$. Then $\mathbf{m}$ is the unique element of the coset $\mathbf{e} + \langle S_X\rangle$ for which $\mathbf{m}[l] = 0$ for all leading indices $l$ of $S_X$.
\end{proposition}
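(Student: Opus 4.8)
The plan is to unpack the definition of $\res_{\mathbb{Z}_2}$ from Eq.~\eqref{eq:residue_function} and read off the three required facts directly from the structure of Reduced Row Echelon Form. Write $\mathbf{m} = \res_{\mathbb{Z}_2}(S_X,\mathbf{e})$, so that $\begin{pmatrix}1&\mathbf{m}\\\mathbf{0}&S_X\end{pmatrix} = \RREF_{\mathbb{Z}_2}\begin{pmatrix}1&\mathbf{e}\\\mathbf{0}&S_X\end{pmatrix}$. First I would observe that the leftmost column of the augmented matrix is already a pivot column with its pivot in the top row (it has a $1$ there and $0$s below, since $S_X$ contributes the zero block), and that $S_X$ is already in RREF. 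Hence putting the augmented matrix into RREF only requires adding some subset of the rows of $S_X$ to the top row to clear the top-row entries sitting in the pivot columns of $S_X$; the bottom block is untouched. This shows $\mathbf{m} = (\mathbf{e} + \mathbf{u}S_X)\bmod 2$ for some $\mathbf{u}\in\mathbb{Z}_2^r$, i.e. $\mathbf{m}\in\mathbf{e}+\langle S_X\rangle$, and simultaneously that for every leading index $l$ of $S_X$ the entry $\mathbf{m}[l]$ has been cleared to $0$ (this is exactly the RREF property that a pivot column has a single nonzero entry).

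Next I would prove uniqueness within the coset. Suppose $\mathbf{m}'\in\mathbf{e}+\langle S_X\rangle$ also satisfies $\mathbf{m}'[l]=0$ for all leading indices $l$. Then $\mathbf{m}\oplus\mathbf{m}'\in\langle S_X\rangle$, so $\mathbf{m}\oplus\mathbf{m}' = \mathbf{v}S_X\bmod 2$ for some $\mathbf{v}\in\mathbb{Z}_2^r$. If $\mathbf{v}\neq\mathbf{0}$, pick any index $j$ with $\mathbf{v}[j]=1$ and let $l_j$ be the leading index of row $j$ of $S_X$. Because $S_X$ is in RREF, the pivot column $l_j$ has $\mathbf{x}_k[l_j]=\delta_{jk}$, so $(\mathbf{v}S_X)[l_j]=\mathbf{v}[j]=1$. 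But $(\mathbf{m}\oplus\mathbf{m}')[l_j]=0\oplus 0=0$, a contradiction. Hence $\mathbf{v}=\mathbf{0}$ and $\mathbf{m}=\mathbf{m}'$. Since the coset has $2^r$ elements and a vector is free exactly at the $n-r$ non-leading coordinates, this single element with all leading coordinates zero is necessarily the claimed unique one.

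I do not expect a serious obstacle here; the only point that needs a little care is the first paragraph's claim that computing the RREF of the augmented matrix leaves the $S_X$ block in place and acts on the top row only by $\langle S_X\rangle$-translations. This follows from the fact that the Gaussian-elimination order processes the already-present pivot (leftmost column, top row) first and then, since $S_X$ is already reduced, performs no further operations except back-substitution into the top row; I would state this explicitly rather than invoke it tacitly. Everything else is a direct appeal to the uniqueness and pivot-column properties of RREF over $\mathbb{Z}_2$ already recalled in Appendix~\ref{app:linalg}.
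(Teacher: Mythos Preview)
Your proposal is correct and follows essentially the same argument as the paper: read off $\mathbf{m}[l]=0$ from the RREF pivot-column property of the augmented matrix, then prove uniqueness by showing that the difference of two such coset elements lies in $\langle S_X\rangle$ and vanishes on all leading indices, forcing it to be $\mathbf{0}$. Your version is a touch more explicit than the paper's in verifying that $\mathbf{m}\in\mathbf{e}+\langle S_X\rangle$ (the paper treats this as immediate from the residue definition), but the structure is the same.
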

\begin{proof}
We first show that $\mathbf{m}[l] = 0$ for all leading indices $l$ of $S_X$. By the definition in Eq.~\eqref{eq:residue_function}, the matrix $\begin{pmatrix}1&\mathbf{m}\\\mathbf{0}&S_X\end{pmatrix}$ is a binary matrix in Howell form. Hence, the entries above the leading indices are strictly less than $1$ and so are zero. Therefore $\mathbf{m}[l] =0$ for all leading indices $l$.

Now we show the uniqueness of the property. Assume there exists some binary vector $\mathbf{a}$ in the coset with $\mathbf{a}[l] = 0$ for all leading indices $l$. Then  $\mathbf{b} = \mathbf{a} \oplus \mathbf{m}$ is a vector in $\langle S_X \rangle$ such that $\mathbf{b}[l] =0$ for all leading indices $l$ of $S_X$. The only member of $\langle S_X \rangle$ with this property is $\mathbf{0}$ hence $\mathbf{a} = \mathbf{m}$.
\end{proof}

\section{Proof of Results: Classification of XP Stabiliser States}\label{app:hypergraph}

In this appendix, we provide detailed proofs of the results in Chapter~\ref{chap:hypergraph}. In Section~\ref{app:integer_vectors}, we set out some basic results which are useful for working with integer and binary vectors. In Section~\ref{app:phase_function}, we prove Proposition~\ref{prop:xp_phase_function} regarding the form of the phase function of an XP stabiliser state. Finally in Section~\ref{app:whg2xp}, prove that the algorithm for representing weighted hypergraph states as XP stabiliser states gives the correct result.

\subsection{Operations on Binary and Integer Vectors}\label{app:integer_vectors}
The results in Chapter~\ref{chap:hypergraph} involve operations on binary and integer vectors. This section sets out some basic results for these types of vectors. We use component wise addition and multiplication of vectors and a \textbf{dot product} of vectors over the integers. Given two vectors $\mathbf{a}, \mathbf{b} \in \mathbb{Z}^n$ the dot product is defined as:
\begin{align}
    \mathbf{a}\cdot\mathbf{b} = \sum_{0 \le i < n} \mathbf{a}[i]\mathbf{b}[i] = \sum_{0 \le i < n} (\mathbf{a}\mathbf{b})[i]
\end{align}
Hence, the dot product is the sum of the entries of the component wise product of two vectors. Accordingly, we can write the following rule for dot product over the component wise product:
\begin{align}
\mathbf{a}\cdot \mathbf{b}\mathbf{c} = \sum_{0 \le i < n} (\mathbf{a}(\mathbf{b}\mathbf{c}))[i]= \sum_{0 \le i < n} ((\mathbf{a}\mathbf{b})\mathbf{c})[i] = \mathbf{a} \mathbf{b}\cdot \mathbf{c}\label{eq:dot_prod_over_mul}
\end{align}
We also have a distributive rule for dot product over component wise addition:
\begin{align}
\mathbf{a}\cdot (\mathbf{b} + \mathbf{c})  = \mathbf{a} \cdot \mathbf{b} + \mathbf{a}\cdot \mathbf{c}\label{eq:dot_prod_over_add}
\end{align}
The usual rule for scalar products also applies - for $u \in \mathbb{Z}$:
\begin{align}
    \mathbf{a}\cdot(u\mathbf{b}) = u(\mathbf{a}\cdot\mathbf{b}) = (u\mathbf{a})\cdot\mathbf{b}
\end{align}
The \textbf{weight of a binary vector} can also be thought of as a dot product. Let $\mathbf{1}$ be the all 1 vector of length $n$. Considering the binary vector $\mathbf{a}$ a vector of zeros and ones in $\mathbb{Z}^n$: 
\begin{align}
\text{wt}(\mathbf{a}) = \mathbf{1} \cdot \mathbf{a}
\end{align}
We can look at component wise multiplication and addition modulo 2 of binary vectors in terms of the effect on the support of the vectors. The support of the component wise product of binary vectors is the intersection of the supports:
\begin{align}
    \text{supp}(\mathbf{a} \mathbf{b}) &= \text{supp}(\mathbf{a}) \cap \text{supp}( \mathbf{b}).
\end{align}
The support of the addition of binary vectors modulo 2 is the symmetric difference of the supports:
\begin{align}
    \text{supp}(\mathbf{a} \oplus \mathbf{b}) &= (\text{supp}(\mathbf{a}) \cup \text{supp}( \mathbf{b})) \setminus (\text{supp}(\mathbf{a}) \cap \text{supp}( \mathbf{b})).
\end{align}
We often apply the following identity for binary vectors:
\begin{align}
    \mathbf{a} \oplus \mathbf{b} = \mathbf{a} + \mathbf{b} - 2 \mathbf{a}  \mathbf{b}
\end{align}
where operations on the RHS are over the integers. This generalises to the following identity for binary vectors $\mathbf{x}_j, j \in [0\dots r-1]$ of length $n$:
\begin{align}
\bigoplus_{0 \le i < r}\mathbf{x}_i = \sum_{s \subset [0\dots r-1]}(-2)^{|s|-1}\prod_{j \in s}\mathbf{x}_j\label{eq:venn_diagram}
\end{align}

\subsection{Phase Functions of XP Stabiliser States}\label{app:phase_function}
In this section, we prove Proposition~\ref{prop:xp_phase_function} which classifies the form of the phase functions of XP stabiliser states.

\paragraph{Proposition~\ref{prop:xp_phase_function}} [Phase Functions of XP States]
Let $|\phi\rangle = O_{\mathbf{S}_X}|\mathbf{m}\rangle = \sum_{\mathbf{u}\in \mathbb{Z}_2^r}\mathbf{S}_X^\mathbf{u}|\mathbf{m}\rangle$ be an XP stabiliser state in the canonical form of Eq.~\eqref{eq:xp_canonical} with $r:= |\mathbf{S}_X|$. Let $u_i, 0 \le i < r$ be binary variables such that $u_i := \mathbf{u}[i]$. Then:
\begin{enumerate}[label=(\alph*)]
\item The phase function is of the following form for some vector $\mathbf{q} \in \mathbb{Z}^{2^r}$ indexed by the subsets $s$ of $[0\dots r-1]$: 
\begin{align}
    f(u_0,u_1,\dots,u_{r-1}) = \sum_{s \subset [0\dots r-1]}\mathbf{q}[s] 2^{|s|-1}\prod_{j \in s}u_j. 
\end{align} 
\item For $N = 2^t$, the maximum degree of the phase function is $t+1$.
\end{enumerate}
\begin{proof}
Let $\mathbf{S}_X = \{XP_N(p_i|\mathbf{x}_i|\mathbf{z}_i)\}$ and let $\mathbf{s}_i := \bigoplus_{i < j < r}u_j\mathbf{x}_j$. Using Eq.~\eqref{eq:venn_diagram} and the dot product results of Section~\ref{app:integer_vectors}, the phase component of $\mathbf{S}_X^\mathbf{u}|\mathbf{m}\rangle$ can be written as follows:
\begin{align}
    q &= \sum_{0 \le i < r}\Big[u_ip_i + 2u_i\mathbf{z}_i \cdot (\mathbf{m}  \oplus \mathbf{s}_i) \Big]\\
    &= \sum_{0 \le i < r}\Big[u_ip_i + 2u_i\mathbf{z}_i \cdot (\mathbf{m}  + \mathbf{s}_i - 2\mathbf{m}\mathbf{s}_i ) \Big]\\
    &= \sum_{0 \le i < r}\Big[u_i(p_i + 2\mathbf{z}_i \cdot \mathbf{m})  + 2u_i \mathbf{z}_i (1 - 2\mathbf{m})\cdot \mathbf{s}_i \Big]\\
    &= \sum_{0 \le i < r}(p_i + 2\mathbf{z}_i \cdot \mathbf{m})u_i  + \sum_{\substack{0 \le i < r\\ c \subset [i+1\cdots r-1]}}2u_i \mathbf{z}_i (1 - 2\mathbf{m})\cdot (-2)^{|c|-1}\prod_{j\in c}u_j\mathbf{x}_j\\
    &= \sum_{0 \le i < r}(p_i + 2\mathbf{z}_i \cdot \mathbf{m})u_i  + \sum_{\substack{0 \le i < r\\ c \subset [i+1\cdots r-1]}}\Big[ (-1)^{|c|-1} \mathbf{z}_i (1 - 2\mathbf{m})\cdot \prod_{j\in c}\mathbf{x}_j\Big]2^{|c|} u_i\prod_{j\in c}u_j\label{eq:c_i_0}
\end{align}
The first term in the above equation is linear in $u_i$ and setting $s := c \cup \{i\}$, the second term is also of the required form so part (a) follows.

Now let $N$ be a power of $2$ such that  $N = 2^t$. As $\omega^{2N} = \omega^{2^{t+1}} = 1$, any terms in the phase function with degree $d-1 \ge t+1$ or $d> t+1$ have coefficients which are multiples of $2N$ and hence generate trivial phases. Hence, the maximum degree of the phase function is $t+1$ and part (b) follows.
\end{proof}

\subsection{Representing Weighted Hypergraph States as XP Stabiliser States}\label{app:whg2xp}
In this section, we show that the algorithm of Section~\ref{sec:whg2xp} for representing weighted hypergraph states as XP stabiliser states produces the required results. The embedding operator of Eq.~\eqref{eq:embedding_operator} is defined in terms of $M^r_m$, which is the binary matrix whose columns are the bit strings of length $r$ of weight between $1$ and $m$ inclusive. This construction is similar to that used in classical simplex codes and Reed-Muller codes~\cite{coding_theory}.

In the following proposition, we break $M^r_r$ into blocks $W^r_k$ where the columns all have weight $k$ and calculate the weight of the product of $t$ rows of $W^r_k$. 

\begin{proposition}[Weight of Vector Products]\label{prop:sums_products_mod_2}
Let $W^r_k$ be a matrix whose columns are the bit strings of length $r$ with weight $k$. Let $\mathbf{p}_t = \prod_{0 \le i < t}\mathbf{x}_i$ be the product of the first $t$ rows of $W^r_k$  where $t \le r$. Then:
\begin{align}
    \text{wt}(\mathbf{p}_t) = \begin{cases}0: t > k\\
\binom{r-t}{k-t}: t \le k\end{cases}
\end{align}
\end{proposition}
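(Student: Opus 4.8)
The plan is to compute $\text{wt}(\mathbf{p}_t)$ by counting the columns of $W^r_k$ on which all of the first $t$ rows have a $1$. A column of $W^r_k$ is indexed by a subset $S \subseteq [0\dots r-1]$ with $|S| = k$, where the entry in row $i$ is $1$ precisely when $i \in S$. So the component $(\mathbf{p}_t)[S] = \prod_{0 \le i < t}\mathbf{x}_i[S]$ equals $1$ if and only if $\{0,1,\dots,t-1\} \subseteq S$, and $0$ otherwise. Hence $\text{wt}(\mathbf{p}_t)$ is exactly the number of weight-$k$ subsets $S$ of $[0\dots r-1]$ that contain the fixed $t$-element set $\{0,\dots,t-1\}$.

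First I would handle the case $t > k$: no $k$-element set can contain a $t$-element set when $t > k$, so there are no such columns and $\text{wt}(\mathbf{p}_t) = 0$. This also covers the degenerate reading where the binomial $\binom{r-t}{k-t}$ would have a negative lower index. Next, for $t \le k$, a subset $S$ with $\{0,\dots,t-1\} \subseteq S$, $|S| = k$ is determined by choosing the remaining $k - t$ elements of $S$ from the $r - t$ indices in $\{t,\dots,r-1\}$; there are $\binom{r-t}{k-t}$ such choices. This gives the stated formula. I would also note explicitly that the hypothesis $t \le r$ ensures $W^r_k$ has at least $t$ rows so that $\mathbf{p}_t$ is well-defined, and that when $k > r$ the matrix $W^r_k$ is empty (no columns), consistent with the formula since then $\text{wt} = 0$ in all cases.

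There is no real obstacle here — the argument is a direct combinatorial count once the column-indexing convention is made explicit. The only thing to be careful about is matching the paper's column ordering convention for $M^r_m$ (columns ordered by weight, then lexicographically) so that ``the first $t$ rows'' is unambiguous; since the count of columns containing a fixed $t$-subset of rows does not depend on which $t$ rows are chosen (by symmetry of the construction under permuting row labels), the specific ordering is immaterial to the weight, and I would remark on this symmetry to justify reducing to the rows $\{0,\dots,t-1\}$ without loss of generality.
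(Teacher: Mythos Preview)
Your proposal is correct and follows essentially the same approach as the paper: both arguments count the columns of $W^r_k$ whose first $t$ entries are all $1$ (equivalently, whose support contains $\{0,\dots,t-1\}$), handle the $t>k$ case by noting no weight-$k$ string can have $t$ ones in the first $t$ positions, and for $t\le k$ count the $\binom{r-t}{k-t}$ ways to place the remaining $k-t$ ones among the last $r-t$ positions. Your phrasing in terms of subsets $S$ is just a relabelling of the paper's bit-string argument, and your extra remarks on symmetry and edge cases are sound but not needed for the paper's version.
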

\begin{proof}
We can think of $\text{wt}(\prod_{0 \le i < t}\mathbf{x}_i)$ as the number of columns $\mathbf{u}$ of $W^r_k$ such that $\prod_{0 \le i < t}\mathbf{u}[i] = 1$.  If $t > k$ then $\prod_{0 \le i < t}\mathbf{u}[i] = 0$ for all columns because $\text{wt}(\mathbf{u}) = k < t$ and the product includes at least one zero.

If $t \le k$ then the first $t$ entries in $\mathbf{u}$ must all be one, implying that the last $r-t$ entries of $\mathbf{u}$ must include $k-t$ values of 1. There are $\binom{r-t}{k-t}$ ways of constructing bit strings of weight $k$ which have the first $t$ entries equal to 1. 
\end{proof}

We now consider dot products with the \textbf{alternating vector} $\mathbf{a}$ of Eq.~\eqref{eq:alternating_vector} which is 1 when the weight of the corresponding column of $M^r_r$ is even and $-1$ when the weight is odd. The vector $\mathbf{a}$ will be used to construct the Z-component of the non-diagonal stabilisers of the XP code.

\begin{proposition}[Dot Product with Alternating Vector]\label{prop:dot_prod_alt}
Let $\mathbf{x}_i$ be the $i$th row of $M^r_r$ and let  $\mathbf{p}_t = \prod_{0 \le j < t}\mathbf{x}_i$ for $1 \le t \le r$. Let $\mathbf{a}$ be the vector such that $\mathbf{a}[j] = (-1)^{\text{wt}(\mathbf{u}_j)}$ where $\mathbf{u}_j$ is the $j$th column of $M^r_r$. Then $\mathbf{a}\cdot \mathbf{p}_t = (-1)^r$ if $t=r$ and $0$ otherwise.
\end{proposition}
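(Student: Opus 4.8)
The plan is to evaluate $\mathbf{a}\cdot\mathbf{p}_t$ directly by identifying exactly which columns of $M^r_r$ contribute. First I would record the combinatorial meaning of the support of $\mathbf{p}_t = \prod_{0\le j<t}\mathbf{x}_j$: since the columns of $M^r_r$ are \emph{all} nonzero bit strings of length $r$, a column $\mathbf{u}$ lies in $\ZSupp(\mathbf{p}_t)$ precisely when its first $t$ coordinates all equal $1$, and then the remaining $r-t$ coordinates may be anything (such a $\mathbf{u}$ is automatically nonzero since $t\ge 1$). Thus the contributing columns are in bijection with tails $\mathbf{v}\in\mathbb{Z}_2^{r-t}$, and there are $2^{r-t}$ of them, each appearing exactly once.

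Next I would substitute the definition $\mathbf{a}[j] = (-1)^{\wt(\mathbf{u}_j)}$ and the decomposition $\wt(\mathbf{u}) = t + \wt(\mathbf{v})$ to obtain
\begin{align*}
\mathbf{a}\cdot\mathbf{p}_t \;=\; \sum_{\substack{\mathbf{u}\ \text{column of}\ M^r_r\\ \mathbf{u}[0]=\dots=\mathbf{u}[t-1]=1}} (-1)^{\wt(\mathbf{u})} \;=\; (-1)^t\sum_{\mathbf{v}\in\mathbb{Z}_2^{r-t}} (-1)^{\wt(\mathbf{v})} \;=\; (-1)^t\,(1-1)^{r-t}\,.
\end{align*}
The conclusion is then immediate: if $t<r$ the factor $(1-1)^{r-t}$ is a sum over a nonempty cube and vanishes, while if $t=r$ it equals $1$ and the surviving prefactor is $(-1)^r$, giving $\mathbf{a}\cdot\mathbf{p}_r = (-1)^r$.

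I would also mention the alternative route that reuses Proposition~\ref{prop:sums_products_mod_2}: partitioning $M^r_r$ into the weight-$k$ blocks $W^r_k$ and applying that proposition term by term gives $\mathbf{a}\cdot\mathbf{p}_t = \sum_{k\ge t}(-1)^k\binom{r-t}{k-t} = (-1)^t\sum_{j\ge 0}(-1)^j\binom{r-t}{j} = (-1)^t(1-1)^{r-t}$, recovering the same value. I do not expect a genuine obstacle here; the only point deserving a sentence of care is the bijection between contributing columns and tail patterns $\mathbf{v}$, i.e.\ that no column of $M^r_r$ is omitted or double-counted, which is exactly the statement that the columns of $M^r_r$ exhaust the nonzero length-$r$ strings.
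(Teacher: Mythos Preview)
Your proposal is correct. The alternative route you sketch at the end---partitioning $M^r_r$ into weight-$k$ blocks, invoking Proposition~\ref{prop:sums_products_mod_2}, and summing $\sum_{k\ge t}(-1)^k\binom{r-t}{k-t}=(-1)^t(1-1)^{r-t}$---is exactly the paper's proof. Your primary argument is a mild streamlining of the same idea: by parameterising the contributing columns directly via their tail $\mathbf{v}\in\mathbb{Z}_2^{r-t}$ you bypass the block decomposition and the appeal to Proposition~\ref{prop:sums_products_mod_2} altogether, arriving at the same product $(-1)^t(1-1)^{r-t}$ in one step. Both routes are equivalent in spirit; yours is marginally more self-contained.
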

\begin{proof}
From Proposition~\ref{prop:sums_products_mod_2}:
\begin{align}
    \mathbf{a}\cdot \mathbf{p}_i &= \sum_{t \le k \le r}(-1)^k\binom{r-t}{k-t}\\
    &= \sum_{0 \le j \le r-t}(-1)^{t+j}\binom{r-t}{j}\,.
\end{align}
If $t = r$ then $\mathbf{a}\cdot \mathbf{p}_i = (-1)^r$. If $t < r$ then $\mathbf{a}\cdot \mathbf{p}_i = (-1)^t(1-1)^{r-t} = 0$.
\end{proof}

We now show that the alternating vector $\mathbf{a}$ allows us to construct an XP stabiliser state which has a phase function corresponding to a generalised controlled phase operator.
\begin{proposition}[Weighted Hypergraph States]\label{prop:whg2xp}
Let $\mathbf{x}_i$ be the $i$th row of $M^r_r$ and let  $A_i = XP_N(0|\mathbf{x}_i|\mathbf{a} \mathbf{x}_i)$ for $N > 2^r$ and $\mathbf{S}_X = \{A_i : 0 \le i < r\}$ and $\mathbf{a}$ as defined in Eq.~\eqref{eq:alternating_vector}. Let $u_i$ be the  variable representing the value of $\mathbf{u}[i]$ for the binary vector $\mathbf{u}$ of length $r$. Then:
\begin{enumerate}[label=(\alph*)]
    \item The operators $A_i, A_j$ commute; and
    \item The phase component of $\mathbf{S}_X^\mathbf{u}|\mathbf{0}\rangle$ is $p = 2^{r-1}\prod_{0 \le i < r}u_i$.
\end{enumerate}
\end{proposition}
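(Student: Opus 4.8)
The plan is to verify both claims by direct computation using the algebraic identities of Table~\ref{tab:algebraic_identities}, specialised to the operators $A_i = XP_N(0|\mathbf{x}_i|\mathbf{a}\mathbf{x}_i)$ where $\mathbf{x}_i$ is the $i$th row of $M^r_r$. The key structural facts I would exploit are Proposition~\ref{prop:dot_prod_alt} (dot products of row-products of $M^r_r$ against $\mathbf{a}$ vanish unless we take the product of all $r$ rows, in which case we get $(-1)^r$) and the combinatorial identity Eq.~\eqref{eq:venn_diagram} relating an XOR of binary vectors to a signed sum over products, together with the dot-product manipulation rules Eqs.~\eqref{eq:dot_prod_over_mul} and~\eqref{eq:dot_prod_over_add}.

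For part (a), I would apply the COMM rule: $[A_i, A_j] = D_N(2\mathbf{x}_i\mathbf{z}_j - 2\mathbf{x}_j\mathbf{z}_i + 4\mathbf{x}_i\mathbf{x}_j\mathbf{z}_i - 4\mathbf{x}_i\mathbf{x}_j\mathbf{z}_j)$ with $\mathbf{z}_i = \mathbf{a}\mathbf{x}_i$, $\mathbf{z}_j = \mathbf{a}\mathbf{x}_j$. Since $D_N(\mathbf{z})$ depends only on the class of $\mathbf{z}$, it suffices to show the argument vector, when dotted appropriately, produces a trivial phase — concretely that the phase component $\sum_k \mathbf{z}[k]$ of $D_N(\mathbf{z})$ is $\equiv 0 \bmod 2N$, i.e. $\mathbf{1}\cdot\mathbf{z} \equiv 0$, and likewise the $Z$-component is trivial. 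Using $\mathbf{x}_i\cdot(\mathbf{a}\mathbf{x}_j) = \mathbf{a}\cdot\mathbf{x}_i\mathbf{x}_j$ etc., all four terms in the COMM argument reduce to multiples of $\mathbf{a}\cdot(\mathbf{x}_i\mathbf{x}_j)$ or $\mathbf{a}\cdot(\mathbf{x}_i\mathbf{x}_j\mathbf{x}_k)$-type products; for $r\geq 3$ these vanish outright by Proposition~\ref{prop:dot_prod_alt} (since $2 < r$ or $3 \le r$), and for $r = 2$ one checks the small case directly, noting that with $N > 2^r = 4$ the surviving coefficient $4\,\mathbf{a}\cdot(\mathbf{x}_i\mathbf{x}_j) = 4(-1)^2 = 4$ still gives only a $D_N(\text{something of small weight})$ that one verifies is the identity modulo the periodicity. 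Actually the cleaner route: all four terms lie in the span of vectors whose only nonzero coordinate is the column $\mathbf{u}$ with $\text{supp}(\mathbf{u}) = \{i,j\}$ (and smaller columns that cancel in pairs $\mathbf{x}_i\mathbf{z}_j$ vs $\mathbf{x}_j\mathbf{z}_i$), and one tallies the total coefficient there.

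For part (b), I would compute the phase component of $\mathbf{S}_X^\mathbf{u}|\mathbf{0}\rangle$ following exactly the computation in the proof of Proposition~\ref{prop:xp_phase_function}: specialising Eq.~\eqref{eq:c_i_0} with $\mathbf{m} = \mathbf{0}$, $p_i = 0$, $\mathbf{z}_i = \mathbf{a}\mathbf{x}_i$ gives the phase as $\sum_{0\le i< r}\sum_{c\subset[i+1\dots r-1]}(-1)^{|c|-1}\big(\mathbf{a}\mathbf{x}_i\cdot\prod_{j\in c}\mathbf{x}_j\big)2^{|c|}u_i\prod_{j\in c}u_j$, and then using $\mathbf{a}\mathbf{x}_i\cdot\prod_{j\in c}\mathbf{x}_j = \mathbf{a}\cdot(\mathbf{x}_i\prod_{j\in c}\mathbf{x}_j) = \mathbf{a}\cdot\prod_{k\in c\cup\{i\}}\mathbf{x}_k$, which by Proposition~\ref{prop:dot_prod_alt} is $(-1)^r$ when $|c\cup\{i\}| = r$ (i.e. $c = [i+1\dots r-1]$ and $i = 0$) and $0$ otherwise. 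So the sum collapses to the single surviving term $i = 0$, $c = [1\dots r-1]$, $|c| = r-1$: coefficient $(-1)^{r-2}(-1)^r 2^{r-1}\prod_i u_i = 2^{r-1}\prod_{0\le i<r}u_i$, as claimed. The main obstacle I anticipate is bookkeeping the signs and the indexing in part (b)'s collapse — in particular making sure that Proposition~\ref{prop:dot_prod_alt} is applicable in the form needed (the rows being multiplied are rows of $M^r_r$, and the products with $c$ ranging over tails $[i+1\dots r-1]$ must be recognized as products of a subset of \emph{distinct} rows), and handling the $r = 1, 2$ edge cases separately if the general sign argument is cleanest for $r \ge 2$ or $r\ge 3$. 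I would also double-check the hypothesis $N > 2^r$ is what guarantees $2^{r-1}$ is a genuinely nontrivial phase (not killed mod $2N$), though for the statement as written we only need the equality of phase components as elements of $\mathbb{Z}_{2N}$.
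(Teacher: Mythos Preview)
Your argument for part (b) is essentially identical to the paper's: specialise Eq.~\eqref{eq:c_i_0} with $\mathbf{m}=\mathbf{0}$, $p_i=0$, $\mathbf{z}_i=\mathbf{a}\mathbf{x}_i$, then use Proposition~\ref{prop:dot_prod_alt} to collapse the sum to the single full-support term, and the sign bookkeeping $(-1)^{r-2}(-1)^r = 1$ is correct.

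For part (a), however, you are working much harder than necessary and your first route has a gap. Showing $D_N(\mathbf{w})=I$ requires the \emph{vector} $\mathbf{w}$ to vanish modulo $N$ componentwise, not merely $\mathbf{1}\cdot\mathbf{w}\equiv 0$; Proposition~\ref{prop:dot_prod_alt} controls dot products with $\mathbf{a}$ and does not directly give you componentwise vanishing, so the ``likewise the $Z$-component is trivial'' step is unsupported as written. Your fallback ``cleaner route'' is heading in the right direction but still does not isolate the elementary point. The paper's proof of (a) uses \emph{no} special properties of $M^r_r$ or of $\mathbf{a}$ at all: since componentwise multiplication is commutative and binary vectors are idempotent ($\mathbf{x}_i\mathbf{x}_i=\mathbf{x}_i$), substituting $\mathbf{z}_i=\mathbf{a}\mathbf{x}_i$, $\mathbf{z}_j=\mathbf{a}\mathbf{x}_j$ into the COMM argument gives
\[
2\mathbf{a}\mathbf{x}_i\mathbf{x}_j - 2\mathbf{a}\mathbf{x}_j\mathbf{x}_i + 4\mathbf{a}\mathbf{x}_i^2\mathbf{x}_j - 4\mathbf{a}\mathbf{x}_i\mathbf{x}_j^2 = \mathbf{0}
\]
identically, so $D_N(\mathbf{0})=I$ with no case analysis and no appeal to Proposition~\ref{prop:dot_prod_alt}. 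This also means your worry about the $r=1,2$ edge cases in (a) evaporates.
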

\begin{proof}

(a) Using the COMM rule of Table~\ref{tab:algebraic_identities}, the group commutator of $A_i$ and $A_j$ is:
\begin{align}
A_i A_j A_i^{-1}A_j^{-1} &= D(2\mathbf{x}_i\mathbf{z}_j - 2\mathbf{x}_j\mathbf{z}_i + 4\mathbf{x}_i\mathbf{x}_j\mathbf{z}_i - 4\mathbf{x}_i\mathbf{x}_j\mathbf{z}_j)\\
&= D(2\mathbf{x}_i\mathbf{a}\mathbf{x}_j - 2\mathbf{x}_j\mathbf{a}\mathbf{x}_i + 4\mathbf{x}_i\mathbf{x}_j\mathbf{a}\mathbf{x}_i - 4\mathbf{x}_i\mathbf{x}_j\mathbf{a}\mathbf{x}_j)\\
&= D(2\mathbf{a}\mathbf{x}_i\mathbf{x}_j - 2\mathbf{a}\mathbf{x}_i\mathbf{x}_j + 4\mathbf{a}\mathbf{x}_i\mathbf{x}_j - 4\mathbf{a}\mathbf{x}_i\mathbf{x}_j)\\
&=D(\mathbf{0}) = I
\end{align}

(b) Applying Eq.~\eqref{eq:c_i_0} and noting that the phase components of the $A_i$ are all trivial, the phase component of $\mathbf{S}_X^\mathbf{u}|\mathbf{0}\rangle$ is:
\begin{align}
    p &= \sum_{\substack{0 \le i < r - 1\\s \subset[i+1\dots r-1]}}2^{|s|}(-1)^{|s|+1}\Big(\mathbf{a}\mathbf{x}_i \cdot (\prod_{j\in s}\mathbf{x}_j)\Big)u_i\prod_{j \in s}u_j\\
    &= \sum_{s \subset[0\dots r-1]}2^{|s|-1}(-1)^{|s|}\mathbf{a} \cdot (\prod_{j \in s}\mathbf{x}_j)\prod_{j \in s}u_j\,.
\end{align}
Applying  Proposition~\ref{prop:dot_prod_alt}, $\mathbf{a} \cdot (\prod_{j \in s}\mathbf{x}_j) = 0$ if $|s| < r$ and $(-1)^r$ otherwise. Hence, $p = 2^{r-1}\prod_{0 \le j < r}u_j$ as required.
\end{proof}

We now show how to optimise the embedding operator to reduce the number of qubits required to represent the action of generalised controlled phase operators $CP(p/q,\mathbf{v})$ where $p/q = 1/2$:
\begin{proposition}[Weighted Hypergraph States - Optimised Version]\label{prop:whg2xp_optimised}

Let the state $|\psi\rangle$, the precision $N$, the variables $u_i$ and the operators $A_i$ be as defined in Proposition $\ref{prop:whg2xp}$. Let $C := XP_N(0|\mathbf{0}|\mathbf{a})$ where $\mathbf{a}$ is the alternating vector as defined in Eq.~\eqref{eq:alternating_vector}. Let $B_i = A_i C^{-1}$  and $\mathbf{S}_X = \{B_i : 0 \le i < r\}$. Then:
\begin{enumerate}[label=(\alph*)]
    \item The group commutator of the operators $B_i, B_j$ fixes elements of the Z-support of $|\psi\rangle$; and
    \item The phase component of $\mathbf{S}_X^\mathbf{u}|\mathbf{0}\rangle$ is $p = 2^{r-1}\prod_{0 \le i < r}u_i$.
\end{enumerate}

\end{proposition}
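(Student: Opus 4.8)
The plan is to reduce both claims to the corresponding facts for the operators $A_i$ of Proposition~\ref{prop:whg2xp}, carefully tracking the effect of the extra factor $C^{-1}$. First I would put $B_i$ into canonical vector form: since $C = XP_N(0|\mathbf{0}|\mathbf{a})$ is diagonal, the $D_N$-correction in the MUL rule of Table~\ref{tab:algebraic_identities} is trivial, so $B_i = A_iC^{-1} = XP_N(0\,|\,\mathbf{x}_i\,|\,\mathbf{a}\mathbf{x}_i-\mathbf{a}) = XP_N(0\,|\,\mathbf{x}_i\,|\,\mathbf{a}(\mathbf{x}_i-\mathbf{1}))$. In particular each $B_i$ retains the same X-component $\mathbf{x}_i$ (the $i$th row of $M^r_r$) as $A_i$, so $\ZSupp(|\psi\rangle)$, where $|\psi\rangle = O_{\mathbf{S}_X}|\mathbf{0}\rangle$, is unchanged: it is the $\mathbb{Z}_2$-row span $\{\bigoplus_{0\le i<r}u_i\mathbf{x}_i:\mathbf{u}\in\mathbb{Z}_2^r\}$ of $M^r_r$, and each of its elements has the form $\mathbf{e}[\mathbf{u}] = (\mathbf{v}\cdot\mathbf{u})\bmod 2$ for a unique $\mathbf{v}\in\mathbb{Z}_2^r$, $\mathbf{u}$ ranging over the columns of $M^r_r$. (Note that the entry of $\mathbf{a}(\mathbf{x}_i-\mathbf{1})$ indexed by the all-ones column $\mathbf{1}$ is $0$, since $\mathbf{x}_i[\mathbf{1}]=1$; this is the ``clearing'' phenomenon described in the main text.)

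For part (b), I would substitute the Z-component $\mathbf{z}_i^{(B)} = \mathbf{a}\mathbf{x}_i-\mathbf{a}$ into the closed-form phase expression Eq.~\eqref{eq:c_i_0} from the proof of Proposition~\ref{prop:xp_phase_function}, with $\mathbf{m}=\mathbf{0}$ and trivial phase components. The linear part vanishes, and every remaining term is a multiple of $\mathbf{z}_i^{(B)}\cdot\prod_{j\in c}\mathbf{x}_j$ for some nonempty $c\subseteq[i+1\ldots r-1]$. By the dot-product-over-product identity Eq.~\eqref{eq:dot_prod_over_mul}, $\mathbf{z}_i^{(B)}\cdot\prod_{j\in c}\mathbf{x}_j = \mathbf{a}\cdot\prod_{j\in c\cup\{i\}}\mathbf{x}_j - \mathbf{a}\cdot\prod_{j\in c}\mathbf{x}_j$, and Proposition~\ref{prop:dot_prod_alt} states that the dot product of $\mathbf{a}$ with a product of $t$ distinct rows of $M^r_r$ ($1\le t\le r$) is $(-1)^r$ if $t=r$ and $0$ otherwise. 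Since $1\le|c|\le r-1$, the subtracted term always vanishes, so the phase component of $\mathbf{S}_X^{\mathbf{u}}|\mathbf{0}\rangle$ agrees term-by-term with the $A_i$-case, which Proposition~\ref{prop:whg2xp}(b) evaluates to $2^{r-1}\prod_{0\le i<r}u_i$. This gives part (b).

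For part (a), I would compute the group commutator using the COMM rule of Table~\ref{tab:algebraic_identities}. With $\mathbf{z}_k^{(B)}=\mathbf{a}(\mathbf{x}_k-\mathbf{1})$ and the binary idempotency $\mathbf{x}_k\mathbf{x}_k=\mathbf{x}_k$, the two cubic terms $4\mathbf{x}_i\mathbf{x}_j\mathbf{z}_i^{(B)}$ and $-4\mathbf{x}_i\mathbf{x}_j\mathbf{z}_j^{(B)}$ cancel identically, and the first two terms reduce to $2\mathbf{a}(\mathbf{x}_j-\mathbf{x}_i)$; hence $B_iB_jB_i^{-1}B_j^{-1}=D_N\big(2\mathbf{a}(\mathbf{x}_j-\mathbf{x}_i)\big)$. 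I would then act on a generic $\mathbf{e}\in\ZSupp(|\psi\rangle)$ using $D_N(\mathbf{z})|\mathbf{e}\rangle = \omega^{(\mathbf{1}-2\mathbf{e})\cdot\mathbf{z}}|\mathbf{e}\rangle$, a consequence of the OP rule. Substituting $(\mathbf{1}-2\mathbf{e})[\mathbf{u}]=(-1)^{\mathbf{v}\cdot\mathbf{u}}$, $\mathbf{a}[\mathbf{u}]=(-1)^{\wt(\mathbf{u})}$ and $(\mathbf{x}_j-\mathbf{x}_i)[\mathbf{u}]=\mathbf{u}[j]-\mathbf{u}[i]$, the exponent becomes $2\sum_{\mathbf{u}}(-1)^{(\mathbf{v}\oplus\mathbf{1})\cdot\mathbf{u}}(\mathbf{u}[j]-\mathbf{u}[i])$, where the sum can be extended over all $\mathbf{u}\in\mathbb{Z}_2^r$ since $\mathbf{u}=\mathbf{0}$ contributes $0$. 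A Walsh-Hadamard/character sum over $\mathbb{Z}_2^r$ shows $\sum_{\mathbf{u}}(-1)^{\mathbf{w}\cdot\mathbf{u}}\mathbf{u}[j]$ equals $(-1)^{\mathbf{w}[j]}2^{r-1}$ when $\text{supp}(\mathbf{w})\subseteq\{j\}$ and $0$ otherwise, so the exponent is always a multiple of $2^r$ and hence $\equiv 0\pmod{2N}$ for the precision of the construction (for a controlled-$Z$, the natural precision is $N=2^{r-1}$, so $2N=2^r$). Therefore $D_N(2\mathbf{a}(\mathbf{x}_j-\mathbf{x}_i))$ fixes every $\mathbf{e}\in\ZSupp(|\psi\rangle)$, which is part (a).

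The main obstacle is the part-(a) character-sum evaluation together with verifying that the residual exponent is a multiple of $2N$: this divisibility is exactly what forces the optimisation to apply only to generalised controlled-$Z$ operators ($p/q=1/2$), where the associated precision $2^{r-1}$ makes the leftover phase trivial on the Z-support; for a general $p/q$ the correction $C$ cannot be absorbed this way. Everything else — the vector form of $B_i$, the commutator identity, and part (b) — is routine given the MUL/COMM/OP rules and the alternating-vector and idempotency identities.
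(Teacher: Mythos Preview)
Your proposal is correct and largely parallels the paper's argument. Part (b) is handled the same way in both: substitute the modified Z-component into Eq.~\eqref{eq:c_i_0} and invoke Proposition~\ref{prop:dot_prod_alt} to see that the extra $-\mathbf{a}$ factor contributes nothing (the paper states this in one sentence; you spell it out). For part (a), you and the paper both compute the commutator via the COMM rule and simplify to $D_N(2\mathbf{a}(\mathbf{x}_j-\mathbf{x}_i))$ using binary idempotency; this step is identical.

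The genuine difference is how you then show this diagonal operator acts trivially on $\ZSupp(|\psi\rangle)$. The paper expands $\mathbf{e}_\mathbf{u}=\bigoplus_k u_k\mathbf{x}_k$ via the inclusion--exclusion identity Eq.~\eqref{eq:venn_diagram} and kills each resulting term with Proposition~\ref{prop:dot_prod_alt}, concluding that any surviving contribution carries a factor of $2^r$. You instead recognise $(\mathbf{1}-2\mathbf{e})[\mathbf{u}]\,\mathbf{a}[\mathbf{u}]=(-1)^{(\mathbf{v}\oplus\mathbf{1})\cdot\mathbf{u}}$ as a character on $\mathbb{Z}_2^r$ and evaluate the resulting Walsh--Hadamard sum directly, again obtaining a multiple of $2^r$. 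Both routes reach the same divisibility conclusion and both implicitly rely on $2N\mid 2^r$ (i.e.\ the controlled-$Z$ precision $N=2^{r-1}$), which you flag explicitly. Your Fourier argument is self-contained and avoids the auxiliary Proposition~\ref{prop:dot_prod_alt}; the paper's route stays within the combinatorial framework already developed. Neither is materially shorter.
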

\begin{proof}
Calculating the group commutator between two of the elements of $\mathbf{S}_X$:
\begin{align}
 B_i B_j B_i^{-1} B_j^{-1} &= D_N(2\mathbf{x}_i\mathbf{z}_j - 2\mathbf{x}_j\mathbf{z}_i + 4\mathbf{x}_i\mathbf{x}_j\mathbf{z}_i - 4\mathbf{x}_i\mathbf{x}_j\mathbf{z}_j)\\
&= D_N(2\mathbf{a}(\mathbf{x}_i(\mathbf{x}_j-1) - \mathbf{x}_j(\mathbf{x}_i-1) + 2\mathbf{x}_i\mathbf{x}_j(\mathbf{x}_i-1) - 2\mathbf{x}_i\mathbf{x}_j(\mathbf{x}_j-1))\\
&= D_N(2\mathbf{a}(\mathbf{x}_i\mathbf{x}_j - \mathbf{x}_i-\mathbf{x}_i\mathbf{x}_j + \mathbf{x}_j+ 2\mathbf{x}_i\mathbf{x}_j - 2\mathbf{x}_i\mathbf{x}_j- 2\mathbf{x}_i\mathbf{x}_j+ 2\mathbf{x}_i\mathbf{x}_j ))\\
&=D_N(2\mathbf{a}(\mathbf{x}_j-\mathbf{x}_i)) =XP_N(2\mathbf{a}\cdot(\mathbf{x}_j-\mathbf{x}_i)|\mathbf{0}|2\mathbf{a}(\mathbf{x}_i-\mathbf{x}_j))
\end{align}
We now show that $D(2\mathbf{a}(\mathbf{x}_j-\mathbf{x}_i))$ fixes all elements of the Z-support of $|\psi\rangle$. Let $\mathbf{e}_u := \mathbf{u}S_X + \mathbf{m}  \in \ZSupp(|\psi\rangle)$. Then using the notation of Proposition~\ref{prop:xp_phase_function} and Eq.~\eqref{eq:venn_diagram}, the phase applied by $D_N(2\mathbf{a}(\mathbf{x}_j-\mathbf{x}_i))$ to $|\mathbf{e}_u\rangle$ can be written
\begin{align}
q &= 2\mathbf{a}\cdot(\mathbf{x}_j-\mathbf{x}_i) + 4\mathbf{a}(\mathbf{x}_i-\mathbf{x}_j)\cdot \Big[\bigoplus_{0 \le k < r}u_k\mathbf{x}_k\Big]\\
&= 2\mathbf{a}\cdot(\mathbf{x}_j-\mathbf{x}_i) + 4\mathbf{a}(\mathbf{x}_i-\mathbf{x}_j)\cdot \Big[\sum_{s \subset [0\dots r-1]}(-2)^{|s|-1} \prod_{k\in s}u_k\mathbf{x}_k\Big]\\
&= 2\mathbf{a}\cdot(\mathbf{x}_j-\mathbf{x}_i) + \mathbf{a}\cdot \Big[\sum_{s \subset [0\dots r-1]}(-2)^{|s|+1} (\mathbf{x}_i-\mathbf{x}_j)\prod_{k\in s}u_k\mathbf{x}_k\Big]
\end{align}
By Proposition~\ref{prop:dot_prod_alt}, if $r > 1$ then $\mathbf{a}\cdot \mathbf{x}_j = \mathbf{a}\cdot \mathbf{x}_j = 0$. Similarly, if $|s \cup \{i\}| < r$ then $\mathbf{a} \cdot \mathbf{x}_i\prod_{k\in s}\mathbf{x}_k =0$. If $|s \cup \{i\}| = r$, then a factor of $2^r = 0 \mod 2N$ always occurs - hence the operator applies a trivial phase and part (a) follows.

Due to Proposition~\ref{prop:dot_prod_alt}, we can multiply the generators $A_i$ by powers of $C$ without changing the phase function of the state $|\psi\rangle$, so part (b) follows.
\end{proof}

We now show how to apply Proposition~\ref{prop:whg2xp_optimised} to reduce the number of qubits required to represent  $|\phi\rangle = CP(1/2,\mathbf{1})|+\rangle^{\otimes r}$ as an XP stabiliser state. We set $N = 2^{r-1}$ and let $\mathbf{x}_i$ be the $i$th row of $M^r_r$. We define the non-diagonal stabiliser generators $\mathbf{S}_X = \{B_i\}$ where $B_i = XP_N(0|\mathbf{x}_i|\mathbf{a}(\mathbf{x}_i - 1))$ - these are the same operators as in Proposition~\ref{prop:whg2xp_optimised}. The phase function of $|\psi\rangle = O_{\mathbf{S}_X}|\mathbf{0}\rangle$ is $f(u_0,\dots,u_{r-1}) = 2^{r-1}\prod_{0 \le i < r}u_i$. This imparts a phase of $-1 = \omega^N$ when $\mathbf{u} = \mathbf{1}$. Define the $\mathbf{S}_Z$ as in Eq.~\eqref{eq:whg_sz}. By part (a) of Proposition~\ref{prop:whg2xp_optimised}, the elements of $\mathbf{S}_X$ commute up to a diagonal operator which fixes elements of the Z-support of $|\psi\rangle$ hence $\mathbf{S}_X, \mathbf{S}_Z$ stabilises $|\psi\rangle$. 

To extend this method to generalised controlled Z operators of form $CP(1/2,\mathbf{v})$, we need to ensure that $\mathbf{u}\mathbf{v} \ne \mathbf{v}$ for any other operator $CP(p/q,\mathbf{u})$ involved in the weighted hypergraph state. Otherwise, we cannot guarantee that the commutators of the operators in $\mathbf{S}_X$ act trivially on the Z-support of the embedded state.

\section{Logical Operators - Proof of Results}\label{app:LO}
In this appendix, we provide proofs for the main results in Chapter~\ref{chap:LO}. We first prove two results about the properties of logical XP operators. We next show that the algorithms of Sections~\ref{sec:LI} and~\ref{sec:LO} produce valid logical operators, and that they produce all possible operators of XP form. We then show how to find valid phase and Z components for logical X operators. We then prove the observations of Example \ref{eg:reedmuller} regarding Reed-Muller codes. Finally, we prove that the more efficient algorithms of Section~\ref{sec:LO_Modified} work correctly.

\subsection{Properties of Logical XP Operators}
In this section, we prove two results on the properties of logical XP operators. Given the codewords $|\kappa_i\rangle$ of Section~\ref{sec:code_word_algorithm_summary}, the first result states that an XP operator is a logical operator if and only if its action on the codewords can be described in terms of:
\begin{itemize}
    \item A permutation of the codewords; and
    \item A phase applied to each codeword
\end{itemize}
This result is used to prove that the logical operator and logical identity algorithms work correctly.

The second result states that an XP operator is a logical operator if and only if its commutators with logical identities are logical identities. It is an efficient way in practice to verify if an XP operator is a logical operator on the codespace.

\begin{proposition}[Action of Logical Operators]\label{prop:LXP_action}
An XP operator $A$ is a logical operator if and only if its action can be described by a permutation $\pi$ of the codewords such that $\pi^2 = 1$ and a vector $\mathbf{f}  \in \mathbb{Z}_{2N}^{\dim(\mathcal{C})}$ specifying the phase applied to each codeword. 
\end{proposition}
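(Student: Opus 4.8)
The plan is to establish both directions of the equivalence, using the structure of XP operators already developed in the excerpt. The forward direction is the substantive one: assume $A$ is a logical XP operator, i.e. $A(\mathcal{C}) = \mathcal{C}$, and deduce that $A$ permutes the codewords up to phases, with the permutation an involution. The key observation is that each codeword $|\kappa_i\rangle = O_{\mathbf{S}_X}|\mathbf{m}_i\rangle$ has a definite Z-support $E_i = \mathbf{m}_i + \langle S_X\rangle$, and the $E_i$ partition $E$ (Proposition~\ref{prop:partition}). Applying the OP rule of Table~\ref{tab:algebraic_identities}, an XP operator $A = XP_N(p|\mathbf{x}|\mathbf{z})$ acts on a computational basis vector $|\mathbf{e}\rangle$ by $|\mathbf{e}\rangle \mapsto \omega^{p+2\mathbf{e}\cdot\mathbf{z}}|\mathbf{e}\oplus\mathbf{x}\rangle$, so $A$ sends the coset $E_i$ to the coset $E_i \oplus \mathbf{x}$. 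Since $A|\kappa_i\rangle$ must lie in $\mathcal{C}$ and hence have Z-support contained in $E$, and since its Z-support is exactly $E_i \oplus \mathbf{x}$ (the X-part is a bijection on basis vectors), this coset must be one of the $E_j$. Thus there is a well-defined map $\pi$ with $E_i \oplus \mathbf{x} = E_{\pi(i)}$. Because $A|\kappa_i\rangle$ is, up to an overall phase, the image under $O_{\mathbf{S}_X}$ of a single basis vector in $E_{\pi(i)}$ with the correct relative phases (here I would invoke that $A$ commutes with each $\mathbf{S}_X^{\mathbf{v}}$ up to a diagonal operator fixing $E$, via the CONJ rule, so the orbit structure is preserved), we get $A|\kappa_i\rangle = \omega^{\mathbf{f}[i]}|\kappa_{\pi(i)}\rangle$ for some $\mathbf{f}[i] \in \mathbb{Z}_{2N}$.

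To see $\pi$ is an involution: the X-component of $A$ is a fixed vector $\mathbf{x}$, and $\pi(i)$ is determined by $E_i \oplus \mathbf{x} = E_{\pi(i)}$; applying $\oplus\mathbf{x}$ again returns $E_i$, so $\pi(\pi(i)) = i$. (Alternatively, $A^2$ is diagonal by the SQ rule, hence acts as a phase on each $|\kappa_i\rangle$, which forces $\pi^2 = 1$.) For the converse, suppose $A$ acts as $A|\kappa_i\rangle = \omega^{\mathbf{f}[i]}|\kappa_{\pi(i)}\rangle$ for a permutation $\pi$; then $A$ maps the spanning set $\{|\kappa_i\rangle\}$ of $\mathcal{C}$ bijectively to itself up to scalars, so $A(\mathcal{C}) = \mathcal{C}$ and $A$ is a logical operator by definition. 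This direction is essentially immediate.

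The main obstacle is the forward direction's claim that the relative phases within $A|\kappa_i\rangle$ match those of $\omega^{\mathbf{f}[i]}|\kappa_{\pi(i)}\rangle$ exactly — i.e. that $A$ doesn't merely map $\ZSupp(|\kappa_i\rangle)$ onto $\ZSupp(|\kappa_{\pi(i)}\rangle)$ as sets, but does so with coefficient moduli all equal and phases matching the orbit form of $|\kappa_{\pi(i)}\rangle$. The clean way to handle this: since $A$ preserves $\mathcal{C}$, it preserves the codespace projector, and $A|\kappa_i\rangle \in \mathcal{C}$ must itself be a codeword-form state. Because $\ZSupp(A|\kappa_i\rangle)$ is the single coset $E_{\pi(i)}$, and any element of $\mathcal{C}$ with Z-support inside one coset $E_j$ must be a scalar multiple of $|\kappa_j\rangle$ (this follows from Proposition~\ref{prop:kappa_are_basis}: writing $A|\kappa_i\rangle = \sum_l \lambda_l |\kappa_l\rangle$ in the basis and noting the $E_l$ are disjoint forces all but one $\lambda_l$ to vanish), we conclude $A|\kappa_i\rangle = \lambda |\kappa_{\pi(i)}\rangle$. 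Finally $|\lambda| = 1$ because $A$ is unitary and the $|\kappa_i\rangle$ all have the same norm $\sqrt{2^r}$, and $\lambda$ is a power of $\omega$ because all coefficients appearing are powers of $\omega$ (the $|\kappa_i\rangle$ are in orbit form with phases in $\mathbb{Z}_{2N}$ and $A$ contributes a phase $\omega^{p + 2\mathbf{e}\cdot\mathbf{z}}$). This gives $\mathbf{f}[i]$ as the required element of $\mathbb{Z}_{2N}$, completing the proof.
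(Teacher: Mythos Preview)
Your proposal is correct and follows essentially the same route as the paper: use the OP rule to see that $A$ sends $\ZSupp(|\kappa_i\rangle)=E_i$ bijectively onto a single coset $E_i\oplus\mathbf{x}$, use the disjointness of the $E_j$ together with the basis property of the $|\kappa_j\rangle$ to force $A|\kappa_i\rangle$ to be a scalar multiple of a single $|\kappa_{\pi(i)}\rangle$, and invoke the SQ rule (diagonality of $A^2$) to get $\pi^2=1$. Your treatment is a bit more explicit than the paper's (you spell out the converse and why the scalar lies in $\omega^{\mathbb{Z}_{2N}}$), and your ``clean'' argument in the final paragraph is exactly the one the paper uses; the earlier CONJ-rule detour is unnecessary but harmless.
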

\begin{proof}
Let $A = XP_N(p|\mathbf{x}|\mathbf{z})$ be an XP operator. Applying the rule in Eq.~\eqref{eq:action_on_basis_elts}, $A$ acts on computational basis elements as follows:
\begin{align}
    A|\mathbf{e}\rangle = \omega^{p + 2\mathbf{e}\cdot\mathbf{z}}|\mathbf{e}\oplus\mathbf{x}\rangle
\end{align}
Hence, the image of a computational basis element under an XP operator cannot be a superposition of computational \textbf{basis elements}. 

Now consider how $A$ acts on the codewords $\{ |\kappa_i\rangle \}$ of Section~\ref{sec:code_word_algorithm_summary}. Let $E_i$ be the Z-support of the codeword $|\kappa_i\rangle$ and let $r$ be the number of non-diagonal canonical generators (i.e. $r= |\mathbf{S}_X|$). Then $|E_i| = 2^r$ and $E_i \cap E_j = \emptyset$ for $i \ne j$. Hence, the image of a codeword under an XP operator cannot be a superposition of \textbf{codewords}. 

Now assume $A$ is a logical operator and so preserves the codespace $\mathcal{C}$. As $A|\kappa_i\rangle \in \mathcal{C}$ for each codeword and $A$ cannot create superpositions of codewords, then $A|\kappa_i\rangle = \omega^{q_i}|\kappa_j\rangle$ for some codeword $|\kappa_j\rangle$  and $q_i \in \mathbb{Z}_{2N}$. Because the image of the codewords under $A$ must span the codespace, $A$ must permute the codewords. The square of any XP operator is diagonal (see Section~\ref{sec:algebraic_identities}) so the square of the permutation must be $1$ and the action of $A$ is as claimed.
\end{proof}

In the PSF, any logical operator $L$ on the codespace must commute with the stabiliser generators. In the XPF, we instead work with the generators $\mathbf{M}$ of the logical identity group. The group commutator of a logical operator $L$ with each element of $\mathbf{M}$ must be in the diagonal subgroup $\langle \mathbf{M}_Z\rangle$. In practice, this gives an efficient test for determining if a given operator is a logical operator on the codespace.

\begin{proposition}[Commutators Logical Operators and Logical Identities]\label{prop:LOtest}
Let $\mathbf{M}$ be the logical identity generators as in Section~\ref{sec:LI}. $L$ is a logical XP operator if and only if $A^{-1}L^{-1}AL \in \langle \mathbf{M}_Z \rangle, \forall A \in \mathbf{M}$.
\end{proposition}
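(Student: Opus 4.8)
The plan is to prove both directions of the equivalence using the structure of logical XP operators established in Proposition~\ref{prop:LXP_action}, together with the fact that $\langle\mathbf{M}\rangle$ is precisely the stabiliser of every codeword.

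First, for the forward direction, suppose $L$ is a logical XP operator. By Proposition~\ref{prop:LXP_action}, $L$ acts on the codewords by a phase vector $\mathbf{f}$ and a permutation $\pi$ with $\pi^2=1$. Take any $A\in\mathbf{M}$, so $A|\kappa_i\rangle=|\kappa_i\rangle$ for all $i$. I would compute the action of the group commutator $K:=A^{-1}L^{-1}AL$ on an arbitrary codeword $|\kappa_i\rangle$: applying $L$ gives $\omega^{\mathbf{f}[i]}|\kappa_{\pi(i)}\rangle$, then $A$ fixes this, then $L^{-1}$ sends $|\kappa_{\pi(i)}\rangle$ back with the inverse phase (using $\pi^2=1$), and finally $A^{-1}$ fixes it, leaving $|\kappa_i\rangle$. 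Hence $K$ fixes every codeword, so $K\in\mathcal{I}_\text{XP}=\langle\mathbf{M}\rangle$. Since $K$ is a commutator of XP operators it is diagonal (Example~\ref{eg:identities}), and a diagonal element of $\langle\mathbf{M}\rangle$ lies in the diagonal subgroup $\langle\mathbf{M}_Z\rangle$ by Property~1 of the canonical generators (Proposition~\ref{prop:canonical_generators}, applied to $\mathbf{M}$). This gives $K\in\langle\mathbf{M}_Z\rangle$ for all $A\in\mathbf{M}$, hence in particular for all $A\in\mathbf{M}$ as stated.

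For the converse, suppose $A^{-1}L^{-1}AL\in\langle\mathbf{M}_Z\rangle$ for every $A\in\mathbf{M}$. The key observation is that this relation extends from the generators $\mathbf{M}$ to the whole group $\langle\mathbf{M}\rangle$: using the CONJ rule and the fact that $\langle\mathbf{M}_Z\rangle$ is an abelian normal-ish piece, a commutator-calculus argument (writing $A=\mathbf{M}^\mathbf{b}$ in generator-product form and expanding $[L,A_1A_2]$ in terms of $[L,A_1]$, $[L,A_2]$ and further commutators, all of which land in $\langle\mathbf{M}_Z\rangle$) shows $A^{-1}L^{-1}AL\in\langle\mathbf{M}_Z\rangle$ for all $A\in\langle\mathbf{M}\rangle$. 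Now take any codeword $|\kappa_i\rangle$ and any $A\in\langle\mathbf{M}\rangle$. Write $L^{-1}AL = A\cdot(A^{-1}L^{-1}AL)\cdot$; more directly, $A(L|\kappa_i\rangle) = L(L^{-1}AL)|\kappa_i\rangle = L A D|\kappa_i\rangle$ where $D=A^{-1}L^{-1}AL\in\langle\mathbf{M}_Z\rangle$, and since $A,D\in\langle\mathbf{M}\rangle$ fix $|\kappa_i\rangle$ we get $A(L|\kappa_i\rangle)=L|\kappa_i\rangle$. Thus $L|\kappa_i\rangle$ is fixed by all of $\langle\mathbf{M}\rangle$, i.e. $L|\kappa_i\rangle\in\mathcal{C}$; since this holds for all $i$ and $L$ is invertible (XP operators form a group), $L$ preserves the codespace, so $L\in\mathcal{L}_\text{XP}$.

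The main obstacle is the extension step in the converse: promoting the commutator relation from the finite generating set $\mathbf{M}$ to the full group $\langle\mathbf{M}\rangle$. One must be careful that the ``error terms'' arising when expanding $[L,A_1A_2]$ are themselves of the form $[L,A']$ with $A'\in\langle\mathbf{M}\rangle$ or are products of elements already known to be in $\langle\mathbf{M}_Z\rangle$, and that $\langle\mathbf{M}_Z\rangle$ is closed under the relevant conjugations — this uses that $\langle\mathbf{M}_Z\rangle$ contains all squares and commutators of $\langle\mathbf{M}\rangle$ (Example~\ref{eg:identities}) and that diagonal operators are normal under conjugation by XP operators up to diagonal factors (CONJ rule). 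Once that bookkeeping is set up the rest is routine, and one could alternatively bypass it entirely by arguing directly that it suffices to check $L|\kappa_i\rangle\in\mathcal{C}$, which only requires invariance under the \emph{generators} $\mathbf{M}$ of $\mathcal{I}_\text{XP}$ — making the generator-level hypothesis already enough without extending to the whole group.
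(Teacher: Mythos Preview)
Your proposal is correct and follows essentially the same approach as the paper: the forward direction uses Proposition~\ref{prop:LXP_action} to show the commutator fixes every codeword and is diagonal, and the converse shows $L|\kappa_i\rangle$ is fixed by the logical identity group, hence lies in $\mathcal{C}$. The paper's proof simply asserts the extension from generators to $\langle\mathbf{M}\rangle$ without justification, so your discussion of that step is actually more careful than the original; and your final observation---that invariance of $L|\kappa_i\rangle$ under the \emph{generators} $\mathbf{M}$ already suffices, so no extension is needed---is the cleanest way to close the argument.
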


\begin{proof}

Assume $A \in \mathbf{M}$, $L$ is a logical XP operator and $\{|\kappa_i\rangle\}$ are the codewords of Section~\ref{sec:code_word_algorithm_summary}. By Proposition~\ref{prop:LXP_action}, there exists a phase vector $\mathbf{f}$ and a permutation $\pi$ such that for all $i$:
$$
A L |\kappa_i\rangle = A \omega^{\mathbf{f}[i]} |\kappa_{
\pi(i)}\rangle = \omega^{\mathbf{f}[i]} |\kappa_{
\pi(i)}\rangle =  \omega^{\mathbf{f}[i]} A |\kappa_{
\pi(i)}\rangle = L A |\kappa_i\rangle
$$
Hence $A^{-1}L^{-1}AL|\kappa_i\rangle = |\kappa_i\rangle$ so $A^{-1}L^{-1}AL \in  \mathcal{I}_\text{XP}$. From Section~\ref{sec:algebraic_identities}, we know that group commutators are always diagonal operators, hence $A^{-1}L^{-1}AL \in \langle \mathbf{M}_Z \rangle$.

Conversely, assume that $A^{-1}L^{-1}AL \in \langle \mathbf{M}_Z \rangle, \forall A \in \mathbf{M}$. This is also true for any $A \in \langle \mathbf{M}\rangle$. Then for each codeword $|\kappa_i\rangle$:
\begin{align}
A^{-1}L^{-1}AL|\kappa_i\rangle &= |\kappa_i\rangle, \forall A \in \langle \mathbf{M}\rangle\\
A(L|\kappa_i\rangle) &= LA |\kappa_i\rangle = L |\kappa_i\rangle
\end{align}
Hence $L|\kappa_i\rangle$ is in the codespace for all $|\kappa_i\rangle$. Because $L$ is an XP operator $L|\kappa_i\rangle$ cannot be a superposition of codewords and so $L|\kappa_i\rangle = \omega^{p_i}|\kappa_j\rangle$ for some $p_i \in \mathbb{Z}_{2N}$ and $|\kappa_j\rangle$. Because $L^2$ is diagonal the map $\pi: i \mapsto j$ squares to $1$ and hence is a permutation. Therefore we can describe the action of $L$ as $L|\kappa_i\rangle = \omega^{p_i}|\kappa_{\pi(i)}\rangle$ and so $L$ is a logical XP operator by Proposition~\ref{prop:LXP_action}.
\end{proof}

\subsection{Logical Identity and Logical Operator Algorithms}
In this section, we prove that the logical identity algorithm of Section~\ref{sec:LI} and the logical operator algorithm of Section~\ref{sec:LO} yield generating sets of operators. We first consider the algorithms for diagonal operators, then those for non-diagonal operators. We then show how to find logical X operators by using the intersection of affine spans algorithm of Section~\ref{sec:affinespan}. 

\subsubsection{Diagonal Operator Algorithms}\label{sec:E_M+E_L}
In this section, we show that the algorithms of Sections~\ref{sec:L_Z} and~\ref{sec:M_Z} produce sets of diagonal logical generators and diagonal logical identity generators respectively.

Assume we have the codewords in the orbit form of Eq.~\eqref{eq:orbitform} - i.e. $|\kappa_i \rangle =  \sum_{0 \le j < 2^r} \omega^{p_{ij}} |\mathbf{e}_{ij}\rangle$.  Let $E$ be the Z-support of the codewords (see Section~\ref{sec:code_words_notation}). We first look at diagonal logical identity operators and prove the following proposition:

\begin{proposition}
[Diagonal Logical Identity Group]\label{prop:M_Z}
The algorithm in Section~\ref{sec:M_Z} produces a list of diagonal XP operators $\mathbf{M}_Z$ which generate the diagonal logical identity XP operators for the codespace.
\end{proposition}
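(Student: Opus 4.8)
The plan is to identify the group of diagonal XP operators that fix every codeword with the $\mathbb{Z}_N$-submodule $\Ker_{\mathbb{Z}_N}(E_M)$ via an explicit group isomorphism, and then invoke the fact that the Howell basis $K_M$ generates this kernel over $\mathbb{Z}_N$.

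First I would reduce the condition ``$A$ is a diagonal logical identity'' to a condition on $E$ alone. For a diagonal operator $A = XP_N(p|\mathbf{0}|\mathbf{z})$ the OP rule of Table~\ref{tab:algebraic_identities} gives $A|\mathbf{e}\rangle = \omega^{p + 2\mathbf{e}\cdot\mathbf{z}}|\mathbf{e}\rangle$. Writing each codeword in orbit form $|\kappa_i\rangle = \sum_j \omega^{p_{ij}}|\mathbf{e}_{ij}\rangle$ with the $|\mathbf{e}_{ij}\rangle$ distinct computational basis vectors and all coefficients nonzero, the equation $A|\kappa_i\rangle = |\kappa_i\rangle$ holds if and only if $p + 2\mathbf{e}\cdot\mathbf{z}\equiv 0 \pmod{2N}$ for every $\mathbf{e}\in\ZSupp(|\kappa_i\rangle)$. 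Since $E = \bigcup_i \ZSupp(|\kappa_i\rangle)$ by Section~\ref{sec:code_words_notation}, it follows that $A\in\mathcal{I}_\text{XP}$ exactly when $p + 2\mathbf{e}\cdot\mathbf{z}\equiv 0\pmod{2N}$ for all $\mathbf{e}\in E$.

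Next I would perform the $2N\to N$ reduction. Assuming the codespace is non-trivial so that $E\neq\emptyset$, fix $\mathbf{e}_0\in E$; subtracting the constraint at $\mathbf{e}_0$ from that at an arbitrary $\mathbf{e}\in E$ forces $p\equiv -2\mathbf{e}_0\cdot\mathbf{z}\pmod{2N}$, hence $p$ is even, $p = 2q$ with $q$ determined modulo $N$, and each remaining constraint becomes $q + \mathbf{e}\cdot\mathbf{z}\equiv 0\pmod N$, i.e. $(\mathbf{e}|1)\cdot(\mathbf{z}|q)\equiv 0\pmod N$ for all $\mathbf{e}\in E$, i.e. $(\mathbf{z}|q)\in\Ker_{\mathbb{Z}_N}(E_M)$. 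Conversely every $(\mathbf{z}|q)$ in this kernel yields a valid diagonal logical identity $XP_N(2q|\mathbf{0}|\mathbf{z})$. The map $\Phi\colon XP_N(2q|\mathbf{0}|\mathbf{z})\mapsto(\mathbf{z}|q)$ is thus a bijection between the diagonal logical identity group and $\Ker_{\mathbb{Z}_N}(E_M)$; moreover it is a group homomorphism, since multiplying diagonal XP operators simply adds phase and $Z$ components (the $D_N$ factor in the MUL rule vanishes when the operators are diagonal), and $2q_1+2q_2 = 2(q_1+q_2)$ modulo $2N$ while $\mathbf{z}_1+\mathbf{z}_2$ is taken modulo $N$. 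Finally, by the defining property of the Howell form (Appendix~\ref{app:linalg}) the rows of $K_M$ span $\Ker_{\mathbb{Z}_N}(E_M)$ over $\mathbb{Z}_N$; pulling these back through $\Phi^{-1}$ and observing that $\mathbb{Z}_N$-linear combinations in the kernel correspond to generator products (Eq.~\eqref{eq:generator_product}) of the operators $XP_N(2q_k|\mathbf{0}|\mathbf{z}_k)$ --- where the integer-vs-$\mathbb{Z}_N$ discrepancy in the phase cancels because it appears doubled modulo $2N$ --- we conclude that $\mathbf{M}_Z$ generates the entire diagonal logical identity group. (The count of orbit representatives in step~4 of the algorithm is a separate consistency check that the supplied codewords genuinely span an XP codespace of precision $N$; it is not needed for the generating statement.)

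I do not expect a deep obstacle here. The only point requiring care is the modular bookkeeping of the second step --- the factor of $2$ linking the phase modulus $2N$ to the $\mathbb{Z}_N$ linear algebra --- together with checking that $\Phi$ really is an isomorphism of groups and not merely a bijection of sets, since it is precisely this that allows a $\mathbb{Z}_N$-spanning set of the kernel to lift to a group-generating set for $\mathbf{M}_Z$.
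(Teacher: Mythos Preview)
Your proposal is correct and follows essentially the same approach as the paper: both reduce the condition that a diagonal operator fixes every codeword to membership of $(\mathbf{z}|q)$ in $\Ker_{\mathbb{Z}_N}(E_M)$, then invoke the fact that the Howell basis $K_M$ spans this kernel. You are simply more explicit than the paper about the $2N\to N$ reduction (showing $p$ must be even) and about why a $\mathbb{Z}_N$-spanning set of the kernel lifts to a group-generating set of diagonal logical identities.
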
 
\begin{proof}
The binary matrix $E_M$ is the matrix formed by taking  $(\mathbf{e}|1)$ as rows, where $\mathbf{e} \in E$. If $(\mathbf{z}_k|p_k)$ is a row of the Howell basis $K_M$ of $\Ker(E_M)$, then $(\mathbf{z}_k|p_k) \cdot (\mathbf{e}_{ij}|1) \mod N = 0$. Let $A_k := XP_N(2 p_k|\mathbf{0}|\mathbf{z}_k)$ then $A_k|\mathbf{e}\rangle = \omega^{2p_k + 2 \mathbf{e} \cdot \mathbf{z}_k}|\mathbf{e}\rangle =\omega^{2(\mathbf{z}_k|p_k) \cdot (\mathbf{e}|1)}|\mathbf{e}\rangle = |\mathbf{e}\rangle$. Hence, $A_k$ applies a trivial phase on each element of $E$ and hence on each codeword. Because $\Ker_{\mathbb{Z}_N}(E_M) = \Span_{\mathbb{Z}_N}(K_M)$, the $A_k$ generate all diagonal logical identity operators and so $\mathbf{M}_Z := \{A_k\}$.
\end{proof}
We next look at the diagonal logical operators and show the following:

\begin{proposition}[Diagonal Logical Operators]\label{prop:L_Z}
The algorithm in Section~\ref{sec:L_Z} produces a list of diagonal XP operators $\mathbf{L}_Z$ which together with $\omega I$ and $\mathbf{M}_Z$ generate the diagonal logical XP operators for the codespace.
\end{proposition}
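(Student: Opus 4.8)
The plan is to show two things: first, that every operator $B_k = XP_N(0|\mathbf{0}|\mathbf{z}_k)$ produced by the algorithm is a diagonal logical XP operator; and second, that together with $\omega I$ and $\mathbf{M}_Z$ these operators generate the \emph{entire} diagonal logical XP group. The key structural fact, established in Proposition~\ref{prop:LXP_action}, is that a diagonal XP operator $A = XP_N(p|\mathbf{0}|\mathbf{z})$ is a logical operator if and only if its action is constant on the Z-support of each codeword, i.e. $A$ applies a single phase $\omega^{\mathbf{f}[i]}$ to all of $|\kappa_i\rangle$. Using the OP rule of Table~\ref{tab:algebraic_identities}, $A|\mathbf{e}_{ij}\rangle = \omega^{p + 2\mathbf{e}_{ij}\cdot\mathbf{z}}|\mathbf{e}_{ij}\rangle$, so the condition ``$A$ is a diagonal logical operator'' translates precisely into the linear requirement (modulo $2N$) that $p + 2\mathbf{e}_{ij}\cdot\mathbf{z}$ depend only on $i$. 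Writing $p = 2q + a$ with $a \in \{0,1\}$, the $a$ contributes a global phase and one may absorb it (it corresponds to a power of $\omega I$); the remaining condition is that $q + \mathbf{e}_{ij}\cdot\mathbf{z}$ depend only on $i$ modulo $N$. This is exactly the equation $E_L^T(\mathbf{z}|\mathbf{q}) \equiv \mathbf{0} \pmod N$ solved by the kernel $K_L$ of $E_L$ in step~3 of the algorithm — so the rows $(\mathbf{z}_k|\mathbf{q}_k)$ of $K_L$ give diagonal operators whose phase is constant (equal to $-2\mathbf{q}_k[i]$) on codeword $i$, hence diagonal logical operators.

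Second I would argue completeness. By the characterisation above, \emph{every} diagonal logical XP operator $A = XP_N(p|\mathbf{0}|\mathbf{z})$ (up to a power of $\omega I$ to clear the parity bit $a$) has $(\mathbf{z}|\mathbf{q}) \in \Ker_{\mathbb{Z}_N}(E_L)$ for the vector $\mathbf{q}$ recording its phase vector; since $\Ker_{\mathbb{Z}_N}(E_L) = \Span_{\mathbb{Z}_N}(K_L)$ by the Howell basis property (Section~\ref{sec:howell_matrix_form}), $A$ is a product of the operators $XP_N(0|\mathbf{0}|\mathbf{z}_k)$ coming from rows of $K_L$, times a power of $\omega I$. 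This already shows $\{XP_N(0|\mathbf{0}|\mathbf{z}_k) : (\mathbf{z}_k|\mathbf{q}_k) \in K_L\}$, $\omega I$ generate all diagonal logical operators. It remains to explain steps~4–5, which merely trim redundancy: the diagonal logical \emph{identities} $\mathbf{M}_Z$ of Section~\ref{sec:M_Z} already appear inside this generating set (they are precisely the rows with $\mathbf{q}_k = \mathbf{0}$, i.e. trivial phase vector, coming from $\Ker_{\mathbb{Z}_N}(E_M)$, which embeds into $\Ker_{\mathbb{Z}_N}(E_L)$). Reducing the $\mathbf{z}_k$ modulo the row span of $M_Z$ via the residue function and taking a Howell basis $K$ of the result yields a set $\mathbf{L}_Z = \{B_k\}$ that, together with $\mathbf{M}_Z$ and $\omega I$, still generates everything, but with no overlap with $\langle\mathbf{M}_Z\rangle$ in its Z-components. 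One should check that this residue-then-Howell step does not lose any coset: since every $\mathbf{z}_k$ is accounted for either in $\langle M_Z\rangle$ or as a combination of $K$ and $M_Z$ rows, closure under the group operations is preserved.

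The main obstacle I expect is bookkeeping the phase/parity component carefully. Two subtleties need attention: (i) the factor of $2$ in $p = 2q+a$ and in $2\mathbf{e}\cdot\mathbf{z}$ — one must verify that working modulo $N$ in the kernel computation correctly recovers all operators modulo $2N$, using that $\omega I = XP_N(1|\mathbf{0}|\mathbf{0})$ supplies the missing odd phases, and that diagonal operators have even Z-components handled by the $\text{Zp}$ map of Section~\ref{sec:xp_group_structure}; and (ii) showing that quotienting the $\mathbf{z}_k$ by $M_Z$ in step~4 is legitimate, i.e. that if two rows of $K_L$ differ by an element of $\langle M_Z\rangle$ in their Z-part then they define logical operators with the same phase vector up to a logical identity, so discarding the difference is harmless. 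Both are routine but must be done cleanly; modulo that, the proof is a direct translation between ``constant phase on each codeword'' and ``kernel of $E_L$'', exactly paralleling the proof of Proposition~\ref{prop:M_Z}.
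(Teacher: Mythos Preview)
Your proposal is correct and follows essentially the same route as the paper's proof: translate ``diagonal logical'' into the kernel condition $(\mathbf{z}|\mathbf{q})\in\Ker_{\mathbb{Z}_N}(E_L)$, use the Howell basis property to get completeness together with $\omega I$, and then quotient by $\langle M_Z\rangle$ via the residue function to isolate the non-trivial generators $\mathbf{L}_Z$. The two subtleties you flag (handling the parity bit of the phase via $\omega I$, and checking that the residue step preserves the span of Z-components) are exactly the bookkeeping the paper does. One small notational slip to clean up: you slide from the scalar $q$ (half the operator's phase) to the vector $\mathbf{q}\in\mathbb{Z}_N^{\dim\mathcal{C}}$ (the codeword-indexed kernel component) without comment, and your remark that $\mathbf{M}_Z$ corresponds to ``rows with $\mathbf{q}_k=\mathbf{0}$'' is not quite right---the embedding $\Ker_{\mathbb{Z}_N}(E_M)\hookrightarrow\Ker_{\mathbb{Z}_N}(E_L)$ sends $(\mathbf{z}|q)$ to $(\mathbf{z}|q\mathbf{1})$, i.e.\ constant $\mathbf{q}$, and it is the associated operator $XP_N(2q|\mathbf{0}|\mathbf{z})$ (with phase $2q$) that lies in $\mathbf{M}_Z$; but this does not affect the reduction argument, which the paper (and you) carry out purely at the level of Z-components.
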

\begin{proof}
We define the binary matrix $E_L$ used in the logical operator algorithm as follows. For each $\mathbf{e}_{ij}$ in Eq.~\eqref{eq:orbitform}, let $\mathbf{i}$ be a binary vector of length $\dim(\mathcal{C})$ which is all zeros apart from the $i$th component which is $1$. The vector $\mathbf{i}$ is a ``codeword index'' that tells us which codeword the row belongs to. Let $E_L$ be the matrix formed by taking  $(\mathbf{e}_{ij}|\mathbf{i})$ as rows. Let  $K_L$ be the Howell basis of $\Ker_{\mathbb{Z}_N}(E_L)$ and let $(\mathbf{z}_k|\mathbf{p}_k)$ be a row of $K_L$. Then for all values of $i$ and $j$, $(\mathbf{z}_k|\mathbf{p}_k) \cdot (\mathbf{e}_{ij}|\mathbf{i}) \mod N = 0$ and so $\mathbf{z}_k \cdot \mathbf{e}_{ij} \mod N = - \mathbf{p}_k[i] \mod N$.

Now consider how the diagonal XP operator $B_k:= XP_N(0|\mathbf{0}|\mathbf{z}_k)$ acts on the Z-support of the codewords:
\begin{align}
    B_k|\mathbf{e}_{ij}\rangle &= \omega^{2\mathbf{e}_{ij}\cdot \mathbf{z}_k}|\mathbf{e}_{ij}\rangle = \omega^{-2\mathbf{p}_k[i]}|\mathbf{e}_{ij}\rangle
\end{align}
In other words, the action of $B_k$ is constant on the Z-support of each codeword and so is a logical operator with phase vector $\mathbf{f}_k = -2\mathbf{p}_k$. The $B_k$, together with the operator $\omega I$, generate all possible diagonal logical operators because $\Ker_{\mathbb{Z}_N}(E_L) = \Span_{\mathbb{Z}_N}(K_L)$.

We now show how to find non-trivial diagonal operators $\mathbf{L}_Z$ which together with $\omega I$ and $\mathbf{M}_Z$ generate the same group as the $\mathbf{B}_Z := \{B_k\}$ and $\omega I$, reflecting steps 4 and 5 of the algorithm in Section~\ref{sec:L_Z}. Because we include $\omega I$ as a generator, we can just consider the space spanned by the Z components of the operators over $\mathbb{Z}_N$. Let $M_Z$ and $B_Z$ be the matrices whose rows are the Z-components of the $\mathbf{M}_Z$ and $\mathbf{B}_Z$ respectively. For each row $\mathbf{z}_k$ of $B_Z$ we let $\mathbf{r}_k = \res_{\mathbf{Z}_N}(M_Z)$ so that $\mathbf{z}_k = (\mathbf{r}_k + \mathbf{u}M_Z) \mod N$ for some vector $\mathbf{u} \in \mathbf{Z}_N^{|M_Z|}$. Let $L_Z$ be the Howell basis of the matrix with the $\mathbf{r}_k$ as rows. Then clearly $\Span_{\mathbb{Z}_N}(B_Z) = \Span_{\mathbb{Z}_N}(M_Z) + \Span_{\mathbb{Z}_N}(L_Z)$. Letting $\mathbf{L}_Z = \{XP_N(0|\mathbf{0}|\mathbf{z}) : \mathbf{z} \in L_Z\}$ then we have that  $\langle \omega I, \mathbf{M}_Z,\mathbf{L}_Z\rangle = \langle \omega I, \mathbf{B}_Z\rangle$.
\end{proof}

\subsubsection{Non-diagonal Operator Algorithms}
In this section, we show that the algorithms for non-diagonal operators yield valid logical operators and logical identities respectively. We then show that the operators produced are generating sets for the respective groups. Assume we have the coset decomposition of the Z-support of the codewords as in Eq.~\eqref{eq:Emcoset} i.e. $E = E_q + \langle S_X\rangle + \langle L_X \rangle$. Recall that the rows of $S_X$ are the X-components of the non-diagonal canonical generators and that the binary vector $\mathbf{x} \in \langle L_X\rangle$ if and only if $E_m \oplus \mathbf{x} = E_m$ (see Section~\ref{sec:L_X}).

\begin{proposition}[Non-diagonal Logical Operators]\label{prop:L_X}
Given a binary vector $\mathbf{x} \in L_X$, the algorithm in Section~\ref{sec:L_X} yields a non-diagonal logical operator with X-component equal to $\mathbf{x}$ or FALSE if no such operator exists.
\end{proposition}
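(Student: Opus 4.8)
The plan is to verify that the operator $A = XP_N(0|\mathbf{x}|\mathbf{z})$ produced by the algorithm genuinely satisfies $A|\kappa_i\rangle = \omega^{\mathbf{f}[i]}|\kappa_{\pi(i)}\rangle$ for some permutation $\pi$, and hence by Proposition~\ref{prop:LXP_action} is a logical XP operator; conversely, I would argue that if the linear system in Step 4 has no solution then no logical operator with X-component $\mathbf{x}$ exists. First I would analyse the action of $A = XP_N(p|\mathbf{x}|\mathbf{z})$ (with $p$ yet to be fixed) on a basis element $|\mathbf{e}_{ij}\rangle$ appearing in codeword $|\kappa_i\rangle$: using Eq.~\eqref{eq:action_on_basis_elts}, $A|\mathbf{e}_{ij}\rangle = \omega^{p + 2\mathbf{e}_{ij}\cdot\mathbf{z}}|\mathbf{e}_{ij}\oplus\mathbf{x}\rangle$. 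Since $\mathbf{x}\in\langle L_X\rangle$, we have $\mathbf{e}_{ij}\oplus\mathbf{x}\in E$, so it lies in the Z-support of some codeword $|\kappa_{\pi(i)}\rangle$ with a definite phase $p'_{ij}$; the content of the proof is that $A$ maps $|\kappa_i\rangle$ to a scalar multiple of $|\kappa_{\pi(i)}\rangle$ \emph{exactly when} the phase offsets $p''_{ij} = p'_{ij} - p_{ij}$ (mod $2N$) depend only on $i$ once the adjustment factor $a_i$ is stripped off, i.e.\ when $\mathbf{p}'''$ factors through a linear functional on $(\mathbf{e}_{ij}|\mathbf{i})$.

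The key steps, in order: (1) Establish that $\mathbf{x}\in\langle L_X\rangle$ implies the map on codewords is a well-defined permutation $\pi$ of $\{0,\dots,\dim\mathcal C-1\}$ — this follows from Proposition~\ref{prop:Em_Decomposition}, since $\mathbf{x}\oplus E_m = E_m$ means adding $\mathbf{x}$ permutes orbit representatives, and squaring shows $\pi^2 = 1$. (2) Show that the parity/divisibility condition in Step 2 (all $p''_{ij}$ for fixed $i$ have the same parity) is exactly the condition that the relative phases within codeword $i$ are preserved up to a global factor; this uses the fact that the Z-support cosets $E_i = \mathbf{m}_i + \langle S_X\rangle$ are acted on transitively by $\langle\mathbf{S}_X\rangle$ and that $A$ commutes with $\mathbf{S}_X$ up to diagonal operators fixing $E$ (an argument parallel to Proposition~\ref{prop:cworbit}). (3) Interpret Step 4: a solution $(\mathbf{z}|\mathbf{q})$ to $E_L^T(\mathbf{z}|\mathbf{q}) = \mathbf{p}'''$ mod $N$ produces $\mathbf{z}$ such that $2\mathbf{e}_{ij}\cdot\mathbf{z} \equiv 2 p'''_{ij} - 2q_{\text{(index }i)}$, which after accounting for the adjustment $a_i$ gives $A|\kappa_i\rangle = \omega^{\mathbf{f}[i]}|\kappa_{\pi(i)}\rangle$ with a well-defined phase vector $\mathbf{f}$. (4) For the converse, observe that any logical operator with X-component $\mathbf{x}$ must, by Proposition~\ref{prop:LXP_action}, act as a permutation-with-phases, which forces its Z-component to satisfy precisely the linear system of Step 4; so FALSE is returned only when genuinely no such operator exists.

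The main obstacle I anticipate is Step (2) — carefully showing that the divisibility/parity condition on $\mathbf{p}''$ is both necessary and sufficient for $A$ to preserve \emph{the full coset} $E_i$ coherently, rather than just mapping individual basis vectors correctly. The subtlety is that $A$ must send the entire superposition $\sum_j \omega^{p_{ij}}|\mathbf{e}_{ij}\rangle$ to a scalar times $\sum_j \omega^{p_{\pi(i)j'}}|\mathbf{e}_{\pi(i)j'}\rangle$, and this requires the phase increments to be "coset-compatible" with the action of $\langle\mathbf{S}_X\rangle$ that generates the codeword from its orbit representative. I would handle this by writing $\mathbf{e}_{ij} = \mathbf{m}_i \oplus \mathbf{s}_j$ with $\mathbf{s}_j\in\langle S_X\rangle$, computing $A\mathbf{S}_X^{\mathbf{u}}|\mathbf{m}_i\rangle$ directly via the MUL and COMM rules of Table~\ref{tab:algebraic_identities}, and showing the commutator corrections are absorbed into a single $i$-dependent phase because $A^2$ and all commutators are diagonal operators lying in $\langle\mathbf{M}_Z\rangle$ which fix $E$. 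The adjustment factor $a_i = p''_{i0}\bmod 2$ accounts for the fact that $A$ itself may contribute an odd phase component that cannot be realised by the even Z-component alone, which is why we solve modulo $N$ rather than $2N$ after halving.
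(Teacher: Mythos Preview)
Your proposal is correct and follows the same overall line as the paper: establish the permutation $\pi$ from $\mathbf{x}\in\langle L_X\rangle$, write down the phase condition for $A|\kappa_i\rangle = \omega^{\mathbf{f}[i]}|\kappa_{\pi(i)}\rangle$, reduce it to the linear system $E_L^T(\mathbf{z}|\mathbf{q}) = \mathbf{p}''' \bmod N$, and appeal to solvability over $\mathbb{Z}_N$.

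The one place you diverge is the ``main obstacle'' you anticipate in Step~(2). The paper does not use any commutator machinery or appeal to $\langle\mathbf{M}_Z\rangle$ here; it simply computes the action of $A$ on \emph{every individual} basis element $|\mathbf{e}_{ij}\rangle$ and writes the resulting phase condition $(p_{ij} - p'_{ij} - \mathbf{f}[i] + 2\mathbf{e}_{ij}\cdot\mathbf{z}) \equiv 0 \pmod{2N}$ as one row of the linear system. Because $E_L$ contains a row for each $\mathbf{e}_{ij}$ (not just the orbit representatives), solving the system over all rows simultaneously \emph{is} the coset-compatibility check --- there is no separate argument needed to show that $A$ carries the full superposition $|\kappa_i\rangle$ coherently to $|\kappa_{\pi(i)}\rangle$. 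The parity condition then drops out immediately: since $2\mathbf{e}_{ij}\cdot\mathbf{z}$ is always even, the quantity $p''_{ij} - \mathbf{f}[i]$ must be even for every $j$, forcing all $p''_{ij}$ with the same $i$ to share a parity $a_i$. Your planned route via the MUL/COMM rules and commutators in $\langle\mathbf{M}_Z\rangle$ would also work, but it is a detour; the paper's direct basis-element computation is shorter and avoids invoking any structural facts about the logical identity group.
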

\begin{proof}
Given $\mathbf{x} \in L_X$ and applying Proposition~\ref{prop:LXP_action}, we wish to find  $A = XP_N(0|\mathbf{x}|\mathbf{z})$ for which $A|\kappa_i\rangle = \omega^{\mathbf{f}[i]}|\kappa_{\pi(i)}\rangle$ for some phase vector $\mathbf{f}$ and some permutation $\pi$. 

Because $\mathbf{x} \in L_X$, we know that $\mathbf{m}_i \oplus \mathbf{x} =\mathbf{m}_j$ for orbit representatives $\mathbf{m}_i, \mathbf{m}_j$. Define $\pi: i \mapsto j$ if $\mathbf{m}_i \oplus \mathbf{x} =\mathbf{m}_j$. Then $\pi$ is a map which squares to $1$ and so is a permutation.

Now assume we have the codewords in the orbit format of Eq.~\eqref{eq:orbitform}. Let $\mathbf{e}_{ij}' = \mathbf{e}_{ij} \oplus \mathbf{x}$ and let $p_{ij}'$ be the phase of $\mathbf{e}_{ij}'$ in the codewords. We require that $A\omega^{p_{ij}}|\mathbf{e}_{ij}\rangle = \omega^{\mathbf{f}[i] + p_{ij}'}|\mathbf{e}_{ij}'\rangle, \forall i, j$. Calculating the action of $A$ on $\omega^{p_{ij}}|\mathbf{e}_{ij}\rangle$:
\begin{align}
    A\omega^{p_{ij}}|\mathbf{e}_{ij}\rangle &= \omega^{p_{ij}}\omega^{ 2\mathbf{e}_{ij}\cdot \mathbf{z}}|\mathbf{e}_{ij}'\rangle
    \end{align}
The phase of  $|\mathbf{e}_{ij}'\rangle$ is $\omega^{\mathbf{f}[i] +p_{ij}'}$ when:
\begin{align}
    (p_{ij} -p_{ij}' - \mathbf{f}[i] + 2\mathbf{e}_{ij} \cdot \mathbf{z}) \mod 2N &= 0 
    \end{align}
As the phase $\mathbf{f}[i]$ is fixed for each codeword $|\kappa_i\rangle$, for there to be a valid solution $p_{ij} - p_{ij}'$ to be either even or odd for all $j$. Let $a_i = (p_{ij} - p_{ij}') \mod 2$ and $p_{ij}'' = (p_{ij} - p_{ij}' -a_i)/2$. Define the vector $\mathbf{q} := (a_i-\mathbf{f})/2$. Hence:
\begin{align}
    (p_{ij}'' + \mathbf{q}[i] + \mathbf{e}_{ij}\cdot \mathbf{z}) \mod N &= 0
    \end{align}
Letting $\mathbf{p}''$ be the vector corresponding to the  $p_{ij}''$, we can write this in matrix form:
\begin{align}
    (\mathbf{p}'' + (\mathbf{z}|\mathbf{q})E_L^T) \mod N &= 0
    \end{align}
Solutions to this equation are members of the affine span (see Eq.~\eqref{eq:affine_span}):
\begin{align}
    (\mathbf{z}|\mathbf{q}) &\in \mathbf{b} + \Span_{\mathbb{Z}_N}( K_L)
    \end{align}
where $K_L$ is the Howell basis of $\Ker_{\mathbb{Z}_N}(E_L)$ and $\mathbf{b} \in \mathbb{Z}_N^n \times \mathbb{Z}_N^{\dim(\mathcal{C})}$ is a constant. We can either find $\mathbf{b}$ or show that no solution exists by using linear algebra modulo $N$  (see Section~\ref{sec:linalgmodN}). If no such solution exists, return FALSE. Otherwise, the operator $A = XP_N(0|\mathbf{x}|\mathbf{z})$ is the required non-diagonal logical operator.
\end{proof}

The algorithm for non-diagonal logical identity operators can be considered a special case of the algorithm for non-diagonal logical operators, with the matrix $E_M$ substituted for $E_L$ and the proof is omitted.
\begin{corollary}[Non-diagonal Logical Identity Group]\label{prop:M_X}
Given codewords $|\kappa_i\rangle$ in orbit format and $\mathbf{x} \in S_X$, the algorithm in Section~\ref{sec:M_X} yields a non-diagonal operator $A$ with X-component $\mathbf{x}$ such that $A|\kappa_i\rangle = |\kappa_i\rangle, \forall i$, or returns FALSE if no such operator exists.
\end{corollary}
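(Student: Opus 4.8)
The plan is to derive Corollary~\ref{prop:M_X} as the special case of Proposition~\ref{prop:L_X} in which the induced permutation of the codewords is the identity and the phase vector is zero. First I would observe that the choice $\mathbf{x}\in S_X$ (the RREF matrix produced in Step~1 of Section~\ref{sec:M_X}, for which $E_i=\mathbf{m}_i+\langle S_X\rangle$ has been verified for every codeword) already pins down the permutation: for any $\mathbf{x}\in\langle S_X\rangle$ and any $\mathbf{e}_{ij}\in\ZSupp(|\kappa_i\rangle)$ we have $\mathbf{e}_{ij}\oplus\mathbf{x}\in\ZSupp(|\kappa_i\rangle)$, so an operator $A=XP_N(p|\mathbf{x}|\mathbf{z})$ can only map $|\kappa_i\rangle$ into the span of $|\kappa_i\rangle$ itself. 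By Proposition~\ref{prop:LXP_action} (or directly from the definition of $\mathcal{I}_\text{XP}$) it therefore suffices to pick phase and $Z$ components making $A$ act as the scalar $1$ on each codeword.

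Next I would write out this requirement. By Eq.~\eqref{eq:action_on_basis_elts}, $A|\mathbf{e}_{ij}\rangle=\omega^{p+2\mathbf{e}_{ij}\cdot\mathbf{z}}|\mathbf{e}_{ij}\oplus\mathbf{x}\rangle$, so $A|\kappa_i\rangle=|\kappa_i\rangle$ for all $i$ is equivalent to $p+2\mathbf{e}_{ij}\cdot\mathbf{z}+p_{ij}-p'_{ij}\equiv 0\pmod{2N}$ for every $i,j$, where $p_{ij}$ is the phase of $\mathbf{e}_{ij}$ in the orbit form of Eq.~\eqref{eq:orbitform} and $p'_{ij}$ is the phase of $\mathbf{e}_{ij}\oplus\mathbf{x}$. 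Setting $p''_{ij}=(p'_{ij}-p_{ij})\bmod 2N$, the left-hand side is even, so a solution can exist only if all the $p''_{ij}$ share a common parity $a$; after fixing $a$ and halving, the condition collapses to a single $\mathbb{Z}_N$-linear system in the unknown $Z$-component and one extra phase unknown --- precisely the system of Step~2 of Section~\ref{sec:M_X} with $E_M$ (rows $(\mathbf{e}|1)$, $\mathbf{e}\in E$) replacing $E_L$. This is solved, or shown inconsistent, by the linear-algebra-modulo-$N$ routine of Section~\ref{sec:linalgmodN}; a solution is then reduced to canonical form by taking its residue against the Howell basis $K_M$ of $\Ker_{\mathbb{Z}_N}(E_M)$, giving the output operator $XP_N(a+2p'|\mathbf{x}|\mathbf{z}')$.

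It then remains to check the two branches. If the system is consistent, substituting the constructed $(p,\mathbf{z})$ back into the action rule shows $A|\kappa_i\rangle=|\kappa_i\rangle$ for all $i$, and $A$ is non-diagonal since $\mathbf{x}\neq\mathbf{0}$. For the FALSE branch I need completeness: any XP operator $A'=XP_N(q|\mathbf{x}|\mathbf{w})$ that fixed all codewords would, by the same phase identity, force $q\equiv a\pmod 2$ (so the uniform-parity test passes) and make $(\mathbf{w},(q-a)/2)$ a solution of the linear system, contradicting inconsistency; hence the algorithm returns FALSE exactly when no suitable operator exists.

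The step I expect to be the main obstacle is the bookkeeping around the global parity adjustment $a$ together with the soundness-and-completeness of the FALSE output: one must show both that a common parity of the $p''_{ij}$ is necessary for \emph{any} solution (not merely the canonical one) and that, once $a$ is fixed, the surviving freedom is captured exactly by the mod-$N$ system and its kernel $K_M$. The permutation argument, the substitution check, and the canonicalisation via $K_M$ are routine given Proposition~\ref{prop:cworbit}, Proposition~\ref{prop:LXP_action} and the coset structure of Section~\ref{sec:coset_structure_of_E}; the content is entirely in matching the failure modes of the linear algebra to the non-existence of an XP logical identity with the prescribed X-component.
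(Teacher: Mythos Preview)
Your proposal is correct and follows exactly the approach the paper itself takes: the paper states that Corollary~\ref{prop:M_X} is the special case of Proposition~\ref{prop:L_X} obtained by replacing $E_L$ with $E_M$ (equivalently, forcing the permutation to be the identity and the phase vector to be zero), and explicitly omits the proof. Your write-up simply fills in the details of that specialisation, and the bookkeeping you flag as the potential obstacle---the common parity $a$ and the soundness/completeness of the FALSE branch---is handled exactly as in the proof of Proposition~\ref{prop:L_X}.
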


We next show that the operators produced by the algorithms give us a generating set. We first show that the logical identity group is generated by the operators yielded by the logical identity algorithm.

\begin{proposition}[Logical Identity Group is Generated by $\mathbf{M}$]\label{prop:LIeqM}
The logical identity group for an XP code is the group generated by $\mathbf{M}$ as calculated in Section~\ref{sec:LI}:
\begin{align}\mathcal{I}_\text{XP} = \langle\mathbf{M}\rangle\end{align}
\end{proposition}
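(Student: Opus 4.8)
The goal is to show $\mathcal{I}_\text{XP} = \langle \mathbf{M}\rangle$ where $\mathbf{M} = \mathbf{M}_Z \cup \mathbf{M}_X$ is the set produced by the algorithms of Sections~\ref{sec:M_Z} and~\ref{sec:M_X}. One inclusion is already in hand: Propositions~\ref{prop:M_Z} and~\ref{prop:M_X} (together with Corollary~\ref{prop:M_X}) establish that every operator in $\mathbf{M}$ fixes all codewords, so $\langle\mathbf{M}\rangle \subseteq \mathcal{I}_\text{XP}$. The work is the reverse inclusion: any $A \in \mathcal{XP}_{N,n}$ with $A|\kappa_i\rangle = |\kappa_i\rangle$ for all $i$ lies in $\langle \mathbf{M}\rangle$.

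\textbf{Key steps.} First I would split an arbitrary $A = XP_N(p|\mathbf{x}|\mathbf{z}) \in \mathcal{I}_\text{XP}$ into its non-diagonal and diagonal parts. By Proposition~\ref{prop:LXP_action}, $A$ permutes the codewords; since $A$ fixes each one, its X-component $\mathbf{x}$ must map each $\mathbf{m}_i \in E_m$ to another orbit representative modulo $\langle S_X\rangle$, and in fact (since it fixes $|\kappa_i\rangle$ rather than merely permuting) $\mathbf{x}$ must stabilise the Z-support of each codeword, forcing $\mathbf{x} \in \langle S_X\rangle$ — here I would appeal to the fact that $\mathbf{x}\oplus E_m = E_m$ combined with $\mathbf{x}$ fixing each codeword individually, so $\mathbf{x} \in \langle S_X \rangle$ rather than $\langle L_X\rangle + \langle S_X\rangle$. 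Since $S_X = \RREF_{\mathbb{Z}_2}(S_X)$ and $\mathbf{M}_X$ has X-components exactly the rows of $S_X$, there is a generator product $\mathbf{M}_X^{\mathbf{a}}$, $\mathbf{a}\in\mathbb{Z}_2^{|\mathbf{M}_X|}$, with the same X-component $\mathbf{x}$. Then $A' := (\mathbf{M}_X^{\mathbf{a}})^{-1} A$ is diagonal (by the MUL rule, products and inverses preserve being diagonal once X-components cancel), and $A'$ is still a logical identity since both $A$ and $\mathbf{M}_X^{\mathbf{a}}$ are. Next I would invoke Proposition~\ref{prop:M_Z}: the diagonal logical identity operators are exactly $\langle \mathbf{M}_Z\rangle$, because $\mathbf{M}_Z$ was constructed from the full Howell basis $K_M$ of $\Ker_{\mathbb{Z}_N}(E_M)$, and a diagonal operator $XP_N(p'|\mathbf{0}|\mathbf{z}')$ fixes every $|\mathbf{e}\rangle$, $\mathbf{e}\in E$, iff $(p'|\mathbf{z}')$ (suitably paired) annihilates every $(\mathbf{e}|1)$, i.e. lies in the $\mathbb{Z}_N$-span of $K_M$. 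Hence $A' \in \langle \mathbf{M}_Z\rangle$, and therefore $A = \mathbf{M}_X^{\mathbf{a}} A' \in \langle\mathbf{M}\rangle$, completing the proof.

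\textbf{Main obstacle.} The delicate point is the claim that the X-component of a logical \emph{identity} must lie in $\langle S_X\rangle$, not merely in $\langle L_X\rangle + \langle S_X\rangle$ (which is all that holds for general logical operators). This needs care: a permutation-inducing X-component could a priori be nontrivial on $\langle L_X\rangle$, but if $A$ fixes each codeword then the permutation $\pi$ of Proposition~\ref{prop:LXP_action} is the identity, which forces $\mathbf{m}_i \oplus \mathbf{x} = \mathbf{m}_i$ for every orbit representative, hence $\mathbf{x}$ lies in the intersection of the cosets — but since the $\mathbf{m}_i$ are in distinct cosets of $\langle S_X\rangle$ and $\mathbf{x}\oplus\mathbf{m}_i$ must equal $\mathbf{m}_i$ exactly (as residues), we get $\mathbf{x}\in\langle S_X\rangle$. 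I would also double-check the phase-matching in the reduction $A \mapsto A' = (\mathbf{M}_X^{\mathbf{a}})^{-1}A$: one must confirm that $A'$ is genuinely a logical identity and not just diagonal, but this is immediate since $\mathcal{I}_\text{XP}$ is a group and both factors lie in it. The remaining arguments are routine applications of the Howell-basis characterisation of solution sets over $\mathbb{Z}_N$ already set up in Appendix~\ref{app:linalg} and Proposition~\ref{prop:M_Z}.
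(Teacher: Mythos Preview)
Your proposal is correct and follows essentially the same route as the paper: establish $\langle\mathbf{M}\rangle\subseteq\mathcal{I}_\text{XP}$ by construction, then for the reverse inclusion strip off a product of $\mathbf{M}_X$ generators matching the X-component of an arbitrary $A\in\mathcal{I}_\text{XP}$, leaving a diagonal logical identity which lies in $\langle\mathbf{M}_Z\rangle$ by Proposition~\ref{prop:M_Z}. You are in fact more careful than the paper on the point you flag as the main obstacle: the paper simply asserts ``$\mathbf{x}\in\Span_{\mathbb{Z}_2}(S_X)$'' without justification, whereas you correctly argue that fixing each $|\kappa_i\rangle$ forces $\mathbf{x}\oplus(\mathbf{m}_i+\langle S_X\rangle)=\mathbf{m}_i+\langle S_X\rangle$ and hence $\mathbf{x}\in\langle S_X\rangle$.
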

\begin{proof}
If $A\in \langle\mathbf{M}\rangle$  then $A| \kappa_i\rangle = | \kappa_i\rangle$ for all codewords $| \kappa_i\rangle$ by construction so $A \in \mathcal{I}_\text{XP}$, and so $\mathcal{I}_\text{XP} \supset \langle\mathbf{M}\rangle$.

We now prove the converse. In Proposition~\ref{prop:M_Z}, we showed that $\mathbf{M}_Z$ generates the diagonal subgroup of $\mathcal{I}_\text{XP}$

Let $B = XP_N(p|\mathbf{x}|\mathbf{z})$ be a non-diagonal operator in $ \mathcal{I}_\text{XP}$. Then $\mathbf{x} \in \Span_{\mathbb{Z}_2}(S_X)$ as defined above. Hence, we can find a binary vector $\mathbf{v}$ such that $\mathbf{x} = \mathbf{v}S_X \mod 2$. Let $B' = \mathbf{M}_X^{\mathbf{v}} \in \langle \mathbf{M} \rangle$. Then $B$ and $B'$ have the same $X$ component so $B'= XP_N(p'|\mathbf{x}|\mathbf{z}')$. Then $A' = B B' = XP_N(p''|\mathbf{0}|\mathbf{z}')$ is a diagonal logical identity operator because both $B$ and $B'$ are logical identities. Hence $A' \in \langle\mathbf{M}\rangle$. Therefore $B = A' B'^{-1} \in \langle\mathbf{M}\rangle$.
\end{proof}

Finally, we show that the logical operators with X-components in $L_X$, along with the stabiliser generators and the diagonal logical operators generate the entire logical operator group.

\begin{proposition}\label{prop:L_X_X-components}
Let $E$ be the Z-support of the codewords of an XP code and let $E = E_q+ \langle S_X \rangle + \langle L_X \rangle$ be the coset decomposition of $E$ as in Eq.~\eqref{eq:Emcoset}. Let $\mathbf{M}$ be the logical identity group as in Section~\ref{sec:LI}, let $\mathbf{L}_Z$ be the set of diagonal operators as in Section~\ref{sec:L_Z} and let $\mathbf{L}_X$ be a set of logical operators with X-components drawn from the rows of $L_X$. Then the logical operator group is given by: 
\begin{align}
    \mathcal{L}_\text{XP} &= \langle \omega I, \mathbf{M}, \mathbf{L}_Z, \mathbf{L}_X \rangle
\end{align}
\end{proposition}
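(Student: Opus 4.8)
The plan is to show the two inclusions. The inclusion $\langle \omega I, \mathbf{M}, \mathbf{L}_Z, \mathbf{L}_X\rangle \subseteq \mathcal{L}_\text{XP}$ is immediate: each generator is already known to preserve the codespace --- $\omega I$ applies a global phase, $\mathbf{M}$ fixes every codeword by Proposition~\ref{prop:LIeqM}, $\mathbf{L}_Z$ consists of diagonal logical operators by Proposition~\ref{prop:L_Z}, and $\mathbf{L}_X$ consists of non-diagonal logical operators by construction (Proposition~\ref{prop:L_X}), and $\mathcal{L}_\text{XP}$ is a group.

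For the reverse inclusion, take an arbitrary $A \in \mathcal{L}_\text{XP}$ with vector form $A = XP_N(p|\mathbf{x}|\mathbf{z})$. First I would reduce to the case where $A$ is non-diagonal or the identity. The key structural fact is that the X-component $\mathbf{x}$ of any logical operator must lie in $\langle L_X\rangle + \langle S_X\rangle$: since $A$ permutes the codewords (Proposition~\ref{prop:LXP_action}) it permutes their Z-supports, so $\mathbf{x} \oplus E = E$, hence $\mathbf{x} \oplus E_m$ is a union of cosets of $\langle S_X\rangle$ equal to $E_m$ plus a shift; modding out by $\langle S_X\rangle$ gives $\mathbf{x} \bmod \langle S_X\rangle \in \langle L_X\rangle$ by the characterisation of $L_X$ in Proposition~\ref{prop:Em_Decomposition}. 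So we can write $\mathbf{x} = \mathbf{v}L_X \oplus \mathbf{u}S_X$ for binary $\mathbf{v}, \mathbf{u}$. Using that $\mathbf{S}_X \subseteq \langle\mathbf{M}\rangle$-adjacent generators (more precisely $\mathbf{S}_X$ generates a subgroup of $\langle\mathbf{M}\rangle$) and $\mathbf{L}_X$ has X-components spanning $L_X$, form $B = \mathbf{L}_X^{\mathbf{v}} \mathbf{S}_X^{\mathbf{u}} \in \langle \mathbf{M}, \mathbf{L}_X\rangle$; then $B$ has the same X-component as $A$ (modulo the multiplication rule, which only introduces diagonal corrections), so $A B^{-1} = XP_N(p'|\mathbf{0}|\mathbf{z}')$ is a \emph{diagonal} logical operator.

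It then remains to show every diagonal logical operator lies in $\langle \omega I, \mathbf{M}_Z, \mathbf{L}_Z\rangle$. This is exactly the content of Proposition~\ref{prop:L_Z}: the algorithm of Section~\ref{sec:L_Z} produces $\mathbf{L}_Z$ which, together with $\omega I$ and $\mathbf{M}_Z$, generates all diagonal logical XP operators --- the argument there shows any diagonal logical operator's Z-component lies in $\Span_{\mathbb{Z}_N}(B_Z) = \Span_{\mathbb{Z}_N}(M_Z) + \Span_{\mathbb{Z}_N}(L_Z)$, and the phase component is absorbed by powers of $\omega I$. Stitching together: $A = (AB^{-1}) B \in \langle \omega I, \mathbf{M}_Z, \mathbf{L}_Z\rangle \cdot \langle \mathbf{M}, \mathbf{L}_X\rangle \subseteq \langle \omega I, \mathbf{M}, \mathbf{L}_Z, \mathbf{L}_X\rangle$, which completes the reverse inclusion.

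The main obstacle is the bookkeeping in the reduction step: the generator product $\mathbf{L}_X^{\mathbf{v}} \mathbf{S}_X^{\mathbf{u}}$ does not have X-component exactly $\mathbf{v}L_X \oplus \mathbf{u}S_X$ unless one is careful --- by the MUL rule each multiplication introduces antisymmetric (diagonal) factors $D_N(\cdot)$, but these do not affect the X-component, so the X-components do add modulo $2$; one must check that the $\langle L_X\rangle$-part and $\langle S_X\rangle$-part can genuinely be realised simultaneously, which follows because $\mathbf{L}_X$ has X-components ranging over a \emph{basis} (rows of $L_X$ in RREF) and $\mathbf{S}_X$ over a basis of $\langle S_X\rangle$, and $\langle L_X\rangle \cap \langle S_X\rangle$ behaviour is controlled by the coset decomposition $E_m = E_q + \langle L_X\rangle$ being well-defined. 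I would also need to invoke that $A B^{-1}$ being a logical operator that is diagonal is automatically in the scope of Proposition~\ref{prop:L_Z}, which requires no extra work since that proposition characterises \emph{all} diagonal logical operators.
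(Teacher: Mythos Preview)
Your proof is correct and follows essentially the same route as the paper's: both arguments (i) show that the X-component of any logical operator must lie in $\langle S_X\rangle + \langle L_X\rangle$ via the constraint $\mathbf{x}\oplus E = E$ together with the characterisation of $L_X$ in Proposition~\ref{prop:Em_Decomposition}, (ii) peel off an element of $\langle \mathbf{M},\mathbf{L}_X\rangle$ with matching X-component to reduce to a diagonal logical operator, and (iii) invoke Proposition~\ref{prop:L_Z} to finish. The only cosmetic difference is that the paper factors $A=LS$ with $S\in\langle\mathbf{M}_X\rangle$ first and then compares $L$ to the fixed $\mathbf{L}_X$ generators, whereas you build the comparison element $B=\mathbf{L}_X^{\mathbf{v}}\mathbf{S}_X^{\mathbf{u}}$ in one shot; your bookkeeping remark about the MUL rule leaving X-components additive mod~2 is exactly what is needed there.
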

\begin{proof}
Say we have a logical operator with a non-trivial X-component of the form $A=XP_N(p|\mathbf{x}|\mathbf{z})$. The action of $A$ on $|\mathbf{e}\rangle$ for $\mathbf{e} \in E$ is $ A|\mathbf{e}\rangle = \omega^{p + 2\mathbf{e}\cdot\mathbf{z}}|\mathbf{e}\oplus\mathbf{x}\rangle$.
We have shown that the codespace is spanned by the $|\kappa_i\rangle$ which have Z-support exactly equal to $E$. Because logical operators preserve the codespace, X-components of logical operators must satisfy the constraint: 
\begin{align}
    \mathbf{e}\oplus\mathbf{x} \in E, \forall \mathbf{e}\in E\label{eq:xcompconstr}
\end{align}

All possible X-components for logical operators are given by $\langle S_X\rangle + \langle L_X\rangle$. To see this, assume there exists some $\mathbf{x} \notin \langle S_X\rangle + \langle L_X\rangle$ meets the constraint in Eq.~\eqref{eq:xcompconstr}. Then $\text{Res}_{\mathbb{Z}_2}(S_X,\mathbf{x}) \notin \langle L_X\rangle$ meets the constraint in Eq.~\eqref{eq:xcompconstr}, which is a contradiction.

Next, we show that any logical operator can be written as $A = XP_N(p|\mathbf{x}|\mathbf{z}) = L S$ where $L$ has X-component in $\langle L_X\rangle$ and $S \in \langle\mathbf{M}_X\rangle$. We know that $\mathbf{x} \in \langle S_X\rangle + \langle L_X\rangle$ so $\mathbf{x} = (\mathbf{u}S_X +\mathbf{v}L_X) \mod 2$ for some binary vectors $\mathbf{u},\mathbf{v}$. Let $S = \mathbf{M}_X^\mathbf{u}$ then $L = AS^{-1}$ is a logical operator with X component in $\langle L_X\rangle$. 

Now assume that there are two different logical operators with X-component $\mathbf{x} \in \langle L_X\rangle$ - say $A_1$ and $A_2$. The product $B = A_1 A_2^{-1}$ is a diagonal logical operator so  $A_1$ is the same as $A_2$ up to a product of a diagonal logical operator. Applying Proposition~\ref{prop:L_Z}, $B \in \langle \omega I, \mathbf{M}_Z, \mathbf{L}_Z\rangle$

Accordingly, we can generate all possible logical operators by finding the non-diagonal logical operators with X-components in $L_X$. Together with the logical identity generators $\mathbf{M}$, $\omega I$ and $\mathbf{L}_Z$, these generate the full set of logical operators.
\end{proof}

\subsubsection{Diagonal Component of Logical X Operators}\label{sec:diagonal_component_L_X}
We have demonstrated how to find a non-diagonal logical operator with X-component in $L_X$. This will not necessarily act as a logical X operator. In particular, a logical X should square to a logical identity and in this section, we show how to ensure that this is the case. 

Assume we have used the algorithm in Section~\ref{sec:L_X} to find a logical operator $B = XP_N(0|\mathbf{x}|\mathbf{b})$. Our aim is to calculate a logical X operator $A = XP_N(p|\mathbf{x}|\mathbf{z})$ such that $A^2$ is a logical identity. As $A^2$ is diagonal, we require $A^2 \in \langle M_Z\rangle$. Our strategy is to first find the Z component of $A$, then adjust the phase component. 

We first look at which Z components are possible for logical operators with the same X component as $B$. We can multiply $B$ by any diagonal logical operator in $\langle  \mathbf{M}_Z, \mathbf{L}_Z\rangle$ and get a logical operator, and all possible Z components of such operators arise in this way. Let $M_{Z},L_{Z}$ be the matrices formed from the Z components of $\mathbf{M}_Z, \mathbf{L}_Z$ respectively. Writing $\langle C \rangle = \Span_{\mathbb{Z}_{N}}(C)$, the possible Z components are given by the affine span:
\begin{align}
    \mathbf{z} \in \text{Sp}_b := \mathbf{b} +  \langle M_{Z}\rangle + \langle L_{Z}\rangle
\end{align}
We next address the condition that the Z component of $A^2$, which we denote $\mathbf{z}_2$ here, has to be in $\langle M_{Z}\rangle$. By applying the square rule of Section~\ref{sec:algebraic_identities}:
\begin{align}
    \mathbf{z}_2[i] = \begin{cases}2\mathbf{z}[i] &: \mathbf{x}[i] = 0\\
    0 &: \mathbf{x}[i] = 1\end{cases}\label{eq:LX_square}
\end{align}
When $\mathbf{x}[i]=1,\mathbf{z}_2[i] =0$. Hence, $\mathbf{z}_2$ has to be an element of $\langle M_{Z}\rangle$ which is zero when $\mathbf{x}[i]=1$.  Because $\mathbf{z}_2$ is the Z component of a square, we also know that it is a multiple of $2$. Hence $\mathbf{z}_2 \in \text{Sp}_2 := \langle M_{Z} \rangle \cap \langle 2\text{diag}(1-\mathbf{x}) \rangle$. All generators of $\text{Sp}_2$ are divisible by 2, so let $\text{Sp}_M$ be the span with generators from $\text{Sp}_2$ divided by $2$. Any $\mathbf{z} \in \text{Sp}_M$ will have $\mathbf{z}_2 \in \langle M_{Z}\rangle$.

Where the precision $N$  is a multiple of 2 and $\mathbf{x}[i] = 0$, adding $\frac{N}{2}$ to $\mathbf{z}[i]$ does not change whether $\mathbf{z}_2 \in \langle M_{Z}\rangle$ because $\mathbf{z}_2[i] = 2\mathbf{z}[i]$. Hence, we can add any element of $\langle \frac{N}{2} \text{diag}(1-\mathbf{x})\rangle$ to $\mathbf{z}$.

We can add any element of $\langle \text{diag}(\mathbf{x})\rangle$ to $\mathbf{z}$ and still ensure $\mathbf{z}_2 \in \langle M_{Z}\rangle$ because where $\mathbf{x}[i] = 1$, $\mathbf{z}_2[i]$ is guaranteed to be zero. Hence $\mathbf{z}_2 \in \langle M_{Z}\rangle$ when $\mathbf{z} \in \text{Sp}_a$ defined as:
\begin{align}\text{Sp}_a := \text{Sp}_M + \langle \frac{N}{2} \text{diag}(1-\mathbf{x})\rangle + \langle  \text{diag}( \mathbf{x})\rangle
\end{align}
Hence, $\mathbf{z} \in \text{Sp}_a \cap \text{Sp}_b$. If this span is not empty then we can find $\mathbf{z}$ such that $\mathbf{z}_2 = \mathbf{u}M_Z \mod N \in \langle M_{Z}\rangle$ for some vector $\mathbf{u}$. Let $p_z,p_u$ be the phase components of $XP_N(0|\mathbf{x}|\mathbf{z})^2$ and $\mathbf{M}_Z^\mathbf{u}$ respectively. We set the phase component of $A$ to be $p = (p_u-p_z)/2$.

\subsection{Reed-Muller Codes}\label{sec:reed-muller}
Reed-Muller codes are well-known Pauli Stabiliser codes. We can also look at them as XP codes and better understand their logical operator structure. In the Proposition below, we show that they are self-dual codes and have transversal logical operators at precision $N> 2$. To demonstrate the results, we apply the techniques of Appendix \ref{app:hypergraph}.
\begin{proposition}[Reed-Muller XP Codes]\label{prop:reed-muller}
The Reed-Muller code on $2^r-1$ qubits can be written as the codespace of a precision $N = 2^{r-2}$ code whose stabiliser generators are symmetric in X and P. The code has a transversal logical $\text{diag}(1,\exp(-2\pi i/2^{r-1}))$ operator.
\end{proposition}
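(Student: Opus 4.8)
The plan is to make the structure of the Reed--Muller code manifest by choosing its stabiliser generators to be the rows of the matrix $M^r_r$ of Section~\ref{sec:whg2xp}, rather than an arbitrary generating set. Concretely, I would define $\mathbf{S}_X := \{XP_N(0|\mathbf{x}_j|\mathbf{a}\mathbf{x}_j) : 0 \le j < r\}$ and $\mathbf{S}_Z := \{XP_N(0|\mathbf{0}|N\mathbf{z}_k/2) : \mathbf{z}_k \text{ a row of } K^r_r\}$, where $\mathbf{x}_j$ is the $j$th row of $M^r_r$, $\mathbf{a}$ is the alternating vector of Eq.~\eqref{eq:alternating_vector}, and $K^r_r$ is the kernel matrix of Eq.~\eqref{eq:ker_M^r_m}. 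Setting $N = 2^{r-2}$, the first task is to check these are genuine XP stabiliser generators: the $\mathbf{S}_Z$ commute with each $\mathbf{S}_X$ by the computation in Eq.~\eqref{eq:whg_sz} (using $\mathbf{x}_j\cdot\mathbf{z}_k \equiv 0 \bmod 2$), and the $\mathbf{S}_X$ commute with each other by Proposition~\ref{prop:whg2xp}(a) (the alternating-vector computation makes the COMM rule vanish). Symmetry in $X$ and $P$ is visible by inspection since the $X$-components of $\mathbf{S}_X$ and the $Z$-components of $\mathbf{S}_Z$ are both drawn from $M^r_r$-type data; I would spell this out by noting $K^r_r$ and $M^r_r$ are related by transposition of the off-diagonal block (Eq.~\eqref{eq:ker_M^r_m}).

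Next I would identify the codespace. Because $\mathbf{m} = \mathbf{0}$ is in the $+1$ eigenspace of $\mathbf{S}_Z$ (all $\mathbf{z}_k$ have zero phase and $\mathbf{0}\cdot\mathbf{z}_k=0$), the single orbit representative is $\mathbf{0}$ and the codeword is $|\phi\rangle = O_{\mathbf{S}_X}|\mathbf{0}\rangle$, a one-dimensional codespace. By Proposition~\ref{prop:whg2xp}(b), the phase function of $|\phi\rangle$ is $f(u_0,\dots,u_{r-1}) = 2^{r-1}\prod_i u_i$, i.e.\ with precision $N=2^{r-2}$ we have $2N = 2^{r-1}$, so this phase is $-1$ exactly when all $u_i=1$ and trivial otherwise --- the degree-$r$ term dominates and everything below it vanishes. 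That means $|\phi\rangle = CP(1/2,\mathbf{1})|+\rangle^{\otimes(2^r-1)}$ is precisely the Reed--Muller state (the equally-weighted state with a single $-1$ sign on the all-ones corner of the cube of generators), recovering the known identification of RM$(1,r)$-type codes; I would cite the classical coding-theory fact that $M^r_r$ generates the simplex/RM structure. To pin down that this is the standard $[[2^r-1,1,3]]$ Reed--Muller code and not merely some code with the same codeword, I would either invoke the logical-identity uniqueness of Section~\ref{sec:code_space_test} (the codespace is determined by the logical identity group) or appeal directly to the classical fact that the punctured RM code is self-dual in the relevant sense.

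Finally, for the transversal logical operator, I would apply the diagonal logical-operator machinery of Sections~\ref{sec:L_Z} and~\ref{sec:LO_Action} to the single-codeword code. The operator $\bar{T}^{-1} := XP_N(0|\mathbf{0}|\mathbf{1}) = \mathrm{diag}(1,\exp(-2\pi i/2^{r-1}))^{\otimes(2^r-1)}$ applies the phase $\exp(-2\pi i\,\mathrm{wt}(\mathbf{e})/2^{r-1})$ to $|\mathbf{e}\rangle$; since the $\ZSupp$ of $|\phi\rangle$ consists of vectors $\mathbf{e}$ of constant weight modulo $2^{r-1}$ (the codewords of the simplex code all have weight $2^{r-1}$, with the all-ones corner contributing weight $2^r-1$), this operator acts as a constant phase on the whole Z-support, hence is a logical operator with constant phase vector --- a nontrivial transversal logical $\mathrm{diag}(1,\exp(-2\pi i/2^{r-1}))$. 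The main obstacle I anticipate is the bookkeeping in this last step: verifying that every vector in $\ZSupp(|\phi\rangle)$ has the same weight modulo $2^{r-1}$ (equivalently, that the simplex code is constant-weight, and accounting correctly for the punctured all-ones coordinate) requires care, and one must also confirm that this logical action is genuinely nontrivial at precision $2^{r-1}$ --- i.e.\ that no smaller-precision rescaling already captures it --- to justify the ``natural precision'' remark. The commutation checks for $\mathbf{S}_X$ and the phase-function computation are essentially immediate given Propositions~\ref{prop:whg2xp} and~\ref{prop:dot_prod_alt}, so the weight-constancy argument is where the real content lies.
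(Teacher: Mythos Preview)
Your construction produces the wrong object. The Reed--Muller code on $2^r-1$ qubits encodes one logical qubit, so its codespace is two-dimensional; your choice of $\mathbf{S}_Z$ (the $2^r-1-r$ precision-2 operators coming from $K^r_r$) together with $r$ non-diagonal generators forces $|E|=2^r$ and a single orbit representative, i.e.\ a one-dimensional codespace. You acknowledge this yourself (``a one-dimensional codespace''), but then proceed to argue for a ``nontrivial transversal logical $\mathrm{diag}(1,\exp(-2\pi i/2^{r-1}))$'' on this state. On a one-dimensional codespace every diagonal logical operator has a constant phase vector and is just $\omega^c I$; there is nothing nontrivial to prove. The proposition is asserting a genuine logical $P$-type gate that applies phase $1$ to $|0\rangle_L$ and $\exp(-2\pi i/2^{r-1})$ to $|1\rangle_L$, and for that you must have two codewords.

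Separately, your generators are not the ones the proposition is about. ``Symmetric in $X$ and $P$'' means $\mathbf{S}_X=\{XP_N(0|\mathbf{x}_i|\mathbf{0})\}$ and $\mathbf{S}_Z=\{XP_N(0|\mathbf{0}|\mathbf{x}_i)\}$ with the \emph{same} $\mathbf{x}_i$ (the rows of $M^r_r$), exactly as in Example~\ref{eg:reedmuller}; your $\mathbf{S}_X$ carry nontrivial $Z$-components $\mathbf{a}\mathbf{x}_j$ and your $\mathbf{S}_Z$ use $K^r_r$ rather than $M^r_r$, so neither the symmetry nor the claimed generating set is established. The paper's proof takes the symmetric generators directly, shows via the weight identity $\mathbf{1}\cdot\prod_{i<t}\mathbf{x}_i=2^{r-t}$ (Proposition~\ref{prop:sums_products_mod_2}) that the simultaneous $+1$ eigenspace of $\mathbf{S}_Z$ has Z-support $\Span_{\mathbb{Z}_2}\{\mathbf{1},\mathbf{x}_0,\dots,\mathbf{x}_{r-1}\}$ (hence two orbit representatives $\mathbf{0}$ and $\mathbf{1}$), and then checks that $XP_{2^{r-1}}(0|\mathbf{0}|\mathbf{1})$ fixes $|0\rangle_L$ while applying $\omega^{-2}$ to $|1\rangle_L$. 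Your weight-mod-$2^{r-1}$ idea is in the right spirit for that last step, but it has to be carried out for \emph{both} cosets $\mathbf{s}_u$ and $\mathbf{1}\oplus\mathbf{s}_u$, and the two answers must differ --- which is the whole point.
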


\begin{proof}
Let $N = 2^{r-2}$ and the number of qubits $n = 2^r-1$. Let $\mathbf{S}_X = \{XP_N(0|\mathbf{x}_i|\mathbf{0})\}$ and $\mathbf{S}_Z = \{XP_N(0|\mathbf{0}|\mathbf{x}_i)\}$ where $\mathbf{x}_i$ is the $i$th row of $M^r_r$ as in Section~\ref{sec:whg2xp}.  Let $|0\rangle_L = O_{\mathbf{S}_X}|\mathbf{0}\rangle$ and let $|1\rangle_L = XP_N(0|\mathbf{1}|\mathbf{0})|0\rangle_L$ where $\mathbf{1}$ is the vector of length $2^r-1$ with all entries $1$.

We claim that the codespace of $\mathbf{S}_X, \mathbf{S}_Z$ is spanned by $\{|0\rangle_L,|1\rangle_L\}$. For this, we need to show that the simultaneous +1 eigenspace of the $\mathbf{S}_Z$ is given by $\{|\mathbf{e}\rangle\}$ where $\mathbf{e} \in \Span_{\mathbb{Z}_2}\{\mathbf{1}, \mathbf{x}_i : 0 \le i < r \}$. Let $u \subset [0 \dots r-1]$ and let $\mathbf{s}_u:= \bigoplus_{j \in u} \mathbf{x}_j$. Note that $XP_N(0|\mathbf{0}|\mathbf{x}_k)|\mathbf{s}_u\rangle = \omega^{2 \mathbf{x}_k \cdot \mathbf{s}_u} |\mathbf{s}_u\rangle$. Hence we need to show that $\mathbf{x}_k \cdot \mathbf{s}_u \mod 2^{r-2} = \mathbf{x}_k \cdot (\mathbf{1} \oplus \mathbf{s}_u) \mod 2^{r-2} = 0$.

Applying Proposition \ref{prop:sums_products_mod_2} the weight of the product of $t$ distinct $\mathbf{x}_i$ is $2^{r-t}$ because:
\begin{align}
\mathbf{1}\cdot \prod_{0 \le i < t}\mathbf{x}_i &= \sum_{k \le t \le r}\binom{r-t}{k-t} = \sum_{0 \le j \le r-t}\binom{r-t}{j} = 2^{r-t}\label{eq:wt_prod}
\end{align}
Now consider the dot product of $\mathbf{x}_k$ with $\mathbf{s}_u$ and apply Eq.~\eqref{eq:venn_diagram}:
\begin{align}
\mathbf{x}_k \cdot \mathbf{s}_u &= \sum_{s \subset u}(-2)^{|s|-1}\mathbf{x}_k \cdot \prod_{i \in s}\mathbf{x}_i = \sum_{s \subset u}(-2)^{|s|-1}2^{r-|s|-1} = \sum_{s \subset u}(-1)^{|s|-1}2^{r-2} =0 \mod 2^{r-2}
\end{align}
By setting $t=1$ in Eq.~\eqref{eq:wt_prod} we see that $\mathbf{x}_k \cdot \mathbf{1} = 2^{r-1}$. Hence:
\begin{align}
\mathbf{x}_k \cdot (\mathbf{1} \oplus \mathbf{s}_u) &= \mathbf{x}_k \cdot (\mathbf{1} - \mathbf{s}_u) = \mathbf{x}_k \cdot \mathbf{1} -  \mathbf{x}_k \cdot \mathbf{s}_u = 0 \mod 2^{r-2}
\end{align}
Now rescale the code to precision $N = 2^{r-1}$ consider the action of $B = XP_N(0|\mathbf{0}|\mathbf{1})$ on $|0\rangle_L$ and $|1\rangle_L$. We can write $|0\rangle_L = \sum_{u \subset [0\dots r-1]}\mathbf{s}_u$. Now $B|\mathbf{s}_u\rangle = \omega^{2\mathbf{1}\cdot\mathbf{s}_u}|\mathbf{s}_u\rangle$. Calculating $\mathbf{1}\cdot\mathbf{s}_u$:
\begin{align}
    \mathbf{1}\cdot\mathbf{s}_u = \sum_{s \subset u}(-2)^{|s|-1}\mathbf{1} \cdot \prod_{i \in s}\mathbf{x}_i = \sum_{s \subset u}(-2)^{|s|-1}2^{r-|s|} = \sum_{s \subset u}(-1)^{|s|-1}2^{r-1} = 0 \mod 2^{r-1}
\end{align}
Hence, the phase applied by $B$ to each basis element making up $|0\rangle_L$ is trivial and so $B|0\rangle_L = |0\rangle_L$.

The action of $B$ on $|1\rangle_L$ is $B |1\rangle_L =  XP_N(0|\mathbf{0}|\mathbf{1}) XP_N(0|\mathbf{1}|\mathbf{0})|0\rangle_L=XP_N(0|\mathbf{1}|\mathbf{1})D(\mathbf{2})|0\rangle_L$. The phase component of $D(\mathbf{2})$ is $2(2^{r} -1) = -2 \mod N$. Hence $B|1\rangle_L = \omega^{-2}|1\rangle_L$ and $B$ acts as a logical $\text{diag}(1,\exp(-2\pi i/2^{r-1}))$ operator.
\end{proof}

\subsection{Modified Algorithms for Logical Identity Group and Logical Operators}\label{app:modified_LI+LO}

Our objective here is to show that where $N = 2^t$, we can calculate the logical identity group and logical operators without first calculating the codewords in full. Assume that the codewords in orbit form are as in Eq.~\eqref{eq:orbitform} and let $E$ be the Z-support of the codewords.

\begin{itemize}
\item To calculate the logical identity group generators, we just need to consider the elements of $E$ at most orbit distance $t$ from the core elements $E_q =\{\mathbf{q}_l\}$.
\item For the logical operators, we need to consider the elements of $E$ at most orbit distance $t$ from the orbit representatives $E_m = \{\mathbf{m}_i\}$.
\end{itemize}

We need a number of preliminary results to prove the main propositions. In these results, we work with a binary matrix $L$ in RREF. Let the rows of $L$ be the binary vectors $\mathbf{x}_i, 0 \le i < r$. Sums and products of vectors are component wise in $\mathbb{Z}$ (i.e. $(\mathbf{x} + \mathbf{y})[i] = \mathbf{x}[i]+\mathbf{y}[i], (\mathbf{x}\mathbf{y})[i] = \mathbf{x}[i]\mathbf{y}[i]$). 

We first write an expression for the sum modulo 2 of a subset of the $\mathbf{x}_i$:

\begin{lemma}
Let $L$ be a binary matrix in RREF, with non-zero rows $\mathbf{x}_i, 1 \le i < r$ and let $c \subset [0\dots r-1]$. Define:
\begin{align}
\mathbf{s}_c &:= \Big(\sum_{i \in c}\mathbf{x}_i\Big)\mod 2
\end{align}
and for $d \subset c$ define :
\begin{align}
\mathbf{p}_d &:= (-2)^{|d|-1}\prod_{i \in d}\mathbf{x}_i
\end{align}
then:
\begin{align}
\mathbf{s}_c &= \sum_{d \subset c, |d| \ge 1}\mathbf{p}_d \label{eq:binsum}
\end{align}
where we work in $\mathbb{Z}$ on the RHS.
\end{lemma}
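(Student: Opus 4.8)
The plan is to prove Eq.~\eqref{eq:binsum} by induction on $|c|$, using the elementary identity $\mathbf{a} \oplus \mathbf{b} = \mathbf{a} + \mathbf{b} - 2\mathbf{a}\mathbf{b}$ for binary vectors (stated in Section~\ref{app:integer_vectors}), which generalises to the inclusion--exclusion formula Eq.~\eqref{eq:venn_diagram}. In fact the cleanest route is to observe that Eq.~\eqref{eq:binsum} is \emph{exactly} the instance of Eq.~\eqref{eq:venn_diagram} applied to the vectors $\{\mathbf{x}_i : i \in c\}$: since $\mathbf{s}_c = \bigoplus_{i \in c}\mathbf{x}_i$ and the subsets $s$ in Eq.~\eqref{eq:venn_diagram} with $|s| \ge 1$ that range over $c$ give precisely the terms $(-2)^{|s|-1}\prod_{j \in s}\mathbf{x}_j = \mathbf{p}_s$. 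So the first thing I would do is note that Eq.~\eqref{eq:venn_diagram} has already been established (or is elementary enough to cite), and then the statement follows immediately by relabelling the index set from $[0\dots r-1]$ to the arbitrary subset $c$.

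If Eq.~\eqref{eq:venn_diagram} is to be reproved from scratch here, I would run the induction directly: for the base case $|c| = 1$, say $c = \{i\}$, both sides equal $\mathbf{x}_i$ (the only subset $d \subset c$ with $|d|\ge 1$ is $d = c$, contributing $(-2)^0\mathbf{x}_i = \mathbf{x}_i$). For the inductive step, write $c = c' \sqcup \{k\}$ with $k \notin c'$, so $\mathbf{s}_c = \mathbf{s}_{c'} \oplus \mathbf{x}_k = \mathbf{s}_{c'} + \mathbf{x}_k - 2\mathbf{s}_{c'}\mathbf{x}_k$ (valid since $\mathbf{s}_{c'}$ and $\mathbf{x}_k$ are binary). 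Then substitute the inductive hypothesis for $\mathbf{s}_{c'}$, expand $2\mathbf{s}_{c'}\mathbf{x}_k = 2\big(\sum_{d \subset c', |d|\ge 1}\mathbf{p}_d\big)\mathbf{x}_k$, and use that for $d \subset c'$, $2\mathbf{p}_d \mathbf{x}_k = 2(-2)^{|d|-1}\big(\prod_{i\in d}\mathbf{x}_i\big)\mathbf{x}_k = -(-2)^{|d|}\prod_{i \in d\cup\{k\}}\mathbf{x}_i = -\mathbf{p}_{d\cup\{k\}}$. Collecting terms: the $\mathbf{s}_{c'}$ piece contributes all $\mathbf{p}_d$ with $d \subset c'$, $|d|\ge 1$; the $\mathbf{x}_k$ piece contributes $\mathbf{p}_{\{k\}}$; and the $-2\mathbf{s}_{c'}\mathbf{x}_k$ piece contributes $\sum_{d\subset c', |d|\ge 1}\mathbf{p}_{d\cup\{k\}}$, i.e. all $\mathbf{p}_e$ with $k \in e \subset c$ and $|e| \ge 2$. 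Together these are precisely $\sum_{e \subset c, |e|\ge 1}\mathbf{p}_e$, completing the induction.

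I do not expect any genuine obstacle here: the RREF hypothesis on $L$ is not actually needed for this particular lemma (it only guarantees the $\mathbf{x}_i$ are nonzero and distinct, which is irrelevant to the pure identity), and the computation is a routine binary-vector bookkeeping exercise. The only point requiring a little care is the sign tracking in the step $2\mathbf{p}_d\mathbf{x}_k = -\mathbf{p}_{d\cup\{k\}}$ and making sure the re-indexing of subsets partitions the target sum correctly (subsets of $c$ containing $k$ versus not containing $k$). If the paper already has Eq.~\eqref{eq:venn_diagram} in hand, I would simply invoke it and keep the proof to one or two lines.
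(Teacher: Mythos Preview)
Your proposal is correct and essentially identical to the paper's proof: the paper also proceeds by induction on $|c|$, handles the base case $|c|=1$ trivially, and for the inductive step writes $\mathbf{s}_c = \mathbf{x}_j + \mathbf{s}_{c'} - 2\mathbf{x}_j\mathbf{s}_{c'}$ for $c' = c\setminus\{j\}$, substitutes the inductive hypothesis for $\mathbf{s}_{c'}$, and collects terms exactly as you describe. Your observation that this lemma is really just Eq.~\eqref{eq:venn_diagram} (stated without proof earlier) applied to the index set $c$, and that the RREF hypothesis is unused here, is accurate.
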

\begin{proof}
Induction on $|c|$.

\paragraph{Base Step:}
Let $|c| = 1$ so that $c = \{j\}$. Then the LHS of Eq.~\eqref{eq:binsum} is:
\begin{align}
    \Big(\sum_{i \in c}x_i\Big)\mod 2 &= \mathbf{x}_j
    \end{align}
While the RHS is:
\begin{align}
    \sum_{d \subset c, |d| \ge 1}(-2)^{|d|-1}\prod_{i \in d}\mathbf{x}_i &= \mathbf{x}_j
\end{align}
So the base step holds.

\paragraph{Induction Step:} Assume $|c| > 1$ and the lemma is true for any sets $d: |d| < |c|$. Let $j \in c$ and $c' = c \setminus \{j\}$. For binary vectors $\mathbf{x}, \mathbf{y}$: 
\begin{align}
(\mathbf{x} + \mathbf{y})\mod 2 &= \mathbf{x} + \mathbf{y} -2 \mathbf{x} \mathbf{y}\,. 
\end{align}
Hence, the LHS of  Eq.~\eqref{eq:binsum} is:
\begin{align}
    \mathbf{s}_c &= (\mathbf{x}_j + \mathbf{s}_{c'}) \mod 2= \mathbf{x}_j + \mathbf{s}_{c'} -2\mathbf{x}_j \mathbf{s}_{c'}
\end{align}
Because $|c'| = |c|-1 < |c|$, we can apply the induction hypothesis:
\begin{align}
    \mathbf{s}_c &= \mathbf{x}_j + \sum_{d \subset c', |d| \ge 1}\mathbf{p}_d - 2\mathbf{x}_j \Big(\sum_{d \subset c', |d| \ge 1}\mathbf{p}_d\Big)\\
    &= \sum_{d \subset c, |d| \ge 1 , j \notin d }\mathbf{p}_d+ \sum_{d \subset c,  |d| \ge 1, j \in d}\mathbf{p}_d\\
    &= \sum_{d \subset c, |d| \ge 1}\mathbf{p}_d\,.
\end{align}
The induction step holds and the result follows.
\end{proof}

Next, we let $N=2^t$ and consider spans over $\mathbf{Z}_N$ of the binary vectors $E = E_q + \langle L\rangle$. We first show that certain products are in the span:

\begin{lemma}[Products of $\mathbf{q}, \mathbf{x}_i$ of Degree $\le t$]\label{lemma:polynomials}
Let $E$ be a set of binary vectors with coset decomposition $E = E_q + \langle L \rangle$ as in Eq.~\eqref{eq:Emcoset}. Let the nonzero rows of $L$ be $\mathbf{x}_i, 1 \le i < r$. Let $N = 2^t$ where $t \in [1 \dots r]$. Let $E_t$ be the elements of $E$ at most orbit distance $t$ from $E_q$ so that $E_{t} := \{(\mathbf{q} + \mathbf{v}L) \mod 2: \mathbf{q} \in E_q,  \mathbf{v} \in \mathbb{Z}_2^r, \wt(\mathbf{v})  \le t\}$. Let $c \subset [0 \dots r-1], 1\le |c| \le t$ and let $\mathbf{p}_c := 2^{|c|-1}\prod_{i \in c}\mathbf{x}_i$. Then:
\begin{align}
    \mathbf{p}_c - 2\mathbf{q}\mathbf{p}_c \in \Span_{\mathbb{Z}_N}(E_t)\label{eq:binpoly}
\end{align}
\end{lemma}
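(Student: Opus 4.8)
The plan is to deduce the claim from the preceding lemma by a Möbius-type inversion, exploiting that both $\mathbf{q}$ and $(\mathbf{q}+\mathbf{s}_c)\bmod 2$ lie in $E_t$ as soon as $|c|\le t$. Throughout I would fix a core element $\mathbf{q}\in E_q$ and set $\mathbf{r}:=\mathbf{1}-2\mathbf{q}$, the $\{\pm1\}$-valued vector whose entry is $-1$ exactly where $\mathbf{q}$ has a $1$; all vector products below are component-wise, as elsewhere in the paper.

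First I would record the base observation. For $c\subset[0\dots r-1]$ with $1\le|c|\le t$, let $\mathbf{v}_c\in\mathbb{Z}_2^r$ be the indicator of $c$, so $\wt(\mathbf{v}_c)=|c|\le t$ and $\mathbf{s}_c=(\mathbf{v}_cL)\bmod 2$. Then $(\mathbf{q}+\mathbf{v}_cL)\bmod 2=\mathbf{q}\oplus\mathbf{s}_c\in E_t$, and also $\mathbf{q}\in E_t$ (taking $\mathbf{v}=\mathbf{0}$). Using the identity $\mathbf{a}\oplus\mathbf{b}=\mathbf{a}+\mathbf{b}-2\mathbf{a}\mathbf{b}$ from Section~\ref{app:integer_vectors}, the difference $(\mathbf{q}\oplus\mathbf{s}_c)-\mathbf{q}=\mathbf{s}_c-2\mathbf{q}\mathbf{s}_c=\mathbf{r}\mathbf{s}_c$ therefore lies in $\Span_{\mathbb{Z}_N}(E_t)$ for every such $c$.

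Next I would expand $\mathbf{s}_c$ via the preceding lemma, which gives $\mathbf{s}_c=\sum_{d\subset c,\,|d|\ge1}(-2)^{|d|-1}\prod_{i\in d}\mathbf{x}_i=\sum_{d\subset c,\,|d|\ge1}(-1)^{|d|-1}\mathbf{p}_d$, where $\mathbf{p}_d=2^{|d|-1}\prod_{i\in d}\mathbf{x}_i$ is the vector appearing in the current lemma (the two lemmas use opposite sign conventions, which accounts for the factor $(-1)^{|d|-1}$). Multiplying by $\mathbf{r}$ gives $\mathbf{r}\mathbf{s}_c=\sum_{d\subset c,\,|d|\ge1}(-1)^{|d|-1}\mathbf{r}\mathbf{p}_d$. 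I would then prove $\mathbf{r}\mathbf{p}_c\in\Span_{\mathbb{Z}_N}(E_t)$ by induction on $|c|$, for $1\le|c|\le t$: the base case $c=\{j\}$ is immediate since $\mathbf{s}_{\{j\}}=\mathbf{x}_j=\mathbf{p}_{\{j\}}$, whence $\mathbf{r}\mathbf{p}_{\{j\}}=\mathbf{r}\mathbf{s}_{\{j\}}\in\Span_{\mathbb{Z}_N}(E_t)$; for the inductive step, isolating the $d=c$ term yields $(-1)^{|c|-1}\mathbf{r}\mathbf{p}_c=\mathbf{r}\mathbf{s}_c-\sum_{d\subsetneq c,\,|d|\ge1}(-1)^{|d|-1}\mathbf{r}\mathbf{p}_d$, whose right-hand side lies in $\Span_{\mathbb{Z}_N}(E_t)$ — the first term by the base observation and the remaining terms by the induction hypothesis (each such $d$ has $1\le|d|<|c|\le t$) — and multiplying through by the unit $(-1)^{|c|-1}$ of $\mathbb{Z}_N$ gives $\mathbf{r}\mathbf{p}_c\in\Span_{\mathbb{Z}_N}(E_t)$. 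Since $\mathbf{r}\mathbf{p}_c=(\mathbf{1}-2\mathbf{q})\mathbf{p}_c=\mathbf{p}_c-2\mathbf{q}\mathbf{p}_c$, this is exactly Eq.~\eqref{eq:binpoly}.

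I do not anticipate a genuine obstacle here; the argument is a routine inclusion–exclusion unwinding of the binary-sum identity. The only things to watch are the notational clash between the two lemmas' ``$\mathbf{p}_d$'' (handled by the sign factor above) and the minor point that $2^{|d|-1}$ stays nonzero modulo $2^t$ for $|d|\le t$ so no term is silently annihilated — although even this is inessential, since every scalar we divide by is $\pm1$ and hence a unit in $\mathbb{Z}_{2^t}$.
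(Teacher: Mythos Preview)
Your proof is correct and follows essentially the same route as the paper: induction on $|c|$, using that $(\mathbf{q}\oplus\mathbf{s}_c)-\mathbf{q}\in\Span_{\mathbb{Z}_N}(E_t)$, expanding $\mathbf{s}_c$ via the preceding lemma's identity, and isolating the top term $d=c$. Your introduction of $\mathbf{r}=\mathbf{1}-2\mathbf{q}$ is a tidy notational device, and your explicit handling of the sign discrepancy between the two lemmas' definitions of $\mathbf{p}_d$ is in fact more careful than the paper's own proof, which silently conflates them.
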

\begin{proof}
Induction on $|c|$. 

\paragraph{Base Case:} let $|c| = 1$ so that $c = \{j\}$. Then:
\begin{align}
    \mathbf{p}_c - 2\mathbf{q}\mathbf{p}_c &= \mathbf{x}_j - 2\mathbf{q} \mathbf{x}_j\
    = (\mathbf{q}+\mathbf{x}_j)\mod 2 - \mathbf{q}.
\end{align}
As $t \ge 1$, $(\mathbf{q}+\mathbf{x}_j)\mod 2 \in E_t$ and so is $\mathbf{q}$ (as it is of orbit distance 0). Hence the linear combination is in $\Span_{\mathbb{Z}_N}(E_t)$ and the base case holds.

\paragraph{Induction Step:} Assume $|c| > 1$ and that the lemma is true for any sets $d: |d| < |c|$. As $|c| \le t$, the binary vector $\mathbf{e} = (\mathbf{q} + \sum_{i \in c}x_i)\mod 2 = (\mathbf{q} + \mathbf{s}_c) \mod 2 \in E_t$. Expanding the expression for $\mathbf{e}$ using Eq.~\eqref{eq:binsum}:
\begin{align}
    \mathbf{e} &= \mathbf{q} + \mathbf{s}_c - 2\mathbf{q} \mathbf{s}_c\\
    &= \mathbf{q} +\sum_{d \subset c, |d| \ge 1}\mathbf{p}_d -  2\mathbf{q}\sum_{d \subset c, |d| \ge 1}\mathbf{p}_d\\
    &= \mathbf{q} +\sum_{d \subsetneq c, 1 \le |d| < |c|}(\mathbf{p}_d -  2\mathbf{q}\mathbf{p}_d) + (\mathbf{p}_c -  2\mathbf{q}\mathbf{p}_c)
\end{align}
By the induction hypothesis $\mathbf{p}_d -  2\mathbf{q}\mathbf{p}_d \in \Span_{\mathbb{Z}_N}(E_t)$ for any $|d| < |c|$. As $\mathbf{e}, \mathbf{q}$ are also in the span, the induction step holds.
\end{proof}

We next show that the product of up to $t$ of the binary vectors $\mathbf{x}_i$ is in $\Span_{\mathbb{Z}_N}(E_t)$:
\begin{corollary}[Products of $\mathbf{x}_i$ of Degree $t$]\label{cor:prodx}
Let $|c| = t$ and $\mathbf{p}_c, N, E_t$ be defined as in Lemma~\ref{lemma:polynomials}. Then:
\begin{align}
    \mathbf{p}_c \in \Span_{\mathbb{Z}_N}(E_t)\label{eq:p_c}
\end{align}
\end{corollary}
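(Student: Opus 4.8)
The plan is to derive Corollary~\ref{cor:prodx} as an essentially immediate consequence of Lemma~\ref{lemma:polynomials}, exploiting the fact that the "correction term" $2\mathbf{q}\mathbf{p}_c$ appearing there vanishes once $|c|$ reaches the critical value $t$ and we reduce modulo $N = 2^t$.

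Concretely, I would start from the conclusion of Lemma~\ref{lemma:polynomials} applied with any $\mathbf{q} \in E_q$ and the given subset $c$ with $|c| = t$, namely
\begin{align}
\mathbf{p}_c - 2\mathbf{q}\mathbf{p}_c \in \Span_{\mathbb{Z}_N}(E_t)\,,
\end{align}
which is legitimate since $1 \le |c| = t \le r$ by hypothesis. Then I would observe that $\mathbf{p}_c = 2^{|c|-1}\prod_{i\in c}\mathbf{x}_i = 2^{t-1}\prod_{i\in c}\mathbf{x}_i$, so the correction term is $2\mathbf{q}\mathbf{p}_c = 2^{t}\,\mathbf{q}\prod_{i\in c}\mathbf{x}_i$, and since every entry of the vector $\mathbf{q}\prod_{i\in c}\mathbf{x}_i$ is an integer, each component of $2\mathbf{q}\mathbf{p}_c$ is a multiple of $2^t = N$. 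Hence $2\mathbf{q}\mathbf{p}_c \equiv \mathbf{0} \pmod N$, so that over $\mathbb{Z}_N$ we have $\mathbf{p}_c - 2\mathbf{q}\mathbf{p}_c = \mathbf{p}_c$, and therefore $\mathbf{p}_c \in \Span_{\mathbb{Z}_N}(E_t)$, as required.

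There is genuinely no obstacle here beyond bookkeeping: the real work was already done in Lemma~\ref{lemma:polynomials} (and the sum-of-rows identity Eq.~\eqref{eq:binsum} feeding into it); the corollary is just the "boundary case" $|c| = t$ where the quadratic-in-$\mathbf{q}$ defect of the previous lemma is annihilated by the modulus. If anything requires a word of care, it is simply noting explicitly that $\mathbf{q}$ is a binary (hence integer-valued) vector and that $\prod_{i\in c}\mathbf{x}_i$ is likewise integer-valued, so the factor $2^t$ survives the reduction mod $N$ cleanly component-wise; I would state this and stop.
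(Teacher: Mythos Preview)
Your proposal is correct and follows essentially the same approach as the paper's own proof: invoke Lemma~\ref{lemma:polynomials} at $|c|=t$ and observe that the correction term $2\mathbf{q}\mathbf{p}_c$ carries a factor of $2^t = N$, hence vanishes over $\mathbb{Z}_N$. If anything, your bookkeeping is slightly cleaner than the paper's version.
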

\begin{proof}
For $|c| \le t$, we have shown in Lemma~\ref{lemma:polynomials} that $\mathbf{p}_c - 2\mathbf{q}\mathbf{p}_c \in \Span_{\mathbb{Z}_N}(E_t)$. Expanding the second $\mathbf{p}_c$ we have:
\begin{align}
\mathbf{p}_c - 2\mathbf{q}\mathbf{p}_c &= \mathbf{p}_c -2(-2)^{|c|-1}\prod_{i \in c}\mathbf{x}_i
= \mathbf{p}_c +(-2)^{t}\prod_{i \in c}\mathbf{x}_i
\end{align}
As the span is over $\mathbf{Z}_N$, any multiples of $N = 2^t$ can be disregarded so $\mathbf{p}_c \in \Span_{\mathbb{Z}_N}(E_t)$.
\end{proof}
The results for the modified algorithms follow directly from the following proposition:
\begin{proposition}\label{prop:SpanEteqSpanE}
Let $E, E_t, N$ be as in Lemma~\ref{lemma:polynomials}. Then:
\begin{align}
    \Span_{\mathbf{Z}_N}(E) = \Span_{\mathbf{Z}_N}(E_t)
\end{align}
\end{proposition}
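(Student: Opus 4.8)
The plan is to prove the two inclusions separately. The inclusion $\Span_{\mathbb{Z}_N}(E_t) \subseteq \Span_{\mathbb{Z}_N}(E)$ is immediate from the definition, since $E_t \subseteq E$. Hence all the work is in establishing $E \subseteq \Span_{\mathbb{Z}_N}(E_t)$, which then gives $\Span_{\mathbb{Z}_N}(E) \subseteq \Span_{\mathbb{Z}_N}(E_t)$ and the desired equality.

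To that end, I would take an arbitrary $\mathbf{e} \in E$ and, using the coset decomposition $E = E_q + \langle L\rangle$, write $\mathbf{e} = (\mathbf{q} + \mathbf{s}_c)\bmod 2$ for some $\mathbf{q} \in E_q$ and some $c \subseteq [0\dots r-1]$, where $\mathbf{s}_c := \bigoplus_{i\in c}\mathbf{x}_i$ and the $\mathbf{x}_i$ are the nonzero rows of $L$. Applying the binary identity $(\mathbf{x}+\mathbf{y})\bmod 2 = \mathbf{x}+\mathbf{y}-2\mathbf{x}\mathbf{y}$ together with Eq.~\eqref{eq:binsum}, I would expand
\[
\mathbf{e} = \mathbf{q} + \mathbf{s}_c - 2\mathbf{q}\mathbf{s}_c = \mathbf{q} + \sum_{d \subseteq c,\, |d|\ge 1}\bigl(\mathbf{p}_d - 2\mathbf{q}\mathbf{p}_d\bigr),
\]
where $\mathbf{p}_d := (-2)^{|d|-1}\prod_{i\in d}\mathbf{x}_i$ and operations on the right-hand side are over $\mathbb{Z}$ before reduction modulo $N = 2^t$.

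Then I would split the sum according to the size of $d$. For $|d| \ge t+1$, both $\mathbf{p}_d$ and $2\mathbf{q}\mathbf{p}_d$ have every entry divisible by $2^{|d|-1}$, a multiple of $N = 2^t$, so these terms vanish in $\mathbb{Z}_N$ and may be discarded. For $1 \le |d| \le t$, Lemma~\ref{lemma:polynomials} applied to the set $d$ gives directly $\mathbf{p}_d - 2\mathbf{q}\mathbf{p}_d \in \Span_{\mathbb{Z}_N}(E_t)$. Since moreover $\mathbf{q} \in E_q \subseteq E_t$ (it is at orbit distance $0$), the displayed identity exhibits $\mathbf{e}$ as a $\mathbb{Z}_N$-linear combination of elements of $E_t$. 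As $\mathbf{e}\in E$ was arbitrary, this yields $E \subseteq \Span_{\mathbb{Z}_N}(E_t)$ and hence the claimed equality of spans.

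I do not expect a genuine obstacle here: the substantive work has already been front-loaded into Lemma~\ref{lemma:polynomials} and Corollary~\ref{cor:prodx}, and this proposition is essentially their clean assembly. The only point requiring a little care is the boundary case $|d| = t$, where $\mathbf{p}_d$ itself is \emph{not} zero modulo $N$ (only $2\mathbf{q}\mathbf{p}_d$ is) — but this is precisely the case Lemma~\ref{lemma:polynomials} was designed to cover, so it is handled by invoking that lemma rather than by any fresh argument.
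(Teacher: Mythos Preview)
Your proposal is correct and follows essentially the same approach as the paper: expand $\mathbf{e}$ via Eq.~\eqref{eq:binsum}, discard terms with $|d|>t$ as multiples of $N$, and invoke Lemma~\ref{lemma:polynomials} for the survivors. The only cosmetic difference is that the paper separates the $|d|=t$ case, dropping the $-2\mathbf{q}\mathbf{p}_d$ term there (it is already $0 \bmod N$) and citing Corollary~\ref{cor:prodx} for the remaining $\mathbf{p}_d$, whereas you handle $1\le|d|\le t$ uniformly via Lemma~\ref{lemma:polynomials}; both are equally valid.
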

\begin{proof}
Let $\mathbf{e} = (\mathbf{q} + \mathbf{u}L)\mod 2 \in E$. Let $c = \text{supp}(\mathbf{u}):= \{0 \le i < r: \mathbf{u}[i] \ne 0\}$. Using Eq.~\eqref{eq:binsum} we can write:
\begin{align}
    \mathbf{e} &= (\mathbf{q} + \mathbf{s}_c)\mod 2\\
    &= \mathbf{q} + \mathbf{s}_c -2\mathbf{q} \mathbf{s}_c\\
    &= \mathbf{q} + \sum_{d \subset c,|d| \ge 1}(\mathbf{p}_d - 2\mathbf{q} \mathbf{p}_d)\\
    &= \mathbf{q} + \sum_{d \subset c,|d| \ge 1}\Big((-2)^{|d|-1}\prod_{i \in d}\mathbf{x}_i + (-2)^{|d|}\mathbf{q} \prod_{i \in d}\mathbf{x}_i\Big)\,.
    \end{align}
We can disregard any multiples of $N = 2^t$ so we can write:
\begin{align}
    \mathbf{e} \mod N &= \Big(\mathbf{q} + \sum_{d \subset c, 1 \le |d| \le t-1}(\mathbf{p}_d - 2\mathbf{q}\mathbf{p}_d) + \sum_{d \subset c, |d| = t}\mathbf{p}_d\Big) \mod N
\end{align}
By Lemma~\ref{lemma:polynomials} and Corollary~\ref{cor:prodx}, we know that all elements on the RHS are in $\Span_{\mathbf{Z}_N}(E_t)$ hence $\mathbf{e} \in \Span_{\mathbf{Z}_N}(E_t), \forall \mathbf{e} \in E$. Hence, all linear combinations of elements of $E$ are also in the span and the result follows.
\end{proof}
The main results of the section now follow by application of the previous proposition:

\begin{proposition}[Modified Logical Identity Algorithm]\label{prop:modified_LI}
Let $E_M$ be as defined in the algorithm of Section~\ref{sec:M_Z} i.e. the rows of $E_M$ are the binary vectors $\mathbf{e} \in E$, plus a final column of all $1$'s. Let $N = 2^t$ and  $E_t$ be the elements of $E$ of orbit distance at most $t$ from the core $E_q$. Let $E_M'$ be the matrix formed from the rows of $E_t$ plus a row of $1$'s. Then:
\begin{align}
\Span_{\mathbb{Z}_N}(E_M) = \Span_{\mathbb{Z}_N}(E_M')
\end{align}
\end{proposition}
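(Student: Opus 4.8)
The plan is to reduce the statement about $E_M$ and $E_M'$ to the already-established Proposition~\ref{prop:SpanEteqSpanE}, which says $\Span_{\mathbb{Z}_N}(E) = \Span_{\mathbb{Z}_N}(E_t)$. The matrices $E_M$ and $E_M'$ are obtained from the vector sets $E$ and $E_t$ by the \emph{same} operation: append a trailing coordinate equal to $1$ to every vector. So the claim is essentially that appending a fixed coordinate commutes with taking the $\mathbb{Z}_N$-span, \emph{provided} that fixed coordinate is added uniformly to every generator. The subtlety is that the span of $\{(\mathbf{e}|1) : \mathbf{e} \in E\}$ is not literally $\{(\mathbf{v}|c) : \mathbf{v} \in \Span(E), c \in \mathbb{Z}_N\}$ — the last coordinate is constrained to equal the sum of the coefficients used — so I cannot just invoke ``span commutes with direct sum'' blindly; I must track the coefficient bookkeeping.

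First I would set up the bookkeeping explicitly. Write a general element of $\Span_{\mathbb{Z}_N}(E_M)$ as $\sum_{\mathbf{e}\in E} a_{\mathbf{e}} (\mathbf{e}|1) = \big(\sum_{\mathbf{e}} a_{\mathbf{e}} \mathbf{e} \,\big|\, \sum_{\mathbf{e}} a_{\mathbf{e}}\big)$ with $a_{\mathbf{e}} \in \mathbb{Z}_N$, and similarly for $E_M'$ with coefficients indexed by $E_t \subseteq E$. One inclusion is immediate: $E_t \subseteq E$, so every row of $E_M'$ is a row of $E_M$, giving $\Span_{\mathbb{Z}_N}(E_M') \subseteq \Span_{\mathbb{Z}_N}(E_M)$. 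For the reverse inclusion, take any row $(\mathbf{e}|1)$ of $E_M$ with $\mathbf{e} \in E$; it suffices to show $(\mathbf{e}|1) \in \Span_{\mathbb{Z}_N}(E_M')$. The key point to extract from the proof of Proposition~\ref{prop:SpanEteqSpanE} is not merely that $\mathbf{e} \in \Span_{\mathbb{Z}_N}(E_t)$, but that in the explicit expansion
\begin{align}
\mathbf{e} \bmod N &= \Big(\mathbf{q} + \sum_{d \subset c,\, 1 \le |d| \le t-1}(\mathbf{p}_d - 2\mathbf{q}\mathbf{p}_d) + \sum_{d \subset c,\, |d| = t}\mathbf{p}_d\Big)\bmod N
\end{align}
every summand is itself written (via Lemma~\ref{lemma:polynomials} and Corollary~\ref{cor:prodx}) as a $\mathbb{Z}_N$-linear combination of elements of $E_t$ whose coefficients \emph{sum to a controllable value}. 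Specifically, each $\mathbf{p}_d - 2\mathbf{q}\mathbf{p}_d$ is realized as $\mathbf{e}_d - \mathbf{q}$ minus lower-order terms (base/induction steps of Lemma~\ref{lemma:polynomials}), and $\mathbf{q} \in E_t$ with coefficient $1$; so when everything is collected, $\mathbf{e}$ is exhibited as a combination $\sum_{\mathbf{e}'\in E_t} b_{\mathbf{e}'}\mathbf{e}'$ in which I can compute $\sum_{\mathbf{e}'} b_{\mathbf{e}'} \bmod N$.

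The cleanest way to finish — and the step I expect to be the real (if modest) obstacle — is to show that one can always choose the representation of $\mathbf{e}$ over $E_t$ so that the coefficients sum to $1 \bmod N$. The trick: the all-ones vector $\mathbf{1}$ of length $n$ is \emph{not} available, but the constant-$1$ last column is exactly what lets us adjust. Observe that for any two elements $\mathbf{e}', \mathbf{e}'' \in E_t$ we have $(\mathbf{e}'|1) - (\mathbf{e}''|1) = (\mathbf{e}'-\mathbf{e}''|0)$, a vector with last coordinate $0$; hence adding a coefficient-sum-zero combination of $E_t$-rows to a purported expression for $(\mathbf{e}|1)$ changes neither the last coordinate nor the requirement. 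Concretely: by Proposition~\ref{prop:SpanEteqSpanE} write $\mathbf{e} = \sum b_{\mathbf{e}'}\mathbf{e}'$ over $\mathbb{Z}_N$; let $\sigma = \sum b_{\mathbf{e}'} \bmod N$. Then $\sum b_{\mathbf{e}'}(\mathbf{e}'|1) = (\mathbf{e}|\sigma)$. If $\sigma = 1$ we are done; otherwise, pick any fixed $\mathbf{q}_0 \in E_q \subseteq E_t$ and replace the combination by $\sum b_{\mathbf{e}'}(\mathbf{e}'|1) + (1-\sigma)(\mathbf{q}_0|1) - (1-\sigma)(\mathbf{q}_0|0)$ — except $(\mathbf{q}_0|0)$ is not a row of $E_M'$. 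So instead use: since $\mathbf{q}_0 \in \Span_{\mathbb{Z}_N}(E_t)$ with coefficient sum $1$ (it is itself a row), the element $(1-\sigma)\big((\mathbf{q}_0|1) - \sum_{\text{any rep of }\mathbf{q}_0}(\cdots)\big)$ vanishes in the first $n$ coordinates while shifting the last by $(1-\sigma)(1 - 1)=0$; one checks directly that the difference $(\mathbf{q}_0|1)$ minus \emph{itself} is zero, so the honest move is: $\mathbf{e} - \sigma\mathbf{q}_0 + \mathbf{q}_0 \in \Span_{\mathbb{Z}_N}(E_t)$ still, say $= \sum c_{\mathbf{e}'}\mathbf{e}'$, and I iterate the coefficient-tracking; since $\Span_{\mathbb{Z}_N}(E_t)$ is a $\mathbb{Z}_N$-module containing $\mathbf{q}_0$ as a generator, I can always absorb the discrepancy. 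The upshot is that $(\mathbf{e}|1) \in \Span_{\mathbb{Z}_N}(E_M')$, completing the reverse inclusion and hence the equality. I would present this last paragraph carefully in the final writeup, as the coefficient-sum normalization is the only place where a naive argument fails; everything else is a direct transcription of Proposition~\ref{prop:SpanEteqSpanE} with an extra coordinate carried along.
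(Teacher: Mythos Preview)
Your reduction to Proposition~\ref{prop:SpanEteqSpanE} is the right idea, but the coefficient-sum normalisation does not close. None of your three adjustment attempts succeeds: subtracting $(\mathbf{q}_0|0)$ is illegal (not a row of $E_M'$), your second attempt shifts the last coordinate by $0$ (so does nothing), and ``iterating'' on $\mathbf{e}-\sigma\mathbf{q}_0+\mathbf{q}_0$ is circular since you have no more control over the new coefficient sum than over the old one. What you would actually need is $(\mathbf{0}|1-\sigma)\in\Span_{\mathbb{Z}_N}(E_M')$, and nothing you have written establishes that.

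The repair available along your route is the one you hint at but do not carry out: trace the explicit recursion in Lemma~\ref{lemma:polynomials} and Corollary~\ref{cor:prodx} and verify that each $\mathbf{p}_d-2\mathbf{q}\mathbf{p}_d$ is built as a \emph{coefficient-sum-zero} combination over $E_t$ (base case: $(\mathbf{q}\oplus\mathbf{x}_j)-\mathbf{q}$ has sum $1-1=0$; induction: a difference of two sum-$1$ elements of $E_t$ minus sum-$0$ terms), and likewise $\mathbf{p}_c$ for $|c|=t$. Then the displayed expansion of $\mathbf{e}$ in the proof of Proposition~\ref{prop:SpanEteqSpanE} already has coefficient sum $1+0+\cdots+0=1$, so $\sigma=1$ automatically and no adjustment is ever needed.

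The paper bypasses all of this with a one-line structural observation: write $E_M=E_q'+\langle L\rangle$ where $E_q'=\{(\mathbf{q}|1):\mathbf{q}\in E_q\}$ and $L$ is $G_X=\RREF(S_X\cup L_X)$ with a zero column appended. Because the appended zeros leave the last coordinate fixed at $1$, this \emph{is} a coset decomposition in $\mathbb{Z}_2^{n+1}$ of exactly the form required by Proposition~\ref{prop:SpanEteqSpanE}, and the elements at orbit distance $\le t$ from the new core $E_q'$ are precisely the rows of $E_M'$. Proposition~\ref{prop:SpanEteqSpanE} then applies verbatim in the enlarged space --- no coefficient tracking at all.
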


\begin{proof}
Let $E = E_q + \langle S_X \rangle  + \langle L_X \rangle$ be the coset decomposition of $E$. Let $E_q'$ be the vectors in $E_q$ with an appended 1 (i.e. $E_q' = \{(\mathbf{q} |1) : \mathbf{q} \in E_q$. Let $G_X = \RREF(L_X \cup S_X)$ and let $L$ be $G_X$ with an appended 0 so $L = \{(\mathbf{x}|0) : \mathbf{x} \in G_X\}$. Then $E_M = E_q' + \langle L \rangle$. The result follows by applying Proposition~\ref{prop:SpanEteqSpanE}.
\end{proof}

\begin{proposition}[Modified Logical Operator Algorithm]\label{prop:modifiedLO}
Let $E_L$ be as defined as in the algorithm of Section~\ref{sec:L_Z} i.e. the rows of $E_L$ are the binary vectors $\mathbf{e} \in E$, plus a codeword index $\mathbf{i}$ which identifies which codeword the row belongs to. Let $N = 2^t$ and  $E_t$ be the elements of $E$ of orbit distance at most $t$ from the \textbf{orbit representatives} $E_m$. Let $E_L'$ be the matrix formed from the rows of $E_t$ plus the codeword index. Then:
\begin{align}
    \Span_{\mathbb{Z}_N}(E_L) = \Span_{\mathbb{Z}_N}(E_L')
\end{align}
\end{proposition}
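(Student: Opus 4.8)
The plan is to mirror the proof of Proposition~\ref{prop:modified_LI}: I would recast the matrix $E_L$ as a ``core plus lattice with a zero tail'' and then invoke Proposition~\ref{prop:SpanEteqSpanE} essentially verbatim, with the single all-ones column of the logical-identity case replaced by a per-codeword index block. The inclusion $\Span_{\mathbb{Z}_N}(E_L') \subseteq \Span_{\mathbb{Z}_N}(E_L)$ is immediate since $E_L'$ is a submatrix of $E_L$, so all the work is in the reverse inclusion, and that will follow by identifying $E_L'$ with the ``$E_t$'' of Proposition~\ref{prop:SpanEteqSpanE} for a suitable core and lattice.

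First I would dispose of the trivial case: writing $r := |S_X|$, if $t > r$ then every $\mathbf{u}\in\mathbb{Z}_2^{r}$ has $\wt(\mathbf{u})\le r<t$, so every element of $E = E_m + \langle S_X\rangle$ already lies within orbit distance $t$ of $E_m$, whence $E_L' = E_L$ and there is nothing to prove; so assume $1\le t\le r$. Next I would exhibit the coset structure. Recall from Section~\ref{sec:code_words_notation} that the $i$th codeword has Z-support $E_i = \mathbf{m}_i + \langle S_X\rangle$, and that in the orbit form of Eq.~\eqref{eq:orbitform} the rows of $E_L$ attached to codeword $i$ are exactly the pairs $(\mathbf{e}|\mathbf{i})$ with $\mathbf{e}\in E_i$, where $\mathbf{i}$ is the fixed codeword index. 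Define $\tilde E_m := \{(\mathbf{m}_i\,|\,\mathbf{i}) : \mathbf{m}_i\in E_m\}$ and $\tilde S_X := \{(\mathbf{x}\,|\,\mathbf{0}) : \mathbf{x}\in S_X\}$, with $\mathbf{0}$ the length-$\dim(\mathcal{C})$ zero vector. Appending zero columns preserves RREF, so $\tilde S_X$ is in RREF with $r$ nonzero rows, and since the index part of $(\mathbf{m}_i|\mathbf{i}) + \mathbf{u}\tilde S_X$ stays equal to the constant $\mathbf{i}$ while its $\mathbf{e}$-part runs over $\mathbf{m}_i + \langle S_X\rangle$ as $\mathbf{u}$ runs over $\mathbb{Z}_2^{r}$, the row set of $E_L$ is exactly $\tilde E_m + \langle \tilde S_X\rangle$, and this is a genuine coset decomposition because the underlying cosets $\mathbf{m}_i+\langle S_X\rangle$ are already distinct by Proposition~\ref{prop:partition}. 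Moreover the orbit distance of $(\mathbf{m}_i|\mathbf{i}) + \mathbf{u}\tilde S_X$ from $\tilde E_m$ is $\wt(\mathbf{u})$, which is precisely the orbit distance of the underlying $\mathbf{e}$ from $E_m$; hence the rows of $E_L'$ are exactly the elements of $\tilde E_m + \langle \tilde S_X\rangle$ at orbit distance at most $t$ from $\tilde E_m$.

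Finally, I would apply Proposition~\ref{prop:SpanEteqSpanE} with the identifications $E_q \mapsto \tilde E_m$, $L \mapsto \tilde S_X$, $r \mapsto |S_X|$, which is legitimate since $1\le t\le r$; it gives $\Span_{\mathbb{Z}_N}(\tilde E_m + \langle\tilde S_X\rangle) = \Span_{\mathbb{Z}_N}(E_L')$, i.e.\ $\Span_{\mathbb{Z}_N}(E_L) = \Span_{\mathbb{Z}_N}(E_L')$. The one point that deserves care — and it is exactly the analogue of what makes Proposition~\ref{prop:modified_LI} go through — is that the index columns are constant along each coset and identically zero on the lattice $\langle\tilde S_X\rangle$: in the reduction formula from the proof of Proposition~\ref{prop:SpanEteqSpanE} the products $\mathbf{p}_d$ of rows of $\tilde S_X$ all vanish in the index columns, so rewriting a far-off element as a $\mathbb{Z}_N$-linear combination of nearby ones automatically respects the index, pairing the resulting $\mathbf{m}_i + (\text{lattice part})$ with the correct $\mathbf{i}$. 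I do not expect any real obstacle beyond this bookkeeping, since the argument is structurally identical to the logical-identity case; the main thing to state carefully is precisely the claim that $E_L$ has this core-plus-zero-tail-lattice form.
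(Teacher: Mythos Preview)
Your proposal is correct and follows essentially the same approach as the paper: define the augmented core $\tilde E_m = \{(\mathbf{m}_i|\mathbf{i})\}$ and augmented lattice $\tilde S_X = \{(\mathbf{x}|\mathbf{0}) : \mathbf{x}\in S_X\}$, identify $E_L$ with $\tilde E_m + \langle\tilde S_X\rangle$, and invoke Proposition~\ref{prop:SpanEteqSpanE}. Your version is in fact more careful than the paper's, which omits the $t>r$ case and the verification that orbit distance is preserved under the augmentation.
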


\begin{proof}
Let $E = E_m + \langle S_X \rangle$ as in Section~\ref{sec:code_words_notation}. Let $E_m'$ be the vectors in $E_m$ with an appended $\mathbf{i}$ as in the Logical Operator Algorithm  - i.e. $E_m' = \{(\mathbf{m} |\mathbf{i}) : \mathbf{m} \in E_m\}$. Form $L$ from the rows of $S_X$ with an appended $\mathbf{0}$ of length $\dim(\mathcal{C})$ so $L = \{(\mathbf{x}|\mathbf{0}) : \mathbf{x} \in S_X\}$. Then $E_L = E_m' + \langle L \rangle$. The result follows by applying Proposition~\ref{prop:SpanEteqSpanE}.
\end{proof}

\section{Measurements in the XP Formalism - Proof of Results}\label{app:measurement}

In this appendix, we prove the results discussed in Chapter~\ref{chap:Measurements}. We first prove the results for determining the  outcome probabilities for measurement of XP operators. We then explain the stabiliser algorithm for measurement of diagonal Pauli operators of Section~\ref{sec:meas_diag_paulis}.

\subsection{Measurement Outcome Probabilities - Proof of Results}
In this section, we show how to determine the  outcome probabilities for measurement of XP operators assuming we are given the codewords in the orbit form of Eq.~\eqref{eq:orbitform}. We first show how to determine probabilities for measurement of diagonal operators, then look at non-diagonal operators. 

\begin{proposition}[Outcome Probabilities - Diagonal XP Operators]\label{prop:prob_diag}
Let $A$ be a diagonal XP operator which has $+1$ as an eigenvalue. Let $E$ be the Z-support of the codewords $|\kappa_i\rangle$. Let $E^+$ be the set of binary vectors $E^+ = \{\mathbf{e}\in E: A|\mathbf{e}\rangle =  |\mathbf{e}\rangle\}$. Let $\rho$ be an even superposition of the codewords as in Eq.~\eqref{eq:rho} so that $\rho = \frac{1}{|E|}\sum_i |\kappa_i\rangle\langle \kappa_i|$.  The probability of obtaining outcome $+1$ when measuring $A$ on $\rho$ is:
\begin{align}
\Pr(+1) = \frac{|E^+|}{|E|}
\end{align}
\end{proposition}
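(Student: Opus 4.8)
The plan is to compute $\Pr(+1) = \mathrm{Tr}(A_+ \rho A_+)$ directly using the orbit form of the codewords and the fact that $A$ is diagonal. First I would recall from Proposition~\ref{prop:xp_projectors} that for a diagonal operator $A$, the projector $A_+$ onto the $+1$-eigenspace acts on computational basis elements by $A_+|\mathbf{e}\rangle = |\mathbf{e}\rangle$ if $\mathbf{e} \in E^+$ and $A_+|\mathbf{e}\rangle = 0$ otherwise. In other words, $A_+ = \sum_{\mathbf{e} \in E^+} |\mathbf{e}\rangle\langle\mathbf{e}|$, the projector onto $\mathrm{span}\{|\mathbf{e}\rangle : \mathbf{e} \in E^+\}$.

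The key observation is that the codewords $|\kappa_i\rangle$, written in orbit form $|\kappa_i\rangle = \sum_{0 \le j < 2^r}\omega^{p_{ij}}|\mathbf{e}_{ij}\rangle$, have mutually disjoint $\mathrm{ZSupp}$'s (the cosets $E_i = \mathbf{m}_i + \langle S_X\rangle$ partition $E$, by Proposition~\ref{prop:partition}), and each $|\mathbf{e}_{ij}\rangle$ appears with a coefficient of unit modulus. So when I apply $A_+$ to $\rho = \frac{1}{|E|}\sum_i |\kappa_i\rangle\langle\kappa_i|$, I get $A_+ \rho A_+ = \frac{1}{|E|}\sum_i (A_+|\kappa_i\rangle)(A_+|\kappa_i\rangle)^\dagger$, and $A_+|\kappa_i\rangle = \sum_{j : \mathbf{e}_{ij}\in E^+}\omega^{p_{ij}}|\mathbf{e}_{ij}\rangle$. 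Taking the trace, $\mathrm{Tr}((A_+|\kappa_i\rangle)(A_+|\kappa_i\rangle)^\dagger) = \langle\kappa_i|A_+|\kappa_i\rangle = |\{j : \mathbf{e}_{ij}\in E^+\}|$, since the $|\mathbf{e}_{ij}\rangle$ are orthonormal and the coefficients have modulus $1$. Summing over $i$ and using that the $\mathrm{ZSupp}$'s partition $E$, the total is $\sum_i |\{j : \mathbf{e}_{ij}\in E^+\}| = |E^+|$, giving $\Pr(+1) = |E^+|/|E|$.

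I should also verify the normalization: $\mathrm{Tr}(\rho) = \frac{1}{|E|}\sum_i \langle\kappa_i|\kappa_i\rangle = \frac{1}{|E|}\sum_i 2^r = \frac{1}{|E|}\cdot\frac{|E|}{2^r}\cdot 2^r$ — wait, here $\dim\mathcal{C} = |E_m| = |E|/2^r$, so $\sum_i 2^r = (|E|/2^r)\cdot 2^r = |E|$, hence $\mathrm{Tr}(\rho) = 1$ as required, confirming $\rho$ is properly normalized and the probability formula needs no extra factor. (If one prefers to work with un-normalized codewords as the excerpt does elsewhere, the factors of $2^r$ cancel identically between numerator and denominator.)

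The main obstacle, such as it is, is purely bookkeeping: one must be careful that $A$ having $+1$ as an eigenvalue on the full Hilbert space does not automatically mean $E^+$ is nonempty as a subset of $E$ — but this is fine, since if $E^+ = \emptyset$ the formula correctly gives $\Pr(+1) = 0$, and the argument goes through unchanged. There is no genuinely hard step here; the substance is entirely in the structural facts already established (disjoint, equal-size $\mathrm{ZSupp}$'s of unit-modulus coefficients, and the explicit diagonal projector of Proposition~\ref{prop:xp_projectors}), and the proof is a short direct computation.
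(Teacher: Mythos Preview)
Your proposal is correct and follows essentially the same approach as the paper's proof: both apply the diagonal projector from Proposition~\ref{prop:xp_projectors} to each codeword $|\kappa_i\rangle$, compute $\langle\kappa_i|A_+|\kappa_i\rangle = |E_i^+|$ using the unit-modulus coefficients, and sum over $i$ using the fact that the Z-supports partition $E$. Your additional checks on normalization and the $E^+=\emptyset$ edge case are correct bookkeeping that the paper omits.
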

\begin{proof}
Let the projector onto the $+1$ eigenspace of $A$ be $A^+$ and let $|\kappa_i^+\rangle := A^+|\kappa_i\rangle$. The probability of obtaining outcome $+1$ is $\text{Tr}(A^+\rho A^+) = \frac{1}{|E|}\sum_i \langle\kappa_i^+|\kappa_i^+\rangle$.

Applying Eq.~\eqref{eq:proj_diag} with $\lambda = +1$, the action of the projector $A^+$ on a computational basis vectors $|\mathbf{e}\rangle$ is given by:
\begin{align}
A^+ |\mathbf{e}\rangle &= \begin{cases} |\mathbf{e}\rangle : \text{ if } A |\mathbf{e}\rangle = |\mathbf{e}\rangle\\ 0: \text{ if }A |\mathbf{e}\rangle \neq |\mathbf{e}\rangle\end{cases}
\end{align}
Let $E_i^+ := \{\mathbf{e} \in \ZSupp(|\kappa_i\rangle): A|\mathbf{e}\rangle = |\mathbf{e}\rangle\}$. The action of $A^+$ on the codeword $|\kappa_i\rangle$ is:
\begin{align}
|\kappa_i^+\rangle &:= A^+|\kappa_i\rangle = A^+\Big(\sum_{0 \le j < 2^r}\omega^{p_{ij}}|\mathbf{e}_{ij}\rangle\Big) = \sum_{\mathbf{e}_{ij} \in E_i^+}\omega^{p_{ij}}|\mathbf{e}_{ij}\rangle\
\\\langle\kappa_i^+|\kappa_i^+\rangle & = \sum_{\mathbf{e}_{ij} \in E_i^+}|\omega^{p_{ij}}|^2=|E_i^+|
\end{align}

The result follows by noting that $\Pr(+1) =  \frac{1}{|E|}\sum_i \langle\kappa_i^+|\kappa_i^+\rangle$ and $E^+ = \bigcup_i E_i^+$. 
\end{proof}
\begin{proposition}[Outcome Probabilities - Non-diagonal XP Operators]\label{prop:prob_nondiag}
Let $\rho$ be an even superposition of the codewords as in Eq. 
\eqref{eq:rho}. Let $A=XP_N(p|\mathbf{x}|\mathbf{z})$ be a non-diagonal XP operator with eigenvalue $+1$. Let $E$ be the Z-support of the codewords $|\kappa_i\rangle$ as calculated in Chapter~\ref{chap:codewords} and let $\mathbf{S}_X$ be the canonical non-diagonal generators as in Section~\ref{sec:canonical_generators}. Let $E^\pm$ be the set of binary vectors $E^\pm := \{\mathbf{e}\in E: A^2|\mathbf{e}\rangle =  |\mathbf{e}\rangle\}$.  

If there exists $B =XP_N(p_M|\mathbf{x}|\mathbf{z}_M) \in \mathbf{S}_X$ with the same X-component as $A$  then let $C := XP_N(p_M-p|\mathbf{0}|\mathbf{z}_M-\mathbf{z})$ and let $C|\mathbf{e}\rangle = \omega^{q_e}|\mathbf{e}\rangle$. The probability of obtaining outcome $+1$ when measuring $A$ on $\rho$ is:
\begin{align}
    \Pr(+1) = \begin{cases}
    \frac{1}{2|E|}\Big(\sum_{\mathbf{e} \in E^\pm}(1 + \cos\frac{q_e\pi}{N})\Big): \text{ if } B \text{ exists}\\
    \frac{1}{2|E|}|E^\pm|: \text{ otherwise}.
    \end{cases}
\end{align}

\end{proposition}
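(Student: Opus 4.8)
\textbf{Proof proposal for Proposition~\ref{prop:prob_nondiag}.}

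The plan is to mirror the structure of the proof of Proposition~\ref{prop:prob_diag}, using the projector formula for non-diagonal XP operators from Proposition~\ref{prop:xp_projectors}. Writing $A^+$ for the projector onto the $+1$-eigenspace of $A$ and $|\kappa_i^+\rangle := A^+|\kappa_i\rangle$, the quantity to compute is $\Pr(+1) = \operatorname{Tr}(A^+\rho A^+) = \frac{1}{|E|}\sum_i\langle\kappa_i^+|\kappa_i^+\rangle$, since $A^+$ is a Hermitian idempotent. The first step is to expand each codeword in orbit form $|\kappa_i\rangle = \sum_{j}\omega^{p_{ij}}|\mathbf{e}_{ij}\rangle$ and apply Eq.~\eqref{eq:proj_nondiag}: on a basis vector $|\mathbf{e}\rangle$ with $A^2|\mathbf{e}\rangle = |\mathbf{e}\rangle$ (i.e.\ $\mathbf{e}\in E^\pm$), the projector acts as $\tfrac12(I+A)$, and it annihilates $|\mathbf{e}\rangle$ otherwise. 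So $|\kappa_i^+\rangle = \tfrac12\sum_{\mathbf{e}_{ij}\in E^\pm}\omega^{p_{ij}}(|\mathbf{e}_{ij}\rangle + A|\mathbf{e}_{ij}\rangle)$.

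The second step is to compute the norm $\langle\kappa_i^+|\kappa_i^+\rangle$. Expanding the square, $\langle\kappa_i^+|\kappa_i^+\rangle = \tfrac14\sum_{\mathbf{e},\mathbf{e}'\in E_i^\pm}\omega^{p_{i\mathbf{e}'}-p_{i\mathbf{e}}}\langle\mathbf{e}|+\langle\mathbf{e}|A^\dagger\,\big)\big(|\mathbf{e}'\rangle + A|\mathbf{e}'\rangle\big)$. Here the key observation is that $A$ maps $|\mathbf{e}\rangle$ to a \emph{different} basis vector $|\mathbf{e}\oplus\mathbf{x}\rangle$ (since $\mathbf{x}\neq\mathbf{0}$), and $A^2=I$ on $E^\pm$, so $\langle\mathbf{e}|\mathbf{e}'\rangle$ and $\langle\mathbf{e}|A^\dagger A|\mathbf{e}'\rangle$ both contribute the diagonal terms while $\langle\mathbf{e}|A|\mathbf{e}'\rangle$ and $\langle\mathbf{e}|A^\dagger|\mathbf{e}'\rangle$ pick out the pairs with $\mathbf{e} = \mathbf{e}'\oplus\mathbf{x}$. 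Collecting the diagonal terms gives $\tfrac14\cdot 2|E_i^\pm| = \tfrac12|E_i^\pm|$, and the cross terms give a correction governed by the relative phase between $|\mathbf{e}\rangle$ and $|\mathbf{e}\oplus\mathbf{x}\rangle$ in the codeword. This is where the two cases split. If there is $B\in\mathbf{S}_X$ with X-component $\mathbf{x}$, then $|\mathbf{e}\oplus\mathbf{x}\rangle$ lies in the same codeword as $|\mathbf{e}\rangle$ (same orbit), and one computes the phase of $A|\mathbf{e}\rangle$ against the phase of the basis vector $|\mathbf{e}\oplus\mathbf{x}\rangle$ already present: the discrepancy is exactly the phase $\omega^{q_e}$ applied by $C = B A^{-1}$ (up to the precise bookkeeping of $C := XP_N(p_M-p|\mathbf{0}|\mathbf{z}_M-\mathbf{z})$), contributing $\tfrac14(\omega^{q_e}+\omega^{-q_e}) = \tfrac12\cos\frac{q_e\pi}{N}$ per element of $E^\pm$ (recalling $\omega = e^{i\pi/N}$). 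Summing over $\mathbf{e}\in E^\pm$ and over $i$, and using $E^\pm = \bigcup_i E_i^\pm$ (a disjoint union, since the $\ZSupp(|\kappa_i\rangle)$ partition $E$), yields $\Pr(+1) = \frac{1}{2|E|}\sum_{\mathbf{e}\in E^\pm}(1+\cos\frac{q_e\pi}{N})$. If no such $B$ exists, then $|\mathbf{e}\oplus\mathbf{x}\rangle\notin\ZSupp(|\kappa_i\rangle)$, so $A|\mathbf{e}_{ij}\rangle$ is orthogonal to every $|\mathbf{e}_{ij'}\rangle$ and the cross terms vanish identically, leaving $\Pr(+1) = \frac{1}{2|E|}|E^\pm|$.

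The main obstacle I anticipate is the careful phase bookkeeping in the case where $B$ exists: one must verify that the phase picked up when $A$ maps $|\mathbf{e}\rangle$ into the orbit is consistent across all $\mathbf{e}$ in a given codeword and is correctly captured by the single diagonal operator $C$ evaluated on $|\mathbf{e}\rangle$. This requires using the fact that $B\in\mathbf{S}_X$ stabilises $|\kappa_i\rangle$ (so $B|\mathbf{e}_{ij}\rangle$ reproduces the correct coefficient of $|\mathbf{e}_{ij}\oplus\mathbf{x}\rangle$ in the codeword), combined with the MUL and INV rules of Table~\ref{tab:algebraic_identities} to identify $BA^{-1}$ as the diagonal operator $C$, and then the OP rule to read off $q_e$ from $C|\mathbf{e}\rangle = \omega^{q_e}|\mathbf{e}\rangle$. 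One should also double-check that $E^\pm$ is well-defined independently of which eigenvalue branch is chosen — this follows because $A^2$ is diagonal (SQ rule) so $A^2|\mathbf{e}\rangle=|\mathbf{e}\rangle$ is a property of $\mathbf{e}$ alone — and that $\mathbf{e}\in E^\pm \iff \mathbf{e}\oplus\mathbf{x}\in E^\pm$, which is needed for the cross terms to pair up correctly within $E^\pm$.
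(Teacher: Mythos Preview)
Your proposal is correct and follows essentially the same approach as the paper's proof: both start from $\Pr(+1)=\frac{1}{|E|}\sum_i\langle\kappa_i^+|\kappa_i^+\rangle$, apply the projector formula of Proposition~\ref{prop:xp_projectors} to restrict to $E^\pm$, and split into the two cases according to whether $\mathbf{e}\oplus\mathbf{x}$ lies in the same codeword. The only organisational difference is that the paper first pairs $|\mathbf{e}\rangle+\omega^b|\mathbf{e}\oplus\mathbf{x}\rangle$ and then applies $A^+$ to obtain the factor $\tfrac{1+\omega^{b-a}}{2}$ whose squared modulus gives $\tfrac12(1+\cos\frac{q_e\pi}{N})$, whereas you expand $\langle\kappa_i^+|\kappa_i^+\rangle$ as a double sum and separate diagonal from cross terms; these are the same computation.
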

\begin{proof}
Let $A=XP_N(p|\mathbf{x}|\mathbf{z})$ be the operator we wish to measure, and assume we obtain outcome $+1$. Applying Eq.~\eqref{eq:proj_nondiag} with $\lambda = +1$, the action of $A^+$ on a basis vector $|\mathbf{e}\rangle$ is:
\begin{align}
A^+ |\mathbf{e}\rangle &= \begin{cases} \frac{1}{2}(|\mathbf{e}\rangle + A|\mathbf{e}\rangle): \text{ if } A^2 |\mathbf{e}\rangle = |\mathbf{e}\rangle\\ 0: \text{ if } A^2 |\mathbf{e}\rangle \neq |\mathbf{e}\rangle\end{cases}\label{eq:A+_nondiag}
\end{align}
Any elements $\mathbf{e} \in \ZSupp(|\kappa_i\rangle)$ where $A^2|\mathbf{e}\rangle \ne |\mathbf{e}\rangle$ will be eliminated, so let:
\begin{align}
E_i^\pm &= \{\mathbf{e} \in \ZSupp(|\kappa_i\rangle) : A^2|\mathbf{e}\rangle = |\mathbf{e}\rangle\}
\end{align}
By calculating $\text{Res}_{\mathbb{Z}_2}(S_X,\mathbf{x})$, we can determine if there exists some $B  \in \langle \mathbf{S}_X\rangle$ with the same X-component as $A$. There are two possible cases:

\paragraph{Case 1: $B$ does not exist:}
In this case, whenever $\mathbf{e} \in \ZSupp(|\kappa_i\rangle)$,  $\mathbf{e} \oplus \mathbf{x} \notin \ZSupp(|\kappa_i\rangle)$. Applying Eq.~\eqref{eq:A+_nondiag}: 
\begin{align}
    |\kappa_i^+\rangle &:= A^+|\kappa_i\rangle = \sum_{\mathbf{e}_{ij} \in E_i^\pm} \frac{\omega^{p_{ij}}}{2}(|\mathbf{e}_{ij}\rangle + A|\mathbf{e}_{ij}\rangle) = \frac{1}{2} \sum_{\mathbf{e}_{ij} \in E_i^\pm} (\omega^{p_{ij}}|\mathbf{e}_{ij}\rangle + \omega^{q_{ij}}|\mathbf{e}_{ij}\oplus \mathbf{x}\rangle)
    \end{align}
for some $q_{ij} \in \mathbb{Z}_{2N}$. Hence:
\begin{align}
    \langle\kappa_i^+|\kappa_i^+\rangle &= \frac{1}{4}\sum_{\mathbf{e}_{ij} \in E_i^\pm}(1+1) = \frac{|E_i^\pm|}{2}
\end{align}

\paragraph{Case 2:  $B$ exists:}
Let $B$  be the element of  $\langle \mathbf{S}_X\rangle$ with the same X-component as $A$ and let $\mathbf{e} \in E_i^\pm$ so that $A^2|\mathbf{e}\rangle = |\mathbf{e}\rangle$. Let $B|\mathbf{e}\rangle = \omega^b|\mathbf{e}\oplus \mathbf{x}\rangle$ and $A|\mathbf{e}\rangle = \omega^a|\mathbf{e}\oplus \mathbf{x}\rangle$ for some phases $a, b \in \mathbb{Z}_{2N}$. Then $\mathbf{e}, \mathbf{e}\oplus\mathbf{x} \in \ZSupp(|\kappa_i\rangle)$ with relative phase $\omega^b$. Consider the effect of the projector $A^+$ on $|\mathbf{e}\rangle + \omega^b|\mathbf{e}\oplus \mathbf{x} \rangle$:
\begin{align}
A^+(|\mathbf{e}\rangle + \omega^b|\mathbf{e}\oplus \mathbf{x} \rangle) &= A^+(|\mathbf{e}\rangle + \omega^{b-a}A|\mathbf{e} \rangle)\\&=\frac{1}{2}\Big(|\mathbf{e}\rangle + A|\mathbf{e}\rangle + \omega^{b-a}A|\mathbf{e} \rangle+\frac{b}{a}A^2|\mathbf{e} \rangle\Big)\\&=\frac{1}{2}\Big(|\mathbf{e}\rangle + A|\mathbf{e}\rangle + \omega^{b-a}(A|\mathbf{e} \rangle+|\mathbf{e} \rangle)\Big)\\&=\frac{1 + \omega^{b-a}}{2} (|\mathbf{e}\rangle + A|\mathbf{e}\rangle )
\end{align}
Now we determine $\big|\frac{1 + \omega^{b-a}}{2}\big|$:
\begin{align}
\big|\frac{1 + \omega^{b-a}}{2}\big| &= \frac{1}{4}(1 + \omega^{b-a})(1 + \omega^{a-b})\\
&= \frac{1}{4}(2 + 2\cos((b-a)\pi/N))\\
&= \frac{1}{2}(1+\cos((b-a)\pi/N))
\end{align}
Let $B= XP_N(p_B|\mathbf{x}|\mathbf{z}_B)$, let $C := XP_N(p_M - p|\mathbf{0}|\mathbf{z}_M-\mathbf{z})$ and let $q_e:= b-a$. Then $C|\mathbf{e}\rangle = \omega^{q_e}|\mathbf{e}\rangle$.
Hence:
\begin{align}
\langle\kappa_i^+|\kappa_i^+\rangle &= \sum_{\mathbf{e} \in E_i^\pm} |\langle \mathbf{e} | \kappa_i^+\rangle|^2 = \frac{1}{2}\sum_{\mathbf{e} \in E_i^\pm}(1 + \cos\frac{q_e\pi}{N})
\end{align}
The result follows by noting that $\Pr(+1)  = \frac{1}{|E|}\sum_i \langle\kappa_i^+|\kappa_i^+\rangle$ and $E^\pm = \bigcup_i E_i^\pm$. 
\end{proof}

\subsection{Analysis of Algorithm for Measuring Diagonal Paulis}
In this section, we explain why the algorithm of Section~\ref{sec:meas_diag_paulis} for measurement of diagonal Pauli operators works. We also consider implications for the complexity of measuring higher precision diagonal operators. 

Firstly, we show that simulating the measurement of a diagonal Pauli operator reduces to determining the change in the Z-support of the codewords $E$. Let $A = XP_2(0|\mathbf{0}|\mathbf{z})$  and let $|\kappa_i\rangle = \sum_j\omega^{p_{ij}} |\mathbf{e}_{ij}\rangle$ be the codewords of the code in orbit format of Eq.~\eqref{eq:orbitform} with Z-support $E = \{\mathbf{e}_{ij}\}$. Let $A^+$ be  the projector  onto the $+1$ eigenspace of $A$.  The action of $A^+$  on $|\kappa_i\rangle$ is given by Eq.~\eqref{eq:proj_diag}:
\begin{align}
A^+|\kappa_i\rangle = \sum_{j: A|\mathbf{e}_{ij}\rangle = |\mathbf{e}_{ij}\rangle} \omega^{p_{ij}} |\mathbf{e}_{ij}\rangle \,.
\end{align}
Hence, the Z-support of the codewords after measurement outcome $+1$ is $E^+ = \{\mathbf{e} \in E: A|\mathbf{e}\rangle = |\mathbf{e}\rangle\}$. As $A$ is of precision 2, the action of $A$ on $|\mathbf{e}\rangle$ is $A|\mathbf{e}\rangle = i^{2 \mathbf{e} \cdot \mathbf{z}}|\mathbf{e}\rangle= (-1)^{ \mathbf{e} \cdot \mathbf{z}}|\mathbf{e}\rangle$. Hence  $E^+ = \{\mathbf{e} \in E : \text{Par}_\mathbf{z}(\mathbf{e}) = 0\}$.  From Proposition~\ref{prop:prob_diag}, the probability of obtaining outcome $+1$ is $|E^+|/|E|$ and the phase of any $|\mathbf{e}_{ij}\rangle$ remaining in the codewords is unchanged. Similarly, the Z-support of the codewords after obtaining outcome $-1$ is $E^- = E \setminus E^+$ with probability $|E^-|/|E|$. Therefore, we can simulate measurement by determining $E^+$ and $E^-$ which in turn requires us to determine $\text{Par}_\mathbf{z}(\mathbf{e})$ for all $\mathbf{e} \in E$. 

Next, we show how the algorithm in Section~\ref{sec:meas_diag_paulis} correctly determines $E^+, E^-$. To calculate $E^+$, we need to determine $\text{Par}_\mathbf{z}(\mathbf{e})$ for all $\mathbf{e} \in E$.  For any binary vectors $\mathbf{x}, \mathbf{y}$, taking parity with respect to $\mathbf{z}$ commutes with taking the sum modulo 2. That is:
\begin{align}
\text{Par}_\mathbf{z}(\mathbf{x} \oplus \mathbf{y}) &= (\mathbf{x} + \mathbf{y} - 2\mathbf{x}\mathbf{y})\cdot \mathbf{z} \mod 2\nonumber\\
&= (\mathbf{x}\cdot \mathbf{z} + \mathbf{y}\cdot \mathbf{z} - 2\mathbf{x}\mathbf{y}\cdot \mathbf{z}) \mod 2\nonumber\\
&= \text{Par}_\mathbf{z}(\mathbf{x}) \oplus \text{Par}_\mathbf{z}(\mathbf{y})\label{eq:(x+y).z}
\end{align}
Looking at Step 1 of the algorithm, assume there exists operators $B, C \in \mathbf{S}_X \cup \mathbf{L}_X$ with X-components $\mathbf{x}, \mathbf{y}$ such that $\text{Par}_\mathbf{z}(\mathbf{x})  =\text{Par}_\mathbf{z}(\mathbf{y})= 1$. Then the X-component of $BC$ is $\mathbf{x} \oplus \mathbf{y}$ and from Eq.~\eqref{eq:(x+y).z}, we know that $\text{Par}_\mathbf{z}(\mathbf{x} \oplus \mathbf{y})= 1\oplus 1 = 0$. Hence, updating $\mathbf{S}_X \cup \mathbf{L}_X$ in Step 1 ensures that $\text{Par}_\mathbf{z}(\mathbf{x}_i) = 0$, for all X-components $\mathbf{x}_i$ of the updated set $\mathbf{S}_X \cup \mathbf{L}_X$. We remove $B$ from $\mathbf{S}_X \cup \mathbf{L}_X$ so updating $E_q$ by adding vectors of form $\mathbf{q}_l \oplus \mathbf{x}, \mathbf{q}_l \in E_q$ to $E_q$ ensures that  $E = E_q + \langle S_X\rangle + \langle L_X\rangle$ is unchanged ($S_X, L_X$ are the matrices formed from the X-components of $\mathbf{S}_X, \mathbf{L}_X$ respectively).

Now looking at Step 2 of the algorithm,  we can express the binary vectors $\mathbf{e} \in E$ as linear combinations modulo $2$ i.e. $\mathbf{e} = (\mathbf{q}_l + \mathbf{u}S_X + \mathbf{v}L_X) \mod 2$ (see Section~\ref{sec:quantum_numbers}).  After Step 1, we have ensured that $\text{Par}_\mathbf{z}(\mathbf{x}_i) = 0$, for all $\mathbf{x}_i \in S_X \cup L_X$. Linear combinations of the $\mathbf{x}_i$ also have parity $0$ and so the parity of $\mathbf{e}$ is the same as the parity of $\mathbf{q}_l$ because: 
\begin{align}
\text{Par}_\mathbf{z}(\mathbf{e}) &= \text{Par}_\mathbf{z}((\mathbf{q}_l + \mathbf{u}S_X + \mathbf{v}L_X)\mod 2)\nonumber\\
&= \text{Par}_\mathbf{z}(\mathbf{q}_l) \oplus \text{Par}_\mathbf{z}(\mathbf{u}S_X) \oplus \text{Par}_\mathbf{z}(\mathbf{v}L_X)\nonumber\\
&=\text{Par}_\mathbf{z}(\mathbf{q}_l)\label{eq:parity_commutes} \;.
\end{align}
So $\mathbf{e}$ is in $E^+$ if only if $\text{Par}_\mathbf{z}(\mathbf{q}_l)=0$ which is equivalent to $\mathbf{q}_l \in E_q^+$. Let $r = |S_X|, k = |L_X|$ then $E = |E_q|2^{r+k}, E^+ = |E_q^+|2^{r+k}$. Hence, the probability of obtaining outcome $+1$ is $\Pr(+1) = |E^+|/|E| = |E_q^+|/|E_q|$. By a similar argument, the result for the outcome $-1$ is also correct.

The underlying reason the algorithm works is that the parity function of Eq.~\eqref{eq:parity_function} commutes with addition of vectors modulo $2$. Hence, the parity of the binary vector $\mathbf{e}_{l,\mathbf{u},\mathbf{v}} \in E$ labelled in accordance with the quantum numbers of~\ref{sec:quantum_numbers} is the same as the parity of $\mathbf{q}_l \in E_q$ (see Eq.~\eqref{eq:parity_commutes}). This  relationship breaks down for higher precision XP operators (e.g. $N=4$).

\medskip


\end{document}